\documentclass[10pt,reqno]{amsart}

\usepackage[margin=1in]{geometry}   
\usepackage[utf8]{inputenc}  
\usepackage[T1]{fontenc}
\usepackage{upgreek}

\usepackage{amsmath,amsthm,amsfonts,amssymb,amscd}
\usepackage{mathtools}
\usepackage[mathscr]{euscript}
\usepackage{bm}
\usepackage{bbm}
\usepackage{stmaryrd}

\usepackage[dvipsnames]{xcolor}
\usepackage{soul}

\usepackage[square,numbers]{natbib}

\usepackage{etoolbox,xstring}
\makeatletter
\patchcmd{\NAT@test}{\else\NAT@nm}{\else\NAT@nmfmt{\NAT@nm}}{}{}
\let\NAT@up\scshape
\makeatother

\makeatletter
\renewcommand{\NAT@nmfmt}{\expandafter\aliNAT@nmfmt\expandafter}
\newcommand\aliNAT@nmfmt[1]{{%
  \noexpandarg
  \def~{}%
  \edef\temp#1\edef\temp{\detokenize\expandafter{\temp}}%
  \begingroup\edef\x{\endgroup
    \noexpand\StrSubstitute{\temp}{\detokenize{etal}}}\x
    {\textnormal{et\nobreakspace al}}[\tempa]%
  \textsc{\tempa}}}
\makeatother

\usepackage{graphicx}
\graphicspath{{graphics/}}

\usepackage{wrapfig}
\usepackage{afterpage}
\usepackage{caption}

\usepackage{subcaption}
\usepackage{tikz-cd}
\usepackage{tikz}
\usetikzlibrary{arrows.meta,shapes.geometric,positioning,fit,calc}
\usetikzlibrary{decorations.pathreplacing,angles,quotes,patterns}
\usepackage{pgfplots}
\pgfplotsset{compat=1.18}
\usepgfplotslibrary{fillbetween}

\usepackage{array}
\usepackage{multirow}
\newcolumntype{P}[1]{>{\centering\arraybackslash}p{#1}}		
\newcolumntype{M}[1]{>{\centering\arraybackslash}m{#1}}
\newcolumntype{L}[1]{>{\raggedright\arraybackslash}m{#1}}
\usepackage{multirow}
\usepackage{booktabs}

\usepackage{hyperref}
\hypersetup{colorlinks=True, linkcolor=blue, urlcolor=blue, citecolor=red}
\usepackage{cleveref}

\usepackage{float}
\usepackage{enumitem}
\usepackage{algorithm}
\usepackage{algpseudocode}
    \makeatletter
    \newenvironment{breakablealgorithm}
      {%
       \par
       \vspace{1em}
       \refstepcounter{algorithm}%
       \hrule height.8pt depth0pt \kern2pt
       \renewcommand{\caption}[2][\relax]{%
         {\raggedright\textbf{Algorithm~\thealgorithm: ##2}\par}%
         \kern2pt\hrule\kern2pt
       }%
      }
      {%
       \kern2pt\hrule\relax
       \vspace{1em}
       \par
      }
    \makeatother
\usepackage{verbatim}
\usepackage{cases}
\usepackage{needspace}

\usepackage{xfrac}
\usepackage{siunitx}
\numberwithin{equation}{section}

\newtheorem{theorem}{Theorem}[section]

\newtheorem{proposition}[theorem]{Proposition}

\newtheorem{example}[theorem]{Example}
\newtheorem{remark}[theorem]{Remark}

\def\tmax{{t}_{\mathsf{max}}}

\DeclareMathOperator*{\argmin}{arg\,min}

\def\p{\partial}

\def\Omegaref{{\Omega}_{\mathrm{ref}}}

\newcommand{\LGR}{\textsf{LGR}}
\newcommand{\mts}{\textsf{MTS}}
\def\lgr{{\mathsf{lgr}}}
\def\eul{{\mathsf{eul}}}
\def\Nlgr{N_{\mathsf{lgr}}}
\def\Neul{N_{\mathsf{eul}}}
\def\Gammalgr{\mathcal{L}}
\def\Gammaeul{\mathcal{E}}
\def\zlgr{\ell}
\def\zeul{e}

\def\lftle{\lambda^{\mathsf{tan}}}
\def\sftle{\sigma^{\mathsf{tan}}}

\def\Vlgr{V}
\def\Veul{W}
\def\Adj{\mathcal{A}}
\def\Event{\mathcal{T}}
\def\Nevent{N_{\mathsf{event}}}

\def\Defect{\mathcal{D}_h}
\def\Surg{\mathscr{S}}
\def\Plgr{\mathscr{P}}
\def\Peul{\mathscr{Q}}
\def\plgr{{p}}
\def\peul{{q}}
\def\Npiece{\mathscr{N}}
\def\Mpiece{\mathscr{M}}
\def\Vpiece{\mathscr{V}}
\def\Wpiece{\mathscr{W}}
\def\vpiece{{v}}
\def\wpiece{{w}}

\def\dref{d_{\mathsf{ref}}}
\def\dcrs{d_{\mathsf{crs}}}
\def\Nmin{N_{\mathsf{min}}}
\def\Lmin{L_{\mathsf{min}}}
\def\Amin{A_{\mathsf{min}}}
\def\wfit{w_{\mathsf{fit}}}
\def\wsm{w_{\mathsf{sm}}}
\def\wtan{w_{\mathsf{tan}}}

\def\Sigmam{\mathbf{\Sigma}\;\!}

\title[Microscale topological surgery for tracking filament breakup]
{Interface tracking with microscale topological surgery for two-dimensional 
filament breakup}
\author{Raaghav Ramani}
\address{Center for Nonlinear Studies, Los Alamos National Laboratory, Los Alamos, NM 87545}
\email{\href{raaghav.ramani@outlook.com}{raaghav.ramani@outlook.com}}

\begin{document}

\begin{abstract}
We design and implement a \emph{Microscale Topological Surgery}
(\mts) algorithm to detect and enforce topological transitions in
two-dimensional tracked interfaces.
The method combines classical Lagrangian tracking with an intermittent
topological processor that:
(\emph{i}) constructs Eulerian snapshots from which an 
interface family with microscale-resolved topology is extracted, 
(\emph{ii}) infers adjacency topology between dual Lagrangian and
Eulerian interface families, and
(\emph{iii}) performs interface surgery to stitch the two
families together across microscale defect regions.
A novel long-time nonlinear alternating-shear flow is introduced, in
which repeated stretching and folding generate rich multiscale
interface dynamics with filamentation at microscales.
Using the \mts\ algorithm and \emph{a posteriori}
geometric and material diagnostics, we compute and
visualize microscale filament-breakup dynamics.
Error analysis and scaling studies demonstrate
second-order geometric convergence and optimal computational scaling of
the \mts\ algorithm, with topology-processing costs comparable to those
of the underlying Lagrangian evolution. 
Ensemble simulations generated by pseudo-random perturbations of the flow 
further reveal coherent droplet size distributions and statistically robust filament-breakup dynamics.
\end{abstract}

\maketitle

\setcounter{tocdepth}{1}
{\small
\tableofcontents}

\section{Introduction}
\label{sec:intro}

We consider the evolution of a collection of two-dimensional
\emph{interfaces} $\Gamma_\alpha(t)$ under the flow of a given smooth
velocity field $u(x,t)$. Our primary interest is in multiscale
interface dynamics characterized by large-scale deformations at length
scale $L>0$ together with fine-scale structure at length scale $h>0$,
satisfying
\begin{equation}\label{h/L}
0 < \frac{h}{L} \ll 1.
\end{equation}
Such scale separation commonly arises through \emph{filamentation}, whereby
stretching and folding of the interface generate progressively thinner
interfacial structures while simultaneously maintaining large-scale
geometric features. In the specific case that
\[
\frac{h}{L}=\mathcal{O}(10^{-3}),
\]
we refer to $h$ as a \emph{microscale-resolving} length scale.
Numerical approaches for computing the interfacial evolution can be broadly
categorized into interface \emph{capturing} and interface \emph{tracking}
methods.

Eulerian interface-capturing methods represent the evolving interface
implicitly through one or more auxiliary fields defined on a fixed
computational mesh. Prominent examples include level-set (LS) methods
\cite{OsSe1988,PeMeOsZhKa1999,OlKr2005,DeMoPi2008},
volume-of-fluid (VOF) methods \cite{HiNi1981,RiKo1998}, and
moment-of-fluid (MOF) methods
\cite{DySh2008,ZhLi2008,AhSh2009,JeSuSh2015,HePhXi2023,HeLiPhXi2024}.
In these approaches, topological transitions are accommodated through a
finite-scale \emph{geometric regularization} associated with the
implicit interface representation. An alternative class of methods
replaces the sharp interface by a diffuse transition layer governed by
an auxiliary order parameter, thereby introducing a \emph{physical
regularization} of topological singularities through interfacial mixing
and diffusion \cite{LoTr1998,MiJaDo2017}. 
Both approaches naturally accommodate topological transitions without
requiring specialized reconstruction procedures, and their robustness
and flexibility have made them standard tools in multiphase-flow simulations.
The tradeoff is that the interface is
transported indirectly through advection on the background computational mesh, causing
numerical errors to accumulate over time. As a result, resolving fine-scale filamentary structures typically
requires substantial volumetric mesh refinement, thereby increasing the
number of spatial degrees of freedom and imposing more restrictive CFL
stability constraints, both of which contribute to the overall computational cost.

Lagrangian interface-tracking methods, by contrast, represent the 
interface explicitly by parametrized codimension--1 hypersurfaces moving with 
the flow \cite{ChGlMcPlYa1986,UnTr1992,PoZa1999}, thereby providing 
sharp interface descriptions and direct access to geometric quantities 
such as normals, curvature, and surface differential operators. 
Moreover, adaptive refinement may be
concentrated directly on the interface, making Lagrangian approaches
particularly efficient for problems involving highly localized
interfacial structure and large scale separations. 
For example, in the rotating-vortex benchmark considered below, we show
that Lagrangian tracking is approximately 1000 times more accurate 
than a baseline Eulerian capturing method, while running 100 times faster (\Cref{fig:LGR-vs-LS_diagnostics}).
The principal limitation of Lagrangian tracking methods is the
topological rigidity of the underlying interface representation.
Because interface connectivity is built directly into the underlying
curve or mesh data structures, the classical Lagrangian evolution alone cannot
alter the topology of the interface. Consequently, topological
transitions such as pinch-off, breakup, and coalescence
require specialized detection and reconstruction procedures.

\subsection{The core \mts\ methodology}

The contrasting characteristics of Lagrangian interface-tracking and
Eulerian interface-capturing naturally motivate hybrid approaches that
seek to combine the geometric accuracy and computational efficiency of
the former with the topological flexibility of the latter. Motivated by
the sharp-interface continuation perspective advocated by
\citet{LoTr1998}\footnote{
\citet{LoTr1998} argued that \emph{``...it is natural to conclude that the ideal
numerical approach should be based on the sharp interface model and then
use `jump conditions' to transit the sharp interface through a topology
change.''}
}, we propose a hybrid Eulerian--Lagrangian
\emph{Microscale Topological Surgery} (\mts) algorithm.  
In our numerical framework, the sharp interface model is realized
by classical Lagrangian tracking, while transitions through 
topology changes are handled by intermittent \mts\ processing.
The basic computational pipeline is:
\begin{equation}
\begin{tikzcd}[column sep=large, row sep=large]
& \text{Lagrangian tracking} \arrow[dl] & \\
\text{Eulerian snapshot} \arrow[r]
& \text{Adjacency topology} \arrow[r]
& \text{Interface surgery} \arrow[ul]
\end{tikzcd}
\tag{\(\mathsf{MTS}\)}
\end{equation}
The lower row of the diagram represents the core \mts\ procedure. 
Throughout the paper, we will use the terminology \mts\ interchangeably for 
both this core procedure and the complete hybrid tracking algorithm obtained by 
coupling with the underlying Lagrangian evolution.

\begin{figure}[p]
\centering
\footnotesize

\begin{tikzpicture}[
  >={Stealth[length=2.5mm]},
  line/.style={->, ultra thick},
  nodefont/.style={font=\sffamily},
  block/.style={
    rectangle, rounded corners=2pt, draw=black, align=left,
    inner xsep=3mm, inner ysep=2mm, nodefont, text width=58mm
  },
  magblock/.style={
    block, fill=magenta!20, rounded corners=5pt
  },
  orangeblock/.style={
    block, fill=orange!25
  },
  decision/.style={
      rectangle, rounded corners=12pt,
      draw=black, thick, align=center,
      inner xsep=4mm, inner ysep=2mm,
      nodefont, fill=green!25,
      text width=34mm
    },
  lbl/.style={font=\sffamily\scriptsize, inner sep=1pt},
  node distance=8mm and 16mm
]

\node[magblock, text width=54mm] (init)
{\textbf{Initialize interface family:} $t = 0$
\begin{itemize}[leftmargin=*, itemsep=2pt, topsep=1pt, parsep=0pt]
\item given initial interfaces $\{\Gamma_\alpha(0)\}_{\alpha=1}^{N(0)}$
\item construct parametrizations $\gamma_\alpha(s,0)$
\end{itemize}};

\node[orangeblock, text width=48mm, below=of init] (whilehdr)
{\textbf{Lagrangian tracking (\Cref{sec:lgr})}
while $t < \tmax$
\begin{itemize}[leftmargin=*, itemsep=2pt, topsep=1pt, parsep=0pt]
  \item evolve $\partial_t \gamma_\alpha = u \circ \gamma_\alpha$
  \item adaptively refine interfaces
\end{itemize}};

\node[decision, text width=24mm, below=14mm of whilehdr] (dec)
{\textbf{MTS time?}\\ $t=T_m$};

\coordinate (colX) at ($(dec)+(80mm,36.5mm)$);

\node[orangeblock, text width=48mm] (capture) at (colX)
{\textbf{Eulerian snapshot (\Cref{sec:extraction})}
\begin{itemize}[leftmargin=*, itemsep=2pt, topsep=1pt, parsep=0pt]
  \item define pre-processed Lagrangian interface 
  family $\{ \Gammalgr_\alpha, \Nlgr, \zlgr_\alpha \}$
  \item construct SDF $\phi$
  \item extract Eulerian interface family $\{ \Gammaeul_\beta, \Neul, \zeul_\beta \}$
\end{itemize}};

\node[orangeblock, text width=48mm, below=10mm of capture] (adj)
{\textbf{Adjacency topology (\Cref{sec:adjacency})}
\begin{itemize}[leftmargin=*, itemsep=2pt, topsep=1pt, parsep=0pt]
  \item construct adjacency matrix $\Adj$
  between $\{\Gammalgr_\alpha\}$ and $\{\Gammaeul_\beta\}$
  \item classify event family $\{ \Event_1,\ldots,\Event_{\Nevent} \}$
\end{itemize}};

\node[decision, text width=24mm, below=10mm of adj] (topodec)
{\textbf{Interface surgery?} \\ $\exists \, \Event_q \in $ \emph{branching}};

\node[orangeblock, text width=48mm, below=12mm of topodec] (surgery)
{\textbf{Interface surgery (\Cref{sec:surgery})}
\begin{itemize}[leftmargin=*, itemsep=2pt, topsep=1pt, parsep=0pt]
  \item compute topological defect measure $\Defect$
  \item slice surgical pieces $\Plgr$, $\Peul$
  \item surgically stitch $\Plgr$ and $\Peul$ to reconstruct $\{\Gamma_\alpha, N(T_m), z_\alpha\}$
\end{itemize}};

\node[orangeblock, text width=24mm, align=left, below=14mm of dec]
  (cont) {\textbf{Continue}
  \begin{itemize}[leftmargin=*, itemsep=2pt, topsep=1pt, parsep=0pt]
  \item $t \leftarrow t+\Delta t$
  \end{itemize}};

\draw[line] (init) -- (whilehdr);
\draw[line] (whilehdr) -- (dec);

\draw[line] (dec.south) -- (cont.north)
  node[lbl, midway, right]{\footnotesize{No}};

\coordinate (decRight) at ($(dec.east)+(25mm,0)$);

\draw[line]
    (dec.east)
    -- node[lbl, pos=0.5, above]{\footnotesize{Yes}} (decRight)
    -- (decRight |- capture.west)
    -- (capture.west);

\draw[line] (capture) -- (adj);
\draw[line] (adj) -- (topodec);

\draw[line] (topodec.west) -- (cont.east)
  node[lbl, midway, above]{\footnotesize{No}};
  
\draw[line] (topodec) -- (surgery)
  node[lbl, midway, right]{\footnotesize{Yes}};

\coordinate (surgToCont) at (surgery.west -| cont.center);

\draw[line]
  (surgery.west)
  -- (surgToCont)
  -- (cont.south);

\coordinate (contLeft) at ($(cont.west)+(-25mm,0)$);
\coordinate (contUp)   at ($(contLeft |- whilehdr.west)$);

\draw[line] (cont.west) -- (contLeft) -- (contUp) -- (whilehdr.west);

\end{tikzpicture}

\caption{Flowchart of the Lagrangian tracking algorithm with Microscale Topological Surgery (\mts).}
\label{fig:flowchart}

\vspace{1em}

\newlength{\catwidth}
\setlength{\catwidth}{0.16\linewidth}
\newlength{\symwidth}
\setlength{\symwidth}{0.30\linewidth}
\newlength{\descwidth}
\setlength{\descwidth}{0.50\linewidth}

\captionof{table}{Summary of principal notation, terminology, and parameters.}
\label{tab:notation}

\renewcommand{\arraystretch}{1.2}
\begin{tabular}{
>{\raggedright\arraybackslash}p{\catwidth}
>{\raggedright\arraybackslash}p{\symwidth}
>{\raggedright\arraybackslash}p{\descwidth}}
\hline
Category & Symbol & Description \\
\hline

\multirow{2}{=}{Temporal\\ parameters}
& $\tmax$, $\Delta t$, $\{ t_k \}$
& final time, time-step size, and computational times \eqref{lgr-times} \\

& $\{T_m\}$
& \mts\ times \eqref{str-times} \\

\hline

\multirow{2}{=}{Spatial\\ parameters}
& $h_0$, $h$
& coarse and fine mesh spacings \eqref{h0} \\

& $L$, $h/L$
& large-scale deformation length and scale-separation ratio \eqref{h/L} \\

\hline

\multirow{3}{=}{Fidelity \&\\ smoothing\\ parameters}
& $\dref = \tfrac{h}{2}$, $\dcrs=\tfrac{h}{10}$
& default refinement and coarsening thresholds \\

& $\Nmin=3$, $\Lmin=\tfrac{h}{10}$, $\Amin=\tfrac{h^2}{10}$
& default sub-microscale filtering thresholds \\

& $\wfit=10^{-5}$, $\wsm=h^2$, $\wtan=1$
& default interface surgery smoothing parameters \\

\hline

\multirow{6}{=}{Lagrangian\\ tracking}
& $\Gamma_\alpha(t)$, $N(t)$
& evolving interfaces and interface count \\

& $\gamma_\alpha(s,t)$
& interface parametrization \\

& $\tau_\alpha(s,t)$, $\nu_\alpha(s,t)$, $\kappa_\alpha(s,t)$
& tangent, normal, and curvature \\

& $\xi_\alpha(s,t)$, $\lftle_\alpha(s,t)$, $\sftle_\alpha(s,t)$
& (unnormalized) tangent field, stretch factor, and FTLE \\

& $\mathcal W_j(t)$, $c_j(t)$, $w_j(t)$
& Lagrangian observation window, center, and half-width \\

& $\Gamma_{\alpha,h}(t)$, $S_{\alpha,j}$, $N_\alpha$
& polygonal approximation, segment, and node count \\

\hline

\multirow{6}{=}{\mts\\ variables}
& $\Gammalgr_\alpha$, $\Nlgr$, $\zlgr_\alpha$
& pre-processed Lagrangian interface family \\

& $\Gammaeul_\beta$, $\Neul$, $\zeul_\beta$
& extracted Eulerian interface family \\

& $\phi$
& signed-distance function \\

& $\Adj$, $\{\Event_q\}$, $\Nevent$
& adjacency matrix, local events, and number of events \\

& $(\mu,\eta)$, $\Defect$
& topological defect densities and regularized defect measure \\

& $\Surg$, $\{\Plgr,\Peul\}$
& surgical region and surgical pieces \\

\hline
\end{tabular}

\end{figure}

An overview of the \mts\ algorithm is provided in the flowchart in 
\Cref{fig:flowchart} and summarized in the following.
The proposed method evolves the interface family
$\{\Gamma_\alpha(t)\}$ over the time interval $[0,\tmax]$, discretized
uniformly with time-step $\Delta t$,
\begin{subequations}\label{time-discrete}
\begin{equation}\label{lgr-times}
t_k = k \Delta t,
\qquad
k = 0,\ldots,K,
\qquad
K \Delta t = \tmax.
\end{equation}
We prescribe a sequence of \mts\ times
\begin{equation}\label{str-times}
\{T_m\}_{m=1}^{M}
\subset
\{t_k\}_{k=0}^{K},
\qquad
0 < T_1 < T_2 < \cdots < T_M \leq \tmax, 
\end{equation}
\end{subequations}
and evolve the interfaces by a classical Lagrangian tracking method for $t \in (T_m,T_{m+1})$.  
Meanwhile, at each time $t = T_m$, we process the interface family 
by the \mts\ algorithm to detect and enforce topological changes. 
The \mts\ algorithm evolves the interface geometry consistently down
to the prescribed fine scale $h$, while permitting topological
transitions below this threshold. In particular, filamentary structures
with thickness comparable to or smaller than $h$ are regarded as
topologically unresolved and are subsequently reconstructed by the
\mts\ algorithm. Because this topological reconstruction is decoupled
from the underlying Lagrangian geometric representation, \mts\
continues to track sub-microscale filamentary geometry even after the
topology has been modified at scale $h$.

Several existing methods combine Eulerian and Lagrangian information in
hybrid interface representations. Marker-particle level-set and VOF
methods, for example, augment Eulerian interface-capturing schemes with
Lagrangian information to improve accuracy
\cite{EnFeFeMi2002,AuMaSc2003}. The \mts\ methodology instead begins from an
explicitly tracked \emph{pre-processed} Lagrangian interface family
$\{\Gammalgr_\alpha\}$ and uses \emph{Eulerian snapshots}
to supply topological information.
Within the Eulerian-snapshot framework, a common strategy is to detect
topological changes through intersections between the tracked
Lagrangian interface and an underlying Eulerian grid
\cite{GlGrLiTa2000,DuFiFlJiLiLiWu2006,BoLiGlLi2011,ShYoJu2011}.\footnote{
Topological transitions may also be handled without constructing an
Eulerian snapshot
\cite{NoTr1996,GlGrLiTa2000,TrBuEsJuAlTaHaNaJa2001,Regnault2023}.
Such approaches operate directly on the Lagrangian representation and
do not employ an independent Eulerian topology oracle.
}
Rather than relying on grid-intersection tests, \mts\ constructs an
auxiliary Eulerian field whose extracted zero level set serves as a
topological ``oracle'', providing the topological source of truth used
to identify and enforce interface transitions.

Specifically, our \mts\ algorithm constructs a signed-distance function (SDF)
$\phi$ from the pre-processed Lagrangian interface family using an
efficient bounding-box accelerated closest-point algorithm together
with a ray-casting procedure for sign determination. The
zero-level set of $\phi$ defines the topological oracle, from which a
topologically corrected Eulerian interface family
$\{\Gammaeul_\beta\}$ is extracted using a standard
marching-squares procedure \cite{LoCl1998}. 
In this sense, \mts\ may be viewed as a realization
of the basic sharp-interface continuation philosophy advocated by
\citet{LoTr1998}; an essential distinction is that topological
transitions are resolved here through a purely \emph{geometric} construction, 
rather than through the \emph{diffuse-interface} dynamics described in \cite{LoTr1998}.
For related approaches based on reconstructed Eulerian interfaces we refer the reader to
\cite{ShJu2002,CeRo2005,SiSh2007,ShJu2009,WoThGrTu2009,Muller2009,CeRoSiVi2010,GeGoDeWa2025}.

Using the interface families
$\{\Gammalgr_\alpha\}$ and $\{\Gammaeul_\beta\}$ as input, the second
stage of the \mts\ algorithm infers the \emph{adjacency topology} between the
two families from their local geometric configuration. 
The adjacency topology provides a parent--child correspondence
structure between the two interface families, with topological
transitions manifested through departures from one-to-one
correspondence. Such correspondence structures underlie existing topological reconstruction algorithms
\cite{GlGrLiTa2000,DuFiFlJiLiLiWu2006,BoLiGlLi2011,GeGoDeWa2025}; in
\mts, they are made explicit through a graph-theoretic construction.
Specifically, we\ represent the adjacency topology by a bipartite
graph and use its connected components to decompose the topology into
independent local topological events, which are subsequently classified 
according to their parent--child connectivity as 
\emph{trivial}, \emph{vanishing}, or \emph{branching}. 

The first two event types may be resolved directly, whereas \emph{branching}
events require \emph{interface surgery}, performed in the third stage
of the \mts\ pipeline. Existing approaches have either employed
direct-replacement strategies, in which the Lagrangian
interfaces are globally replaced by topologically corrected Eulerian interfaces
\cite{ShJu2002,SiSh2007,Muller2009,ShYoJu2011,PaLoChScPoZa2024,GeGoDeWa2025},
or surgical reconstructions based on grid-intersection tests
\cite{GlGrLiTa2000,DuFiFlJiLiLiWu2006,WoThGrTu2009,BoLiGlLi2011,HeKaStEtYaWo2024}.
In the \mts\ methodology, we employ a surgical framework
based on the construction of localized surgical regions
identified through a collection of topological \emph{defect densities}
and associated \emph{defect measures}. 
Topologically inconsistent portions of the dual interface families are
identified through degeneracies in their closest-point
correspondences, and the resulting defect densities $(\mu,\eta)$
provide a localization of the underlying topological
inconsistencies. A related approach is employed in the ``kink-detection'' algorithm
proposed in \cite{HeCoLu2022} for level-set reinitialization
specialized to filamentary flows. See also the recently proposed
signature method underlying the ``manifold-death'' algorithm of
\cite{ChMaPoZa2022}, which likewise targets filamentation. 

An $h$-scale regularized defect measure $\Defect$ is then constructed
from the defect densities, thereby providing an Eulerian
representation of the microscale defect regions. The associated
surgical regions $\Surg$ are subsequently used to
decompose the Lagrangian and Eulerian interface families into
compatible surgical pieces $\{\Plgr,\Peul\}$, which are stitched
together through a graph-theoretic cycle-tracing procedure to produce
the post-transition interface family.

The principal application considered in this work is multiscale interface
evolution driven by filamentation. To assess the performance of the \mts\
methodology in this setting, three numerical examples of increasing complexity are considered.
The first is the classical rotating-vortex flow \cite{RiKo1998}, which serves as a standard
benchmark for interface-capturing and interface-tracking methods.
Using the \mts\ algorithm, we simulate the formation and subsequent
disappearance of satellite interfaces generated during the filamentation
process. This behavior is consistent with the qualitative dynamics
observed in interface-capturing simulations. 
The second example is the $\mathcal{S}$-flow benchmark, developed in the
context of filament-aware MOF methods
\cite{AhSh2009,JeSuSh2015,HePhXi2023,HeLiPhXi2024}.
Compared with the rotating-vortex flow, this second benchmark exhibits a
moderately more intricate multiscale structure.
The corresponding \mts\ solutions qualitatively reproduce the resulting
filament-breakup dynamics.

The third numerical example is a new filamentation benchmark 
based on a continuous-in-time 
embedding of a nonlinear alternating-shear map.  
The repeated stretching-and-folding of the interface produces substantially richer 
multiscale structure than either of the two previous benchmarks, along with 
a sustained sequence of topological transitions.
To analyze these dynamics, we employ a tangent
modification \cite{Thiffeault2004} of the classical finite-time Lyapunov exponent (FTLE)
\cite{Haller2000,ShLeMa2005,Haller2015}, which provides
a quantitative diagnostic of filament formation 
along the evolving interface.
We then use the tangent FTLE as an \emph{a posteriori} diagnostic tool 
to visualize complex filament-breakup processes at microscale resolution.
Using the analytical structure of the proposed benchmarks, we perform a
quantitative analysis demonstrating $\mathcal{O}(h^2)$
convergence of geometric errors competitive with 
existing filament-aware MOF schemes \cite{JeSuSh2015,HePhXi2023}. 
Moreover, we show that the computational cost of the \mts\ pipeline 
remains commensurate with the baseline cost of Lagrangian tracking, so that the 
combined algorithm exhibits optimal $\mathcal{O}(h^{-2})$ computational scaling.
Finally, we exploit the computational efficiency of the \mts\ framework to construct  
ensembles of microscale-resolving simulations, revealing a competition 
between filament-breakup dynamics and statistical coarse-graining: most breakup 
events are absorbed into a diffuse-interface description, while the breakup in the strongest filamenting
region survives coarse-graining and is statistically persistent.

\subsubsection*{Outline of the paper}

The remainder of the paper is organized as follows. In
\Cref{sec:lgr}, we review the classical Lagrangian interface-tracking
framework and introduce the tangent-FTLE diagnostic used in the
subsequent numerical examples. In \Cref{sec:lgr-examples}, we introduce 
two of the benchmark problems and establish baseline results using the 
classical Lagrangian tracking algorithm. The three
core stages of the \mts\ methodology are then developed in
\Cref{sec:extraction,sec:adjacency,sec:surgery}, where we introduce the
Eulerian snapshot construction, the adjacency topology framework, and
the interface surgery algorithm. Numerical results are presented in \Cref{sec:examples}, followed by
concluding remarks in \Cref{sec:conclusion}. 
Detailed descriptions of the supporting algorithms 
for the \mts\ processor are provided in \Cref{appendix:aux-algs}.

\section{Classical Lagrangian interface tracking}
\label{sec:lgr}

This section reviews the classical Lagrangian interface-tracking 
formulation adopted in the present work.
The goal is to establish notation, define the governing
evolution equations, and provide a reference
implementation for the subsequent \mts\ reconstruction framework.
In addition, we introduce a tangential finite-time Lyapunov exponent
(FTLE) diagnostic for characterizing the formation and development of
filamentary structures along evolving interfaces.

\subsection{Lagrangian formulation of the interface evolution problem}
\label{subsec:lagrangian_formulation}

A schematic of the interface tracking problem is shown in \Cref{fig:schematic}. 
We consider the evolution of a family of bounded, simply connected domains
\begin{subequations}\label{eq:lgr-prob}
\begin{equation}
\Omega_{\alpha}(t)
\subset
\mathbb{R}^2,
\qquad
t \in [0,\tmax],
\end{equation}
where the subscript $\alpha$ indexes the family.
The boundary of each domain is denoted by
\begin{equation}
\Gamma_{\alpha}(t)
\coloneqq
\partial \Omega_{\alpha}(t),
\end{equation}
\end{subequations}
and is referred to as an \emph{interface}.  Each interface is assumed to be a smooth,
simple, closed, oriented curve, and we assume that the interfaces remain 
mutually non-intersecting throughout the evolution so that
${\Gamma_{\alpha}(t)} \cap {\Gamma_{\beta}(t)} =\varnothing$ for $\alpha \neq \beta$.

\begin{figure}[ht]
\centering
\includegraphics[width=0.5\textwidth]{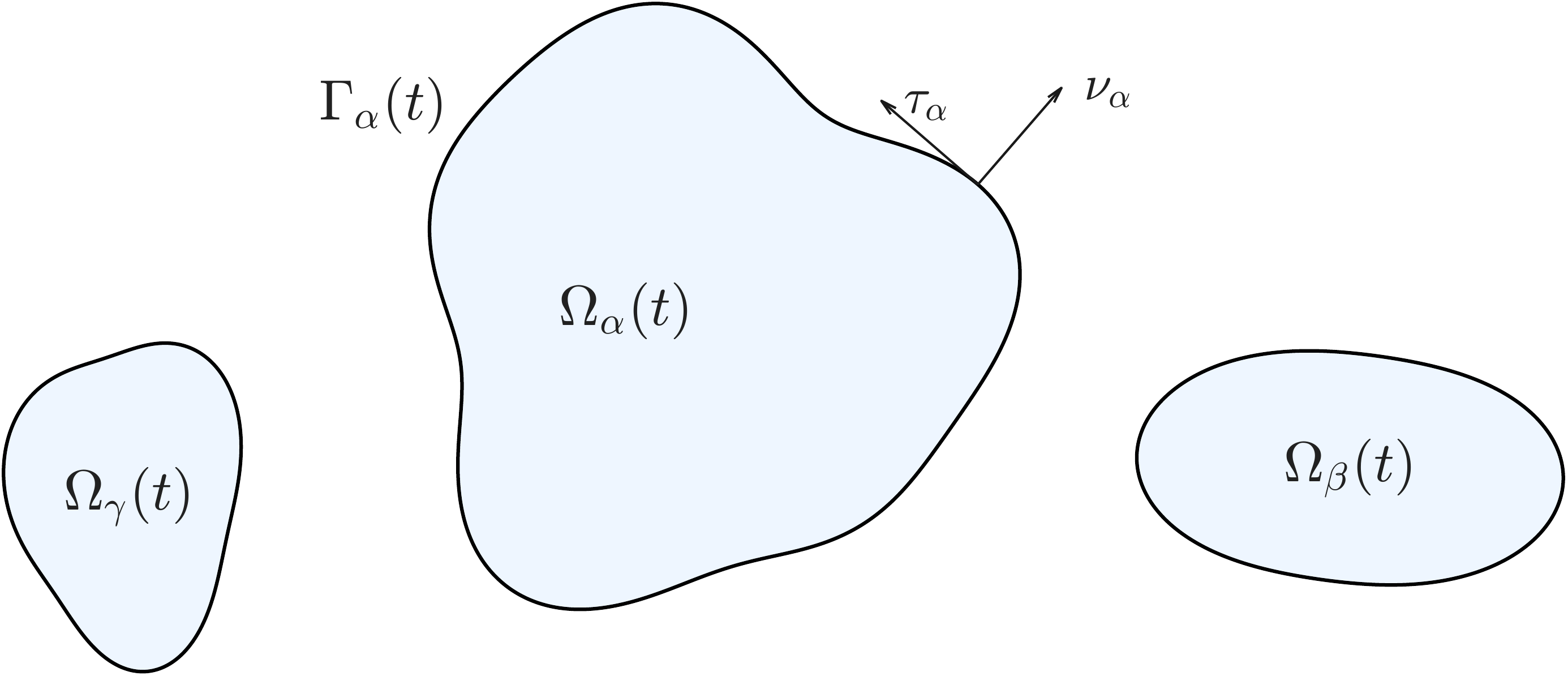}
\caption{Schematic of the interface evolution problem.}
\label{fig:schematic}
\end{figure}

As outlined in the flowchart in \Cref{fig:flowchart}, the collection $\{\Gamma_\alpha(t)\}$ is 
evolved primarily using a classical Lagrangian interface-tracking method, with 
topological changes enforced by the \mts\ algorithm at the prescribed 
set of \mts\ times \eqref{str-times}. The Lagrangian representation consists of an explicit time-dependent
parametrization of each interface
\begin{subequations}\label{eq:z}
\begin{equation}
{\gamma}_{\alpha}(s,t)
=
\left(
\gamma_{\alpha}^{1}(s,t),
\gamma_{\alpha}^{2}(s,t)
\right),
\qquad
(s,t) \in \mathbb{T} \times [0,\tmax],
\end{equation}
where $s$ denotes the parameter along the interface and 
$\mathbb{T} = \mathbb{R}/ 2\pi\mathbb{Z}$ denotes the 
one-dimensional periodic torus, identified
with the interval $[0,2\pi]$ equipped with periodic boundary conditions.
Each interface is therefore represented by
\begin{equation}
\Gamma_{\alpha}(t)
=
\left\{
{\gamma}_{\alpha}(s,t)
:
s \in \mathbb{T}
\right\}.
\end{equation}
\end{subequations}
The unit tangent and normal vectors to each 
interface $\Gamma_{\alpha}(t)$ are given by
\begin{subequations}\label{eq:lgr-geom}
\begin{equation}
\tau_\alpha(s,t)
=
\frac{\p_s \gamma_\alpha(s,t)}
{|\p_s \gamma_\alpha(s,t)|}
\qquad
\text{and}
\qquad
\nu_\alpha(s,t)
=
\tau_\alpha^\perp(s,t),
\end{equation}
where $(a,b)^\perp = (-b,a)$, and the curvature of the interface is given by
\begin{equation}
\kappa_\alpha (s,t) 
=
\frac{
| 
\p_s \tau_\alpha
\cdot
\nu_\alpha
|
}{
|\p_s \gamma_\alpha|
}
=
\frac{
\left|
\p_s \gamma_\alpha^{1}\,
\p_s^2 \gamma_\alpha^{2}
-
\p_s \gamma_\alpha^{2}\,
\p_s^2 \gamma_\alpha^{1}
\right|
}{
|\p_s \gamma_\alpha|^3
}.
\end{equation}
\end{subequations}

The interface index $\alpha$ ranges over
\begin{subequations}
\begin{equation}
\alpha = 1,\ldots,N,
\end{equation}
where $N$ is a non-negative integer.
In the classical Lagrangian interface-tracking method \cite{ChGlMcPlYa1986}, $N$ is a
constant fixed by its value at the initial time. 
In the proposed \mts\ method, interfaces are allowed to undergo topological transitions, so that $N(t)$ becomes a
time-dependent non-negative integer-valued function
\begin{equation}
N
:
[0,\tmax]
\to
\mathbb{N}. 
\end{equation}
In particular, $N(t)$ remains constant on each time interval
$t \in [T_m,T_{m+1})$, with changes permitted only at the \mts\ times $t=T_m$.
\end{subequations}

For times $t \in (T_m, T_{m+1})$, the interface 
parametrizations $\gamma_\alpha$ are evolved 
by solving the Lagrangian evolution equations
\begin{subequations}\label{eq:lgr}
\begin{equation}\label{eq:lgr-evolution}
\p_t \gamma_\alpha (s,t) 
= u(\gamma_\alpha(s,t),t), 
\qquad (s,t) \in \mathbb{T} \times (T_m, T_{m+1}), 
\quad \alpha = 1, \ldots, N(T_m), 
\end{equation}
with initial data 
\begin{equation}\label{eq:lgr-init}
\gamma_\alpha(s,T_m) \quad \text{generated by the \mts\ algorithm applied at time $t = T_m$}. 
\end{equation}
\end{subequations}
The \emph{classical Lagrangian tracking} algorithm is recovered by bypassing the 
\mts\ reconstruction step in \eqref{eq:lgr-init} and simply taking the initial data at each 
time $T_m$ to be the terminal interface configuration obtained 
from the preceding Lagrangian evolution.

We assume that the velocity field
\[
u : \mathbb{R}^2 \times [0,\tmax] \to \mathbb{R}^2
\]
is smooth. 
The proposed framework does not depend on how $u(x,t)$ is obtained; for instance, it may be 
prescribed analytically, computed by a coarse-scale 
fluid solver, or generated by a boundary integral method.

\subsection{Polygonal interface representation with adaptive refinement}
\label{subsec:adaptive_refinement}

The Lagrangian evolution equations \eqref{eq:lgr} are discretized 
in time using the standard TVD-RK3 scheme \cite{ShOh1988}
with fixed time step $\Delta t$.
In the following, we describe the spatial discretization scheme employed, including the adaptive 
refinement procedure used to maintain geometric resolution throughout the evolution. 
Adaptive refinement techniques for Lagrangian interface tracking are
well established in the literature, including three-dimensional implementations 
\cite{UnTr1992,GlGrLiTa2000,TrBuEsJuAlTaHaNaJa2001,DuFiFlJiLiLiWu2006,BoLiGlLi2011,GoEvWaDe2022,Regnault2023}.

We use Greek subscripts $\alpha,\beta,\ldots$ to index the interface
family and Roman subscripts $i,j,\ldots$ to index the spatial discretization along
each interface. At each time $t$, we denote by 
\begin{subequations}
\begin{equation}
s_{\alpha,i}(t) \in \mathbb{T},
\qquad
i = 1,\ldots,N_\alpha(t),
\qquad
\alpha = 1,\ldots,N(t), 
\end{equation}
a family of discretizations of the the parameter interval $\mathbb{T}$. 
Here $N_\alpha : [0,\tmax] \to \mathbb{N}$ denotes 
the number of \emph{nodes} in the corresponding 
discretization, which is allowed to vary in time 
under adaptive refinement and coarsening. We will often suppress the explicit 
dependence on time, writing $N_\alpha \equiv N_\alpha(t)$ with the understanding 
that $N_\alpha$ will typically change at the discrete times $t = t_k$.

The spatial approximation to the interface parametrization is denoted by
\begin{equation}
\gamma_{\alpha,i}(t)
\coloneqq
\gamma_\alpha(s_{\alpha,i}(t),t), 
\end{equation}
with periodic closure enforced by identifying
\begin{equation}
\gamma_{\alpha,N_\alpha+1}(t)
=
\gamma_{\alpha,1}(t).
\end{equation}
Each set of points $\{\gamma_{\alpha,i} \}_{i=1}^{N_\alpha}$ defines a polygonal approximation of the interface, 
\begin{equation}
\Gamma_{\alpha,h}(t)
=
\bigcup_{i=1}^{N_\alpha} 
S_{\alpha,i}(t),  
\qquad 
S_{\alpha,i}(t)
\coloneqq
\left[
\gamma_{\alpha,i}(t),
\gamma_{\alpha,i+1}(t)
\right],
\end{equation}
\end{subequations}
where $S_{\alpha,i}(t)$ denotes the oriented line segment 
connecting consecutive interface nodes. 

The subscript $h$ in $\Gamma_{\alpha,h}$ indicates that the 
resolution of the polygonal approximation is controlled by the fine-scale $h$. 
At this stage, $h$ may be viewed as prescribed; later, we will define $h$ 
as a dyadic refinement of a coarse {computational scale} $h_0$. 
As small scales develop in the evolving interface, a simple refinement 
algorithm is employed to maintain resolution. 
We introduce refinement and coarsening thresholds
\[
0 \leq \dcrs < \dref < h,
\]
which determine the admissible range of segment lengths in the
polygonal approximation. Segments with length exceeding $\dref$ are
refined through midpoint insertion and cubic interpolation, while segments with length below
$\dcrs$ are coarsened through node removal, producing a dyadic refinement of the parameter space $\mathbb{T}$. 
The resulting interface discretization is necessarily subgrid to the fine-scale $h$. 
The complete adaptive refinement procedure is described in \Cref{alg:air} in \Cref{appendix:aux-algs}.

\subsection{Characterizing filament formation by the tangent FTLE}
\label{subsec:tangent-ftle}

For visualization and qualitative analysis, it is useful to have a diagnostic that 
identifies the development of localized small scales and filamentary structures 
along the evolving interface. A common choice is the interface curvature; however, as the following 
examples illustrate, large curvature does not necessarily imply filament formation, while strongly 
filamentary structures may exhibit only modest curvature.

\begin{example}[A shrinking circle]
\label{ex:shrinking-circle}
Consider an exponentially shrinking circle
$\gamma(s,t)=R_0 e^{-\lambda t}(\cos s,\sin s)^\top$, $s\in\mathbb{T}$,
with $R_0>0$ and $\lambda>0$. The curvature is
$\kappa(s,t)=R_0^{-1}e^{\lambda t}$, which becomes arbitrarily large as
$t\to\infty$. Nevertheless, the interface remains perfectly circular for all
time and develops no thin elongated structures.
\end{example}

\begin{example}[Localized incompressible stretching]
\label{ex:localized-stretching}
Let $\gamma_0(s)=(a_0\cos s,b_0\sin s)^\top$, $s\in\mathbb{T}$, denote an ellipse
with semi-axes $a_0=4$ and $b_0=0.35$. We deform this curve by the
time-dependent family of maps
\[
\psi(x_1,x_2,t)
=
\left(
F(x_1,t), \,
\frac{x_2}{\partial_1 F(x_1,t)}
\right)^\top,
\qquad
t\in[0,1],
\]
where
\[
F(x_1,t)
=
x_1
+
\tfrac{\sqrt{\pi}}{2}
A t \operatorname{erf}(x_1),
\qquad
\partial_1 F(x_1,t)
=
1+A t e^{-x_1^2},
\]
with $A=4$. The deformation gradient is
\[
D\psi(x_1,x_2,t)
=
\begin{pmatrix}
\partial_1 F(x_1,t) & 0 \\[1mm]
-\tfrac{x_2 \partial_1^2 F(x_1,t)}{(\partial_1 F(x_1,t))^2}
&
\tfrac{1}{\partial_1 F(x_1,t)}
\end{pmatrix},
\]
so that $\det D\psi=1$.
The initial and final interfaces are shown in
\Cref{fig:ftle-example-int}. The localized increase in $\partial_1 F(x_1,t)$ produces
strong tangential stretching near $x_1=0$ together with transverse compression
through the factor $1/\partial_1 F(x_1,t)$, causing the interface to develop an elongated
filamentary region. Meanwhile, the curvature (shown in
\Cref{fig:ftle-example-kappa}) remains bounded by its initial maximum value.
\end{example}

\begin{figure}[ht]
\centering
\begin{subfigure}[t]{0.25\linewidth}
  \centering
  \includegraphics[width=0.9\linewidth]{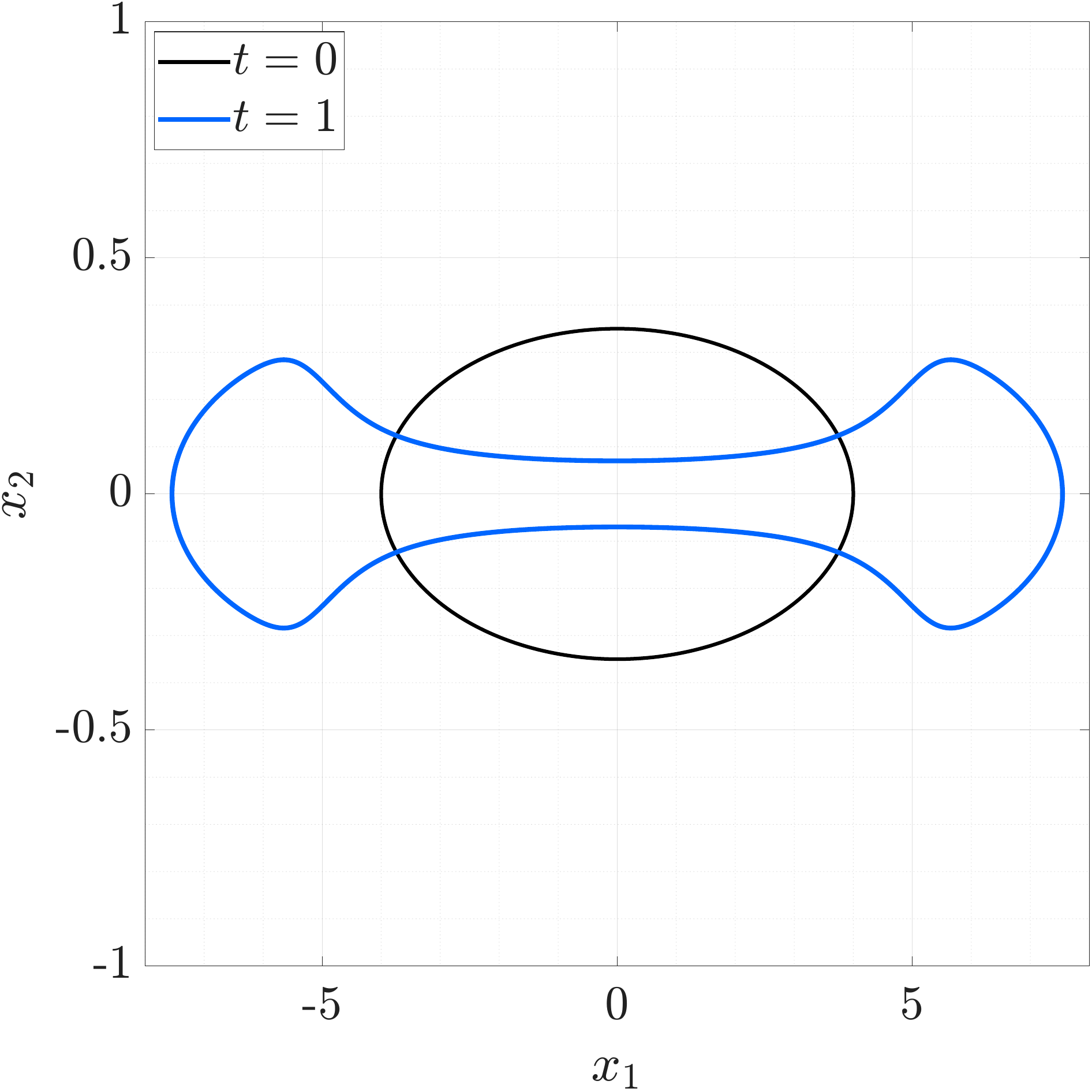}
  \caption{Interface evolution}
  \label{fig:ftle-example-int}
\end{subfigure}
\hspace{0.5em}
\begin{subfigure}[t]{0.25\linewidth}
  \centering
  \includegraphics[width=0.9\linewidth]{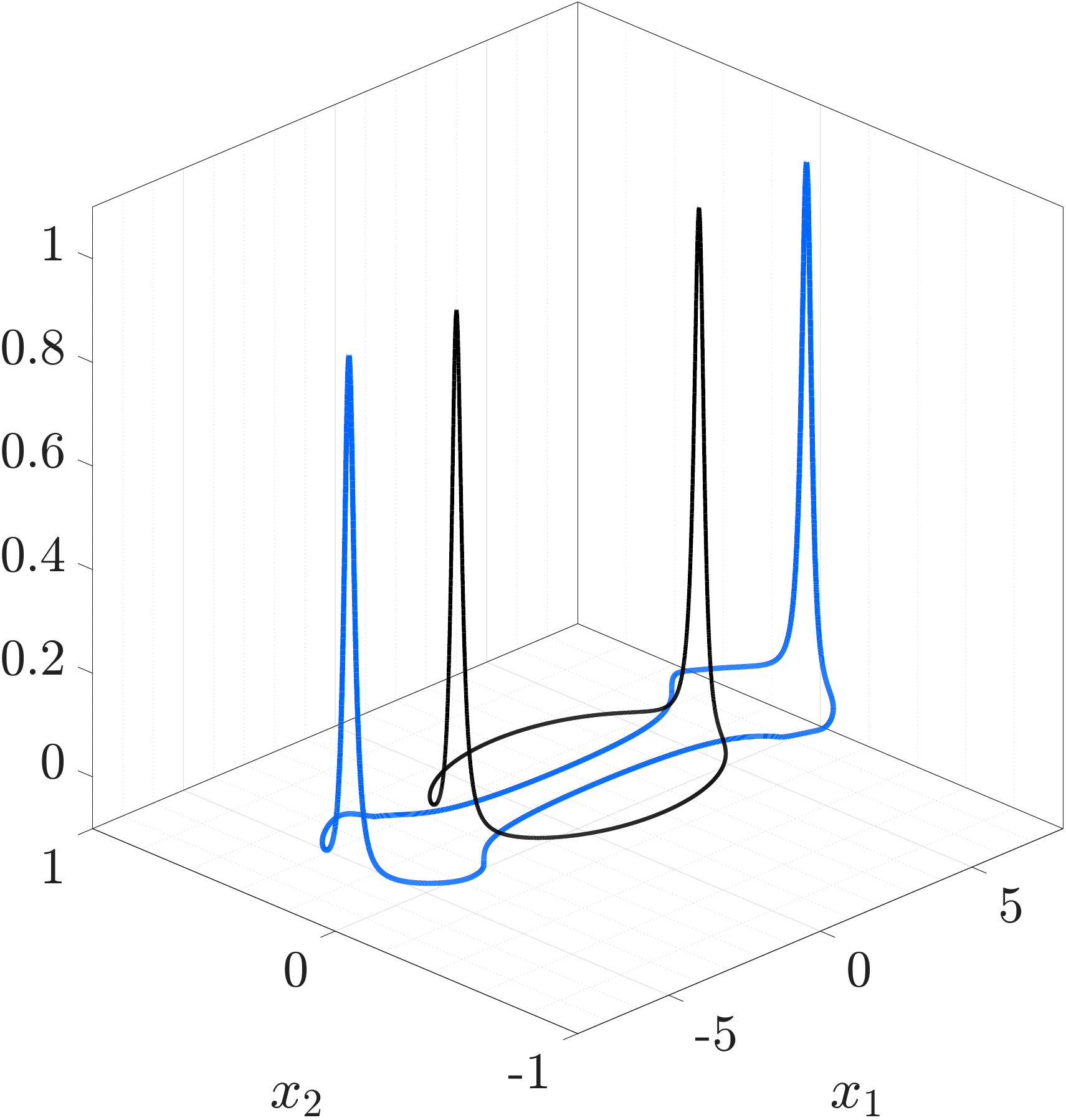}
  \caption{Normalized curvature}
  \label{fig:ftle-example-kappa}
\end{subfigure}
\hspace{0.5em}
\begin{subfigure}[t]{0.25\linewidth}
  \centering
  \includegraphics[width=0.9\linewidth]{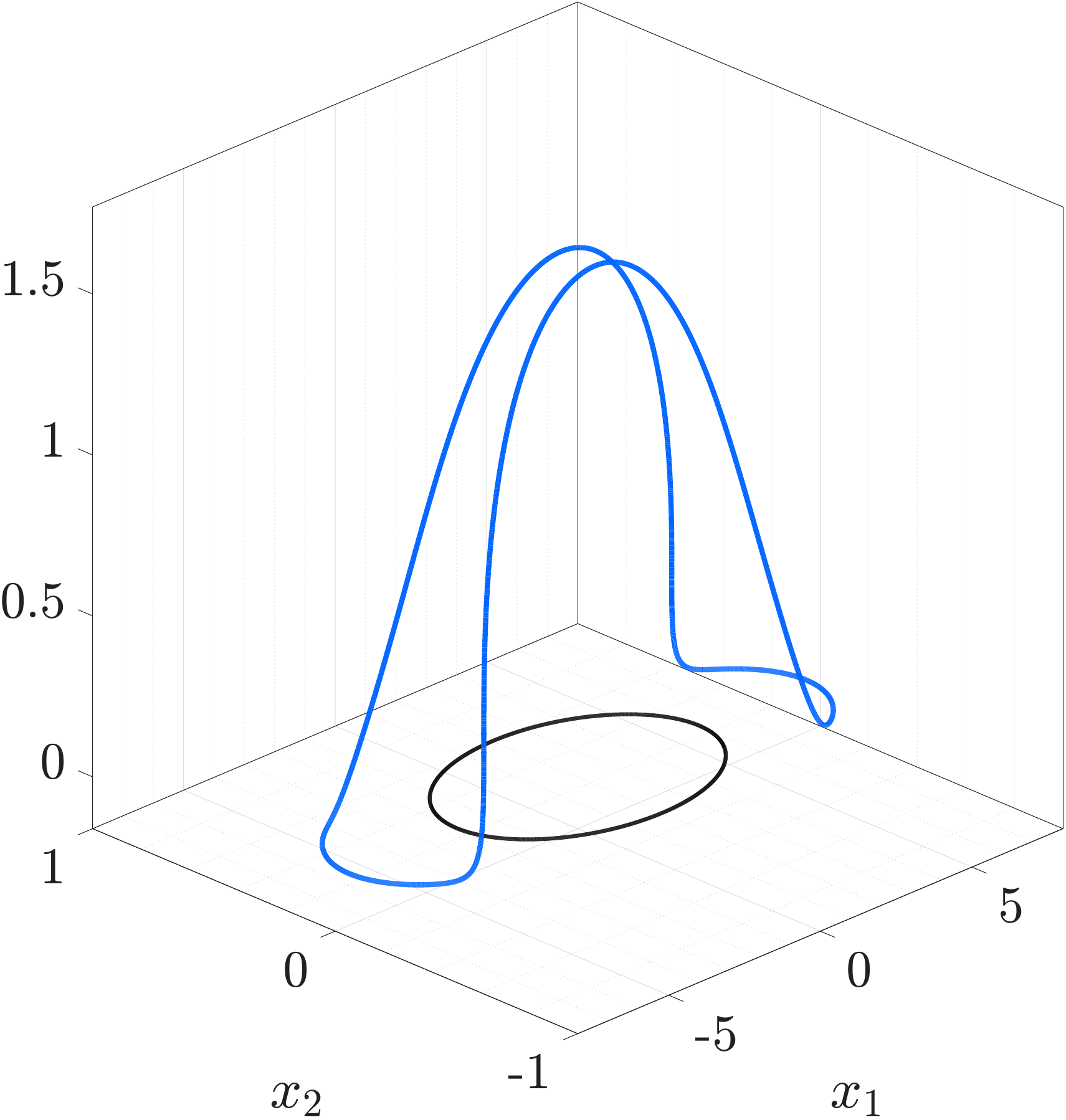}
  \caption{Tangent FTLE}
  \label{fig:ftle-example-sigma}
\end{subfigure}
\caption{
Localized incompressible stretching of the ellipse from
\Cref{ex:localized-stretching}. In all panels, quantities at $t=0$ are shown
in black and quantities at $t=1$ are shown in blue. The curvature and
tangent FTLE diagnostics are plotted directly on the evolving interface.
\textbf{Left:}
Formation of an elongated filament in the interface $\gamma(s,1)$.
\textbf{Center:}
Normalized curvature
$|\kappa(s,t)|/\max_{r\in\mathbb{T}}|\kappa(r,0)|$.
Despite the pronounced filamentation, the maximum curvature at the
final time remains bounded by its initial maximum value.
\textbf{Right:}
The tangent FTLE develops a pronounced localized maximum that
identifies the filamentary region.
}
\label{fig:ftle-example}
\end{figure}

To characterize the formation of filamentary structures, we introduce a
Lagrangian diagnostic based on the tangential deformation of neighboring
material points along the evolving interface family. Let
\begin{subequations}\label{flowmap}
\begin{equation}
\psi : \mathbb{R}^2 \times [0,\tmax] \to \mathbb{R}^2
\end{equation}
denote the flowmap associated with the velocity field $u(x,t)$ so that
\begin{equation}
\gamma_\alpha(s,t)
=
\psi( \gamma_\alpha(s,0),t).
\end{equation}
\end{subequations}
The classical finite-time Lyapunov exponent (FTLE) measures the
maximal finite-time stretching experienced by infinitesimal
perturbations under the flowmap
\cite{Haller2000,ShLeMa2005,Haller2015}.
Introducing the Cauchy--Green strain tensor
\[
C(x,t)
=
D\psi(x,t)^\top D\psi(x,t),
\]
where $D\psi$ denotes the deformation gradient of the flowmap, the classical FTLE is defined by
\[
\Lambda(x,t)
\coloneqq
\tfrac{1}{2t}
\log
\lambda_{\max}(C(x,t))
=
\tfrac{1}{2t}
\log
\left(
\max_{v\neq0}
\tfrac{v^\top C(x,t)v}{v^\top v}
\right)
=
\tfrac{1}{t}
\log
\left(
\max_{v\neq0}
\tfrac{|D\psi(x,t)v|}{|v|}
\right),
\]
where $\lambda_{\max}$ denotes the largest eigenvalue of $C(x,t)$.
Thus, the classical FTLE measures the maximal finite-time 
stretching rate over all perturbation directions.

For interface dynamics, however, the relevant perturbations are those
connecting neighboring material points on the interface. Since the
stretching and compression of such perturbations govern the local
concentration or separation of interface markers and thereby drive the
formation of filamentary structures, we follow the viewpoint
of \citet{Thiffeault2004} and restrict attention to perturbations in the
tangent direction of the material interface.
Introducing the unnormalized tangent field
\begin{subequations}\label{xi-def}
\begin{equation}
\xi_\alpha(s,t)
\coloneqq
\partial_s \gamma_\alpha(s,t),
\end{equation}
differentiation of \eqref{flowmap} with respect to $s$ then yields
\begin{equation}
\xi_\alpha(s,t)
=
D\psi(\gamma_\alpha(s,0),t)
\,\xi_\alpha(s,0).
\end{equation}
\end{subequations}
The tangential stretch factor
\begin{subequations}\label{ftle}
\begin{equation}\label{lftle}
\lftle_\alpha(s,t)
\coloneqq
\frac{
|\xi_\alpha(s,t)|
}{
|\xi_\alpha(s,0)|
}
=
\left|
D\psi(\gamma_\alpha(s,0),t)
\,\tau_\alpha(s,0)
\right|
\end{equation}
therefore measures the local tangential stretching or compression of
neighboring material points along the evolving interface.
The associated \emph{tangent finite-time Lyapunov exponent} (FTLE) is defined by
\begin{equation}\label{sftle}
\sftle_{\alpha}(s,t)
\coloneqq
\tfrac{1}{t}
\log
\lftle_{\alpha}(s,t).
\end{equation}
\end{subequations}
Positive values of $\sftle_\alpha$ correspond to tangential stretching 
and negative values to tangential compression. 
Localized extrema of $\sftle_\alpha$ identify regions of concentrated tangential 
deformation and therefore provide a useful diagnostic for the development of filamentary structures.

To compute the tangent FTLE in the \mts\ algorithm, we differentiate
the Lagrangian evolution equation \eqref{eq:lgr-evolution} with respect
to $s$, obtaining
\begin{subequations}\label{eq:xi}
\begin{equation}\label{eq:xi-evolution}
\partial_t \xi_\alpha(s,t)
=
Du(\gamma_\alpha(s,t),t)\,
\xi_\alpha(s,t),
\qquad
(s,t)\in\mathbb{T}\times(T_m,T_{m+1}),
\quad
\alpha=1,\ldots,N(T_m),
\end{equation}
with initial data
\begin{equation}\label{eq:xi-init}
\xi_\alpha(s,T_m)
\quad
\text{generated by the \mts\ algorithm applied at time $t=T_m$.}
\end{equation}
\end{subequations}
Here $Du$ denotes the velocity gradient.
As with the interface evolution \eqref{eq:lgr}, classical
Lagrangian tracking corresponds to bypassing the \mts\ reconstruction
step in \eqref{eq:xi-init} and directly continuing the tangent
field from the preceding evolution interval.
In the numerical method, \eqref{eq:xi-evolution}
is integrated alongside the interface evolution \eqref{eq:lgr-evolution},
and the tangent FTLE diagnostic \eqref{ftle}
is then evaluated \emph{a posteriori} from the computed tangent field.\footnote{
Whenever adaptive refinement or coarsening modifies the interface
discretization, both the tangent field $\xi_\alpha(s,t)$ and
the reference quantity $|\xi_\alpha(s,0)|$ are transferred to the new
interface nodes by interpolation.
}
The resulting quantity provides a simple Lagrangian measure of localized
tangential stretching and compression directly along the evolving
interface family.

In contrast to curvature, the tangent FTLE correctly distinguishes
between the two preceding examples. For the shrinking circle of
\Cref{ex:shrinking-circle},
$|\partial_s\gamma(s,t)|=R_0e^{-\lambda t}$ and
$|\partial_s\gamma(s,0)|=R_0$, so that
$\lftle(s,t)=e^{-\lambda t}$ and $\sftle(s,t)=-\lambda$.
Thus, the tangent FTLE is spatially uniform and constant in time despite the
unbounded growth of the curvature. For the localized-stretching example of
\Cref{ex:localized-stretching}, the tangent FTLE develops a pronounced
localized maximum near the center of the interface, as shown in
\Cref{fig:ftle-example-sigma}. This maximum identifies the concentrated
tangential stretching responsible for the filamentary region visible in
\Cref{fig:ftle-example-int}, whereas the curvature shown in
\Cref{fig:ftle-example-kappa} remains bounded by its initial maximum value.

The preceding examples show that curvature and tangent FTLE 
capture different aspects of interface evolution, with curvature serving as an 
instantaneous geometric descriptor and the tangent FTLE as an accumulated 
Lagrangian geometric descriptor. 
Nevertheless, near filament \emph{tips}, the two quantities become closely 
related \cite{Thiffeault2004,BeDrLaWi2022}. 
Using a local quadratic model,
\citet{Thiffeault2004} derived a universal relationship between
tangential stretching and curvature. 
In the present notation, this stretch--curvature relation takes the following form.

\begin{proposition}[Stretch--curvature relation at a filament tip]
Suppose that, near a filament tip, the interface admits the local
representation
\[
\gamma(s,t)
=
(\beta s,f(s)),
\qquad
f(s)
=
\tfrac12 a s^2
+
\mathcal O(s^3),
\qquad
a,\beta>0.
\]
Then, as $s\to0$,
\begin{subequations}\label{stretch-curve-law}
\begin{equation}\label{stretch-curve-law-a}
\lftle(s,t)
\sim
\kappa(s,t)^{-1/3},
\end{equation}
and consequently
\begin{equation}\label{stretch-curve-law-b}
\sftle(s,t)
=
-\tfrac{1}{3t}\log\kappa(s,t)
+
\mathcal O(t^{-1}).
\end{equation}
\end{subequations}
\end{proposition}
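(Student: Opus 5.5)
The plan is to reduce both sides of \eqref{stretch-curve-law-a} to the two local parameters $\beta$ and $a$. Then I would use the incompressible local-fold model of \citet{Thiffeault2004} to tie $a$ to $\beta$.

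\emph{Step 1: express $\lftle$ and $\kappa$ through the model.} From $\gamma(s,t)=(\beta s,f(s))$ we have $\p_s\gamma=(\beta,f'(s))=(\beta,as+\mathcal O(s^2))$. Hence $|\xi(s,t)|=\beta\,(1+\mathcal O(s^2))$ and
\[
\lftle(s,t)=\frac{\beta}{|\xi(s,0)|}\bigl(1+\mathcal O(s^2)\bigr).
\]
Here $|\xi(s,0)|$ is a smooth, nonvanishing $\mathcal O(1)$ quantity fixed by the initial parametrization. Likewise the formula \eqref{eq:lgr-geom} gives
\[
\kappa(s,t)=\frac{\beta f''(s)}{(\beta^2+f'(s)^2)^{3/2}}=\frac{a}{\beta^2}\bigl(1+\mathcal O(s)\bigr).
\]
So as $s\to0$ both quantities tend to their tip values $\lftle(0,t)\propto\beta$ and $\kappa(0,t)=a/\beta^2$. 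It therefore suffices to show $a/\beta^2\sim\beta^{-3}$, that is, $a\sim\beta^{-1}$ with constants independent of $t$.

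\emph{Step 2: relate $a$ to $\beta$.} This is where the physics of the local model enters. Near the tip, the deformation gradient $D\psi(\gamma(0,0),t)$ maps the initial unit tangent onto the $x_1$-direction with factor $\beta/|\xi(0,0)|$. Because $\det D\psi=1$ (incompressibility, as in \Cref{ex:localized-stretching}), the complementary direction is stretched by the reciprocal factor, of order $\beta^{-1}$. The transverse displacement $f(s)$ is generated by the normal component of the initial curve's second-order geometry, which is $\mathcal O(1)$. Applying the deformation should then give $f''(0)=a\sim\beta^{-1}$.

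\emph{Main obstacle.} The hard part is making Step 2 rigorous. The exact identity is
\[
\p_s^2\gamma=D\psi\,\p_s^2\gamma_0+D^2\psi[\p_s\gamma_0,\p_s\gamma_0],
\]
and the Hessian term must be shown to be subdominant. I would first follow Thiffeault's local approximation: replace $\psi$ near the tip by its linearization, a hyperbolic area-preserving map, so the Hessian term vanishes. Under that map the transverse component of $\p_s^2\gamma_0$, generically nonzero, is amplified by exactly $\beta^{-1}$ up to $\mathcal O(1)$ factors. This yields $a=c\,\beta^{-1}$ with $c$ bounded above and below independently of $t$. The statement itself asserts only asymptotic equivalence in the sense of this model, so I would state this assumption explicitly rather than attempt a fully nonlinear estimate.

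\emph{Step 3: conclude.} Combining the steps gives $\kappa(0,t)=c\,\beta^{-3}$ and $\lftle(0,t)=\beta/|\xi(0,0)|$, hence $\lftle\sim\kappa^{-1/3}$, which is \eqref{stretch-curve-law-a}. Taking logarithms,
\[
\log\lftle=-\tfrac13\log\kappa+\tfrac13\log c-\log|\xi(0,0)|+o(1).
\]
The additive terms are $\mathcal O(1)$ uniformly in $t$. Dividing by $t$ and using \eqref{sftle} gives \eqref{stretch-curve-law-b} with the remainder $\mathcal O(t^{-1})$.
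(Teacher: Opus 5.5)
Your proposal is correct once the modeling assumption you flag is granted, but it takes a different route from the paper. The paper's proof is, in substance, your Step 1 alone. It writes $\kappa(s,t)=\beta|f''(s)|\,|\p_s\gamma(s,t)|^{-3}$ and absorbs $\beta|a|$ and $|\p_s\gamma(s,0)|$ into the asymptotic constant, with no appeal to the flow or to incompressibility. For \eqref{stretch-curve-law-a} read that way, your Step 2 is unnecessary.

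What Step 2 buys is the time-uniformity hidden in \eqref{stretch-curve-law-b}. At the tip the remainder is exactly $\tfrac{1}{3t}\log\bigl(\beta a/|\xi(0,0)|^3\bigr)$, which is $\mathcal{O}(t^{-1})$ only if $\beta a$ stays bounded above and below uniformly in $t$. The paper takes this for granted by calling $\beta|a|$ a fixed prefactor. It genuinely needs justification. The compressible contraction $x\mapsto e^{-\lambda t}x$ of the parabola $(s,\tfrac12 s^2)$ has exactly the assumed local form with $\beta=a=e^{-\lambda t}$. Yet $\sftle(0,t)=-\lambda$ while $-\tfrac{1}{3t}\log\kappa(0,t)=-\lambda/3$, so \eqref{stretch-curve-law-b} fails there.

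Your area-preservation step is cleanest stated as an invariance. If $\gamma=M\gamma_0+b$ with $\det M=1$, then $\det[\p_s\gamma,\p_s^2\gamma]=\det[\p_s\gamma_0,\p_s^2\gamma_0]$. At the tip the left side is $\beta a$, so $\beta a=\kappa_0|\xi(0,0)|^3$ with $\kappa_0$ the initial curvature there. In fact this gives $\kappa(s,t)=\kappa(s,0)\,\lftle(s,t)^{-3}$ for every $s$, not only at $s=0$. This formulation also tightens your remark about the complementary direction. Area preservation fixes only the component of the image of the initial normal that is orthogonal to the current tangent, and that component is all that enters $f''(0)$. It also shows hyperbolicity is irrelevant.

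The price is that your proof rests on hypotheses not in the proposition: incompressibility and negligible $D^2\psi$. For the second, a concrete handle is $\p_t\det[\xi,\p_s\xi]=\det\bigl[\xi,D^2u[\xi,\xi]\bigr]$ for divergence-free $u$, which is cubic in $\xi$.

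In short, the paper's route is shorter and uses only the stated local form, but it silently treats the absorbed constant as independent of $t$. Yours explains why that constant is $t$-independent, at the cost of extra assumptions.
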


\begin{proof}
The local representation gives
$\partial_s\gamma=(\beta,f'(s))$ and
$\partial_s^2\gamma=(0,f''(s))$. Hence
\[
\kappa(s,t)
=
{\beta |f''(s)|} \,
{\bigl(\beta^2+f'(s)^2\bigr)^{-\frac32}}.
\]
Since $f''(s)=a+\mathcal O(s)$ as $s\to0$ and
$|\partial_s\gamma(s,t)|=(\beta^2+f'(s)^2)^{\frac12}$, we obtain
\[
\kappa(s,t)
\sim
|\partial_s\gamma(s,t)|^{-3},
\]
after absorbing the fixed prefactor $\beta |a|$ into the
asymptotic constant.
Using \eqref{lftle} and absorbing the factor $|\partial_s\gamma(s,0)|$, it follows that
\[
\kappa(s,t)
\sim
\lftle(s,t)^{-3},
\]
which upon rearranging yields \eqref{stretch-curve-law-a}.
Taking logarithms and using \eqref{sftle} then yields
\eqref{stretch-curve-law-b}.
\end{proof}

The asymptotic relation \eqref{stretch-curve-law} will be verified numerically 
in the benchmark problems considered below. 
From a computational perspective, the tangent FTLE possesses an 
important practical advantage: while curvature is a second-derivative quantity 
whose computation becomes highly sensitive near breakup at filament tips, the tangent 
FTLE is obtained from the evolution of the first-derivative tangent field and can 
be naturally continued through topological changes by the \mts\ reconstruction procedure.
Consequently, the tangent FTLE provides a more robust diagnostic for tracking 
filamentation and subsequent breakup.

\subsection{Lagrangian observation windows for visualizing filamentation}
\label{subsec:observation-windows}

As we will demonstrate below, one of the principal advantages of the
Lagrangian tracking algorithm is its accuracy and efficiency in chaotic
and multiscale flow regimes. As the complexity of the interface
increases in such regimes, the formation and evolution of individual filaments
(particularly at microscopic scales) become difficult to discern
from global visualizations alone. To facilitate the qualitative
analysis of these processes, we introduce a collection of 
\emph{Lagrangian observation windows} that remain attached to selected material
regions throughout the evolution. Let
\begin{subequations}\label{eq:window-centers}
\begin{equation}
c_j : [0,\tmax] \to \mathbb{R}^2,
\qquad
j=1,\ldots,N_{\mathsf{win}},
\end{equation}
denote a collection of marker trajectories satisfying the Lagrangian
evolution equation
\begin{equation}
\dot{c}_j(t)
=
u(c_j(t),t),
\qquad
c_j(0)=c_j^0, 
\end{equation}
where the initial locations $\{c_j^0\}_{j=1}^{N_{\mathsf{win}}}$
are chosen according to the filamentary regions of interest. 
For each marker trajectory $c_j(t)$, we define the associated
Lagrangian observation window
\begin{equation}
\mathcal W_j(t)
=
[c_j^1(t)-w_j(t),\,
 c_j^1(t)+w_j(t)]
\times
[c_j^2(t)-w_j(t),\,
 c_j^2(t)+w_j(t)],
\label{eq:observation-window}
\end{equation}
\end{subequations}
where $c_j=(c_j^1,c_j^2)$ and $w_j(t)>0$ denotes a prescribed window half-width.
Because the window centers evolve as passive Lagrangian markers, each
observation window remains attached to the same material region
throughout the evolution. In the numerical examples below, the initial
marker locations are selected using the 
\emph{a posteriori} curvature and tangent FTLE diagnostics
introduced in \Cref{subsec:tangent-ftle}, allowing the resulting
observation windows to remain focused on filament-forming regions over
long time intervals.


\section{Two illustrative examples of Lagrangian tracking}
\label{sec:lgr-examples}

We now consider two numerical examples to illustrate the prototypical 
features of the classical Lagrangian tracking algorithm described 
in \Cref{sec:lgr}: its excellent geometric accuracy and computational 
efficiency; and its inability to naturally resolve topological transitions, in agreement 
with observations reported previously in the literature \cite{DuFiFlJiLiLiWu2006}.
At the same time, these examples introduce the tangent-FTLE diagnostic as a
means of quantifying filamentation and identifying regions where 
fine-scale structures emerge. They therefore benchmark the current
implementation while establishing a baseline comparison for the \mts\
methodology developed in subsequent sections.

\subsection{Rotating vortex benchmark}
\label{subsec:rotating-vortex}

The first example is the rotating vortex test, a standard benchmark 
problem for interface advection methods \cite{RiKo1998}. The prescribed incompressible 
velocity field $u$ is defined as 
\begin{equation}
\label{eq:rotating-vortex-velocity}
\begin{aligned}
u_1(x_1,x_2,t)
&=
-\cos\!\left(\tfrac{\pi t}{8}\right)
\sin^2(\pi x_1)\sin(2\pi x_2),
\\
u_2(x_1,x_2,t)
&=
\phantom{-}\cos\!\left(\tfrac{\pi t}{8}\right)
\sin(2\pi x_1)\sin^2(\pi x_2), 
\end{aligned}
\end{equation}
and the initial interface is a circle of radius $0.15$ centered at 
$(0.5, 0.75)$.  The velocity gradient required for the tangent-FTLE computation
\eqref{eq:xi-evolution} is
\begin{equation}
\label{eq:rotating-vortex-gradient}
Du(x_1,x_2,t)
=
\cos\!\left(\tfrac{\pi t}{8}\right)
\begin{pmatrix}
-\pi\sin(2\pi x_1)\sin(2\pi x_2)
&
-2\pi\sin^2(\pi x_1)\cos(2\pi x_2)
\\[1mm]
2\pi\cos(2\pi x_1)\sin^2(\pi x_2)
&
\pi\sin(2\pi x_1)\sin(2\pi x_2)
\end{pmatrix}.
\end{equation}
The strong vortical flow continuously deforms 
the interface into increasingly thin filaments 
until reaching a state of maximum deformation 
at time $t = 4$, and then reverses 
direction and returns the interface toward 
its initial configuration at the final time 
$\tmax = 8$. The test therefore provides a convenient benchmark for 
assessing geometric accuracy, filament resolution, 
and computational efficiency.

We perform a resolution study by dyadic refinement of the fine-scale $h$ using 
the Lagrangian tracking algorithm with parameters in \Cref{tab:notation} chosen as
\begin{equation}\label{eq:rotating-vortex-params}
\tmax = 8, \quad \Delta t = h, \quad  h_0 = 0.04, \quad h = 2^{-p} h_0, \ p = 0,\ldots,5. 
\end{equation}
The numerical solution at the coarsest 
resolution $h = h_0$ is displayed as the blue 
curve in \Cref{fig:LGR-vs-LS} at the three times 
$t = 1, 4, 8$.  At time $t = 4$, the deformed interface is characterized by 
two curved filament-tips at opposite ``ends'' of the vortical filamentary region. 
One end undergoes comparatively weak filament-tip formation, while the opposite end 
develops a much thinner and more strongly compressed filament tip.

\begin{figure}[ht]
\centering
\begin{subfigure}[t]{0.28\linewidth}
  \centering
  \includegraphics[width=0.95\linewidth]{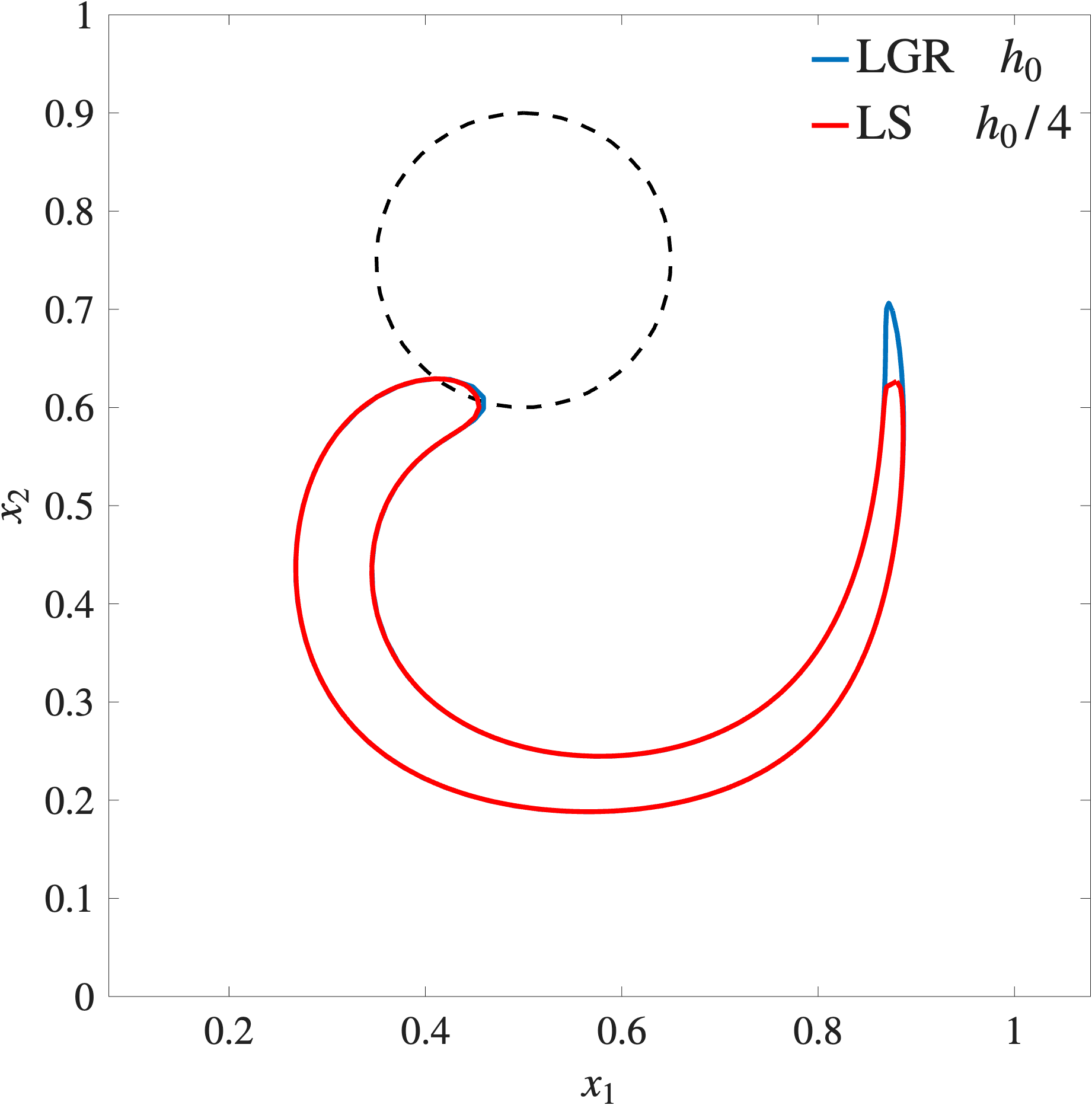}
  \caption{$t = 1$}
  \label{fig:LGR-vs-LS-t1}
\end{subfigure}
\begin{subfigure}[t]{0.28\linewidth}
  \centering
  \includegraphics[width=0.95\linewidth]{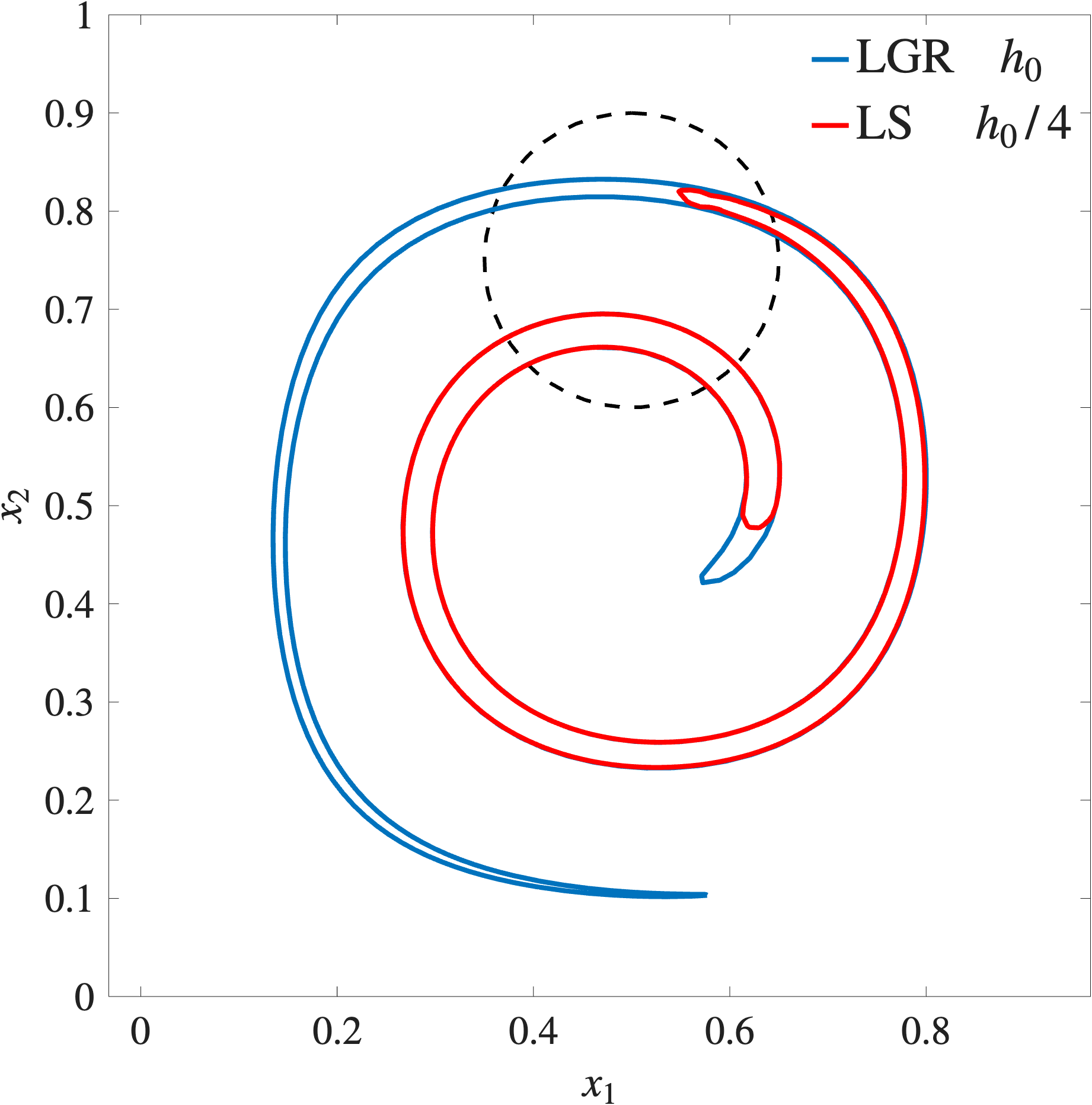}
  \caption{$t=4$}
  \label{fig:LGR-vs-LS-t4}
\end{subfigure}
\begin{subfigure}[t]{0.28\linewidth}
  \centering
  \includegraphics[width=0.95\linewidth]{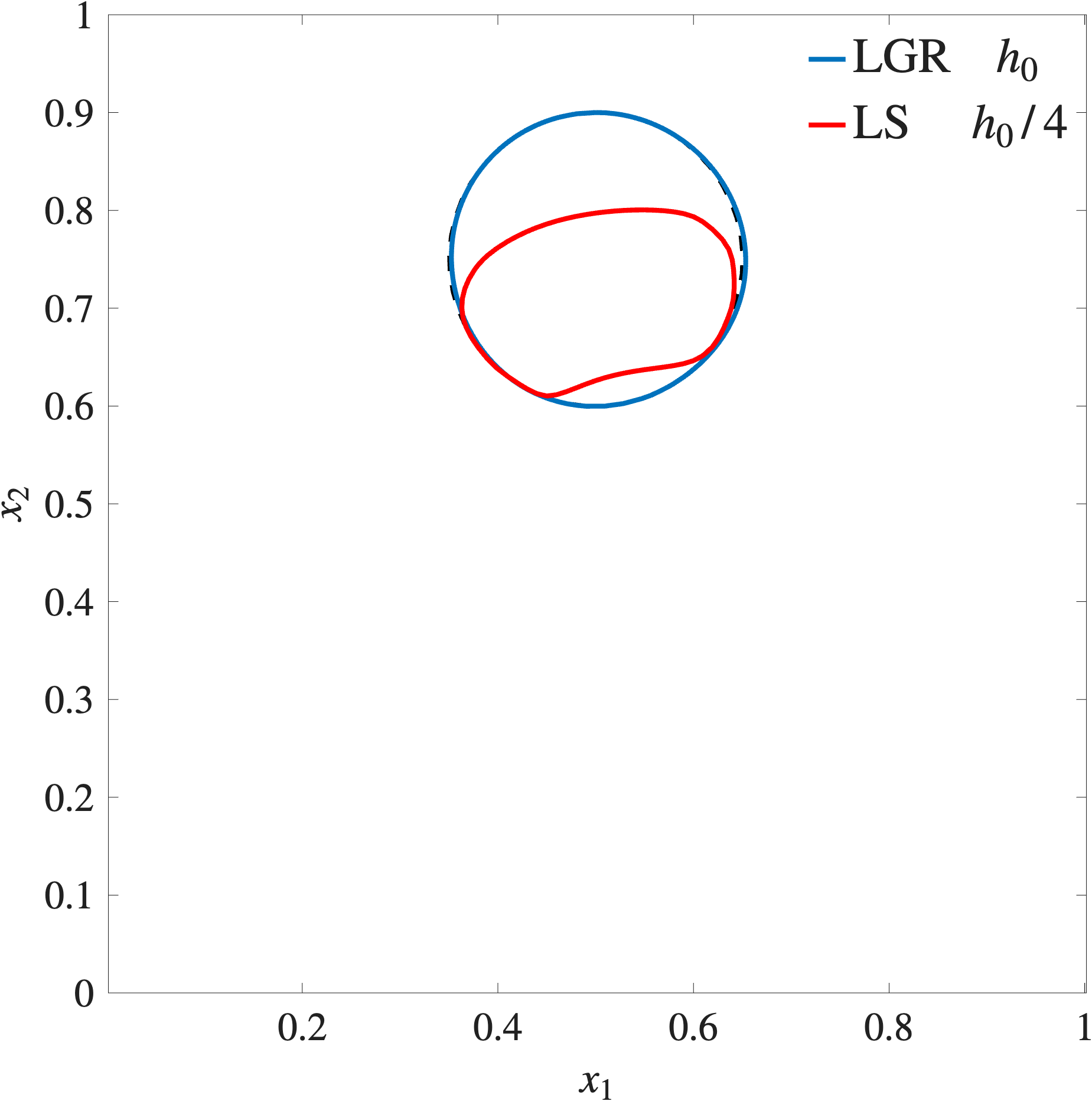}
  \caption{$t=8$}
  \label{fig:LGR-vs-LS-t8}
\end{subfigure}
\caption{
Comparison of classical Lagrangian tracking (blue) and 
a local level-set method (red) for the 
rotating vortex benchmark at times $t = 1, 4, 8$. 
The dashed black curve denotes the initial interface. 
The Lagrangian solution is computed using the coarse 
resolution $h = h_0$, while the level-set solution is 
computed using $h = h_0/4$.
}
\label{fig:LGR-vs-LS}
\end{figure}

\subsubsection{Benchmark comparison with classical Eulerian interface-capturing}

For comparison, we include results obtained using 
a simple Eulerian interface-capturing method, namely, 
a local level-set method \cite{PeMeOsZhKa1999}. In this approach, the signed 
distance function $\phi$ is evolved by the transport equation 
within a tube of radius $6h$, with reinitialization 
performed after every time-step to maintain the signed 
distance property.\footnote{This implementation is not intended 
to be representative of modern level-set methods, and is 
included only as a simple baseline comparison for the 
Lagrangian tracking method. More sophisticated methods typically
provide improved accuracy at additional cost \cite{EnFeFeMi2002,HiKo2005,OlKr2005,DeMoPi2008}.} The same sequence of spatial resolutions 
\eqref{eq:rotating-vortex-params} is used for the level-set simulations, 
with the CFL-constrained time-step given by
$\Delta t = {0.25 h} / {\| u \|_\infty}$. 
At the two coarsest resolutions $h= h_0, h_0/2$, the level-set solution 
undergoes catastrophic mass loss and disappears before 
the final time. Consequently, the red curves displayed 
in \Cref{fig:LGR-vs-LS} correspond to the level-set 
solution computed using the coarsest surviving resolution $h = h_0/4$. 
Even with four times the resolution, the level-set solution is 
unable to match the accuracy of the coarser Lagrangian solution.

\subsubsection{Resolution of filament tip and tangent FTLE diagnostics}
The left panel in \Cref{fig:rotating-vortex-LGR} displays the 
$h$-refined sequence of Lagrangian solutions at time $t = 4$, together with two 
successive zoom-ins of the filament tip. 
The first and second zoom-ins correspond to viewing 
windows of size $h_0 \times h_0 / 3$ and 
$h_0/16 \times h_0 / 48$, respectively. 
As the fine-scale $h$ decreases, the Lagrangian solutions resolve the 
large-curvature region near the filament 
tip with increasing geometric fidelity.

\begin{figure}[ht]
\centering
\begin{subfigure}[t]{0.63\linewidth}
  \centering
  \includegraphics[width=0.99\linewidth]{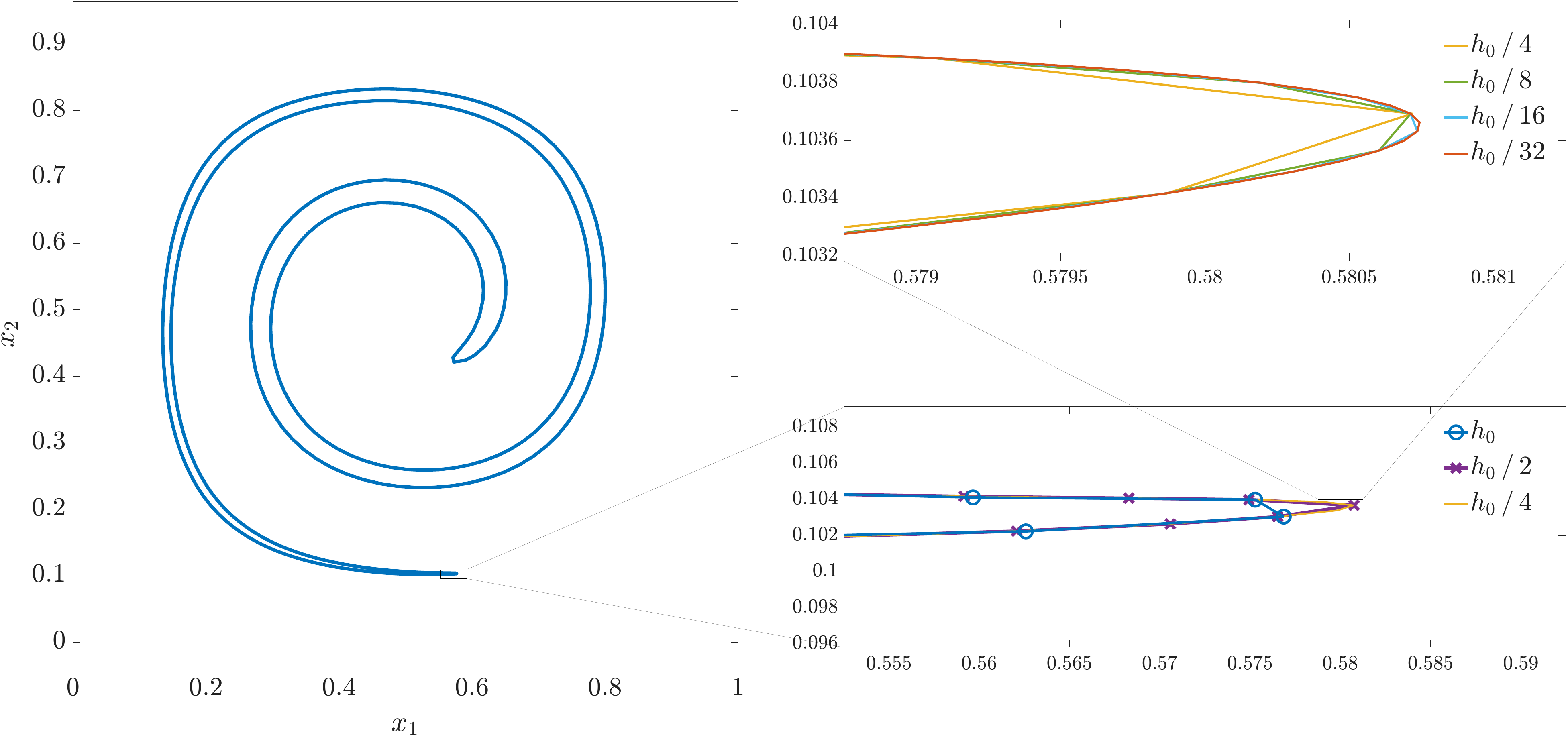}
  \caption{Resolution of the filament tip}
   \label{fig:tip-zoom}
\end{subfigure}
\hspace{0.25em}
\begin{subfigure}[t]{0.34\linewidth}
  \centering
  \includegraphics[width=1.0125\linewidth]{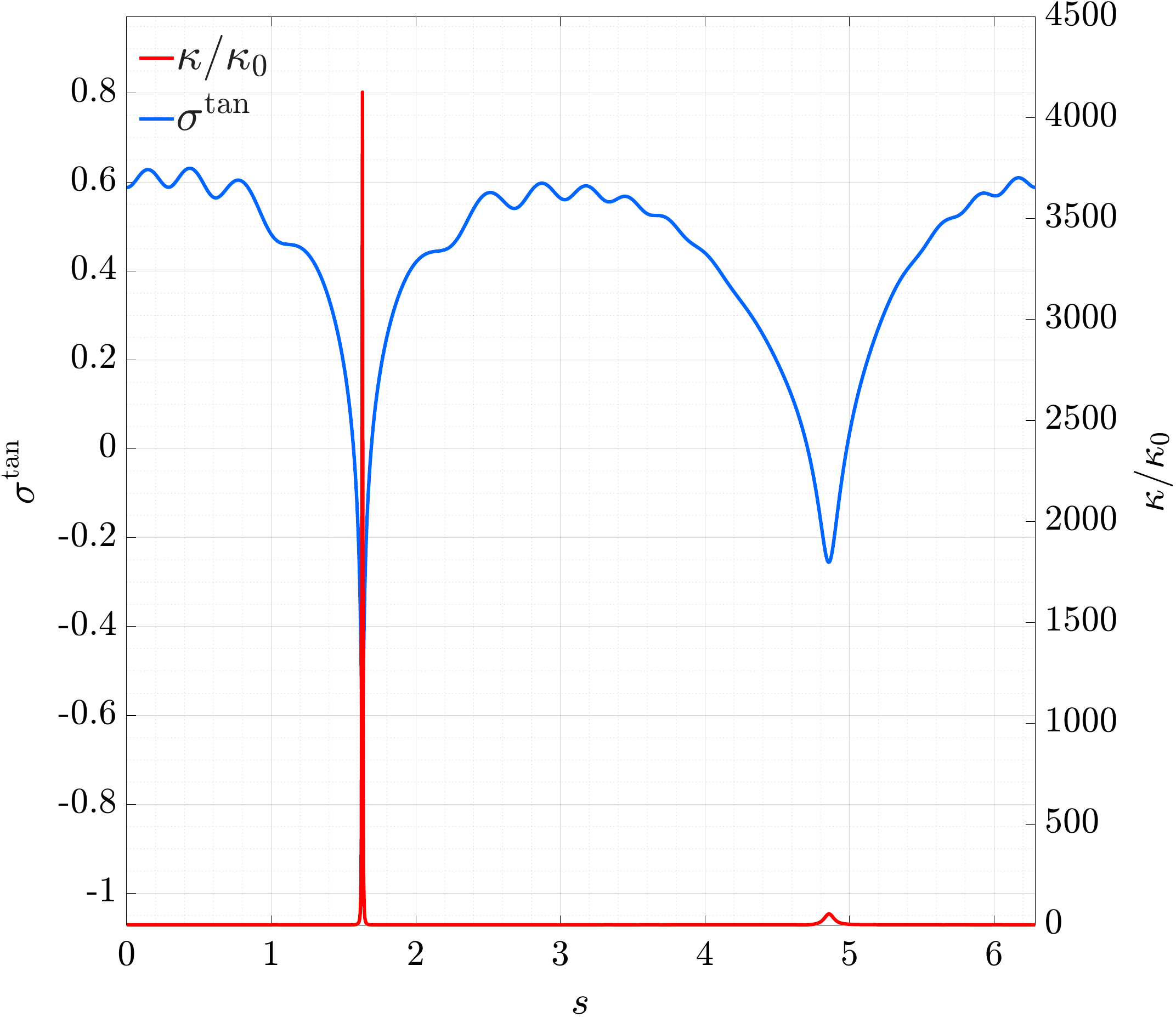}
  \caption{Tangent FTLE and curvature}
  \label{fig:dyadic-refinement}
\end{subfigure}
\caption{
Resolution of the filament tip and adaptive refinement  
for the rotating vortex benchmark at time $t = 4$. 
\textbf{Left:} Sequence of Lagrangian solutions computed 
with decreasing fine-scale $h$, together 
with successive zoom-ins of the filament tip illustrating 
the increasing geometric fidelity of the large-curvature 
region. Relative to the full computational domain 
$[0,1]^2$, the zoomed-in views correspond to windows with aspect ratios 
$h_0  \times h_0/3$ and $h_0/16 \times h_0/48$, respectively.
\textbf{Right:} Tangent FTLE $\sftle(s,t)$ \eqref{ftle} and normalized curvature 
$\kappa(s,t)/\kappa(s,0)$ at the time of maximal deformation 
$t = 4$ for the finest Lagrangian simulation $h = h_0 / 32 = \num{1.25e-3}$. 
The two localized troughs of the tangent FTLE at $s \approx \tfrac{\pi}{2}, \tfrac{3 \pi}{2}$ 
coincide with the curvature peaks at the filament tips.
}
\label{fig:rotating-vortex-LGR}
\end{figure}

The right panel of \Cref{fig:rotating-vortex-LGR} displays the tangent FTLE
\eqref{ftle} at time $t=4$ in blue together with the normalized curvature
$\kappa(s,\tmax)/\kappa(s,0)$, where $\kappa$ is defined
in \eqref{eq:lgr-geom}, in red. Two localized troughs in the tangent FTLE are
observed near $s\approx\pi/2$ and $s\approx3\pi/2$, corresponding respectively
to the strong and weak filamentation at the two ends of the deformed interface.
The sharper FTLE trough near $s\approx\pi/2$ is
accompanied by a correspondingly larger curvature peak, while the broader FTLE
trough near $s\approx3\pi/2$ is associated with a weaker curvature response. 
As expected, the level-set error in \Cref{fig:LGR-vs-LS-t8} is most pronounced near the
strongly filamenting region at $s\approx\pi/2$, while a weaker but still visible
error is also observed near the weaker filamenting region at
$s\approx3\pi/2$.

To verify the stretch--curvature relation \eqref{stretch-curve-law},
\Cref{fig:rotating-vortex-scaling} displays the accumulated tangent
stretching $t\,\sftle(s,4)=\log\lftle(s,4)$ and logarithmic curvature
$\log\kappa(s,4)$ at the time of maximum deformation.
The first two panels plot these quantities against the
parameter $s$, while the third shows the phase plot in
the $(\log\kappa, \,t\sftle)$-plane. Two distinct branches are visible
in the phase plot, corresponding to the strong and weak filamenting
tips identified in \Cref{fig:rotating-vortex-LGR}. Both
branches exhibit the predicted $\kappa^{-1/3}$ scaling, with the
strongly filamenting tip producing the longer branch.

\begin{figure}[ht]
\centering
\begin{subfigure}[t]{0.28\linewidth}
  \centering
  \includegraphics[width=0.9\linewidth]{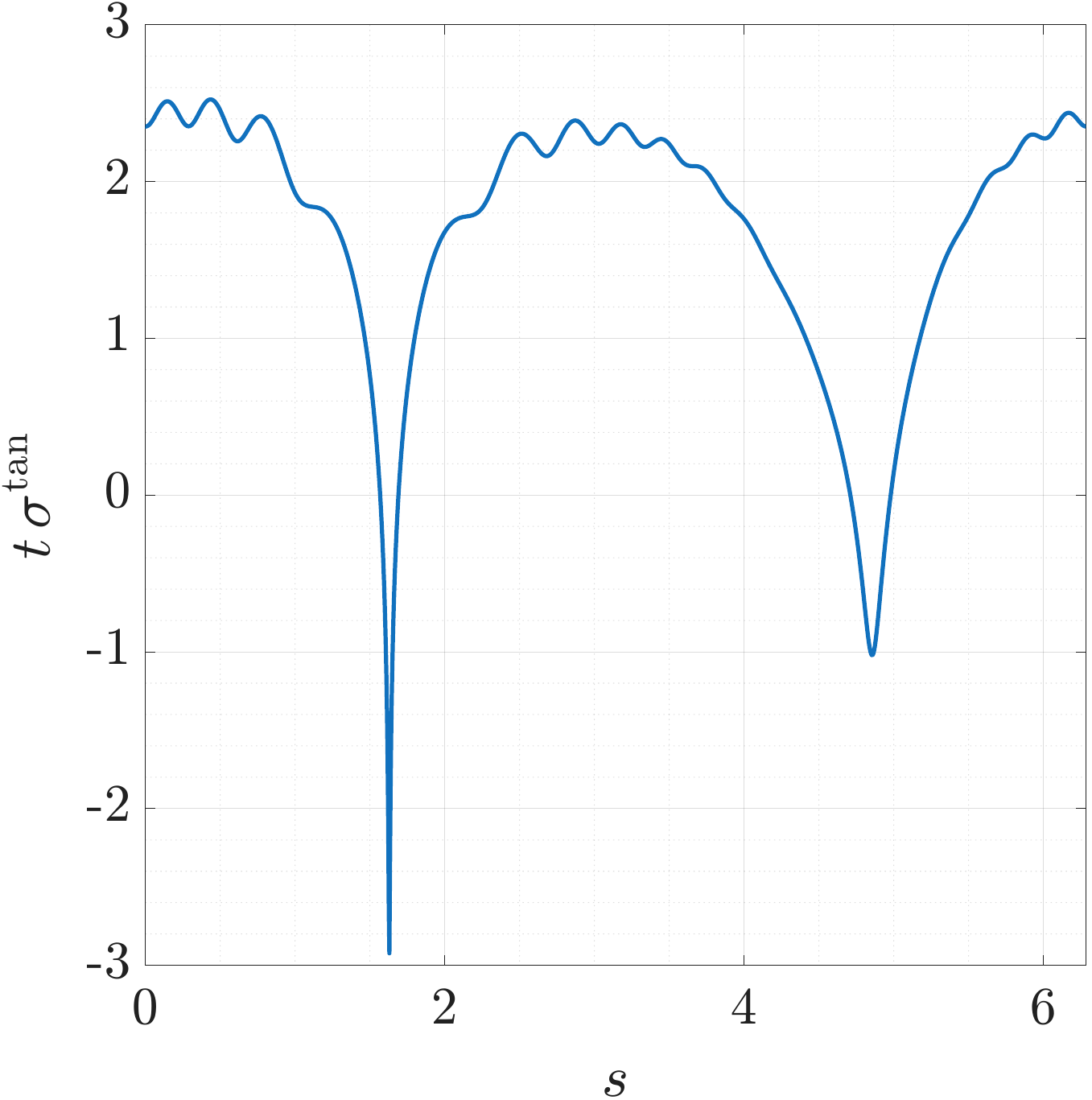}
  \caption{$(s,\,t\sftle(s,4))$}
  \label{fig:rotating-vortex-scaling-a}
\end{subfigure}
\hspace{1em}
\begin{subfigure}[t]{0.28\linewidth}
  \centering
  \includegraphics[width=0.9\linewidth]{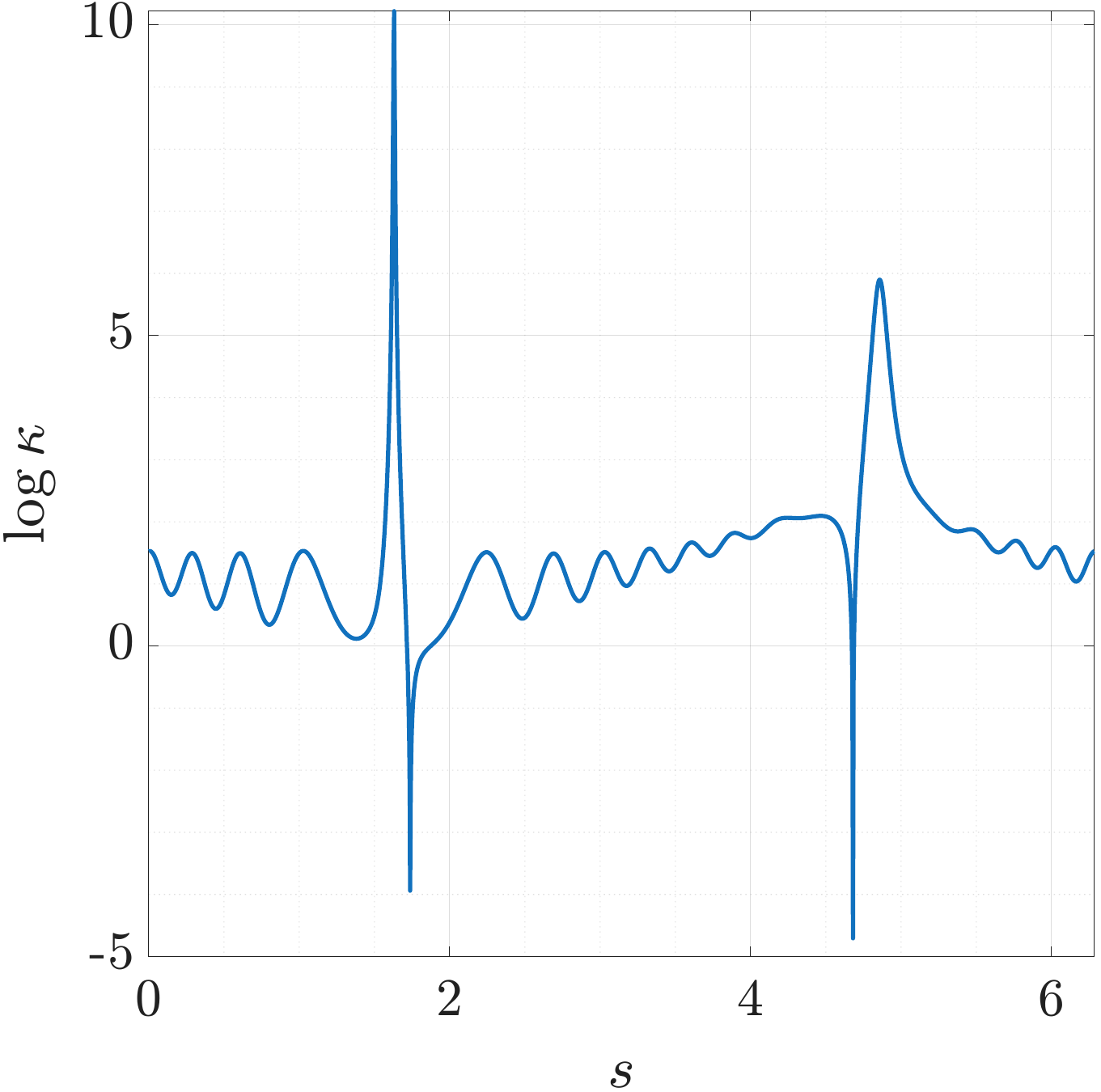}
  \caption{$(s, \,\log\kappa(s,4))$}
  \label{fig:rotating-vortex-scaling-b}
\end{subfigure}
\hspace{1em}
\begin{subfigure}[t]{0.28\linewidth}
  \centering
  \includegraphics[width=0.9\linewidth]{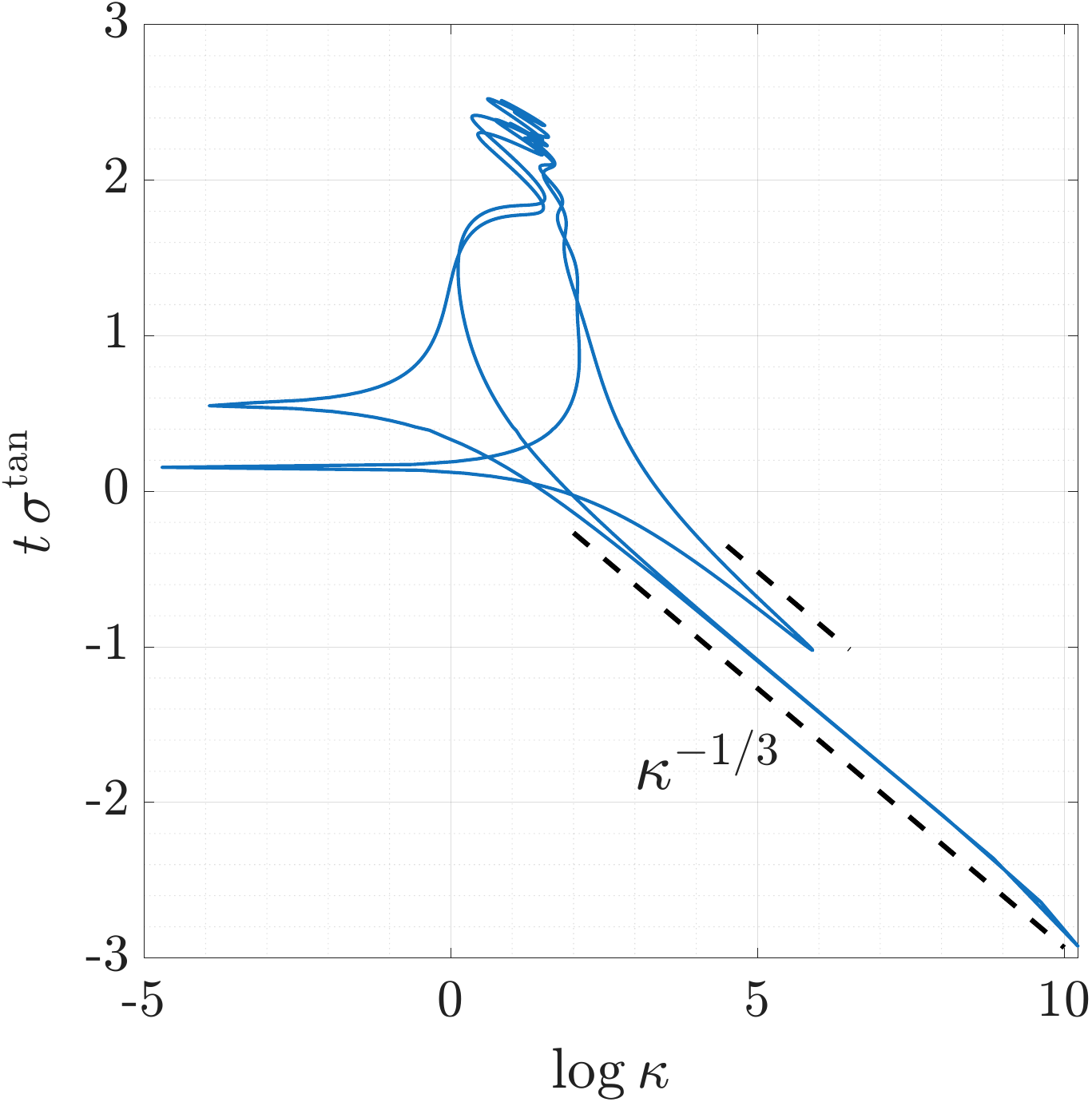}
  \caption{$(\log\kappa(s,4), \,t\sftle(s,4))$}
  \label{fig:rotating-vortex-scaling-c}
\end{subfigure}
\caption{
Numerical verification of the stretch--curvature relation
\eqref{stretch-curve-law} for the rotating-vortex benchmark at the
time of maximum deformation $t=4$.
The dashed lines in panel (c) denote the predicted
$\kappa^{-1/3}$ scaling at the filamenting tips.
}
\label{fig:rotating-vortex-scaling}
\end{figure}

\subsubsection{Convergence and scaling under $h$-refinement}

Finally, the accuracy and computational efficiency of 
the Lagrangian and level-set methods are compared 
quantitatively in \Cref{fig:LGR-vs-LS_diagnostics}. 
For the sequence of resolutions $h = 2^{-p} h_0$, we compute the diagnostics
\begin{equation}
\label{eq:area-errors}
|A_0 - A_*|,
\qquad
|A_K - A_0|,
\qquad
T_{\mathsf{cpu}},
\end{equation}
where $A_*$ denotes the exact initial area enclosed by the interface, $A_0$ the 
numerical initial area, $A_K$ the numerical area at the 
final time $t = \tmax$, and $T_{\mathsf{cpu}}$ the total 
computational runtime. The error $|A_K - A_0|$ will be referred to henceforth as the 
\emph{numerical reversal error}. 
 The results are displayed in 
\Cref{fig:LGR-vs-LS_diagnostics}. While both methods 
produce numerical reversal errors that converge with approximately  
second-order accuracy, the Lagrangian errors are 
three orders of magnitude smaller than the corresponding 
level-set errors at the same resolution. 
Similarly, although both methods exhibit the expected 
$\mathcal{O}(h^{-2})$ scaling in computational cost, the 
Lagrangian method is approximately two orders of magnitude 
faster over the range of resolutions considered here.

\begin{figure}[ht]
\centering
\begin{subfigure}[t]{0.28\linewidth}
  \centering
  \includegraphics[width=0.95\linewidth]{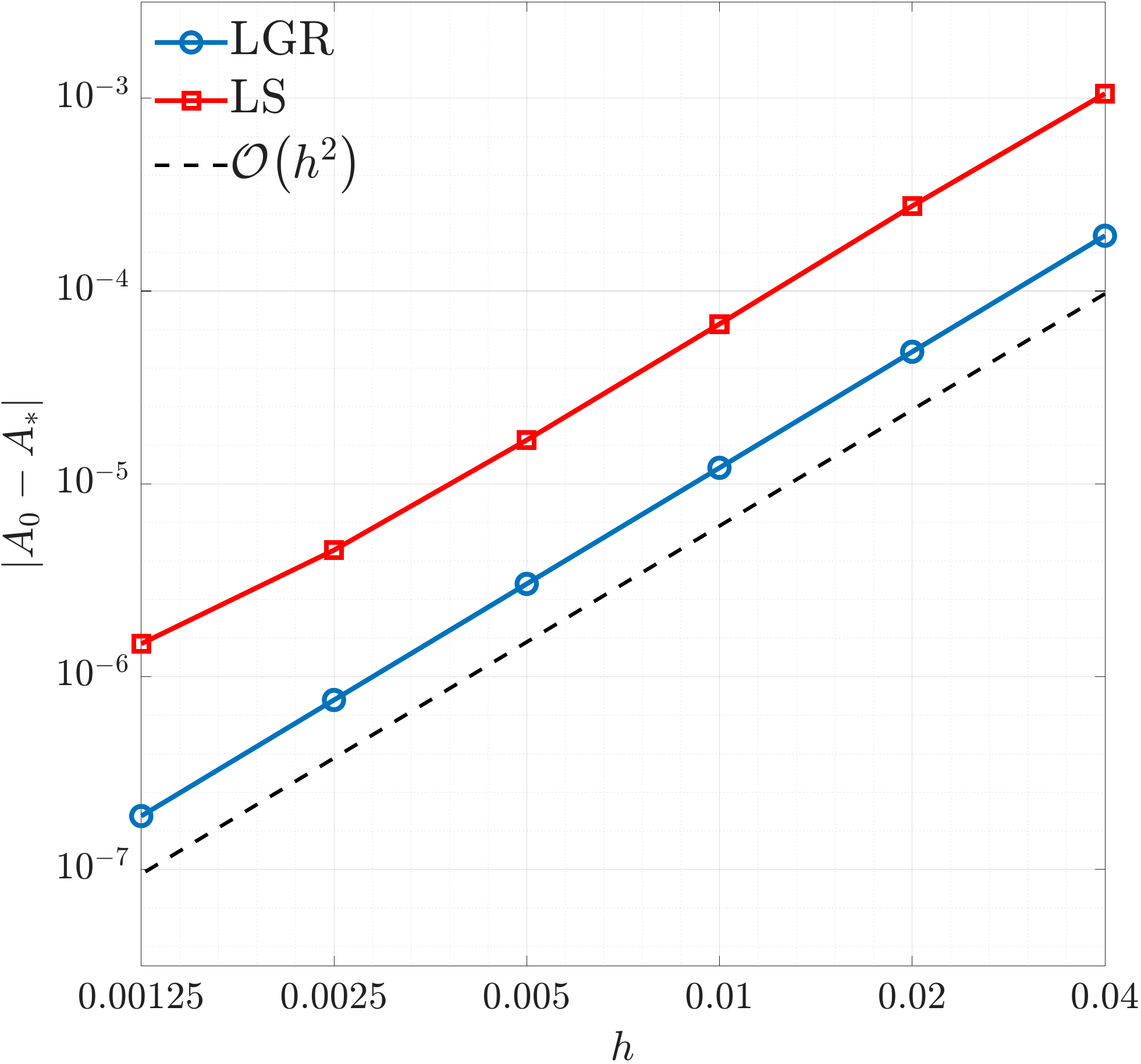}
  \caption{$|A_0 - A_*|$}
  \label{fig:LGR-vs-LS_initial-area-error}
\end{subfigure}
\hspace{1em}
\begin{subfigure}[t]{0.28\linewidth}
  \centering
  \includegraphics[width=0.95\linewidth]{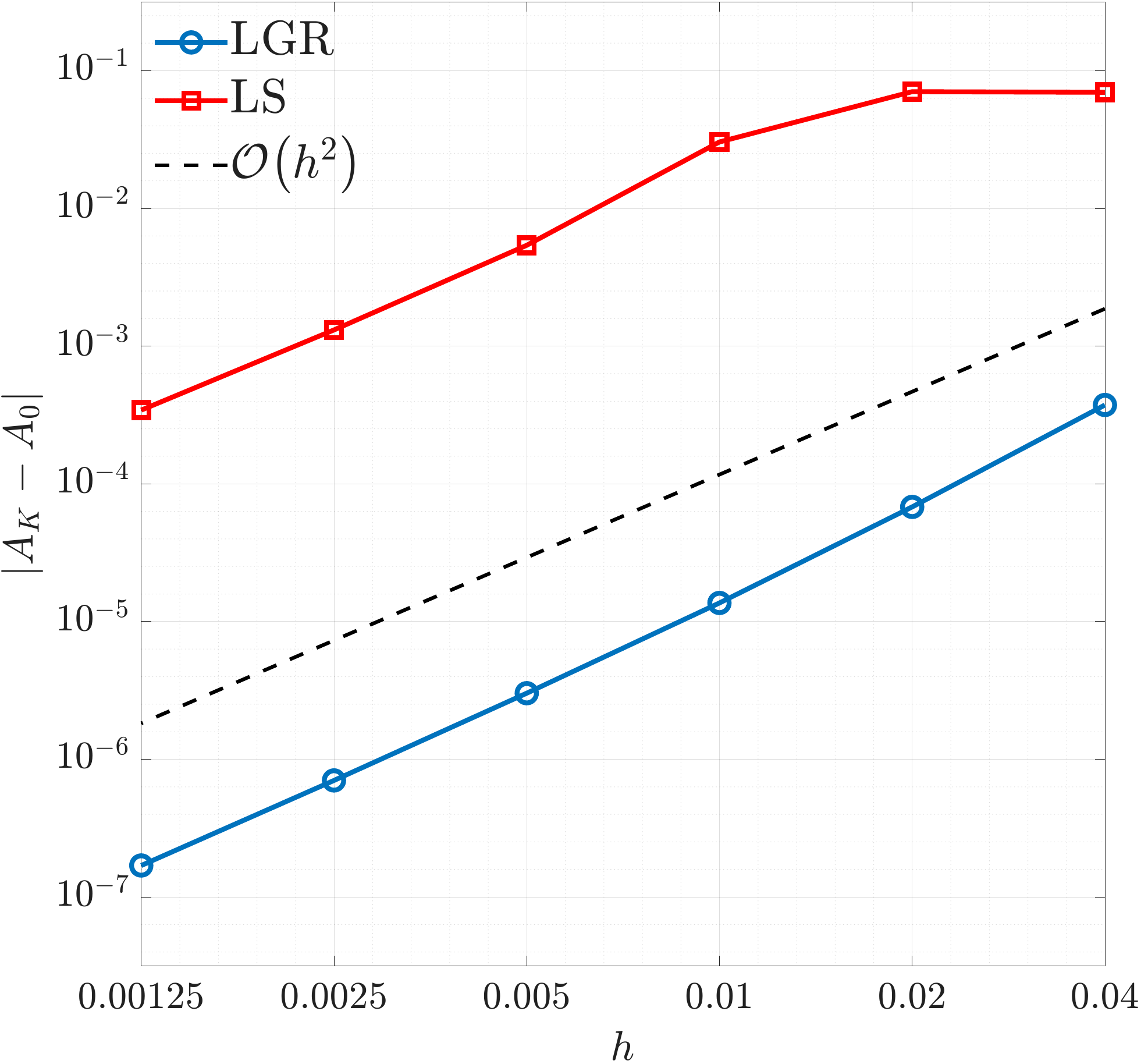}
  \caption{$|A_K - A_0|$}
  \label{fig:LGR-vs-LS_area-error}
\end{subfigure}
\hspace{1em}
\begin{subfigure}[t]{0.28\linewidth}
  \centering
  \includegraphics[width=0.95\linewidth]{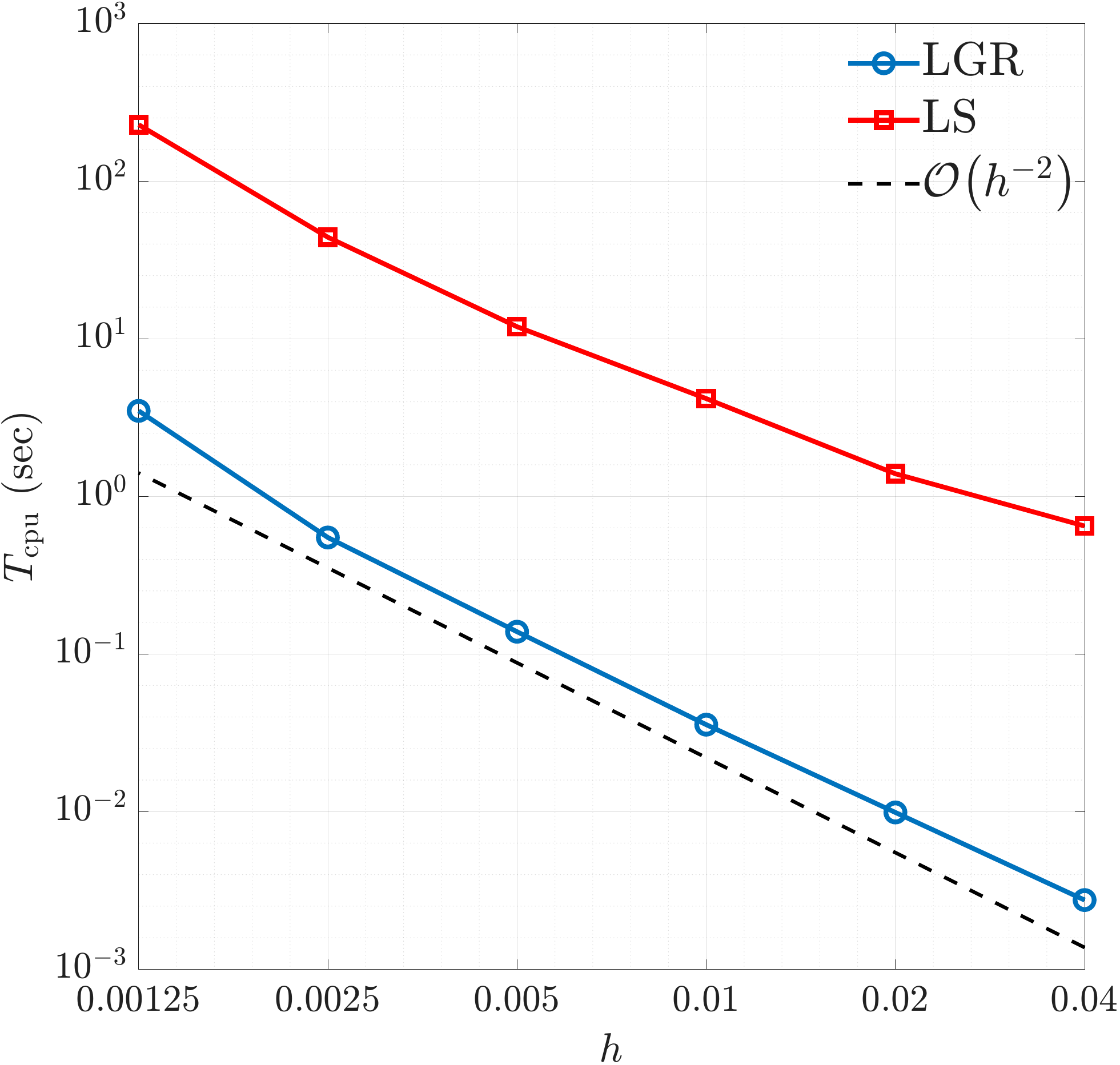}
  \caption{$T_{\mathsf{cpu}}$}
  \label{fig:LGR-vs-LS_runtime}
\end{subfigure}
\caption{
Quantitative comparison of the Lagrangian interface 
tracking and local level-set methods for the rotating 
vortex benchmark. The diagnostics \eqref{eq:area-errors} 
are displayed as functions of the 
characteristic length scale $h$. 
\textbf{Left:} Initial area error $|A_0 - A_*|$. 
\textbf{Center:} Numerical reversal error $|A_K - A_0|$ at 
$t = \tmax$. 
\textbf{Right:} Total computational runtime 
$T_{\mathsf{cpu}}$.
}
\label{fig:LGR-vs-LS_diagnostics}
\end{figure}

\subsection{Filamentation by alternating nonlinear shears}
\label{subsec:alternating-shear-LGR}

While the rotating-vortex benchmark provides a useful test of accuracy,
the results of \Cref{subsec:rotating-vortex} show that it does not
generate the extreme multiscale filamentary structures needed to fully
stress methods for tracking or capturing the interface. 
To address this limitation, \citet{AhSh2009} introduced
more demanding filamentation tests for assessing the ability of MOF
methods to preserve thin interfacial structures. 
These tests have subsequently been adopted and further developed in later studies
using filament-aware MOF methods \cite{JeSuSh2015,HePhXi2023,HeLiPhXi2024}. 
One such filamentation test, the $\mathcal{S}$-flow, will be considered later in \Cref{subsec:s-flow}.

In the same spirit, we consider a nonlinear \emph{alternating-shear} test that 
generates substantially greater interface complexity than the filamentation tests 
comprising the standard MOF test suite. 
In the chaotic-advection literature, alternating-shear maps have been used 
extensively to study the generation of fine-scale structure through repeated stretching and folding. 
\citet{Pierrehumbert1991,Pierrehumbert1994}, for instance, employed randomized 
alternating-shear deformations to investigate tracer microstructure generated by 
chaotic mixing in Rossby--Haurwitz wave dynamics, where filamentation is ultimately arrested by diffusion.
Here, we study the analogous problem for material interfaces, replacing the 
Batchelor-scale diffusive regularization of tracer filaments by the microscale topological 
regularization of tracked interfaces. To this end, we introduce a deterministic and 
reversible alternating-shear benchmark that retains the strong filamentation of 
chaotic-advection models while providing a reproducible setting for numerical verification.

For each integer $m\ge0$, we define the
$m$-th alternating-shear map by the composition
\begin{subequations}\label{eq:alternating-shear}
\begin{equation}
F_m
=
S_{m,2} \circ S_{m,1},
\end{equation}
where $S_{m,1}$ and $S_{m,2}$ are the horizontal and vertical shears
\begin{align}
S_{m,1}(x_1,x_2)
&=
\left(
x_1 + \lambda \sin(kx_2+\varphi_m),
\,
x_2
\right), \\
S_{m,2}(x_1,x_2)
&=
\left(
x_1,
\,
x_2 + \lambda\sin(kx_1+\psi_m)
\right),
\end{align}
the phase shifts $\{ \varphi_m, \psi_m \}$ are given by
\begin{equation}\label{phase-shifts}
\varphi_m 
=
2\pi
\left(
\sqrt{2}(m+1)
+
0.173(m+1)^2
\right) 
\qquad 
\text{and} 
\qquad 
\psi_m
=
2\pi
\left(
\sqrt{3}(m+1)
+
0.217(m+1)^2
\right), 
\end{equation}
and $\{ \lambda, k \}$ are the amplitude and frequency of the sinusoidal deformations. We use the values
\begin{equation}
\lambda = 0.2 \qquad \text{and} \qquad k = 4
\end{equation}
in the numerical experiment below.  
The inverse map is obtained by undoing the two shears:
\begin{equation}\label{eq:alternating-shear-inverse}
F_m^{-1}
=
S_{m,1}^{-1}
\circ
S_{m,2}^{-1},
\end{equation}
where
\begin{align}
S_{m,1}^{-1}(x_1,x_2)
&=
\left(
x_1 - \lambda \sin(kx_2+\varphi_m),
\,
x_2
\right), \\
S_{m,2}^{-1}(x_1,x_2)
&=
\left(
x_1,
\,
x_2 - \lambda\sin(kx_1+\psi_m)
\right).
\end{align}
\end{subequations}

\subsubsection{$C^1$-in-time embedding of the discrete dynamics}

For Lagrangian tracking \eqref{eq:lgr-evolution}, we consider a
long-time simulation on the interval $[0,\tmax]$ with $\tmax=30$,
generated by a smooth divergence-free velocity field which coincides with 
the alternating-shear maps at integer times. The flow consists
of a forward deformation phase on $0 \le t \le \tmax/2$ followed by a
reversal phase on $\tmax/2 \le t \le \tmax$, allowing the final
configuration to be compared directly with the initial interface.

For $t\in[m,m+1]$, let
\[
\tau=t-m,
\qquad
0\le\tau\le1,
\]
and let $f,g:[0,1]\to\mathbb R$ denote compactly supported temporal
weights satisfying
\begin{subequations}\label{eq:alternating-shear-weights}
\begin{equation}
f(\tau)=0,
\quad
\tfrac12\le\tau\le1,
\qquad
\text{and}
\qquad
g(\tau)=0,
\quad
0\le\tau\le\tfrac12,
\end{equation}
with the normalization conditions
\begin{equation}
\int_0^{\frac12} f(\tau)\,\mathrm d\tau = 1
\qquad
\text{and}
\qquad 
\int_{\frac12}^{1} g(\tau)\,\mathrm d\tau = 1.
\end{equation}
The minimal-degree $C^1$ polynomials satisfying these conditions are
\begin{equation}
f(\tau)
=
\begin{cases}
960\,\tau^2\left(\frac12-\tau\right)^2,
& 0 \le \tau \le \frac12, \\[1mm]
0,
& \frac12 < \tau \le 1,
\end{cases}
\qquad
\text{and}
\qquad
g(\tau)
=
\begin{cases}
0,
& 0 \le \tau < \frac12, \\[1mm]
960\,(\tau-\tfrac12)^2(1-\tau)^2,
& \frac12 \le \tau \le 1.
\end{cases}
\end{equation}
\end{subequations}

For $0 \le t < \tmax/2$, let
\begin{subequations}\label{eq:alternating-shear-velocity}
\begin{equation}
m=\lfloor t\rfloor,
\qquad
\tau_f=t-m,
\end{equation}
while for $\tmax/2 \le t \le \tmax$, let
\begin{equation}
q
=
\left\lfloor
t-\tfrac{\tmax}{2}
\right\rfloor,
\qquad
m
=
\tfrac{\tmax}{2}-1-q,
\qquad
\tau_r
=
t-\tfrac{\tmax}{2}-q.
\end{equation}
We then define the velocity field
\begin{equation}
u(x,t)
=
\begin{cases}
\begin{pmatrix}
\lambda f(\tau_f)\sin(kx_2+\varphi_m) \\[1mm]
\lambda g(\tau_f)\sin(kx_1+\psi_m)
\end{pmatrix},
&
0 \le t < \tmax/2,
\\[2em]
\begin{pmatrix}
-\lambda g(\tau_r)\sin(kx_2+\varphi_m) \\[1mm]
-\lambda f(\tau_r)\sin(kx_1+\psi_m)
\end{pmatrix},
&
\tmax/2 \le t \le \tmax.
\end{cases}
\end{equation}
\end{subequations}
During the forward phase, the first half of each interval applies the
horizontal shear $S_{m,1}$, while the second half applies the vertical
shear $S_{m,2}$. Consequently, the flow coincides at integer times with
the compositions
\begin{subequations}\label{shear-identity}
\begin{equation}
\gamma_\alpha(s,m)
=
(F_{m-1}\circ\cdots\circ F_0)
\bigl(\gamma_\alpha(s,0)\bigr),
\qquad
m=1,\ldots,\tmax/2.
\end{equation}
During reversal, the first half of each interval applies
$S_{m,2}^{-1}$, while the second half applies $S_{m,1}^{-1}$, so that
\begin{equation}
\gamma_\alpha(s,\tmax)
=
\gamma_\alpha(s,0).
\end{equation}
\end{subequations}
The velocity gradient required for the tangent-FTLE computation
\eqref{eq:xi-evolution} is
\begin{equation}
Du(x,t)
=
\begin{cases}
\begin{pmatrix}
0
&
\lambda k f(\tau_f)\cos(kx_2+\varphi_m)
\\[1mm]
\lambda k g(\tau_f)\cos(kx_1+\psi_m)
&
0
\end{pmatrix},
&
0 \le t < \tmax/2,
\\[2em]
\begin{pmatrix}
0
&
-\lambda k g(\tau_r)\cos(kx_2+\varphi_m)
\\[1mm]
-\lambda k f(\tau_r)\cos(kx_1+\psi_m)
&
0
\end{pmatrix},
&
\tmax/2 \le t \le \tmax.
\end{cases}
\end{equation}

\subsubsection{Qualitative and quantitative analysis of filamentation}

We compute a sequence of simulations using the Lagrangian tracking
algorithm with
\begin{equation}\label{eq:SF-params}
\tmax = 30,
\quad
h = 2^{-p} h_0,
\quad
\Delta t = 2^{-p}\Delta t_0,
\quad
p = 0,\ldots,5,
\qquad
\text{with}
\quad 
h_0 = 0.06,
\quad
\Delta t_0 = \num{8.0e-3}.
\end{equation}
The solution computed with the intermediate resolution $h=h_0/8$ is shown in
\Cref{fig:alternating-shear-zoom} at the time of maximal deformation,
$t=15$. The repeated alternating-shear mechanism generates thin filamentary 
structures spanning a broad range of scales. In light of the results of 
\Cref{subsec:rotating-vortex}, obtaining a solution of acceptable 
quality with the baseline level-set method would be prohibitively 
expensive on the current computational platform.

The tangent FTLE \eqref{ftle} at $t=15$ is displayed as the blue curve
in \Cref{fig:alternating-shear-FTLE}. Compared with the
rotating vortex benchmark, the alternating-shear
flow exhibits significantly more intricate multiscale structure,
reflecting the repeated generation and interaction of filamentary
regions throughout the evolution. For comparison, the curvature
$\kappa(s,\tmax)$ of the final interface is displayed as the red
curve. The peaks of the curvature align closely with sharp troughs of
the tangent FTLE at the material locations
\begin{equation}\label{zoom-window-init}
(\theta_1,\theta_2,\theta_3,\theta_4)
=
(2.876025,\,
3.451420,\,
4.324865,\,
4.812920).
\end{equation}

\begin{figure}[ht]
\centering
\begin{subfigure}[t]{0.6\linewidth}
  \centering
  \includegraphics[width=\linewidth]{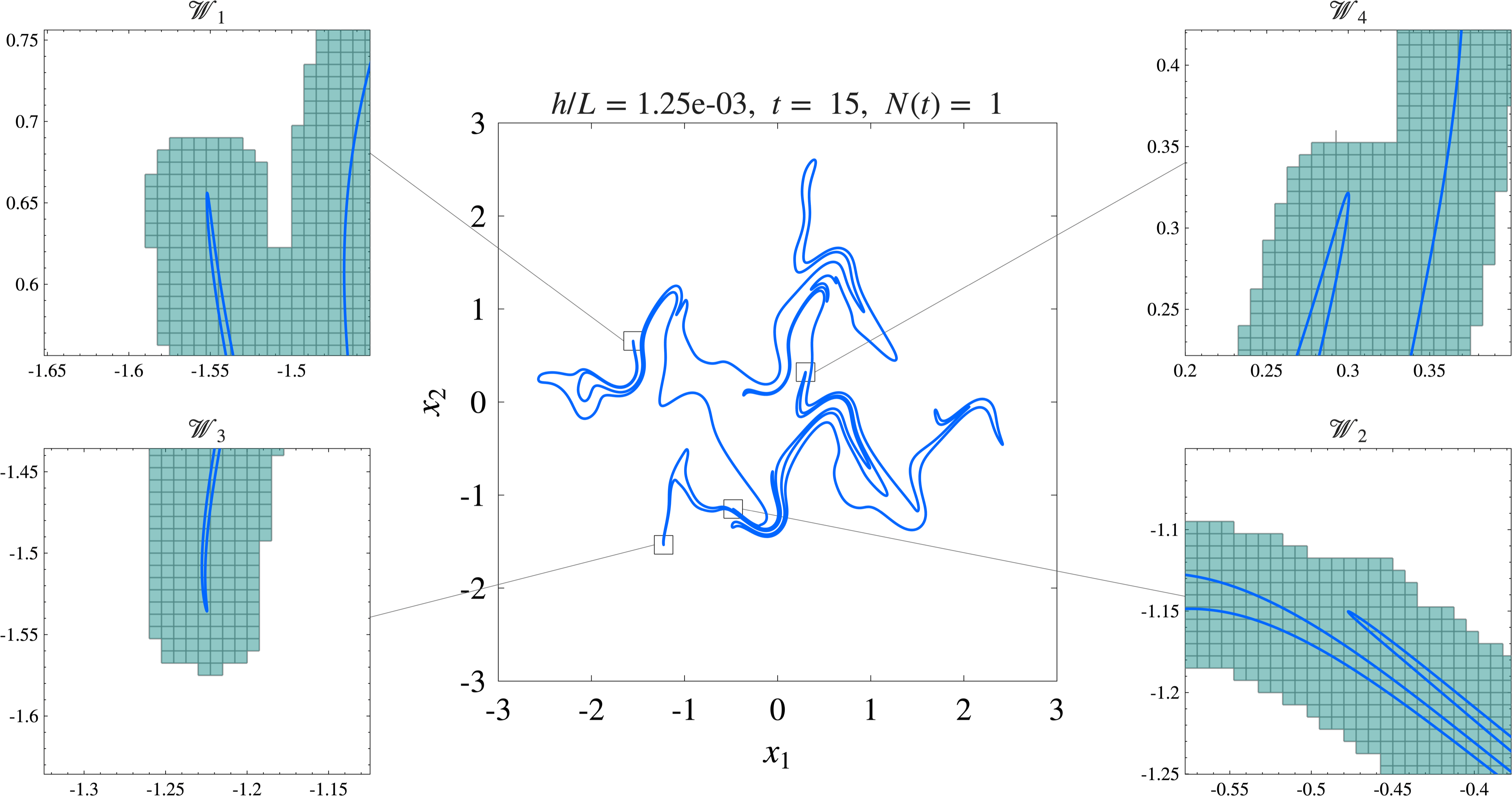}
  \caption{Alternating-shear filamentation at $t=15$}
  \label{fig:alternating-shear-zoom}
\end{subfigure}
\hspace{0.5em}
\begin{subfigure}[t]{0.35\linewidth}
  \centering
  \includegraphics[width=0.9925\linewidth]{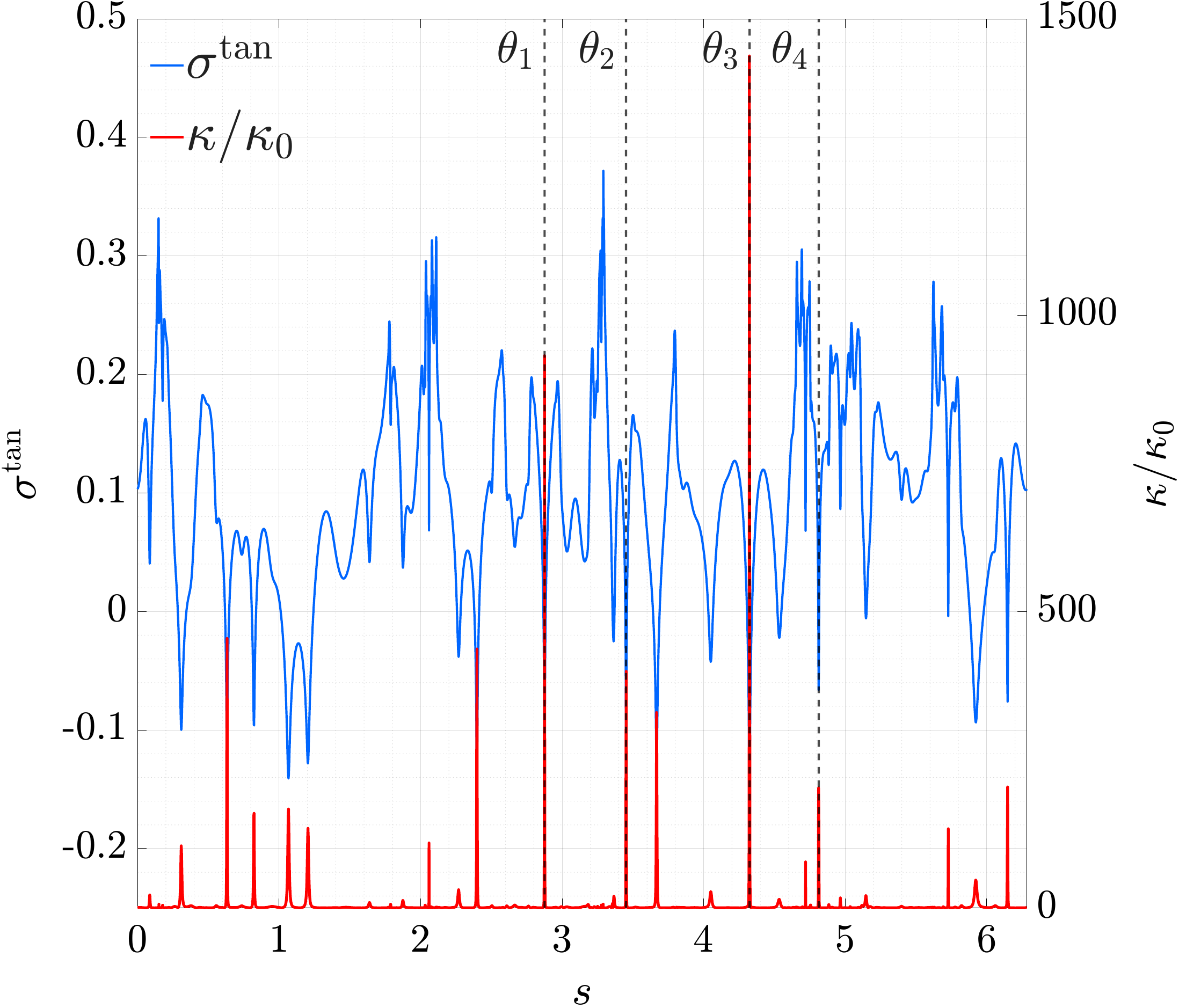}
  \caption{Tangent FTLE and curvature}
  \label{fig:alternating-shear-FTLE}
\end{subfigure}
\caption{
Filamentation generated by repeated application of the alternating-shear
deformation \eqref{eq:alternating-shear}. Results are shown for the
classical Lagrangian tracking method with parameters
\eqref{eq:SF-params}, using the intermediate resolution $h=h_0/8$. 
\textbf{Left:} Adaptively refined Lagrangian solution at the time of maximal
deformation, $t=15$, together with a sequence of localized Lagrangian 
observation windows
illustrating the formation of thin filamentary structures.
For reference, each window also displays a local Cartesian 
grid patch at the prescribed microscale resolution $h$.
The repeated shearing mechanism generates thin filaments 
with sub-microscale geometry which are captured by the Lagrangian solution with high fidelity.
A video of the complete evolution is available at \cite{Ramani2026}.
\textbf{Right:} Tangential FTLE \eqref{ftle} (blue), together with the curvature of the 
final deformed interface (red). 
The marked locations $\theta_1,\ldots,\theta_4$ correspond to sharp 
localized troughs of the tangent FTLE, and are used to track the 
filament-forming regions shown in the left panel.
}
\label{fig:alternating-shear}
\end{figure}

To quantify the accuracy and computational cost of the classical 
Lagrangian tracking method, we consider the recovery of the initial 
interface at the final time $t=\tmax$. The left panel of \Cref{fig:alternating-shear-LGR_diagnostics} 
compares the initial and final interfaces for the intermediate resolution $h=h_0/8$. 
The two interfaces are visually indistinguishable, indicating that the method maintains excellent accuracy 
despite the severe deformation observed at intermediate times. The center and right panels of
\Cref{fig:alternating-shear-LGR_diagnostics} provide quantitative
diagnostics. The numerical reversal error $|A_K-A_0|$ exhibits
second-order convergence under mesh refinement, while the computational
runtime scales as $\mathcal O(h^{-2})$.

\begin{figure}[ht]
\centering
\begin{subfigure}[t]{0.25\linewidth}
  \centering
  \includegraphics[width=0.95\linewidth]{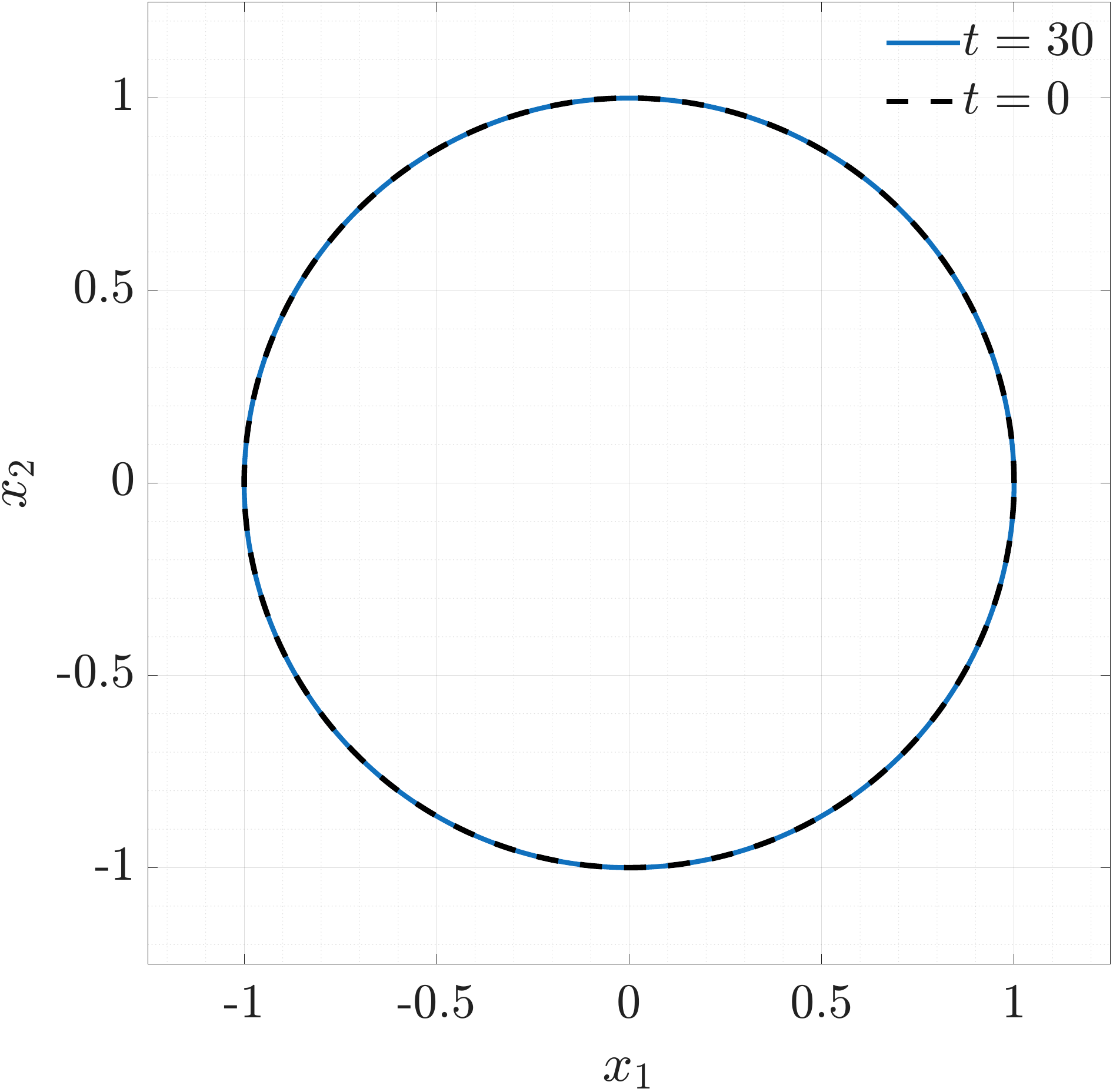}
  \caption{$\Gamma(s,t)$ at $t = 0$ and $t=\tmax$}
  \label{fig:alternating-shear-LGR_diagnostics-final-interface}
\end{subfigure}
\hspace{1em}
\begin{subfigure}[t]{0.25\linewidth}
  \centering
  \includegraphics[width=\linewidth]{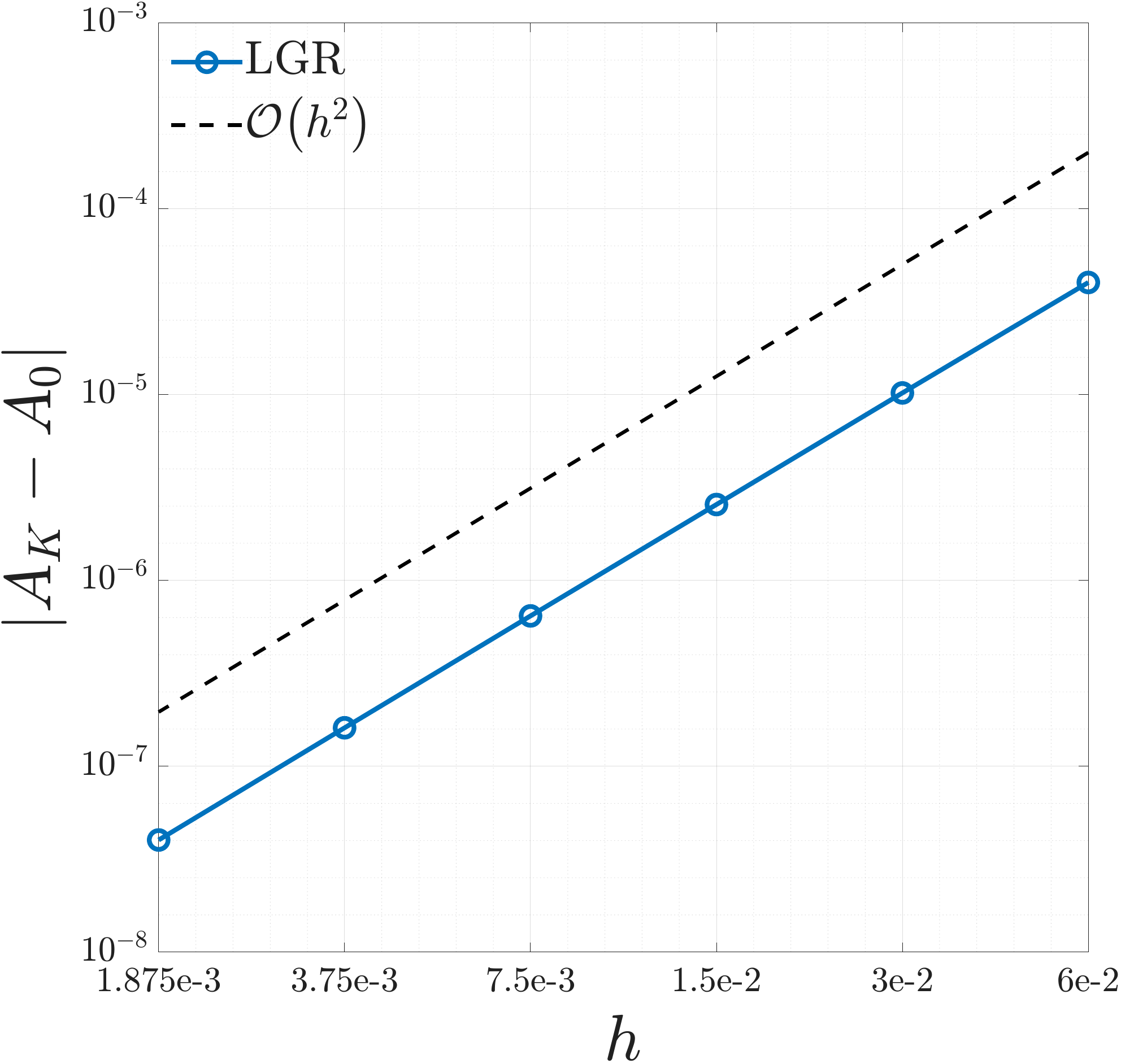}
  \caption{$|A_K-A_0|$}
  \label{fig:alternating-shear-LGR_diagnostics-error}
\end{subfigure}
\hspace{1em}
\begin{subfigure}[t]{0.25\linewidth}
  \centering
  \includegraphics[width=\linewidth]{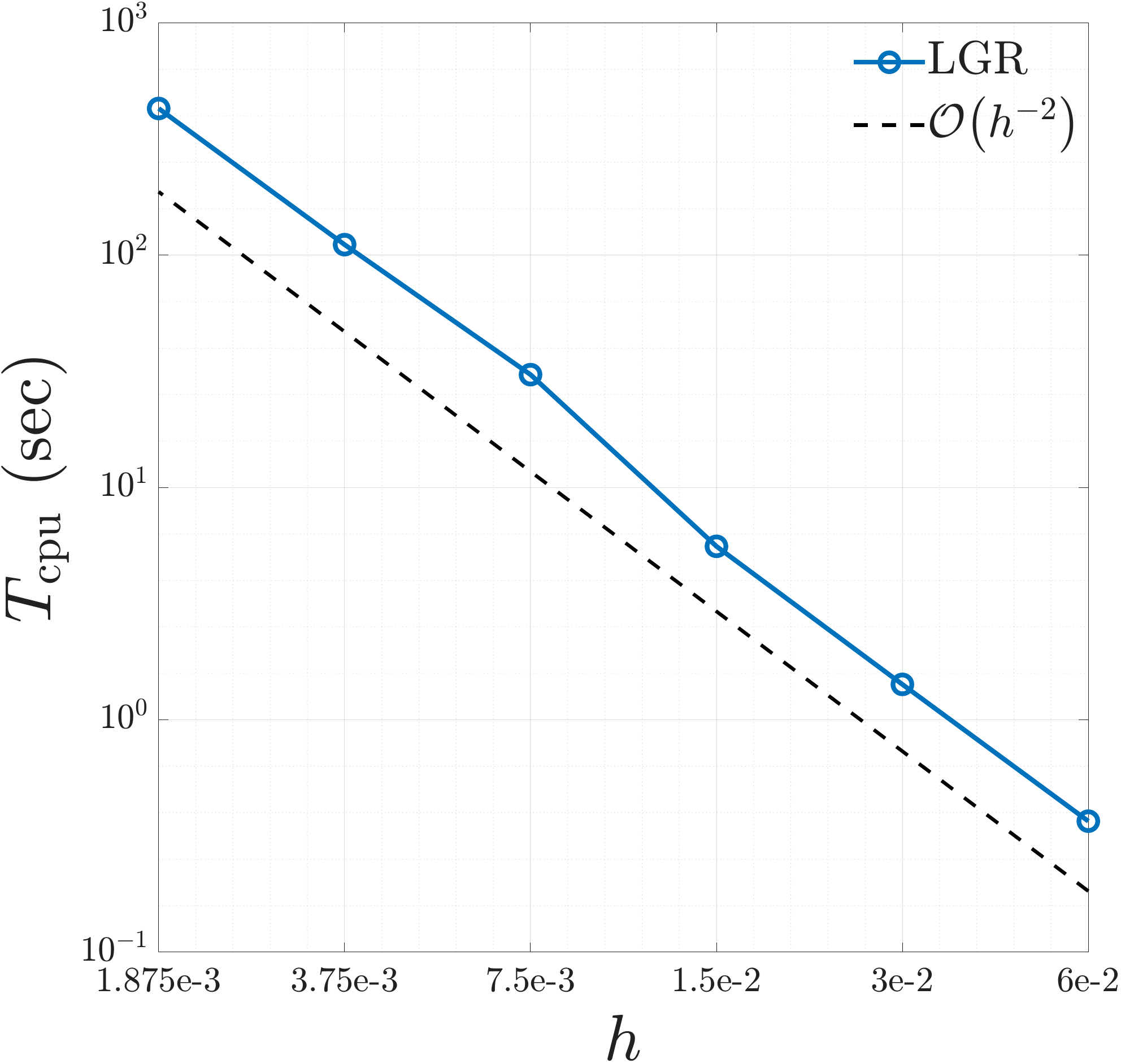}
  \caption{$T_{\mathsf{cpu}}$}
  \label{fig:alternating-shear-LGR_diagnostics-runtime}
\end{subfigure}
\caption{
Convergence study for the nonlinear alternating-shear test using the
classical Lagrangian tracking method and the parameter sequence
\eqref{eq:SF-params}.
\textbf{Left:}
Initial ($t=0$) and final ($t=\tmax$) interfaces for the intermediate
resolution $h=h_0/8$.
\textbf{Center:}
Numerical reversal error $|A_K-A_0|$ as a function of the fine-scale
resolution $h$.
\textbf{Right:}
Computational runtime as a function of $h$.
}
\label{fig:alternating-shear-LGR_diagnostics}
\end{figure}

We now return to the filament-forming regions highlighted in
\Cref{fig:alternating-shear}. In agreement with the filamentation
studies of \citet{Pierrehumbert1991,Pierrehumbert1994}, the underlying
interface exhibits exponential length growth,
$|\Gamma(t)| \sim |\Gamma(0)|e^{0.146t}$. Since the enclosed area is
conserved, this growth implies a corresponding decrease in
the characteristic transverse scale of the filamentary structure. To
track this process qualitatively, we use the Lagrangian observation windows
\eqref{eq:window-centers}, with
\begin{equation}\label{zoom-window-params}
N_{\mathsf{win}} = 4,
\qquad
w_j = 0.1,
\qquad
c_j(0) = (\cos\theta_j,\sin\theta_j), 
\qquad 
\text{and } \theta_j \text{ given by \eqref{zoom-window-init}}. 
\end{equation}
From the localized windows in \Cref{fig:alternating-shear-zoom},
we observe that the filaments in the
$\mathcal{W}_1$, $\mathcal{W}_2$, and $\mathcal{W}_3$ windows fall below
the prescribed fine scale $h$, while the filament in the $\mathcal{W}_4$
window approaches this threshold.
The most severe filamentation occurs in $\mathcal{W}_3$, followed by
$\mathcal{W}_1$; both are associated with pronounced tangent-FTLE troughs
and large curvature concentrations in \Cref{fig:alternating-shear-FTLE}.
By contrast, the $\mathcal{W}_2$ and $\mathcal{W}_4$ windows exhibit weaker
filamentation, consistent with their shallower tangent-FTLE troughs and
smaller curvature concentrations. The corresponding phase-space plots in
\Cref{fig:alternating-shear-LGR_filamentation} show that the filament-tip
regions in all four windows are consistent with the
$\kappa^{-1/3}$ stretch--curvature scaling \eqref{stretch-curve-law}.

\begin{figure}[ht]
\centering
\begin{subfigure}[t]{0.24\linewidth}
  \centering
  \includegraphics[width=0.95\linewidth]{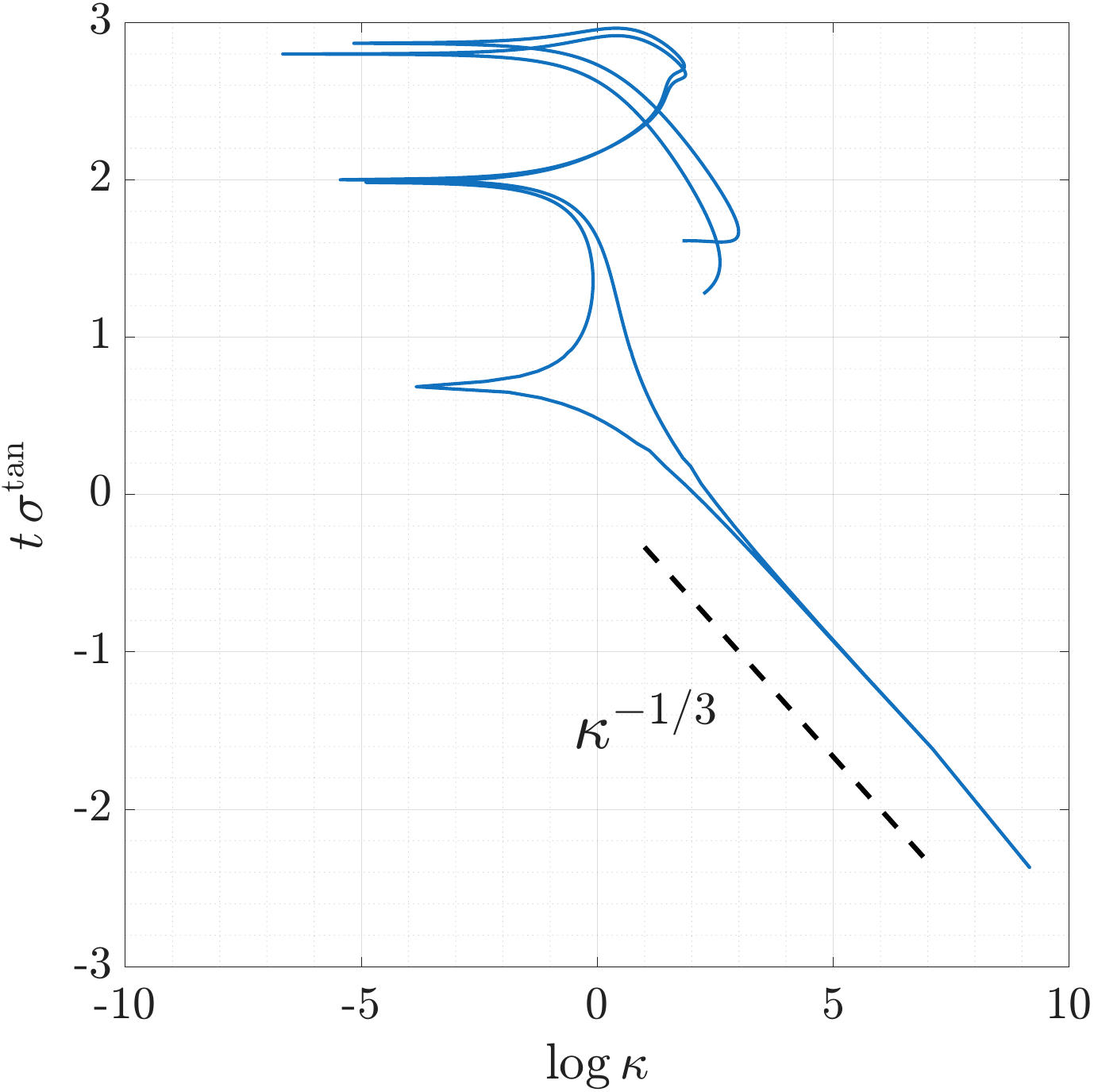}
  \caption{$\mathcal{W}_1$}
  \label{fig:alternating-shear-LGR_filamentation_scaling1}
\end{subfigure}
\begin{subfigure}[t]{0.24\linewidth}
  \centering
  \includegraphics[width=0.95\linewidth]{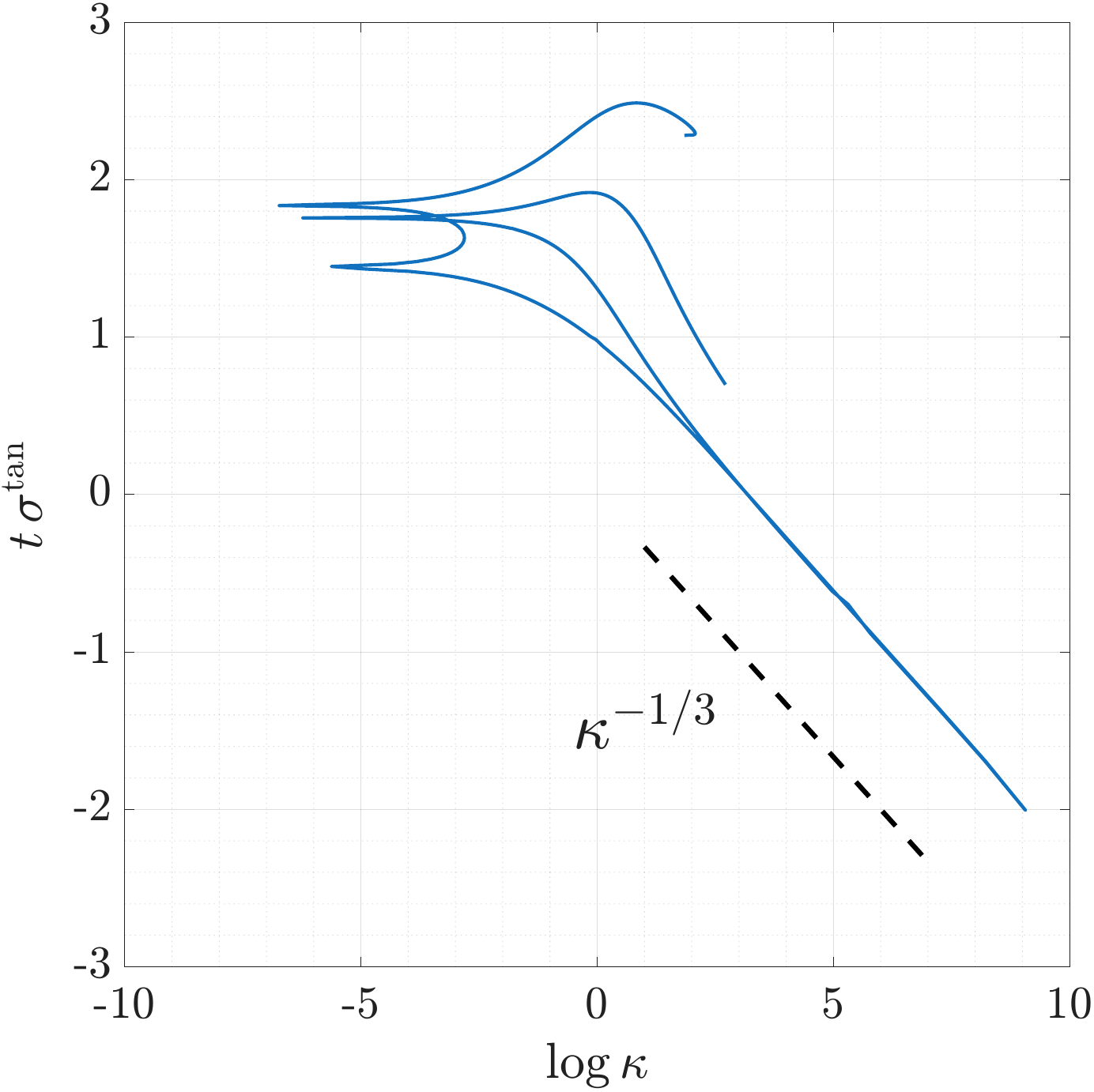}
  \caption{$\mathcal{W}_2$}
  \label{fig:alternating-shear-LGR_filamentation_scaling2}
\end{subfigure}
\begin{subfigure}[t]{0.24\linewidth}
  \centering
  \includegraphics[width=0.95\linewidth]{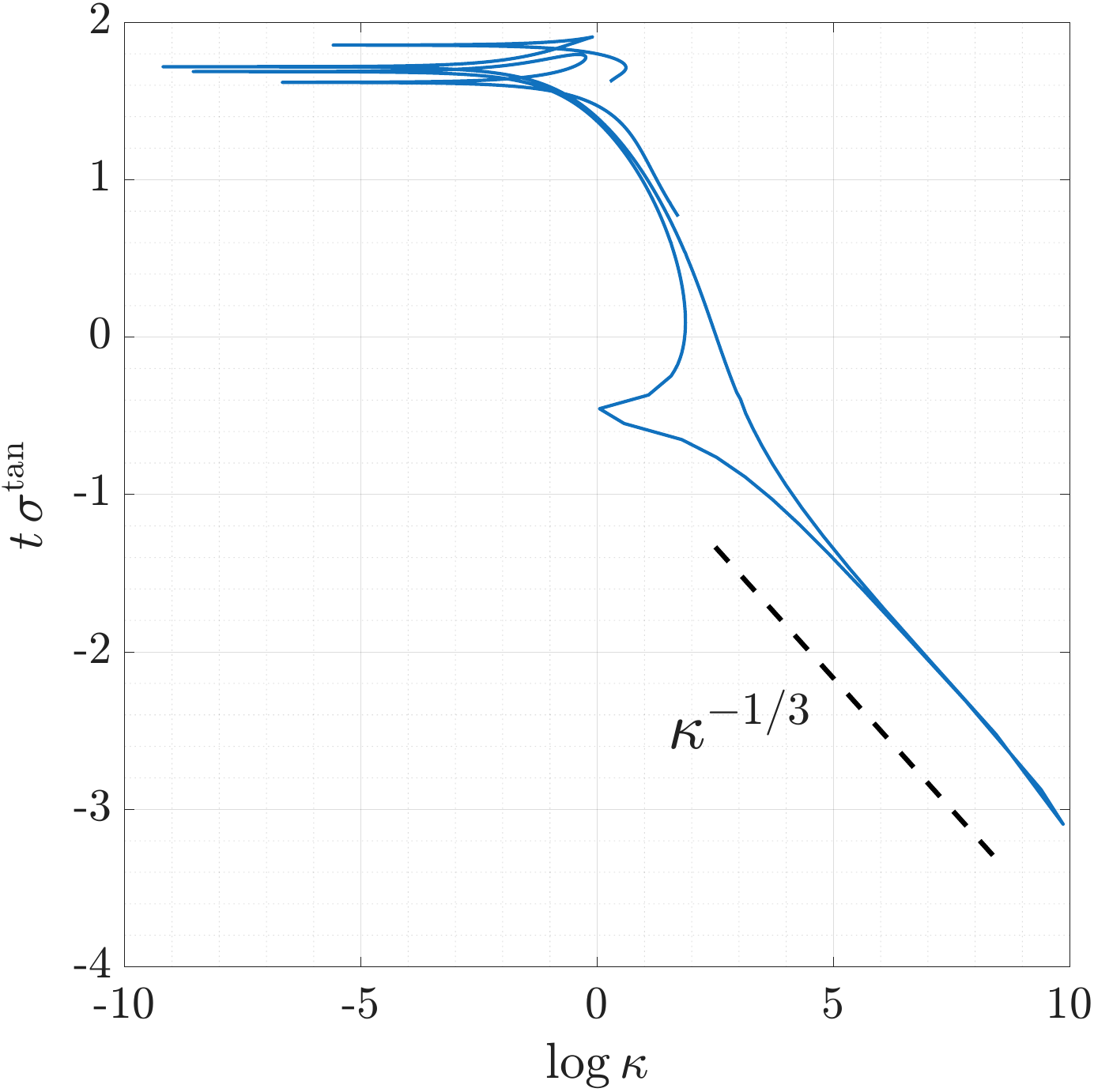}
  \caption{$\mathcal{W}_3$}
  \label{fig:alternating-shear-LGR_filamentation_scaling3}
\end{subfigure}
\begin{subfigure}[t]{0.24\linewidth}
  \centering
  \includegraphics[width=0.95\linewidth]{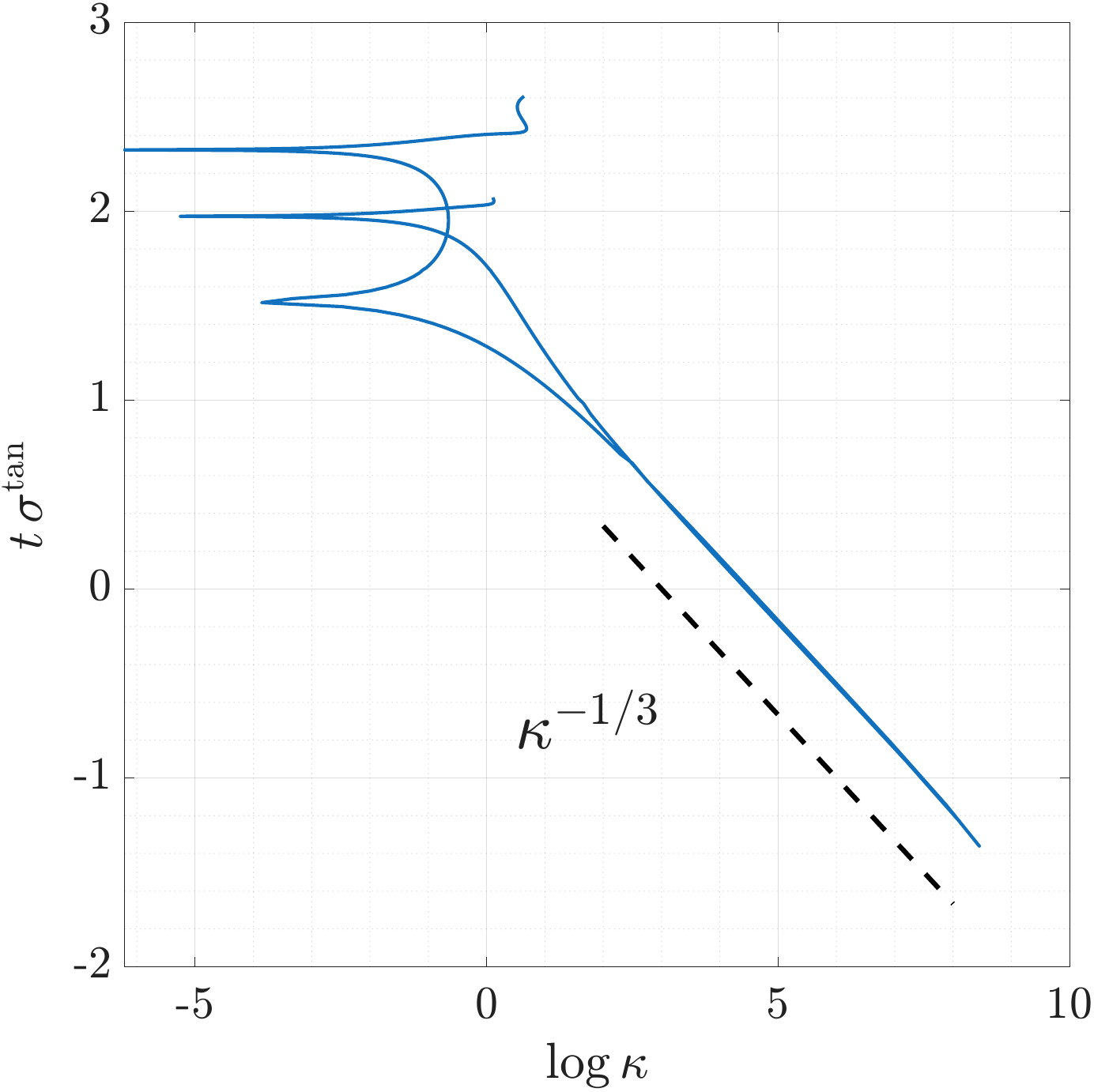}
  \caption{$\mathcal{W}_4$}
  \label{fig:alternating-shear-LGR_filamentation_scaling4}
\end{subfigure}
\caption{
Accumulated tangent stretching $t \sftle$ and
logarithmic curvature $\log\kappa$ in the observation
windows $\mathcal{W}_1,\ldots,\mathcal{W}_4$ for the nonlinear
alternating-shear test. The dashed line denotes the
$\kappa^{-1/3}$ scaling \eqref{stretch-curve-law}.
}
\label{fig:alternating-shear-LGR_filamentation}
\end{figure}

For this problem, a natural large-scale deformation length is
given by the diameter of the box containing the interface at final time
$t=\tmax$, which we approximate here by $L=6$. Consequently, the
scale-separation ratio for the intermediate resolution $h = h_0 / 8$ is $h/L=1.25\times10^{-3}$.
The excellent geometric fidelity and computational efficiency of classical Lagrangian 
tracking are accompanied by a fundamental topological rigidity of the 
representation, forcing thin filaments to persist even when they fall below microscopic scales. 
The \mts\ framework developed in the following sections 
overcomes this limitation by introducing controlled topological 
reconstruction at the microscale resolution.

\section{Capturing topological transitions by an Eulerian snapshot}
\label{sec:extraction}

In this section, we describe the first stage 
of the core \mts\ algorithm: \emph{extracting} a topologically-corrected 
interface family from an \emph{Eulerian snapshot}. 
Several front-tracking methods have employed auxiliary Eulerian
representations to detect or resolve topological transitions
\cite{GlGrLiTa2000,DuFiFlJiLiLiWu2006,BoLiGlLi2011,ShYoJu2011}, exploiting the fact
that Eulerian representations naturally encode connectivity while
Lagrangian representations provide highly accurate geometric
information. The present approach lies in the category of methods that
construct an auxiliary Eulerian field from a Lagrangian interface and
recover a topologically corrected interface from an isocontour of that field. 
In the \mts\ algorithm, the Eulerian field is taken to be a signed-distance function, as in
\cite{CeRo2005,CeRoSiVi2010,ShJu2009,WoThGrTu2009,Muller2009}, although alternative choices,
such as smoothed indicator functions, are also possible
\cite{ShJu2002,SiSh2007,GeGoDeWa2025}.

Suppose that the family of interfaces  $\{\Gamma_{\alpha}(t)\}$
has been evolved over the time interval $(T_{m-1},T_m)$
by the Lagrangian tracking algorithm, and 
the \mts\ algorithm is then applied at time $t=T_m$. 
We denote the \emph{pre-processed} Lagrangian interface family by
\begin{equation}\label{gamma-lgr}
\left\{ \Gammalgr_\alpha \right\}_{\alpha = 1}^{\Nlgr}, 
\qquad 
\text{with}
\quad
\Gammalgr_\alpha \coloneqq \Gamma_\alpha ( T_m^-), 
\quad
\zlgr_\alpha(s) \coloneqq \gamma_\alpha (s,T_m^-), 
\quad 
\text{and} 
\quad
\Nlgr \coloneqq N(T_m^-). 
\end{equation}
The interface family $\{ \Gammalgr_\alpha \}$ is first converted into an 
Eulerian representation through the \emph{signed-distance function} (SDF), 
\begin{equation}\label{sdf}
\phi(x) \coloneqq \pm \min_\alpha \mathrm{dist}\!\left(x,\Gammalgr_\alpha\right),
\end{equation}
defined on a local fine-scale Cartesian grid with cell size $h$. 
The zero-level set 
\begin{subequations}\label{gamma-eul}
\begin{equation}
\Sigma_\phi \coloneqq \{ x \in \mathbb{R}^2 : \phi(x) = 0 \}
\end{equation}
defines the (Eulerian) \emph{extracted}  interface family, 
\begin{equation}
\left\{ \Gammaeul_\beta \right\}_{\beta = 1}^{\Neul}  \coloneqq \mathrm{Conn} (\Sigma_\phi), 
\qquad 
\text{with} 
\quad
\Neul \coloneqq
\#\,\mathrm{Conn}(\Sigma_\phi). 
\end{equation}
\end{subequations}
Here, $\mathrm{Conn}(A)$ denotes the collection of connected components
of a set $A$, while $\#A$ denotes its cardinality, so that 
$\Neul$ is the number of connected components. 
In the remainder of this section, we describe an efficient 
narrow-band method to compute the SDF \eqref{sdf}, along with 
the simple marching-squares procedure used to extract the 
Eulerian interface family \eqref{gamma-eul}.

\subsection{Efficient construction of the signed-distance function}
\label{subsec:sdf}

Let $\Omegaref \subset \mathbb{R}^2$ denote a fixed reference domain 
containing the Lagrangian interface family $\{ \Gammalgr_\alpha \}$, and 
let  $h_0$ denote the mesh spacing of a coarse Cartesian grid on $\Omegaref$. 
The fine-scale $h$ is defined by dyadic refinement of $h_0$,
\begin{equation}\label{h0}
h = 2^{-p} h_0, \quad p \ge 0, \qquad \text{with} \  h_0 > 0 \ \text{prescribed.} 
\end{equation} 
With the dyadic refinement \eqref{h0}, each coarse grid cell is subdivided 
into $2^p \times 2^p$ refined cells of size $h$. 

Given the pre-processed interface family 
$\{ \Gammalgr_\alpha \}_{\alpha=1}^{\Nlgr}$, 
the SDF is constructed only within the narrow tubular neighborhood
\begin{equation}
\mathcal{N}_h
\coloneqq
\left\{
x \in \mathbb{R}^2 :
\min_\alpha
\mathrm{dist}(x,\Gammalgr_\alpha)
\le 3 h
\right\}. 
\end{equation}
Let $\mathcal{G}_h \subset \mathcal{N}_h$ denote the refined Cartesian grid nodes, 
and let $N_{\mathsf{grid}} \coloneqq \# \mathcal{G}_h$. 
The total number of polygonal interface segments is denoted by 
$N_{\mathsf{seg}} \coloneqq \sum_{\alpha=1}^{\Nlgr} N_\alpha$.
For each node $x_i \in \mathcal{G}_h$, 
the unsigned distance is computed by
\begin{equation}
d_i
\coloneqq
\min_\alpha
\mathrm{dist}(x_i,\Gammalgr_\alpha), 
\end{equation}
and the sign is assigned by ray-casting via
\begin{equation}
n_i
\coloneqq
\#
\Bigg(
R_i
\cap
\bigcup_{\alpha=1}^{\Nlgr}
\Gammalgr_\alpha
\Bigg),
\qquad
R_i
\coloneqq
\{x_i+s(1,0):s>0\},
\qquad
\phi_i
\coloneqq
(-1)^{n_i} d_i.
\end{equation}
A direct brute-force implementation computes
\begin{equation}
d_i
=
\min_{\alpha=1,\ldots,\Nlgr}
\;
\min_{j=1,\ldots,N_\alpha}
\mathrm{dist}(x_i,S_{\alpha,j}),
\qquad
i=1,\ldots,N_{\mathsf{grid}},
\end{equation}
requiring $\mathcal{O} \big( N_{\mathsf{grid}} \times N_{\mathsf{seg}} \big)$ 
point-to-segment distance evaluations. 
Similarly, the ray-casting procedure requires testing each horizontal ray 
against all interface segments, yielding the same computational complexity. 

To obtain a scalable algorithm, we reverse the order of the computation: rather than testing every 
grid node against every interface segment, each segment deposits distance information only to 
nearby refined-grid nodes. The same locality principle is used for ray-casting by binning 
candidate crossing segments by horizontal grid row. 
A complete description of the method is given in \Cref{alg:sdf} in \Cref{appendix:aux-algs}. 

Under the adaptive refinement criterion described in
\Cref{subsec:adaptive_refinement}, the polygonal segments 
satisfy $|S_{\alpha,j}| = \mathcal{O}(h)$, so that the $3h$-neighborhood of 
each segment intersects only $\mathcal{O}(1)$ refined cells. 
Consequently, each segment deposits
distance information to only $\mathcal{O}(1)$ refined-grid nodes,
yielding a total unsigned-distance construction cost of $\mathcal{O}(N_{\mathsf{seg}})$.
The same localization principle is used in the ray-casting procedure.
Each segment is inserted only into the horizontal refined-grid rows
intersecting its vertical extent. Since the segment length is
$\mathcal{O}(h)$, each segment intersects only $\mathcal{O}(1)$
rows, so the row-binning stage also costs $\mathcal{O}(N_{\mathsf{seg}})$.
Moreover, under the narrow-band construction and adaptive refinement,
the number of candidate crossing segments associated with each row
remains uniformly bounded independently of resolution. Therefore,
each ray-casting query requires only $\mathcal{O}(1)$ segment
intersection tests, giving a total sign-construction cost of
$\mathcal{O}(N_{\mathsf{grid}})$. 
Hence the proposed localized SDF construction has complexity 
$\mathcal{O}
\big(
N_{\mathsf{seg}}
+
N_{\mathsf{grid}}
\big)$.

\subsection{Interface extraction from the zero-level set}
\label{subsec:zero_level_extraction}

With the SDF in hand, the extraction of the 
topologically-corrected interface family 
$\{ \Gammaeul_\beta \}_{\beta=1}^{\Neul}$ from the zero-level set 
proceeds in three stages:
\[
\renewcommand{\arraystretch}{1.3}
\begin{array}{c}
\text{Identify connected components}
\;\rightarrow\;
\text{Trace component curves}
\;\rightarrow\;
\text{Refine \& filter}
\end{array}
\]
The first stage of the extraction process identifies the connected 
components of $\Sigma_\phi$ (and therefore also $\Neul$, i.e., the number of extracted curves) 
through a standard graph-based marching-squares construction \cite{LoCl1998,WoThGrTu2009,Muller2009,GeGoDeWa2025}. 
The second stage converts each graph component into an ordered polygonal curve.
For each component label $\beta$, we select a starting vertex in the corresponding
connected component of $G_\phi$ and traverse the graph by repeatedly moving to
the neighboring vertex that is not the previous vertex. The physical coordinates
of the visited vertices define an extracted polygonal curve 
$\Gammaeul_\beta=\{\zeul_{\beta,1},\ldots,\zeul_{\beta,N_\beta}\}$, which is 
subsequently processed in the third stage of the extraction pipeline by a refinement and small-scale filtering procedure. 
The complete extraction pipeline, given in 
Algorithms \ref{alg:identify}, \ref{alg:trace}, and \ref{alg:ref-and-filter} in \Cref{appendix:aux-algs}, therefore yields 
a topologically-corrected Eulerian interface family
$\{\Gammaeul_\beta\}_{\beta=1}^{\Neul}$
associated with the original Lagrangian interface family
$\{\Gammalgr_\alpha\}_{\alpha=1}^{\Nlgr}$.
In the next stage of the \mts\ algorithm, these two interface families are related through a 
graph-theoretic adjacency topology, which forms the basis for both topological transition 
detection and subsequent surgical rebuild.

\section{Building the adjacency topology and classifying topological events}
\label{sec:adjacency}

At this stage of the algorithm, the pre-processed 
Lagrangian family $\{\Gammalgr_\alpha\}_{\alpha=1}^{\Nlgr}$
and the extracted Eulerian family 
$\{\Gammaeul_\beta\}_{\beta=1}^{\Neul}$ generally do not match geometrically or topologically. 
The central goal of this section is to infer the \emph{adjacency topology} between the two interface families from their
underlying geometric configuration, and thereby flag topological transitions.

The inferred adjacency topology is represented by a bipartite graph $G$,
following the general philosophy of parent--child connectivity
constructions used in \cite{GlGrLiTa2000,DuFiFlJiLiLiWu2006,BoLiGlLi2011,GeGoDeWa2025}.
The bipartite graph is denoted by
\begin{subequations}
\begin{equation}
G=(\Vlgr, \Veul, E), 
\end{equation}
with vertex sets
\begin{equation}
\Vlgr
=
\{\Gammalgr_\alpha\}_{\alpha=1}^{\Nlgr}
\quad
\text{and}
\quad
\Veul
=
\{\Gammaeul_\beta\}_{\beta=1}^{\Neul}.
\end{equation}
The edge set
\begin{equation}
E
\subset
\Vlgr
\times
\Veul
\end{equation}
\end{subequations}
encodes adjacency relations between the 
Lagrangian  and  Eulerian interfaces.
In practice, the graph may be represented by a binary adjacency matrix
\begin{equation}\label{adj-def}
\Adj
\in
\{0,1\}^{\Nlgr\times\Neul}, 
\qquad 
\text{with}
\quad 
\Adj_{\alpha\beta}=1 
\iff 
\left( \Gammalgr_\alpha, \Gammaeul_\beta \right) \in E. 
\end{equation}
The \mts\ algorithm computes the adjacency matrix $\Adj$
through a robust pipeline consisting of candidate
adjacency construction followed by geometric pruning.
A candidate graph is first constructed
from local geometric proximity information, after which
fine-scale geometric compatibility criteria are used to remove
spurious edges and produce the final adjacency topology. 

Then, the connected components of 
the bipartite graph are extracted to define independent local 
topological events $\{ \Event_1,\ldots,\Event_{\Nevent} \}$, which are classified as:
\begin{itemize}
\item \emph{trivial} events, corresponding to one-to-one relations;
\item \emph{vanishing} events, corresponding to interfaces with no descendants;
\item or \emph{branching} events, corresponding to one-to-many or many-to-one adjacencies;
\end{itemize}
Topology change is then flagged by the presence of any nontrivial event. 

\subsection{Candidate adjacency topology with geometric pruning}
\label{subsec:adjacency_construction}

In this subsection, we construct a preliminary adjacency matrix
\begin{equation}\label{candidate-adjacency}
\widetilde{\Adj}
\in
\{0,1\}^{\Nlgr\times\Neul},
\end{equation}
which encodes parent--child relations between the 
Lagrangian and  Eulerian interfaces.  
The construction combines geometric proximity tests with successive
pruning stages designed to remove geometrically inadmissible attachments,
followed by a final parent-existence enforcement step.
The individual stages of the construction are described below.

\begin{enumerate}\setlength\itemsep{1em}

\item \emph{Candidate adjacency by $\lgr$-$\eul$ proximity:} 
We first construct a candidate adjacency relation using pairwise
geometric proximity between
$\{\Gammalgr_\alpha\}_{\alpha=1}^{N_\lgr}$ and
$\{\Gammaeul_\beta\}_{\beta=1}^{N_\eul}$.
For each pair $(\Gammalgr_\alpha,\Gammaeul_\beta)$, we define the minimum
separation distance
\begin{subequations}\label{cand-adjacency}
\begin{equation}
d_{\alpha\beta}
\coloneqq
\inf_{s,\bar s}
\left|
\zlgr_\alpha(s)
-
\zeul_\beta(\bar s)
\right|.
\end{equation}
The matrix
$\widetilde\Adj$ is then defined by
\begin{equation}
\widetilde\Adj_{\alpha\beta}
=
1
\quad\Longleftrightarrow\quad
d_{\alpha\beta}
\le
h.
\label{eq:candidate-adjacency}
\end{equation}
\end{subequations}

\item \emph{Merge pruning by $\lgr$--$\lgr$ proximity:} 
Candidate multi-parent attachments necessarily correspond to local merge events.
By construction, a merge can occur only when the participating Lagrangian interfaces come within distance $h$ of one another.  
Consequently, whenever an extracted interface possesses 
multiple candidate parents, the corresponding parent interfaces are required to lie within distance $h$ of one another.
Suppose that $\Gammaeul_\beta$ possesses multiple candidate parents, so that
$\widetilde\Adj_{\alpha_1\beta}=\widetilde\Adj_{\alpha_2\beta}=1$
for some $\alpha_1\neq\alpha_2$.
For each such pair $(\alpha_1, \alpha_2)$, we define the $\lgr$-$\lgr$ separation distance
\begin{subequations}\label{lgr-lgr-prune}
\begin{equation}
d_{\alpha_1\alpha_2}^{\lgr}
\coloneqq
\inf_{s,\bar s}
\left|
\zlgr_{\alpha_1}(s)
-
\zlgr_{\alpha_2}(\bar s)
\right|.
\end{equation}
Whenever $d_{\alpha_1 \alpha_2}^{\lgr}>h$,
the attachment associated with the larger $\lgr$-$\eul$ separation distance is removed according to
\begin{equation}
d_{\alpha_1\alpha_2}^{\lgr}>h
 \ 
 \implies
 \ 
\begin{cases}
\widetilde\Adj_{\alpha_1\beta}=0,
&
\text{if } d_{\alpha_1\beta}\ge d_{\alpha_2\beta},
\\
\widetilde\Adj_{\alpha_2\beta}=0,
&
\text{if } d_{\alpha_1\beta}<d_{\alpha_2\beta}.
\end{cases}
\end{equation}
This pruning procedure is repeated until all remaining multi-parent
attachments satisfy $d_{\alpha_1\alpha_2}^{\lgr}\le h$.
\end{subequations}

\item \emph{Local geometric overlap pruning:}
The remaining multi-parent attachments are further pruned using a local
geometric overlap criterion.  For each $\Gammalgr_\alpha$, define the node-ball covering
\begin{subequations}\label{overlap-pruning}
\begin{equation}
\mathcal{B}_\alpha
\coloneqq
\bigcup_i
B_{h/4} \big(\zlgr_{\alpha,i} \big), 
\qquad 
\text{where} 
\quad 
B_r(x)
\coloneqq
\{
y\in\mathbb{R}^2
:
|x-y|\le r
\}.
\end{equation}
Whenever the extracted interface fails to intersect the node-ball covering,
the corresponding attachment is removed:
\begin{equation}
\Gammaeul_\beta
\cap
\mathcal{B}_\alpha
=
\varnothing
\implies
\widetilde\Adj_{\alpha\beta}=0.
\end{equation}
\end{subequations}

\item \emph{Parent existence enforcement:}
Finally, every extracted Eulerian interface is required to be adjacent 
to at least one pre-processed Lagrangian interface. 
More precisely, whenever
\begin{subequations}\label{parent-enforce}
\begin{equation}
\sum_{\alpha=1}^{N_\lgr}
\widetilde\Adj_{\alpha\beta}
=
0,
\end{equation}
the nearest Lagrangian interface is attached:
\begin{equation}
\widetilde\Adj_{\alpha^\ast\beta}=1,
\qquad
\alpha^\ast
\coloneqq
\argmin_\alpha d_{\alpha\beta}.
\end{equation}
\end{subequations}

\end{enumerate}

\begin{remark}
To accelerate the geometric queries arising throughout the 
adjacency topology construction, each interface is first associated with a
padded axis-aligned bounding box \cite{GeGoDeWa2025}.  For a curve
$\Gamma\subset\mathbb{R}^2$, we define
\begin{subequations}
\begin{equation}
B_h(\Gamma)
\coloneqq
[x^1_{\min} - h, x^1_{\max} + h]
\times
[x^2_{\min} - h, x^2_{\max} + h],
\qquad
x^i_{\min}
\coloneqq
\min_{{x}\in\Gamma} x^i,
\quad
x^i_{\max}
\coloneqq
\max_{{x}\in\Gamma} x^i.
\end{equation}
These bounding boxes are subsequently used as cheap geometric
rejection criteria in the pairwise proximity and pruning stages. 
Specifically, the candidate $\lgr$--$\eul$ pair
$(\Gammalgr_\alpha,\Gammaeul_\beta)$ is discarded whenever
\begin{equation}
B_h(\Gammalgr_\alpha)
\cap
B_h(\Gammaeul_\beta)
=
\varnothing, 
\end{equation}
and, for surviving pairs, the geometric proximity query is localized
to the intersection region
\begin{equation}
B_h(\Gammalgr_\alpha)
\cap
B_h(\Gammaeul_\beta).
\end{equation}
\end{subequations}
\end{remark}

\subsection{Topology-change detection and event classification}
\label{subsec:event_decomposition}

We now use the inferred adjacency topology to decompose the global 
topology into independent local events, classify the resulting event structure, and 
thereby detect topological changes.
The preliminary adjacency matrix $\widetilde\Adj$
constructed in \Cref{subsec:adjacency_construction}
defines a bipartite graph $G$, whose connected components 
define independent local topological \emph{events} $\Event_q$.

We denote the connected components of $G$ by
\begin{subequations}\label{events}
\begin{equation}
\Event_q
=
\left(
\Vlgr_q,
\Veul_q,
E_q
\right),
\qquad
q=1,\ldots,\Nevent,
\end{equation}
where
$\Vlgr_q\subset\Vlgr$,
$\Veul_q\subset\Veul$, and
\[
E_q
=
E
\cap
\left(
\Vlgr_q
\times
\Veul_q
\right).
\]
Equivalently,
\begin{equation}\label{eq:event-components}
\Nevent
\coloneqq
\#\,\mathrm{CC}(G),
\qquad
\{
\Event_q : q=1,\ldots,\Nevent
\}
\coloneqq
\mathrm{CC}(G), 
\end{equation}
where $\mathrm{CC}(G)$ denotes the collection of connected components
of the bipartite adjacency graph $G$.  
For bookkeeping, we also define the index sets associated with each event $\Event_q$, 
\begin{equation}
I_q
\coloneqq
\{
\alpha :
\Gammalgr_\alpha\in\Vlgr_q
\}
\qquad
\text{and}
\qquad
J_q
\coloneqq
\{
\beta :
\Gammaeul_\beta\in\Veul_q
\}.
\end{equation}
\end{subequations}
In practice, the connected components are extracted by a 
breadth-first search, initialized from each unvisited Lagrangian-interface
vertex and alternating between adjacent Eulerian-interface and
Lagrangian-interface vertices until no new vertices are reached.

The labels of the Lagrangian and Eulerian interfaces
may be permuted so that the rows and columns associated with each event
appear contiguously. 
Specifically, the connected components of the bipartite adjacency graph
are traversed sequentially, and the associated Lagrangian-interface and
Eulerian-interface labels are reordered in the order they are discovered.
The canonical adjacency matrix is then defined by
\begin{equation}\label{canonical-form}
\Adj
\coloneqq
P_{\lgr}
\,
\widetilde\Adj
\,
P_{\eul}^{\top},
\end{equation}
where
$P_{\lgr}$
and
$P_{\eul}$
denote the row and column permutation matrices corresponding to this relabelling.
With the adjacency matrix in canonical form, the event
collection $\{\Event_1,\ldots,\Event_{\Nevent}\}$ is ordered so that the rows and columns 
associated with each event occupy contiguous index ranges in $\Adj$.

Each event $\Event_q$ is classified according to the cardinalities of
its associated index sets $(I_q,J_q)$:
\begin{equation}\label{event-classify}
\Event_q \ 
\text{ is } \ 
\begin{cases}
\text{\emph{trivial}}, \quad 
&
\text{if } |I_q|=|J_q|=1,
\\
\text{\emph{vanishing}}, \quad
&
\text{if } |I_q|=1
\text{ and }
|J_q|=0,
\\
\text{\emph{branching}}, \quad
&
\text{otherwise}.
\end{cases}
\end{equation}
Representative examples of the different event types are illustrated in
\Cref{fig:event-types}. Topology change is detected whenever at least one event is
non-\emph{trivial}, which is equivalent to the condition
\begin{equation}
\exists\,\alpha
\
\text{such that}
\
\sum_{\beta=1}^{N_\eul}
\Adj_{\alpha\beta}\neq 1,
\qquad
\text{or}
\qquad
\exists\,\beta
\ 
\text{such that}
\
\sum_{\alpha=1}^{N_\lgr}
\Adj_{\alpha\beta}\neq 1.
\end{equation}
If all events are either \emph{trivial} or \emph{vanishing}, then no interface surgery is required.  
In this case, any \emph{vanishing} interfaces are removed, the \mts\
algorithm is terminated, and the
Lagrangian evolution proceeds using the surviving interface family.
Conversely, if at least one \emph{branching} event is present, then the algorithm proceeds to the
surgical reconstruction stage, described in the next section.

\begin{figure}[t]
\centering

\begin{tikzpicture}[
    vertex/.style={
        circle,
        thick,
        minimum size=6mm,
        inner sep=0pt,
        text height=1.5ex,
        text depth=.25ex,
        font=\footnotesize
    },
    old/.style={
        vertex,
        draw=red,
        fill=red!20
    },
    new/.style={
        vertex,
        draw=green!60!black,
        fill=green!20
    },
    edge/.style={
        thick,
        black
    },
    label/.style={
        font=\small\itshape,
        anchor=base
    },
    matrixlabel/.style={
        font=\small,
        anchor=center,
        minimum height=8mm,
        inner sep=0pt
    }
]

\def\ymat{-1.15}
\def\ylab{-2.00}

\def\xtriv{0}
\def\xvan{3}
\def\xmerge{6}
\def\xsplit{9}
\def\xbranch{12}

\node[old] (t-old1) at ($( \xtriv,0 )+(-0.75,0)$) {$\alpha$};
\node[new] (t-new1) at ($( \xtriv,0 )+( 0.75,0)$) {$\beta$};

\draw[edge] (t-old1) -- (t-new1);

\node[matrixlabel] at (\xtriv,\ymat)
{$A_q=\begin{bmatrix}1\end{bmatrix}$};

\node[label] at (\xtriv,\ylab)
{trivial};

\node[old] (v-old1) at ($( \xvan,0 )+(-0.75,0)$) {$\alpha$};

\node[matrixlabel] at (\xvan,\ymat)
{$A_q=\begin{bmatrix}0\end{bmatrix}$};

\node[label] at (\xvan,\ylab)
{vanishing};

\node[old] (m-old1) at ($( \xmerge,0 )+(-0.75,0.8)$) {$\alpha_1$};
\node[old] (m-old2) at ($( \xmerge,0 )+(-0.75,0.0)$) {$\alpha_2$};

\node[new] (m-new1) at ($( \xmerge,0 )+(0.75,0.4)$) {$\beta$};

\draw[edge] (m-old1) -- (m-new1);
\draw[edge] (m-old2) -- (m-new1);

\node[matrixlabel] at (\xmerge,\ymat)
{$A_q=\begin{bmatrix}1\\1\end{bmatrix}$};

\node[label] at (\xmerge,\ylab)
{merge};

\node[old] (s-old1) at ($( \xsplit,0 )+(-0.75,0.4)$) {$\alpha$};

\node[new] (s-new1) at ($( \xsplit,0 )+(0.75,0.8)$) {$\beta_1$};
\node[new] (s-new2) at ($( \xsplit,0 )+(0.75,0.0)$) {$\beta_2$};

\draw[edge] (s-old1) -- (s-new1);
\draw[edge] (s-old1) -- (s-new2);

\node[matrixlabel] at (\xsplit,\ymat)
{$A_q=\begin{bmatrix}1&1\end{bmatrix}$};

\node[label] at (\xsplit,\ylab)
{split};

\node[old] (c-old1) at ($( \xbranch,0 )+(-0.75,0.8)$) {$\alpha_1$};
\node[old] (c-old2) at ($( \xbranch,0 )+(-0.75,0.0)$) {$\alpha_2$};

\node[new] (c-new1) at ($( \xbranch,0 )+(0.75,0.8)$) {$\beta_1$};
\node[new] (c-new2) at ($( \xbranch,0 )+(0.75,0.0)$) {$\beta_2$};

\draw[edge] (c-old1) -- (c-new1);
\draw[edge] (c-old1) -- (c-new2);
\draw[edge] (c-old2) -- (c-new2);

\node[matrixlabel] at (\xbranch,\ymat)
{$A_q=\begin{bmatrix}1&1\\0&1\end{bmatrix}$};

\node[label] at (\xbranch,\ylab)
{branching};

\end{tikzpicture}

\caption{
Representative local topological events and their associated
adjacency submatrices $A_q$. Red vertices denote pre-processed
Lagrangian interfaces, green vertices denote extracted Eulerian
components, and black edges denote adjacency.  The \emph{trivial} and
\emph{vanishing} events do not require surgical reconstruction, whereas
the \emph{merge}, \emph{split}, and more general \emph{branching}
events trigger the interface surgery.
}
\label{fig:event-types}

\end{figure}

\section{Interface surgery using topological defect measures}
\label{sec:surgery}

The third and final stage of the core \mts\ algorithm is \emph{interface surgery}. 
The objective of this stage is to surgically stitch together
the Lagrangian family $\{ \Gammalgr_\alpha \}_{\alpha=1}^{\Nlgr}$
and extracted Eulerian family $\{ \Gammaeul_\beta \}_{\beta=1}^{\Neul}$,
using the local event decomposition
\eqref{events}--\eqref{event-classify},
to produce a surgically reconstructed post-transition interface family
\begin{equation}\label{gamma-post}
\left\{
\Gamma_\alpha(T_m^+)
\right\}_{\alpha=1}^{N(T_m^+)}, 
\qquad 
\text{with} 
\quad 
N(T_m^+) \coloneqq \Neul
\quad 
\text{and}
\quad 
\Gamma_\alpha(T_m^+) = \{ \gamma_\alpha(s,T_m^+) : s \in \mathbb{T} \}. 
\end{equation}
The surgically reconstructed interface family \eqref{gamma-post} is subsequently passed back to the
Lagrangian tracking algorithm and evolved forward until the
next \mts\ time $T_{m+1}$.

For each $q = 1,\ldots,\Nevent$, the reconstruction procedure is 
performed independently on the local event $\Event_q$
defined by the event decomposition \eqref{events}--\eqref{event-classify}. 
We therefore suppress the event index $q$ throughout the
remainder of this section and write
\[
\Event \equiv \Event_q,
\qquad
\Vlgr \equiv \Vlgr_q,
\qquad
\Veul \equiv \Veul_q,
\qquad
E \equiv E_q,
\qquad
I \equiv I_q,
\qquad
J \equiv J_q.
\]
As noted in \Cref{subsec:event_decomposition}, interface surgery 
is only required for the \emph{branching} events, for which 
neither the pre-processed Lagrangian interfaces
nor the extracted Eulerian interfaces are individually suitable as the
post-transition interface family. 
The Lagrangian interfaces maintain accuracy
away from the topological transition, but possess the incorrect topology.
Conversely, the extracted Eulerian interfaces possess the correct 
topology, but generally exhibit lower accuracy away from the transition region. 
The key objective of interface surgery is to identify the localized
regions in which topological inconsistencies occur and restrict
surgical modifications to those regions alone.
At a high level, the procedure consists of:
\begin{enumerate}
\item constructing localized \emph{topological defect densities}
$(\mu,\eta)$, together with the associated \emph{defect measure}
$\mathcal D$, its regularized approximation $\Defect$, and the
resulting localized \emph{surgical region} $\Surg$, which identify the
portions of the interface families responsible for topological
inconsistencies;

\item decomposing the Lagrangian and Eulerian interface families into
compatible \emph{surgical pieces}
$\{\Plgr,\Peul\}$, and smoothly and consistently stitching the resulting
pieces together to produce the topologically correct post-transition
interface family \eqref{gamma-post}.
\end{enumerate}
In the remainder of this section, we describe this surgical
reconstruction procedure in detail.

\begin{remark}[Direct-replacement variant]
\label{rem:direct-replacement}
Instead of performing surgical reconstruction for a \emph{branching}
event $\Event=(\Vlgr,\Veul,E)$, one may replace the post-transition
interfaces associated with this event by the extracted Eulerian
interfaces in $\Veul$. That is, for each $\beta \in J$, we set
\begin{equation}\label{direct-replacement}
\Gamma_\beta(T_m^+)
=
\{
\gamma^\eul_\beta(s)
:
s \in \mathbb T
\},
\qquad
\text{together with} 
\quad
N(T_m^+) \coloneqq \Neul .
\end{equation}
This yields a simpler \emph{direct-replacement} variant of the \mts\
algorithm, analogous to the \emph{global} topological reconstruction
algorithms of \cite{ShJu2002,SiSh2007,Muller2009,ShYoJu2011,PaLoChScPoZa2024,GeGoDeWa2025}. As we will demonstrate below in
\Cref{subsec:need-for-surgery}, the direct-replacement \mts\ variant \eqref{direct-replacement},
while simpler, produces nontrivial geometric errors away from
topological transitions due to the global Eulerian reconstruction. The
surgical approach developed in this section eliminates such errors.
\end{remark}

\subsection{Identifying topological defects through closest-point degeneracies}
\label{subsec:defect_measures}

The localized surgical regions are identified through
event-local topological defect densities constructed from
bidirectional closest-point projections between the pre-processed
Lagrangian and extracted Eulerian interface families.
Classical approaches for identifying topological inconsistencies typically
rely on grid-intersection tests
\cite{GlGrLiTa2000,DuFiFlJiLiLiWu2006,WoThGrTu2009,BoLiGlLi2011,HeKaStEtYaWo2024}.
More closely related to the present work are geometric approaches based
on closest-point degeneracies and local interface signatures, including
the kink-detection framework of \cite{HeCoLu2022} and the 
signature method of \cite{ChMaPoZa2022}. The defect densities
introduced below may be viewed as a closest-point degeneracy detector
adapted to the event-local Lagrangian--Eulerian correspondence
structure constructed in \Cref{sec:adjacency}.

\subsubsection{Topological defect densities}

For each \emph{branching} event $\Event$, we define
binary defect densities
\begin{equation}\label{one-sided-defect-densities}
\mu_\alpha : \mathbb{T} \to \{0,1\},
\quad
\alpha \in I,
\qquad
\text{and}
\qquad
\eta_\beta : \mathbb{T} \to \{0,1\},
\quad
\beta \in J,
\end{equation}
where $\mu_\alpha(s)$ and $\eta_\beta(s)$ indicate whether the
corresponding interface points
$\zlgr_\alpha(s)$ and $\zeul_\beta(s)$
are identified as topologically defective.
The ``Lagrangian'' defect density $\mu_\alpha$ is constructed from a pair
of closest-point degeneracy indicators associated with the extracted
Eulerian interface family.
The ``Eulerian'' defect density $\eta_\beta$ is defined analogously by
reversing the roles of the Lagrangian and Eulerian interface families in
the construction below.

We first introduce the closest-label map and closest-point projection
associated with the extracted Eulerian interface family:
\begin{subequations}\label{eq:lgr-correspondence}
\begin{align}
\Lambda_\alpha(s)
&\coloneqq
\argmin_{\beta\in J}
\inf_{\bar s \in\mathbb{T}}
\left|
\zlgr_\alpha(s)
-
\zeul_\beta(\bar s)
\right|,
\label{eq:lgr-closest-label}
\\[0.25em]
\Pi_\alpha(s)
&\coloneqq
\argmin_{z\in\Gammaeul}
|\zlgr_\alpha(s) - z|.
\label{eq:lgr-closest-projection}
\end{align}
\end{subequations}
Here \eqref{eq:lgr-closest-projection} should be understood in the sense
that $\Pi_\alpha(s)$ denotes a selected closest point, since the
minimizer need not be unique.
The defect density is then defined as the union of two binary indicator
functions,
\begin{equation}
\mu_\alpha(s)
=
\max
\{
\mu_\alpha^1 (s),
\mu_\alpha^2 (s)
\}, 
\end{equation}
where the individual indicators $\mu_\alpha^1, \mu_\alpha^2 : \mathbb{T} \to \{0,1\}$ are defined as follows:
\begin{enumerate}\setlength\itemsep{1em}
\item
The \emph{label-jump indicator}
$\mu_\alpha^1$
marks neighborhoods of discontinuities in the closest-label map
$\Lambda_\alpha$.
More precisely, denoting the jump set of $\Lambda_\alpha$ by 
\begin{subequations}\label{eq:mu1-defect}
\begin{equation}
\mathcal{J}
\coloneqq
\{
s\in\mathbb{T}
:
\Lambda_\alpha
\text{ is discontinuous at } s
\},
\end{equation}
we define
\begin{equation}
\mu_\alpha^1(s)
=
\mathbbm{1}_{\left\{
\left|
\zlgr_\alpha(s)
-
\zlgr_\alpha(s_*)
\right|
\leq
h 
\,
: 
\,
s_*\in\mathcal{J}
\right\}}, 
\end{equation}
\end{subequations}
where $\mathbbm{1}_A$ denotes the indicator function on the set $A$. 
Discontinuities in the closest-label map indicate locations where nearby
points on the same Lagrangian interface are associated with different
Eulerian interfaces, signaling a breakdown of the discrete correspondence
structure near \emph{branching} events.

\item
The \emph{projection-distortion indicator}
$\mu_\alpha^2$
marks points where the closest-point projection from the Lagrangian
interface family onto the Eulerian interface family exhibits large
local expansion along the interface.
For each $s\in\mathbb{T}$, let
$\Gammaeul_{\beta(s)}$
denote the Eulerian interface selected by the closest-label map
$\Lambda_\alpha$, and let
$\bar s(s)\in\mathbb{T}$
denote a parameter value satisfying
\begin{subequations}\label{eq:mu2-defect}
\begin{equation}
\Pi_\alpha(s)
=
\zeul_{\beta(s)}
\bigl(
\bar s(s)
\bigr).
\end{equation}
The projected arclength parameter $\sigma(s)$ 
along the Eulerian interface $\Gammaeul_{\beta(s)}$ is then given by
\begin{equation}
\sigma(s)
=
\int_0^{\bar s(s)}
\left|
\frac{\mathrm{d}}{\mathrm{d}r}
\zeul_{\beta(s)}(r)
\right|
\,\mathrm{d}r. 
\end{equation}
The local projection-distortion factor is then defined by
\begin{equation}
\rho_\alpha(s)
\coloneqq
\left|
\frac{\mathrm{d}\sigma}{\mathrm{d}s}(s)
\right|,
\end{equation}
and the projection-distortion indicator by
\begin{equation}
\mu_\alpha^2(s)
=
\mathbbm{1}_{\{
\rho_\alpha(s)
>
\rho_{\max}
\}}, 
\qquad \rho_{\max} = 10. 
\end{equation}
\end{subequations}
When the Lagrangian and Eulerian interface families are locally
well-aligned, the closest-point projection behaves approximately like
an arclength-preserving reparametrization, so that
$\rho_\alpha \approx 1$.
Large values of $\rho_\alpha$ therefore indicate strong local
distortion of the closest-point structure along the interface.

\end{enumerate}

The indicators $\{ \mu_\alpha^1, \eta_\beta^1 \}$ and
$\{ \mu_\alpha^2, \eta_\beta^2 \}$ capture complementary
failure modes of the $\lgr$-$\eul$ correspondence structure.
The label-jump indicators
$\{ \mu_\alpha^1, \eta_\beta^1 \}$
detect discrete changes in the associated dual-family interface label,
but are insensitive to geometric distortion within a single
dual-family correspondence.
Conversely, the projection-distortion indicators
$\{ \mu_\alpha^2, \eta_\beta^2 \}$
detect strong local stretching of the closest-point projection even
when the associated dual-family interface label remains unchanged.

To define the topological defect densities $(\mu,\eta)$, the one-sided defect densities
\eqref{one-sided-defect-densities} are propagated through
their closest-point projections onto the dual interface family,
thereby symmetrizing the defect construction between the
Lagrangian and Eulerian interface families.
Introducing the disjoint unions
$\Gammalgr \coloneqq \bigsqcup_{\alpha\in I}\Gammalgr_\alpha$
and
$\Gammaeul \coloneqq \bigsqcup_{\beta\in J}\Gammaeul_\beta$,
we define the defect densities
$\mu : \Gammalgr \to \{0,1\}$ and
$\eta : \Gammaeul \to \{0,1\}$ by
\begin{subequations}\label{defect-densities}
\begin{equation}
\mu(\zlgr_\alpha(s))
=
\max
\left\{
\mu_\alpha(s),
\,
\eta_\beta(\bar s)
:
\Pi_\alpha(s)=\zeul_\beta(\bar s),
\,
\beta\in J
\right\},
\qquad
\alpha\in I,
\end{equation}
and
\begin{equation}
\eta(\zeul_\beta(s))
=
\max
\left\{
\eta_\beta(s),
\,
\mu_\alpha(\bar s)
:
\Pi_\beta(s)=\zlgr_\alpha(\bar s),
\,
\alpha\in I
\right\},
\qquad
\beta\in J, 
\end{equation}
\end{subequations}
respectively.

\subsubsection{Defect measures and the surgical region}

The interface-supported defect densities $(\mu,\eta)$ induce an
associated defect measure
\begin{subequations}\label{defect-measure}
\begin{equation}
\mathcal{D}
\in
\mathcal{M}(\mathbb{R}^2),
\end{equation}
where $\mathcal{M}(\mathbb{R}^2)$ denotes the space of finite Radon
measures on $\mathbb{R}^2$.
The measure $\mathcal{D}$ is characterized by its pairing with smooth
compactly supported test functions
$\varphi \in C_c(\mathbb{R}^2)$:
\begin{equation}
\langle
\mathcal{D},
\varphi
\rangle
 \, = \, 
\sum_{\alpha\in I}
\int_{\mathbb{T}}
\mu(\zlgr_\alpha(s))
\,
\varphi(\zlgr_\alpha(s))
\left|
\zlgr'_\alpha(s)
\right|
\,\mathrm{d}s
 \ + \ 
\sum_{\beta\in J}
\int_{\mathbb{T}}
\eta(\zeul_\beta(s))
\,
\varphi(\zeul_\beta(s))
\left|
\zeul'_\beta(s)
\right|
\,\mathrm{d}s.
\end{equation}
\end{subequations}
In the numerical algorithm, the singular defect measure
$\mathcal{D}$ is replaced by the $h$-scale regularization
\begin{subequations}\label{defect-measure-regularized}
\begin{equation}
\Defect
\coloneqq
\psi_h * \mathcal{D},
\end{equation}
where
$\psi_h : \mathbb{R}^2 \to [0,\infty)$
is the compactly supported cosine mollifier
\begin{equation}
\psi_h(x)
=
\begin{cases}
\tfrac12
\left(
1+
\cos\left(
\tfrac{\pi |x|}{h}
\right)
\right),
&
|x|<h,
\\[0.25em]
0,
&
|x|\ge h. 
\end{cases}
\end{equation}
\end{subequations}
The regularized defect measure $\Defect$ provides an Eulerian
representation of the defective portions of the Lagrangian and
Eulerian interface families. 

The associated \emph{surgical region} is then defined as the super-level set
\begin{equation}\label{eq:surgical-region}
\Surg
\coloneqq
\{
x \in \mathbb{R}^2
:
\Defect(x) > \tau
\},
\end{equation}
where $\tau>0$ is a prescribed threshold. In the numerical examples
presented below, we use $\tau=0.1$. The surgical region localizes the
portion of the computational domain requiring topological
reconstruction, while leaving the remainder of the interface family
unchanged.

The construction of the defect densities, defect measures, and surgical region
is illustrated for a simple \emph{merge} event in \Cref{fig:defect}.
The left panel displays the Lagrangian and Eulerian interfaces,
while the center panel shows the interface-supported
defect densities $(\mu,\eta)$ and the singular defect measure
$\mathcal D$ that they induce.
The right panel shows the (regularized) Eulerian defect measure
$\Defect$, with the associated defect ``hotspot'' identifying the
surgical region $\Surg$ used in the subsequent reconstruction.

\begin{figure}[ht]
\centering
\begin{subfigure}[t]{0.275\linewidth}
  \centering
  \includegraphics[width=0.85\linewidth]{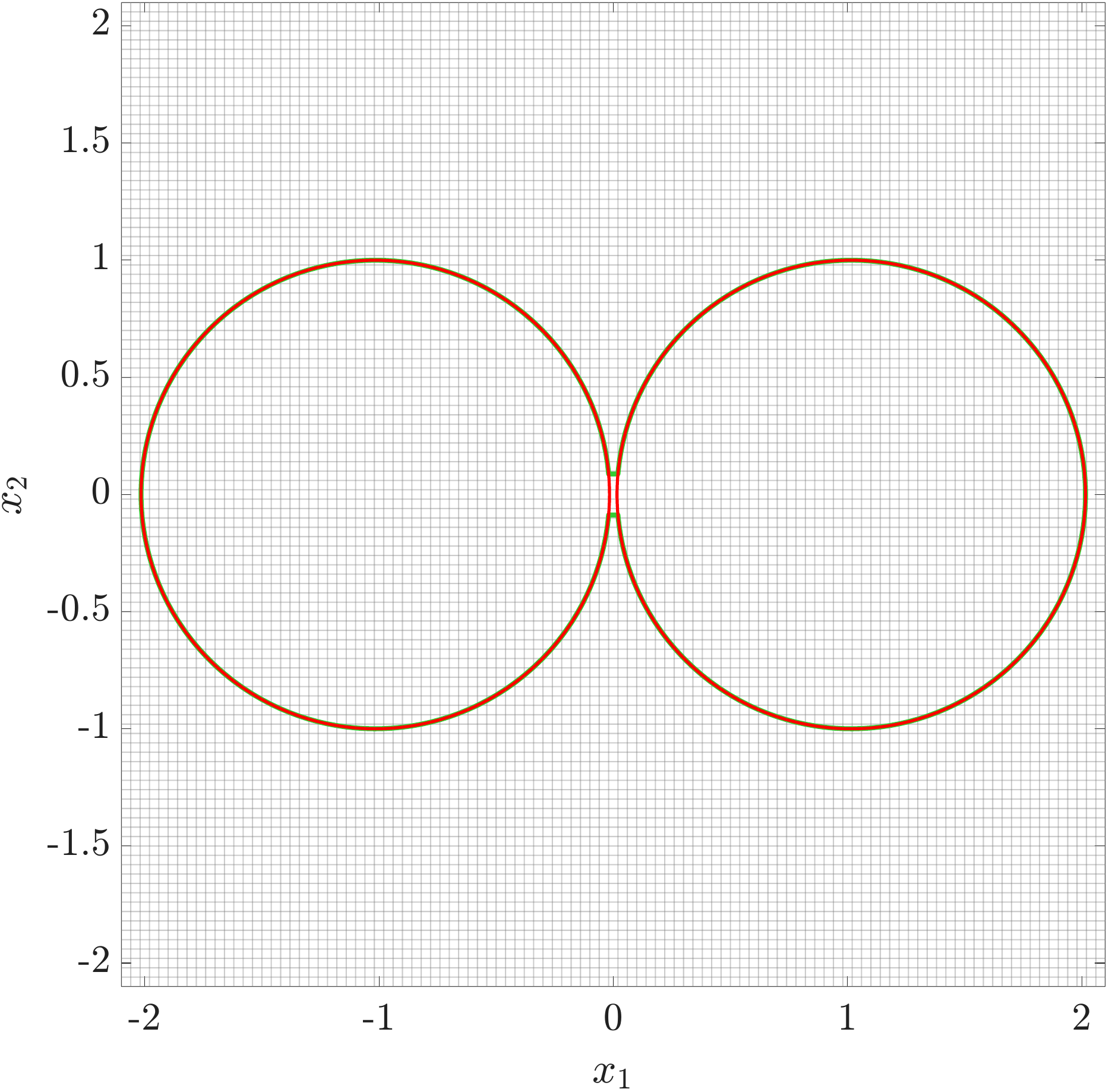}
  \caption{Interfaces $\Gammalgr_\alpha$ and $\Gammaeul_\beta$}
  \label{fig:defect-interface}
\end{subfigure}
\hspace{1em}
\begin{subfigure}[t]{0.275\linewidth}
  \centering
  \includegraphics[width=0.85\linewidth]{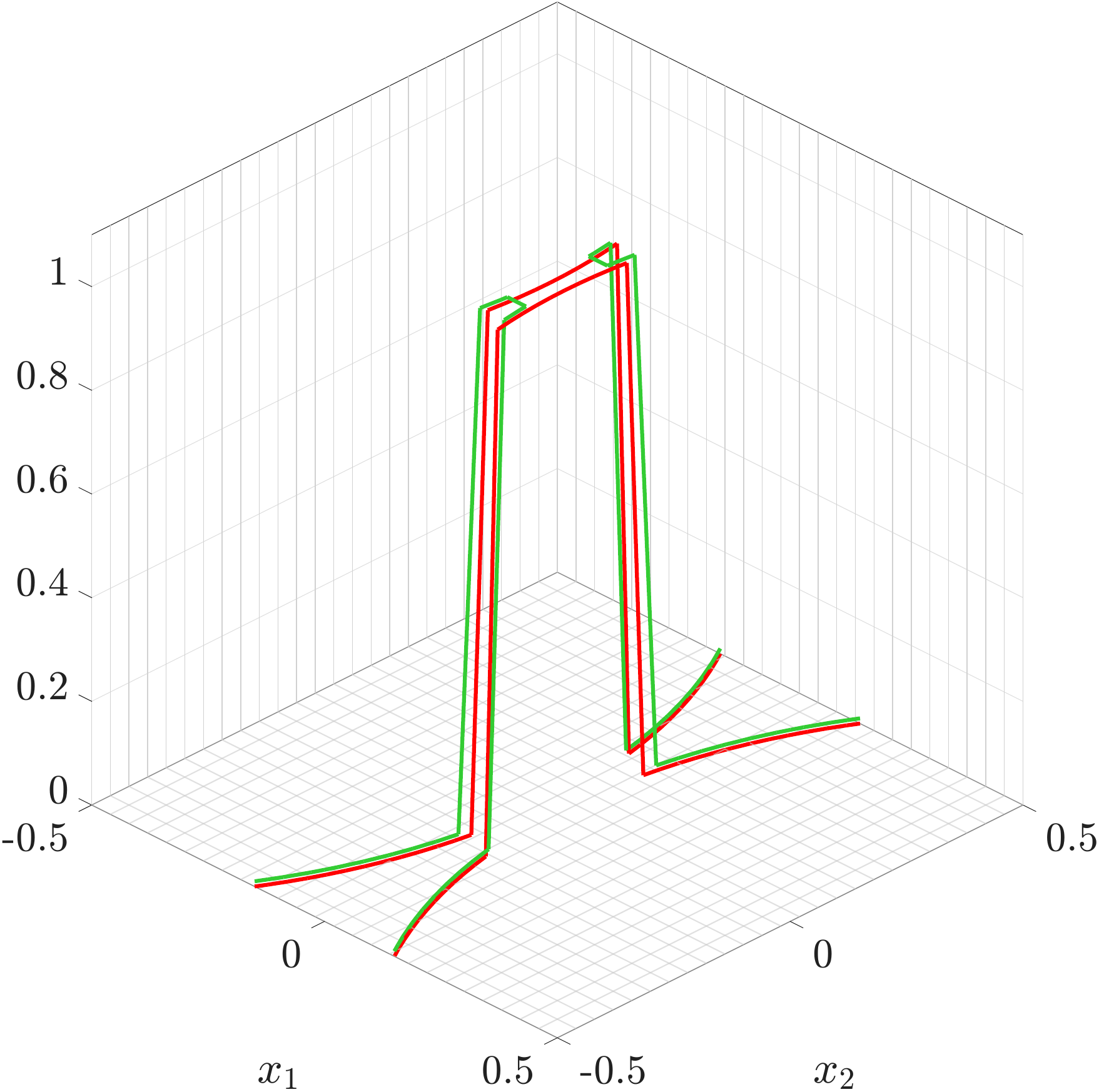}
  \caption{Defect densities $(\mu,\eta)$}
  \label{fig:defect-density}
\end{subfigure}
\hspace{1em}
\begin{subfigure}[t]{0.275\linewidth}
  \centering
  \includegraphics[width=0.935\linewidth]{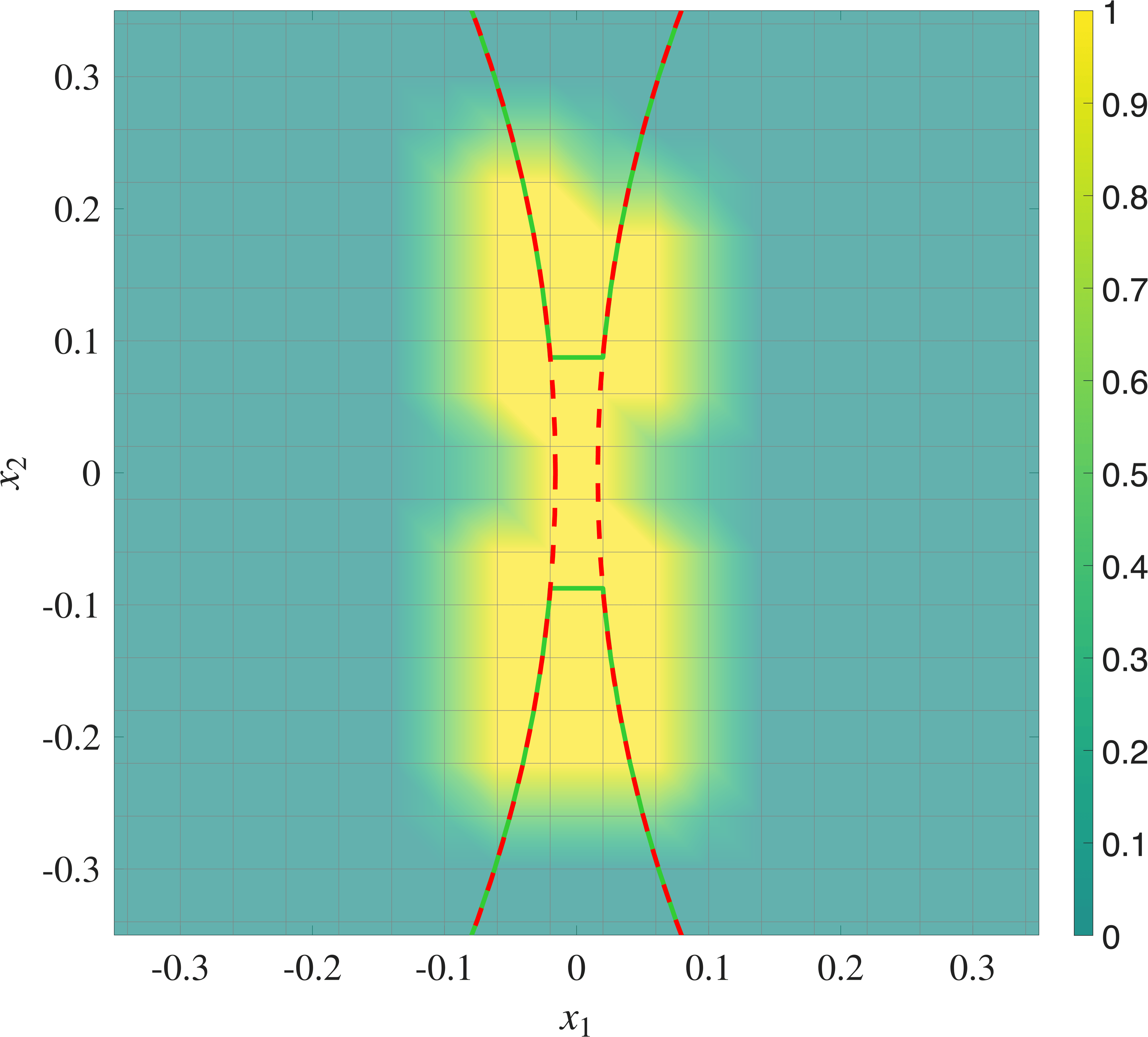}
  \caption{Regularized defect measure $\Defect$}
  \label{fig:defect-measure}
\end{subfigure}
\caption{
Construction of the topological defect densities, defect
measures, and surgical region for a representative \emph{merge} event.
\textbf{Left:}
Pre-processed Lagrangian interfaces (red) and extracted Eulerian interfaces (green).
\textbf{Center:}
Interface-supported defect densities $(\mu,\eta)$ defined by the
one-sided defect indicators
\eqref{eq:mu1-defect}--\eqref{eq:mu2-defect}
and the symmetrized projection construction
\eqref{defect-densities}, shown in a zoomed neighborhood of the \emph{merge}.
\textbf{Right:}
Regularized defect measure $\Defect$
defined by the mollification
\eqref{defect-measure-regularized}, shown in a zoomed neighborhood of
the \emph{merge}. The localized defect ``hotspot'' identifies the
surgical region $\Surg$.
}
\label{fig:defect}
\end{figure}

\subsection{Stitching surgical pieces into reconstructed interfaces}
\label{subsec:localized_surgery}

In this subsection, we describe how the surgical region $\Surg$
computed in the previous subsection is used to decompose the
interfaces into localized \emph{surgical pieces}, which are
subsequently stitched together using a graph-theoretic reconstruction
algorithm to form the post-transition interface family.
Surgical treatments of topological change have appeared in a variety
of front-tracking methods \cite{DuFiFlJiLiLiWu2006,WoThGrTu2009,BoLiGlLi2011,HeKaStEtYaWo2024}.
The distinguishing feature of the present approach is that the
localization is driven by the defect measure $\Defect$, rather than by
grid-intersection tests
\cite{DuFiFlJiLiLiWu2006,WoThGrTu2009,BoLiGlLi2011,HeKaStEtYaWo2024},
yielding an event-local surgical region adapted to the geometry of the
topological inconsistency.

The surgical decomposition treats the two interface families differently.
The retained Lagrangian pieces are the connected components of the
Lagrangian family \emph{outside} the surgical region, while the retained 
Eulerian pieces are the connected components of the Eulerian family 
\emph{inside} the surgical region:
\begin{subequations}\label{eq:surgical-pieces}
\begin{alignat}{5}
\Plgr
&\coloneqq
\left\{
\plgr_i
\right\}_{i=1}^{\Npiece}
&&\coloneqq
\mathrm{Conn}
\left(
\Gammalgr \cap \Surg^c
\right),
\qquad
&
\Npiece
&\coloneqq
\#
\mathrm{Conn}
\left(
\Gammalgr \cap \Surg^c
\right),
\\
\Peul
&\coloneqq
\left\{
\peul_j
\right\}_{j=1}^{\Mpiece}
&&\coloneqq
\mathrm{Conn}
\left(
\Gammaeul \cap \Surg
\right),
\qquad
&
\Mpiece
&\coloneqq
\#
\mathrm{Conn}
\left(
\Gammaeul \cap \Surg
\right).
\end{alignat}
\end{subequations}

\subsubsection{Graph-theoretic formulation of surgical reconnection}
\label{subsubsec:endpoint_graph}

Each surgical piece is an open curve segment with two endpoints, so that the
reconstruction problem may be reformulated entirely in terms of endpoint
connectivity.
Here and below, if $p$ is an open curve segment, then
$\partial p$ denotes the set consisting of its two endpoints.
We denote the collections of Lagrangian and Eulerian endpoints by
\begin{subequations}\label{eq:surgical-endpoints}
\begin{alignat}{6}
\Vpiece
&\coloneqq
\left\{
\vpiece_i
\right\}_{i=1}^{2\Npiece},
\qquad
&
\vpiece_{2i-1},
\vpiece_{2i}
&\in
\partial\plgr_i,
\qquad
&
i
&=1,\ldots,\Npiece,
\\
\Wpiece
&\coloneqq
\left\{
\wpiece_j
\right\}_{j=1}^{2\Mpiece},
\qquad
&
\wpiece_{2j-1},
\wpiece_{2j}
&\in
\partial\peul_j,
\qquad
&
j
&=1,\ldots,\Mpiece.
\end{alignat}
The complete endpoint set is then given by
\begin{equation}\label{eq:all-endpoints}
\widehat{V}
\coloneqq
\Vpiece
\cup
\Wpiece .
\end{equation}
\end{subequations}

The connectivity already present within the surgical pieces is encoded
by the \emph{intra-piece edge set}
\begin{subequations}\label{edge-pieces-surgery}
\begin{equation}\label{eq:intra-piece-edges}
E_{\mathrm{intra}}
\coloneqq
\left\{
(\vpiece_{2i-1},\vpiece_{2i})
\right\}_{i=1}^{\Npiece}
\cup
\left\{
(\wpiece_{2j-1},\wpiece_{2j})
\right\}_{j=1}^{\Mpiece}.
\end{equation}
Each edge in $E_{\mathrm{intra}}$ connects the two endpoints of a
single surgical piece. 
The missing connectivity is supplied by an \emph{inter-piece edge set}
\begin{equation}
E_{\mathrm{inter}}
\subset
\Vpiece\times\Wpiece,
\end{equation}
\end{subequations}
which connect Lagrangian endpoints to Eulerian endpoints.
Thus, the surgical reconstruction problem reduces to determining the
edge set $E_{\mathrm{inter}}$.
The construction of $E_{\mathrm{inter}}$ is governed by a collection of
local geometric admissibility criteria, which are described in the next
subsection.
These criteria are designed so that every endpoint in $\widehat V$ is
incident to exactly one inter-piece edge.
Consequently, $E_{\mathrm{inter}}$ defines a perfect matching between
the endpoint sets $\Vpiece$ and $\Wpiece$.

The complete \emph{interface surgery graph} is then defined by
\begin{equation}\label{eq:final-surgery-graph}
\widehat G
\coloneqq
(\widehat V,\widehat E),
\qquad
\widehat E
\coloneqq
E_{\mathrm{intra}}
\cup
E_{\mathrm{inter}}.
\end{equation}
By construction, every vertex of $\widehat G$ is incident to exactly
one intra-piece edge and exactly one inter-piece edge, 
so that $\widehat G$ is a $2$-regular graph.
It follows that $\widehat G$ admits the decomposition
\begin{equation}\label{eq:cycle-decomposition}
\widehat G
=
C_1
\sqcup
\cdots
\sqcup
C_K,
\end{equation}
where each $C_k$ is a simple cycle.
The cycles $\{C_k\}_{k=1}^{K}$ define a collection of closed
reconstructed interfaces, yielding the post-transition interface family 
\eqref{gamma-post}.

The graph-theoretic formulation above naturally leads to a global
combinatorial reconstruction problem, in which one seeks a perfect
matching satisfying additional global constraints, such as a prescribed
number of reconstructed interfaces.
Rather than solving this global matching problem directly, the present
implementation of the \mts\ algorithm constructs the inter-piece edge set
$E_{\mathrm{inter}}$ using a greedy cycle-tracing procedure based
on local geometric admissibility criteria.

\subsubsection{Greedy construction of inter-piece connectivity}
\label{subsubsec:greedy_matching}

Starting from an Eulerian surgical piece, admissible Lagrangian and
Eulerian pieces are appended alternately to a partially reconstructed
chain.
Each selected continuation determines an inter-piece edge in
$E_{\mathrm{inter}}$, and the tracing procedure continues until the
chain closes to form a cycle.
The resulting cycle defines a reconstructed interface, after which the
procedure is repeated until all surgical pieces have been incorporated
into reconstructed interfaces.

The reconstruction may be viewed as an inductive continuation process.
Given a partially reconstructed chain and its current terminal piece
$p$, the next step consists of selecting a surgical piece $q$ that
provides a geometrically compatible continuation of the chain.
For an oriented surgical piece $p$, we denote its exit endpoint and
outgoing unit tangent vector by $x_{\mathrm{ext}}(p)$ and $\tau_{\mathrm{ext}}(p)$, 
respectively.
Similarly, for a candidate continuation $q$, we denote its entry
endpoint by $x_{\mathrm{ent}}(q)$.
The endpoint separation between $p$ and $q$ is 
\begin{subequations}\label{eq:greedy-continuation}
\begin{equation}
r(p,q)
\coloneqq
\left|
x_{\mathrm{ent}}(q)
-
x_{\mathrm{ext}}(p)
\right|,
\end{equation}
while the directional alignment is measured by
\begin{equation}
a(p,q)
\coloneqq
\frac{
\left(
x_{\mathrm{ent}}(q)
-
x_{\mathrm{ext}}(p)
\right)
\cdot
\tau_{\mathrm{ext}}(p)
}
{
\left|
x_{\mathrm{ent}}(q)
-
x_{\mathrm{ext}}(p)
\right|
}.
\end{equation}
The next piece in the reconstruction is chosen by minimizing the
\emph{continuation functional}
\begin{equation}
q^*
=
\argmin_q
\left\{
\mathcal C(p,q)
:
r(p,q)\le h
\right\},
\qquad
\mathcal C(p,q)
\coloneqq
r(p,q)^2
+
\lambda
\bigl(
1-a(p,q)
\bigr),
\qquad
\lambda
\coloneqq
(4h)^2.
\end{equation}
The locality constraint $r(p,q)\le h$ excludes nonlocal candidates,
and the continuation functional balances endpoint proximity and
tangent continuity across the proposed connection.
\end{subequations}

Each application of \eqref{eq:greedy-continuation} determines a new
inter-piece edge in $E_{\mathrm{inter}}$ and extends the partially
reconstructed chain.
The continuation process is repeated until the evolving chain returns
to its initial surgical piece, thereby forming a closed cycle.
Once a cycle has been completed, its constituent surgical pieces are
removed from further consideration and the tracing procedure is
restarted from another unused Eulerian piece.
Repeating this process produces a collection of reconstructed cycles,
which define the post-transition interface family
\eqref{gamma-post}. The complete cycle-tracing procedure 
used in the numerical implementation is
summarized in \Cref{alg:greedy-event-surgery} in \Cref{appendix:aux-algs}.

\subsubsection{Smoothing of surgically reconstructed interfaces}
\label{subsubsec:untangling}

Because the extracted Eulerian interface segments are obtained through a
grid-based contour extraction procedure, the surgically reconstructed
interfaces generally inherit locally faceted geometric artifacts.
To improve the geometric quality of the reconstructed interfaces, we
apply a smoothing procedure to all extracted interface segments.

For a closed extracted loop
$z=\{z_i\}_{i=1}^{N}$,
we introduce the smoothing functional
\begin{subequations}
\begin{equation}
\mathcal J_{\mathrm{sm}}(\widehat z;z)
=
\wfit
\sum_{i=1}^{N}
|\widehat z_i-z_i|^2
+
\wsm
\sum_{i=1}^{N}
|\widehat z_{i-1}-2\widehat z_i+\widehat z_{i+1}|^2.
\end{equation}
The first term penalizes deviation from the extracted geometry, while
the second penalizes oscillations through a discrete second-difference
regularization.
The smoothed loop
$\widetilde z=\{\widetilde z_i\}_{i=1}^{N}$
is defined by
\begin{equation}
\widetilde z
=
\argmin_{\widehat z}
\mathcal J_{\mathrm{sm}}(\widehat z;z).
\label{eq:closed-loop-smoothing}
\end{equation}
The minimizer is obtained by solving the corresponding periodic linear
system for the two coordinate components of $\widetilde z$.
To compensate for the small area loss introduced by the smoothing
operation, the resulting curve is subsequently corrected by a small
normal displacement chosen to recover the original enclosed area.

For extracted subchains appearing within reconstructed interfaces, we
use the same smoothing functional together with an additional
tangent-matching penalty.
Let
$z_{i_1},\ldots,z_{i_2}$
denote an extracted subchain of a reconstructed interface,
with neighboring retained Lagrangian points
$z_{i_1-1}$ and $z_{i_2+1}$ held fixed.
We define
\begin{equation}
\mathcal J_{\mathrm{tan}}(\widehat z)
=
\wtan
\Bigl(
|\widehat z_{i_1}-(z_{i_1-1}+\tau_L)|^2
+
|\widehat z_{i_2}-(z_{i_2+1}-\tau_R)|^2
\Bigr),
\end{equation}
where $\tau_L$ and $\tau_R$ denote the incoming and outgoing tangent
vectors supplied by the neighboring retained Lagrangian segments.
The extracted subchain is then replaced by the minimizer
\begin{equation}
\widetilde z
=
\argmin_{\widehat z}
\Bigl\{
\mathcal J_{\mathrm{sm}}(\widehat z;z)
+
\mathcal J_{\mathrm{tan}}(\widehat z)
\Bigr\}.
\label{smooth-inter-piece}
\end{equation}
\end{subequations}
The additional term weakly enforces tangent continuity across the
reconstructed connection.
The minimizer is again obtained by solving the corresponding linear
system for the extracted subchain.

In all numerical experiments, we use
$\wfit=10^{-5}$,
$\wsm=h^2$,
and
$\wtan=1$.
The complete smoothing procedure is summarized in
\Cref{alg:smooth-interface} in \Cref{appendix:aux-algs}.

\subsection{Diagnostic continuation through topological transitions}

The reconstruction procedure described above determines the
post-transition interface family \eqref{gamma-post}, which is
subsequently evolved by the Lagrangian solver. Associated with the
evolving interface are the tangent FTLE diagnostic and the Lagrangian
observation windows introduced in
\Cref{subsec:tangent-ftle,subsec:observation-windows}, respectively.
Because interface surgery replaces portions of the pre-transition
Lagrangian interface by surgically reconstructed geometry, additional
bookkeeping is required to propagate these diagnostics through
topological transitions.

\subsubsection{Closest-point projection of tangent field data}

The tangent FTLE diagnostic \eqref{ftle} depends on both the tangent
field $\xi_\alpha(s,t)$ and the reference quantity
$\xi_\alpha(s,0)$, which are propagated in the numerical method as
auxiliary Lagrangian state variables. At a topological transition,
interface surgery creates a new post-transition interface family, so
this state must also be transferred to the reconstructed interfaces.
This is analogous to the adaptive refinement and coarsening procedure,
where tangent field data are transferred by interpolation. The key
difference is that, after surgery, interpolation along a single
interface is no longer available. Instead, the transfer is performed
through a closest-point projection onto the pre-transition
Lagrangian interface family.

For each reconstructed interface point $\gamma_\alpha(s,T_m^+)$, we define
\begin{equation}\label{eq:ftle-transfer-projection}
\Pi_\alpha^*(s)
\coloneqq
\argmin_{z\in
\bigcup \Gamma_\beta(T_m^-)}
\left|
\gamma_\alpha(s,T_m^+) - z
\right|.
\end{equation}
Suppose that $\Pi_\alpha^*(s_i)$ lies on the pre-transition
segment joining $\gamma_\beta(s_j,T_m^-)$ and
$\gamma_\beta(s_{j+1},T_m^-)$, so that
\begin{equation}\label{eq:ftle-transfer-segment}
\Pi_\alpha^*(s_i)
=
(1-\theta)\gamma_\beta(s_j,T_m^-)
+
\theta\gamma_\beta(s_{j+1},T_m^-),
\qquad
0\leq \theta \leq 1.
\end{equation}
We then transfer the auxiliary tangent state by linear interpolation:
\begin{subequations}\label{eq:ftle-transfer-state}
\begin{align}
\xi_\alpha(s_i,T_m^+)
&=
(1-\theta) \, \xi_\beta(s_j,T_m^-)
+
\theta \, \xi_\beta(s_{j+1},T_m^-),
\\
|\xi_\alpha(s_i,0)|
&=
(1-\theta) \, |\xi_\beta(s_j,0)|
+
\theta \, |\xi_\beta(s_{j+1},0)|.
\end{align}
\end{subequations}
The post-transition tangent stretch factor and tangent FTLE are then
evaluated from \eqref{ftle}.

\subsubsection{Defect measure correction of observation windows}

The observation windows introduced in
\Cref{subsec:observation-windows}
are attached to Lagrangian marker trajectories and therefore remain
focused on the same material region during a purely Lagrangian
evolution. At a topological transition, however, the associated
filamentary structures may undergo break-up with mass loss, causing the 
window centers to drift away from the regions of
interest. We therefore correct the observation windows using the
regularized defect measure $\mathcal D_h$, recentering each window
according to the local defect distribution and enlarging it when
necessary to fully contain the defect region.

Let $\mathcal W_j(T_m^-)$
denote an observation window immediately before a topological transition. 
If the defect measure vanishes
within $\mathcal W_j(T_m^-)$, the original center and half-width are 
retained: $c_j(T_m^+) = c_j(T_m^-)$ and $w_j(T_m^+) = w_j(T_m^-)$.
Otherwise, starting from the pre-transition window center
$c_j(T_m^-)$, we iteratively update
\begin{equation}\label{eq:window-center-correction}
c_j^{(r+1)}
=
\frac{
\int_{\mathcal W_j^{(r)}}
x\,\mathcal D_h(x)\,dx
}{
\int_{\mathcal W_j^{(r)}}
\mathcal D_h(x)\,dx
},
\end{equation}
where $\mathcal W_j^{(r)}$ denotes the observation window centered at
$c_j^{(r)}$ and $c_j^{(0)} = c_j(T_m^-)$.
The iteration terminates once the change in the window center falls
below a prescribed tolerance, and the final iterate is taken as
$c_j(T_m^+)$.
After recentering, the window half-width $w_j(T_m^+)$ is increased, if
necessary, until the support of $\mathcal D_h$ inside the
corrected window no longer intersects a thin boundary strip of
$\mathcal W_j(T_m^+)$.

\section{Numerical examples}
\label{sec:examples}

\subsection{Code implementation and computational platform}

The \mts\ algorithm is implemented in modern Fortran using
double-precision arithmetic and OpenMP parallelization, while MATLAB is used
for post-processing and visualization.  
All numerical experiments were performed on a MacBook Pro
equipped with an M1 Pro processor (10 CPU cores) and 32\,GB of
memory, using 10 OpenMP threads for all reported timings.
All simulations use the default parameter
values listed in \Cref{tab:notation}.

The code is organized around three principal data structures:
the Lagrangian interface family, the sparse Eulerian level-set
representation, and the topology-processing machinery.
The Lagrangian interfaces are stored as closed parametric curves,
while the Eulerian representation is stored using sparse locally
refined level-set blocks restricted to a narrow neighborhood of the
interface. Topological transitions are managed through an event-local
topology structure that mirrors the mathematical framework developed in
\Cref{sec:adjacency,sec:surgery}. At each \mts\ time $T_m$, this
structure stores the pre-processed Lagrangian interface family
$\{\Gammalgr_\alpha\}_{\alpha=1}^{\Nlgr}$, the extracted Eulerian
interface family $\{\Gammaeul_\beta\}_{\beta=1}^{\Neul}$, the canonical
adjacency matrix $\Adj$, and the event decomposition
$\{\Event_1,\ldots,\Event_{\Nevent}\}$. Each event stores the
associated defect measures, surgical piece families $\Plgr$ and
$\Peul$, and the reconstructed interfaces, allowing topological
transitions to be processed independently and locally.

\subsection{Satellite-interface dynamics in the rotating-vortex benchmark}

We first apply the \mts\ algorithm to the rotating-vortex benchmark
introduced in \Cref{subsec:rotating-vortex}. The velocity field $u$,
initial interface $\gamma_\alpha(s,0)$, final time $\tmax$, time-step $\Delta t$, and 
coarse computational scale $h_0$ are the same as in
\eqref{eq:rotating-vortex-velocity}--\eqref{eq:rotating-vortex-params}. 
The fine-scale resolution is given by the dyadic refinement sequence
$h = 2^{-p}h_0$, for $p=1,\ldots,5$. 
Topological processing is performed throughout the filament-formation
phase at the \mts\ times
\begin{equation}
T_m = 0.2\,m,
\qquad
m=1,\ldots,20, 
\end{equation}
and terminating at the time of maximal deformation $t=4$.

The \mts\ solutions computed with the coarsest resolution
$h=\num{2.e-2}$ are displayed in \Cref{fig:rotating-vortex-h=0.02}
at a representative collection of \mts\ times for which topological
transitions occur. In each subfigure, the left panel displays the
topologically processed interface family (blue), together with a
reference solution (black) computed using the classical Lagrangian
tracking algorithm with the finest resolution
$h=\num{1.25e-3}$. Also shown is a heatmap of the regularized defect
measure $\Defect(x,t)$ defined by
\eqref{defect-measure-regularized}. The bright localized hotspots
correspond to topological defects and determine the associated surgical
region $\Surg$ through \eqref{eq:surgical-region}. The right panel
displays the adjacency graph associated with the corresponding
topological event. Red vertices represent the pre-processed
Lagrangian interfaces, green vertices represent the extracted Eulerian
interfaces, and edges indicate the adjacency relations identified by
the topology extraction procedure.

\begin{figure}[ht]
\centering
\begin{subfigure}[t]{0.32\linewidth}
  \centering
  \includegraphics[width=\linewidth]{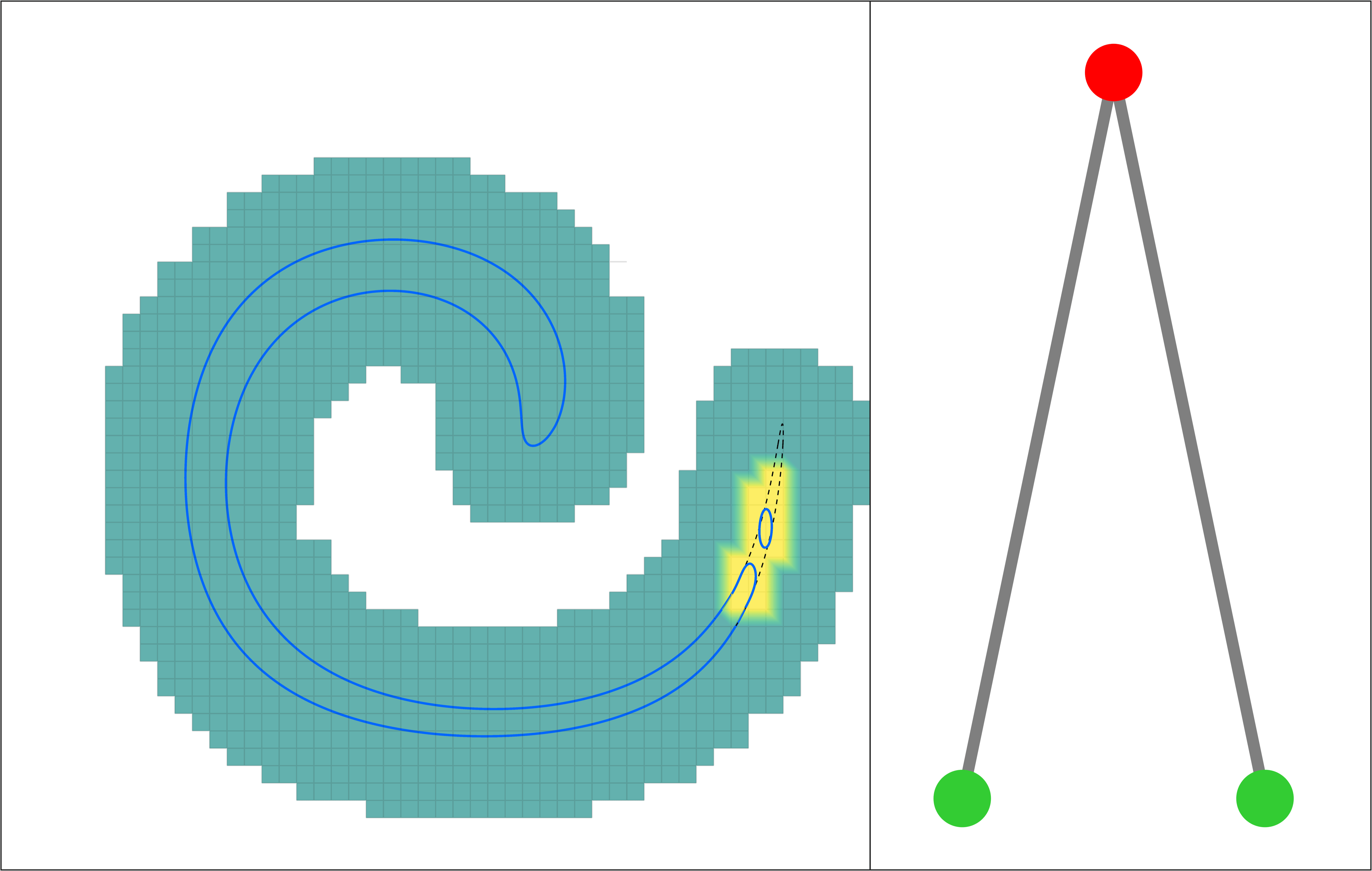}
  \caption{$t=1.4$}
  \label{fig:rotating-vortex-h=0.02_t=1.4}
\end{subfigure}
\begin{subfigure}[t]{0.32\linewidth}
  \centering
  \includegraphics[width=\linewidth]{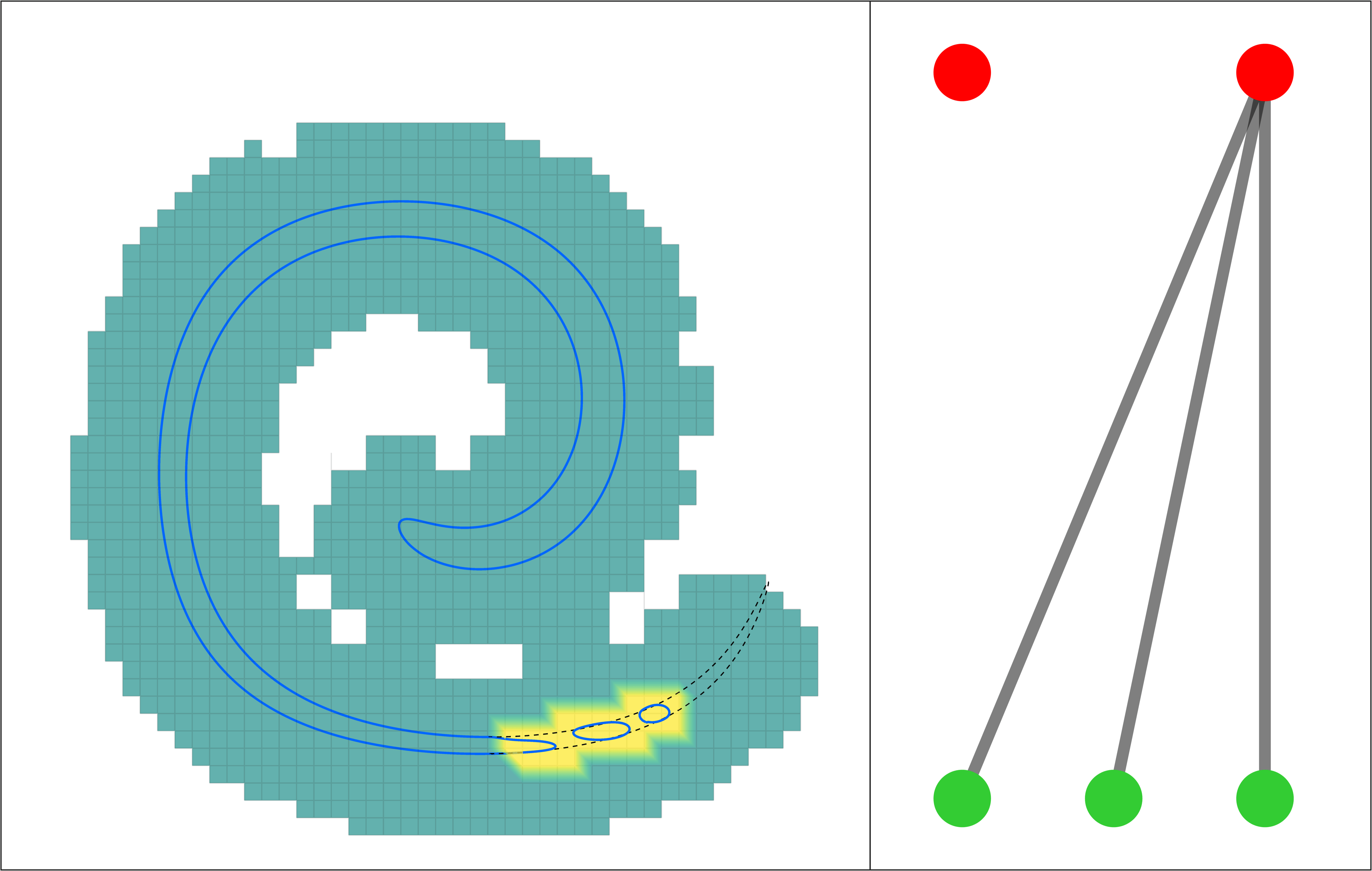}
  \caption{$t=1.8$}
  \label{fig:rotating-vortex-h=0.02_t=1.8}
\end{subfigure}
\begin{subfigure}[t]{0.32\linewidth}
  \centering
  \includegraphics[width=\linewidth]{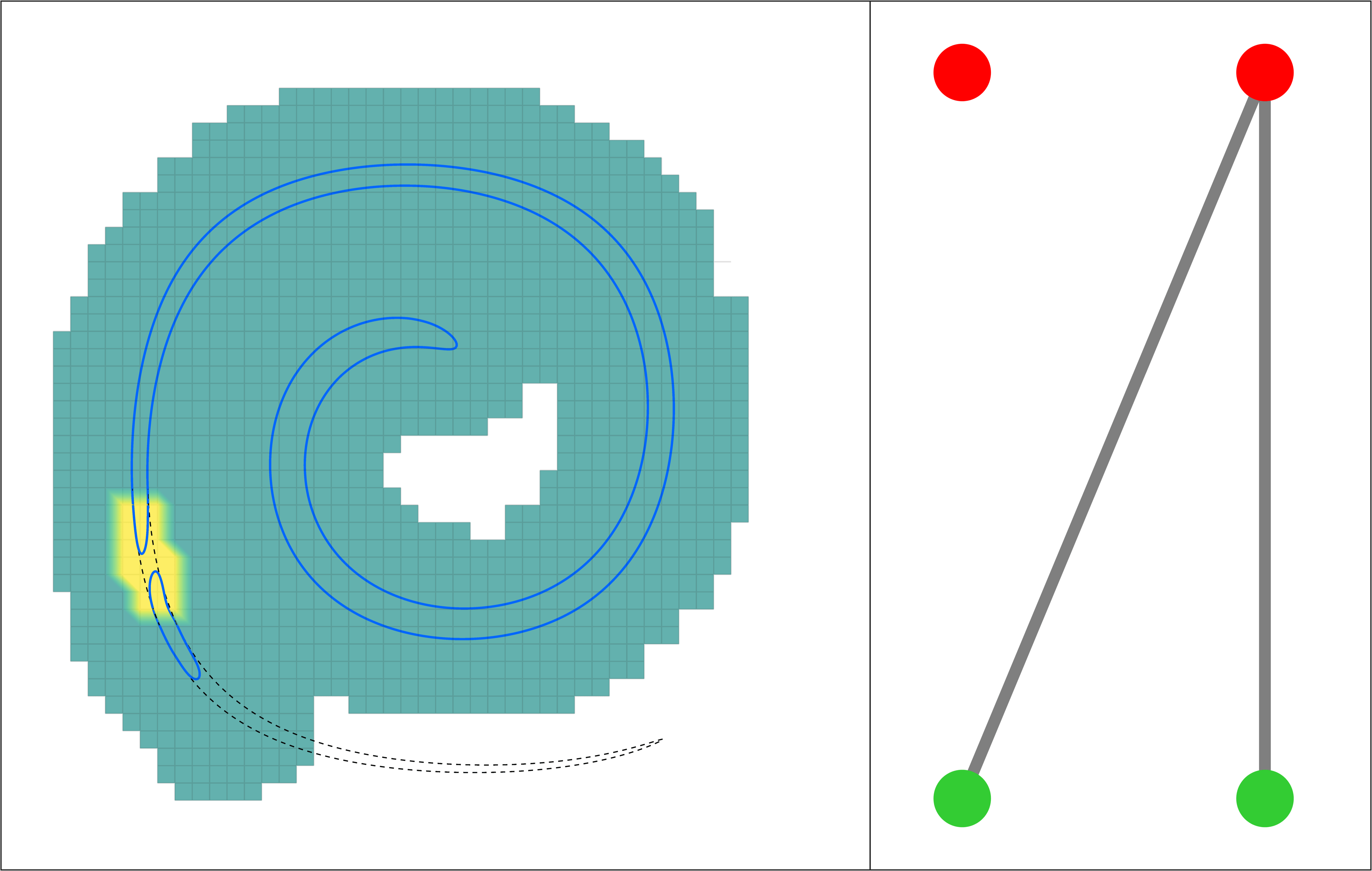}
  \caption{$t=2.6$}
  \label{fig:rotating-vortex-h=0.02_t=2.6}
\end{subfigure}

\vspace{1em}

\begin{subfigure}[t]{0.32\linewidth}
  \centering
  \includegraphics[width=\linewidth]{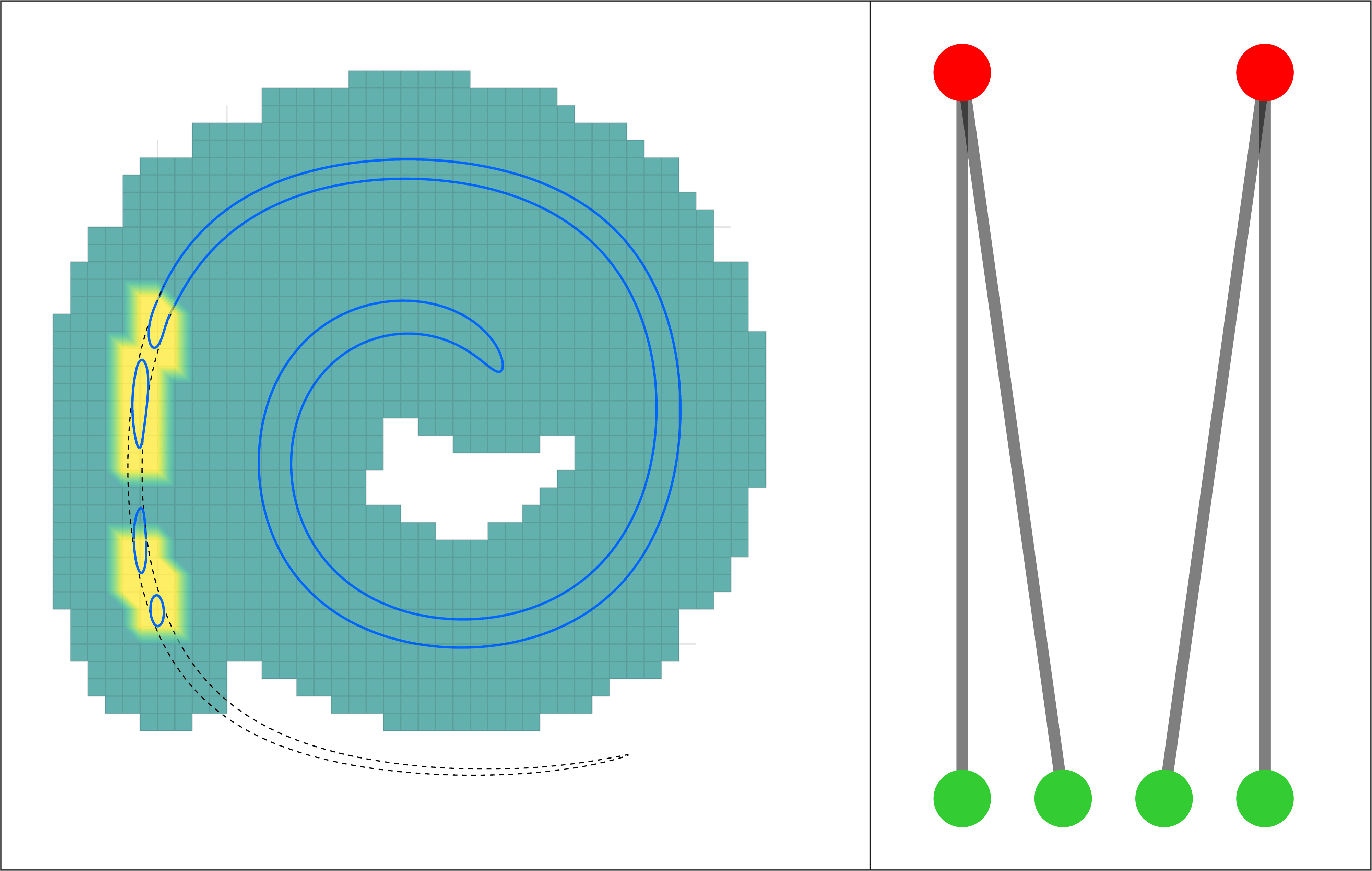}
  \caption{$t=2.8$}
  \label{fig:rotating-vortex-h=0.02_t=2.8}
\end{subfigure}
\begin{subfigure}[t]{0.32\linewidth}
  \centering
  \includegraphics[width=\linewidth]{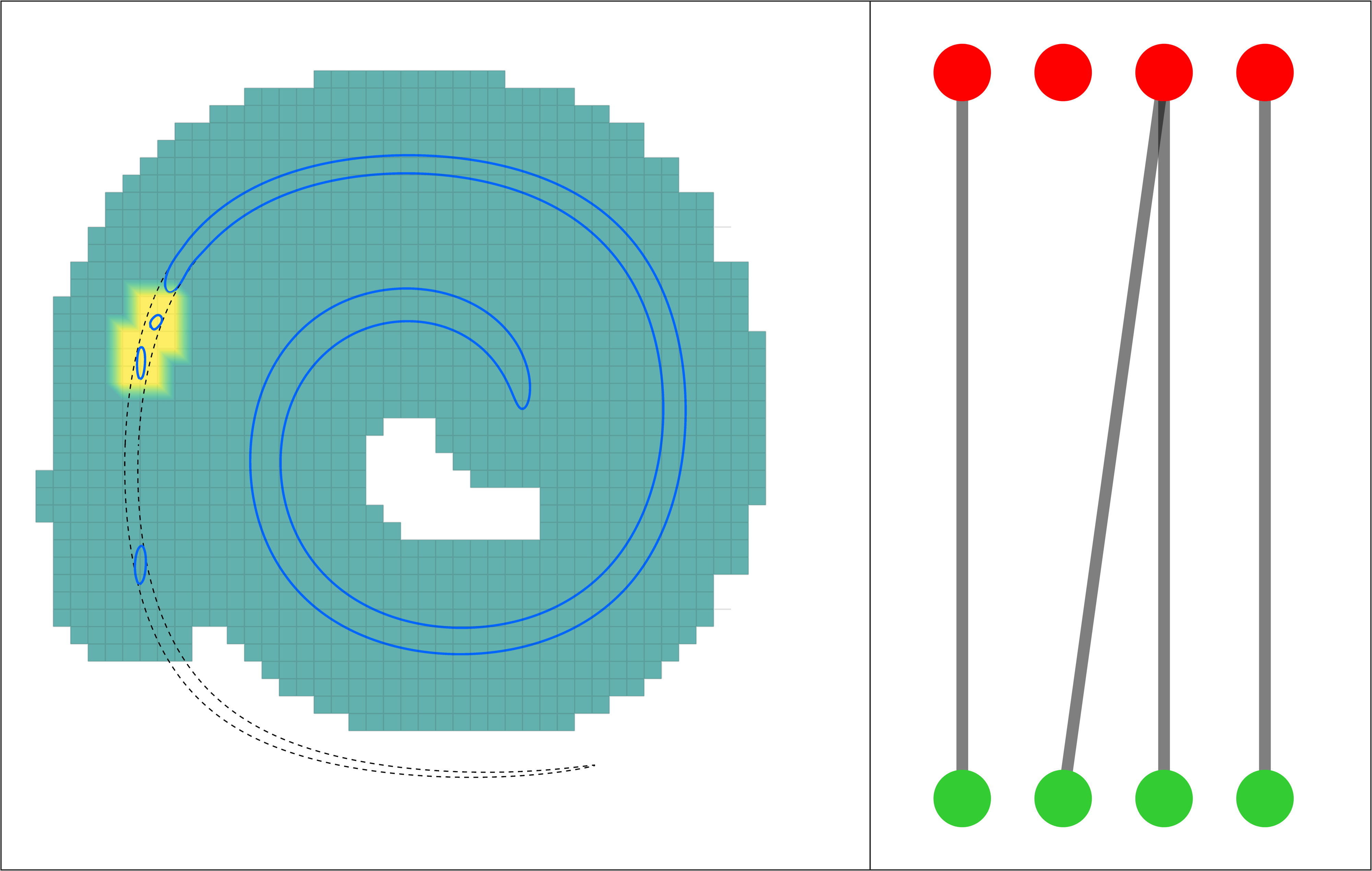}
  \caption{$t=3.0$}
  \label{fig:rotating-vortex-h=0.02_t=3.0}
\end{subfigure}
\begin{subfigure}[t]{0.32\linewidth}
  \centering
  \includegraphics[width=\linewidth]{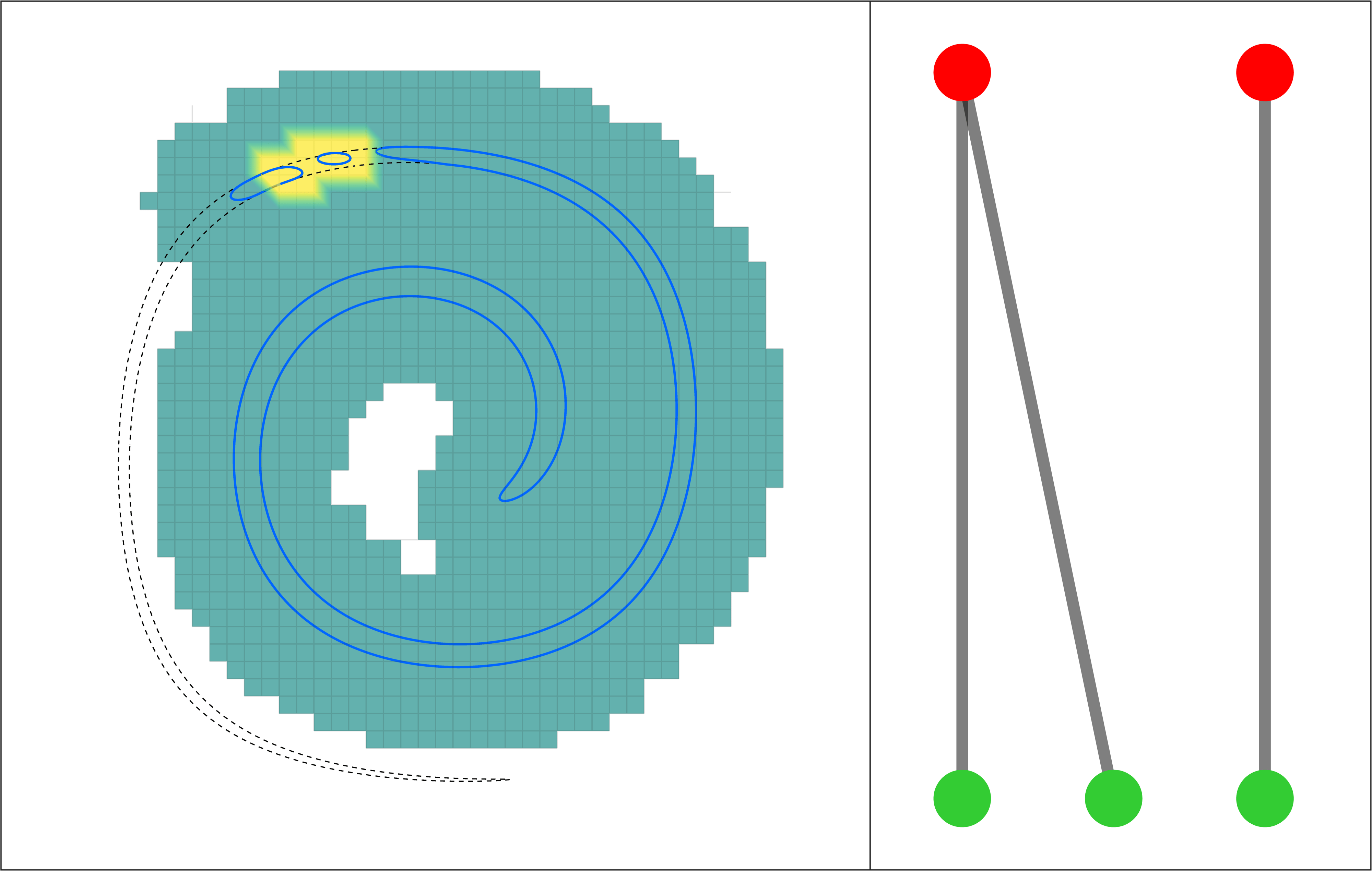}
  \caption{$t=3.8$}
  \label{fig:rotating-vortex-h=0.02_t=3.8}
\end{subfigure}
\caption{
Representative topological transitions for the rotating-vortex
benchmark computed using the \mts\ algorithm with
$h=\num{2.e-2}$.
The left panel in each subfigure displays the \mts\ solution (blue), the 
reference solution computed using classical Lagrangian tracking with
$h=\num{1.25e-3}$ (black), and the regularized defect measure $\Defect(x,t)$. 
The right panel displays the corresponding adjacency graph, with
red vertices denoting pre-processed Lagrangian interfaces, green
vertices denoting extracted Eulerian interfaces, and edges denoting
adjacency relations between the two interface families.
The sequence illustrates repeated \emph{split} and \emph{vanishing}
events generated by the strongly filamenting tip of the interface.
}
\label{fig:rotating-vortex-h=0.02}
\end{figure}

The topological evolution is driven by the strongly filamenting tip
identified in \Cref{subsec:rotating-vortex} by the pronounced trough of
the tangent FTLE near $s=\pi/2$ on the initial interface. As the
filament tip undergoes progressive stretching and compression, it
repeatedly pinches off to form small satellite interfaces, producing a
sequence of one-to-many \emph{split} events. The resulting satellite
interfaces are subsequently stretched and sheared by the flow until
their characteristic length scales fall below the grid scale $h$,
triggering \emph{vanishing} events. 
Qualitatively similar sequences of \emph{split} and \emph{vanishing} 
events have been reported for VOF-based and LCRM methods 
\cite{ShJu2009,ShYoJu2011,ChMaPoZa2022,HePhXi2023,PaLoChScPoZa2024}.
\Cref{fig:rotating-vortex-h=0.02} illustrates this repeated cycle of
satellite-interface formation and disappearance throughout the filament-formation phase. 
Meanwhile, away from the strongly filamenting tip, the \mts\ solution
is indistinguishable from the reference solution, confirming that the topological processing remains
localized to the surgical region and
does not introduce spurious distortions elsewhere along the interface.
This contrasts with VOF-based methods and LCRM, where
the interface geometry may be degraded globally by Eulerian advection
and reconstruction errors \cite{ShJu2009,ShYoJu2011,PaLoChScPoZa2024}.

\subsubsection{Resolution dependence and convergence}
The first panel of \Cref{fig:rotating-vortex-convergence} displays the
\mts\ solution computed with $h = 0.02$ at the final time $\tmax=8$, when the reversible flow
returns the interface to its initial configuration. The dominant error
is the mass loss associated with the strongly filamenting tip. 
As the fine-scale $h$ is decreased
(\Cref{fig:rotating-vortex-h=0.01_t=8.0,fig:rotating-vortex-h=0.005_t=8.0,fig:rotating-vortex-h=0.0025_t=8.0}),
progressively smaller filamentary structures are resolved by the
underlying Lagrangian representation. Consequently, fewer
\emph{split} and \emph{vanishing} events are triggered, leading to a
corresponding reduction in mass loss. In the finest simulation
$h=\num{1.25e-3}$, all dynamically relevant length scales remain
resolved throughout the evolution and no topological events are
detected. Thus, the fine-scale parameter $h$ provides direct control
over the degree of topological processing, with the \mts\ algorithm
reducing to classical Lagrangian tracking in the limit that all
filamentary scales are resolved.

\begin{figure}[ht]
\centering
\begin{subfigure}[t]{0.24\linewidth}
  \centering
  \includegraphics[width=0.95\linewidth]{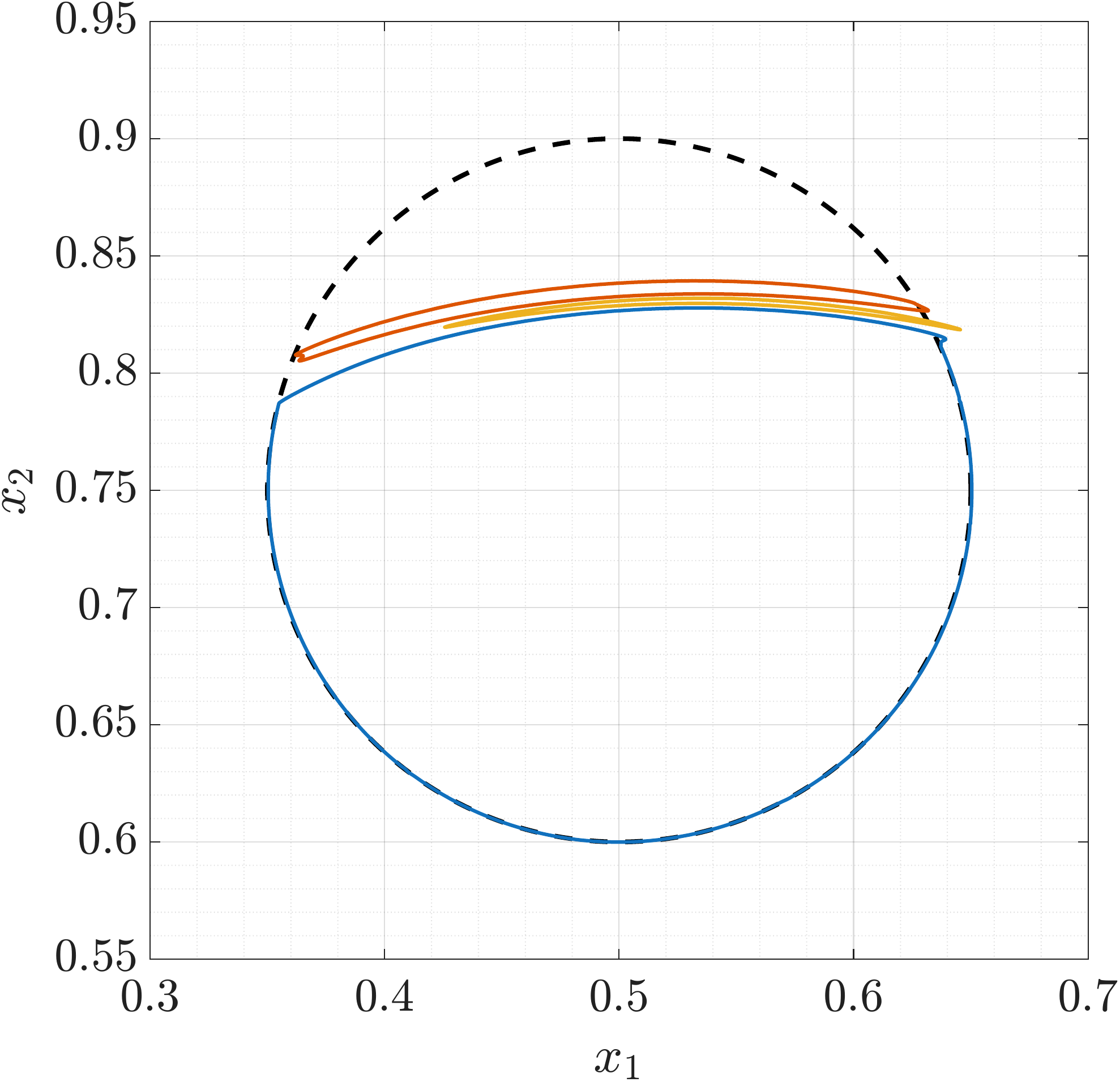}
  \caption{$h = \num{2.e-2}$}
  \label{fig:rotating-vortex-h=0.02_t=8.0}
\end{subfigure}
\begin{subfigure}[t]{0.24\linewidth}
  \centering
  \includegraphics[width=0.95\linewidth]{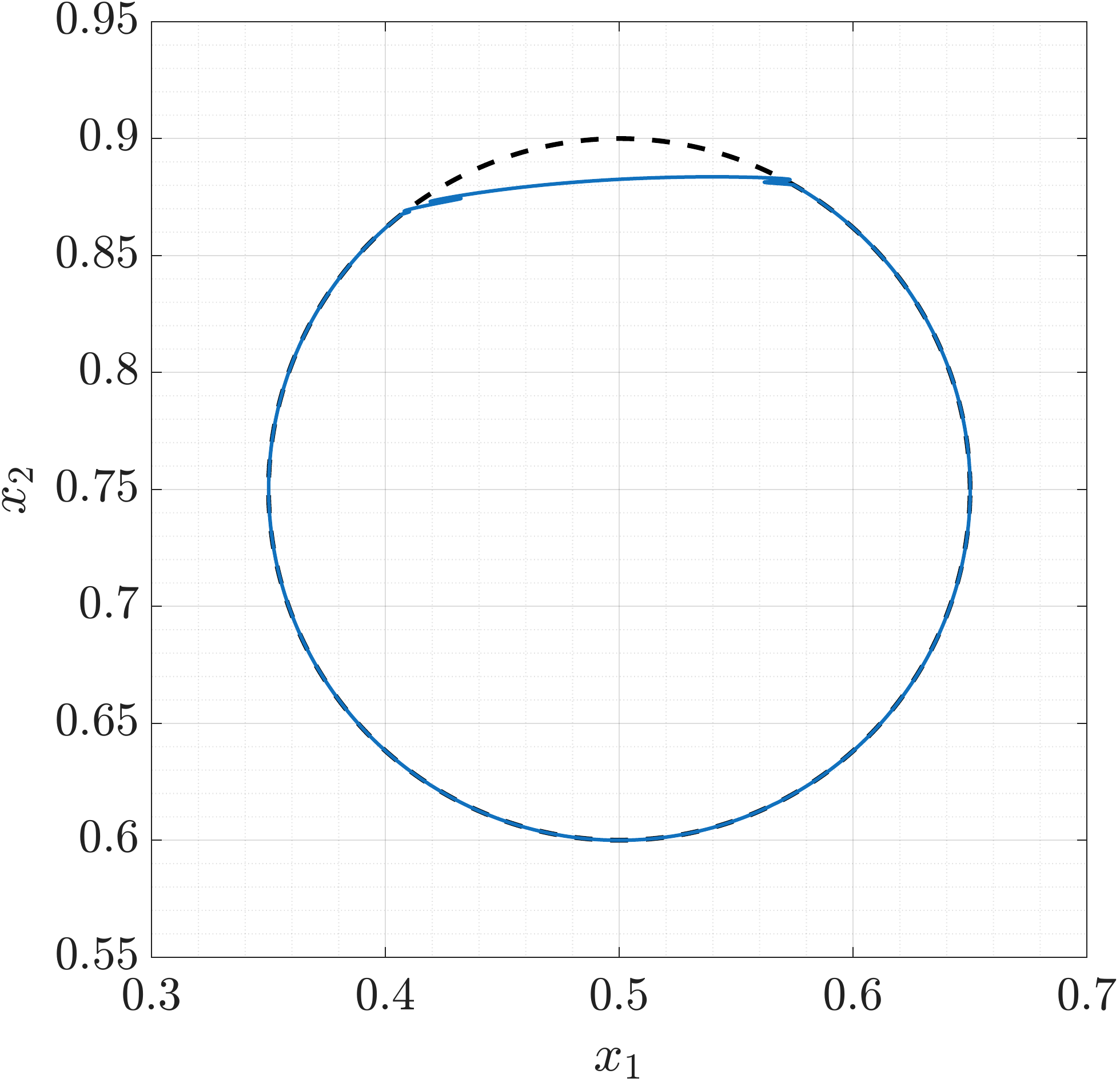}
  \caption{$h = \num{1.e-2}$}
  \label{fig:rotating-vortex-h=0.01_t=8.0}
\end{subfigure}
\begin{subfigure}[t]{0.24\linewidth}
  \centering
  \includegraphics[width=0.95\linewidth]{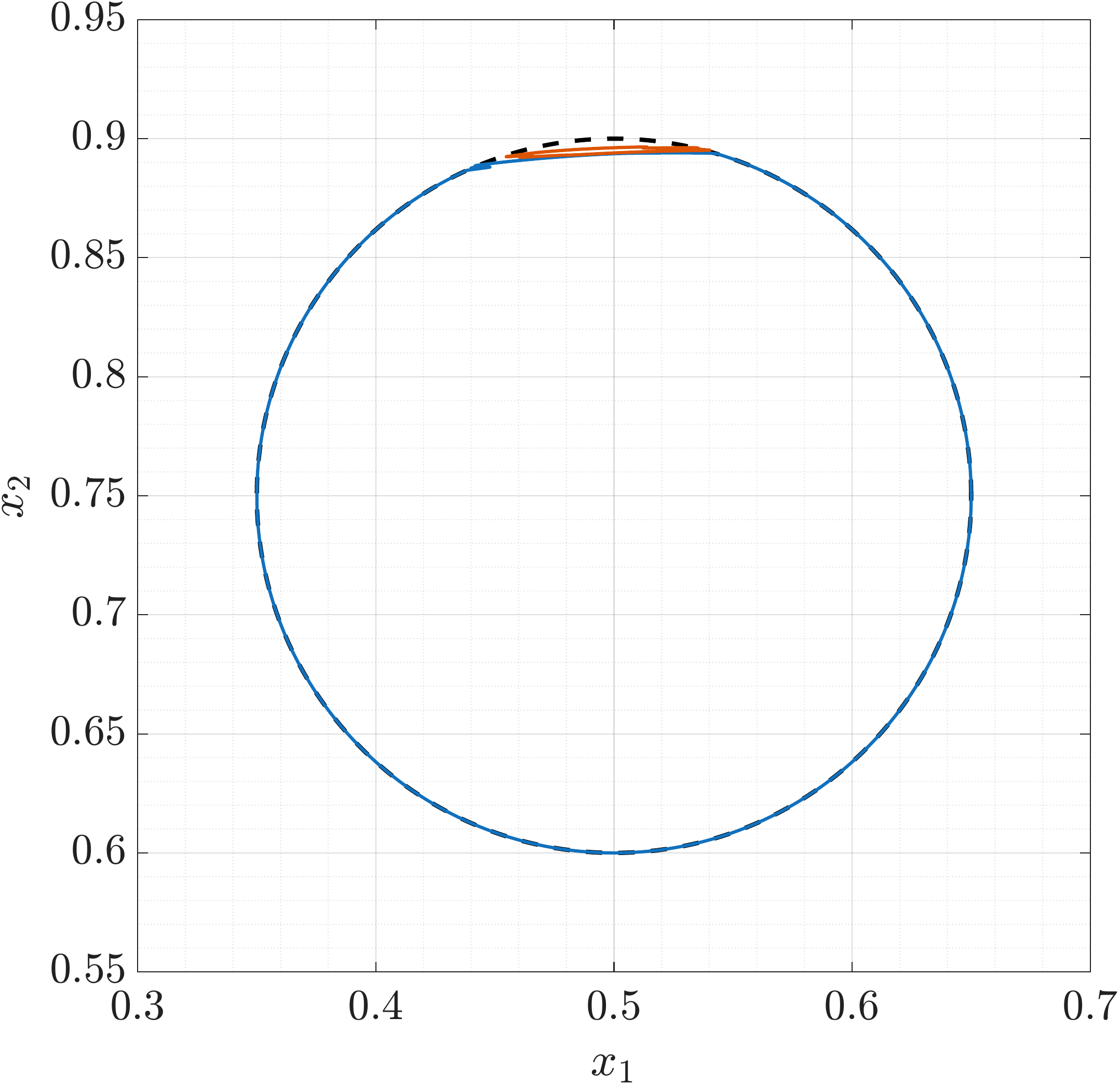}
  \caption{$h = \num{5.e-3}$}
  \label{fig:rotating-vortex-h=0.005_t=8.0}
\end{subfigure}
\begin{subfigure}[t]{0.24\linewidth}
  \centering
  \includegraphics[width=0.95\linewidth]{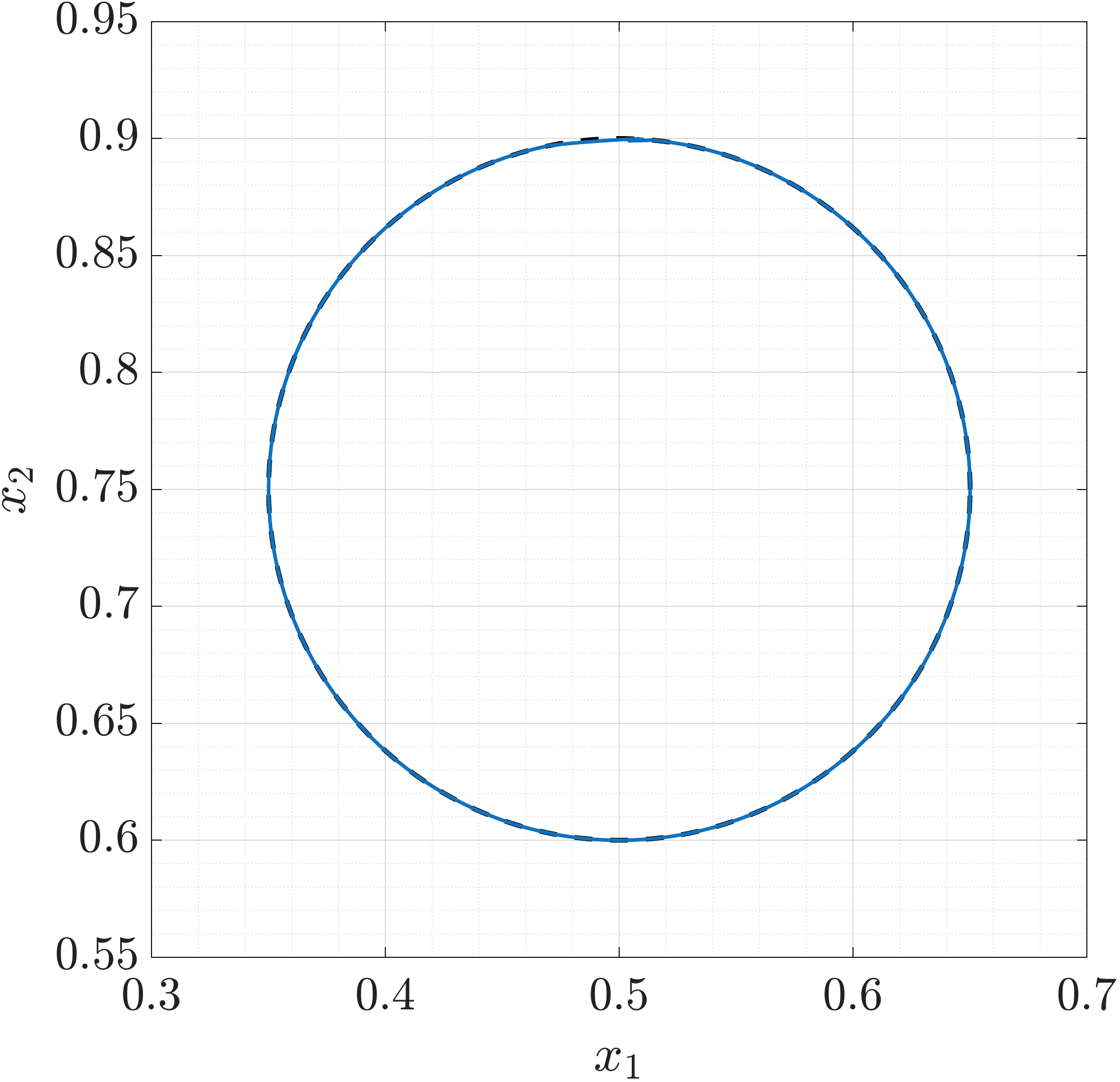}
  \caption{$h = \num{2.5e-3}$}
  \label{fig:rotating-vortex-h=0.0025_t=8.0}
\end{subfigure}
\caption{
\mts\ solutions for the rotating-vortex benchmark at
$\tmax=8$ for decreasing values of the fine-scale parameter $h$.
The colored curves denote the \mts\ interface family, while the black
curve denotes the reference solution computed using classical
Lagrangian tracking with $h=\num{1.25e-3}$. As $h$ decreases,
progressively smaller filamentary structures are resolved, resulting in
fewer split and vanishing events and correspondingly less mass loss.
}
\label{fig:rotating-vortex-convergence}
\end{figure}

The reduction in mass loss observed in
\Cref{fig:rotating-vortex-convergence} is confirmed quantitatively by
the results in \Cref{tab:rotating-vortex-runtime}. Meanwhile, the runtime statistics
show that the cost of the \mts\ algorithm remains comparable to that of the
underlying Lagrangian tracking algorithm, accounting for roughly
$20$--$70\%$ of the total runtime over the resolutions considered.
Since resolving progressively smaller filamentary structures reduces the frequency of
topological events and the associated surgery operations, the relative cost of
the \mts\ algorithm decreases as $h$ is reduced. For this problem, the dominant
contribution to the \mts\ overhead is the Eulerian extraction stage, which typically
accounts for roughly $50$--$80\%$ of the \mts\ runtime. By contrast, adjacency-topology
construction contributes only a few percent of the \mts\ cost and is therefore
negligible. The surgery stage accounts for approximately $15$--$45\%$ of the
\mts\ runtime at coarse and intermediate resolutions where repeated topological
transitions occur, but its relative cost decreases as the fine scale $h$ is reduced,
ultimately vanishing at the finest resolution where no topological events occur.

\begin{table}[ht]
\centering
\scriptsize
\caption{
Accuracy and runtime statistics for solutions to the rotating-vortex
benchmark computed using the Lagrangian tracking\,+\,\mts\
topology-processing algorithm.
The table reports the numerical reversal error $|A_K-A_0|$, together with wall-clock
runtimes for the Lagrangian tracking component, the core \mts\
algorithm, and the overall computation.
}
\label{tab:rotating-vortex-runtime}
\begin{tabular}{l@{\hspace{2em}}ccccc}
\toprule
$h$
& \num{2e-2}
& \num{1e-2}
& \num{5e-3}
& \num{2.5e-3}
& \num{1.25e-3}
\\
\midrule

$|A_K-A_0|$
& \num{1.35e-2}
& \num{1.83e-3}
& \num{2.70e-4}
& \num{2.00e-5}
& \num{1.92e-8}
\\
Order
& --
& 2.88
& 2.76
& 3.76
& 10.02
\\
\midrule

\LGR\ runtime (s)
& \num{8.00e-2}
& \num{3.38e-1}
& 1.75
& 5.73
& 7.86
\\
Order
& --
& 2.08
& 2.37
& 1.71
& 0.46
\\
\midrule

\mts\ runtime (s)
& \num{1.86e-1}
& \num{2.84e-1}
& \num{8.31e-1}
& 1.35
& 2.72
\\
Order
& --
& 0.61
& 1.55
& 0.70
& 1.01
\\
\midrule

Total runtime (s)
& \num{2.66e-1}
& \num{6.22e-1}
& 2.58
& 7.07
& 10.6
\\
Order
& --
& 1.23
& 2.05
& 1.45
& 0.58
\\

\bottomrule
\end{tabular}
\end{table}

While direct comparison is complicated by differing error
metrics,\footnote{The errors reported in \cite{JeSuSh2015,HePhXi2023}
are Eulerian volume-fraction errors, whereas
\Cref{tab:rotating-vortex-runtime} reports the numerical reversal error
$|A_K-A_0|$.} the results compare favorably with those reported in
Table~1 of \cite{JeSuSh2015} and Table~2 of \cite{HePhXi2023}, which
exhibit larger errors and substantially larger runtimes at comparable
effective resolutions. Although such comparisons should be interpreted with caution, the 
combination of the quantitative results and the qualitative comparisons presented here 
strongly suggests that the localized, event-driven nature of the \mts\ algorithm enables 
substantially higher accuracy at significantly lower computational cost than the MOF filament-aware 
interface-capturing approaches \cite{JeSuSh2015,HePhXi2023}.

\subsubsection{Tangent FTLE through topological transitions}
\Cref{fig:rotating-vortex-ftle} displays the tangent FTLE diagnostic 
obtained through the projection \eqref{eq:ftle-transfer-projection}--\eqref{eq:ftle-transfer-state}. 
For visualization, the tangent FTLE is plotted directly over the computed interface family in 
physical space.\footnote{Because topological transitions alter the structure of the interface 
family, the tangent FTLE no longer admits a natural global parameter-space representation. 
Plotting the diagnostic directly on the interface geometry avoids this ambiguity.}
These results should be contrasted with the classical Lagrangian computation shown in 
\Cref{fig:dyadic-refinement}, where the tangent FTLE exhibits two localized troughs corresponding 
to the strong and weak filamenting tips. The trough associated with the weakly 
filamenting tip remains essentially unchanged in the \mts\ simulations. 
By contrast, the highly localized trough associated with the strongly filamenting tip becomes progressively 
attenuated as the interface is propagated through successive split and vanishing events. 
This attenuation reflects the removal of fine filamentary scales by the topological processing. 
As the fine-scale $h$ is decreased, a smaller range of filamentary structures is affected by 
surgery, and the attenuation correspondingly weakens. In the limit that all dynamically relevant scales 
are resolved, the \mts\ tangent FTLE converges to the classical Lagrangian diagnostic.

\begin{figure}[ht]
\centering
\begin{subfigure}[t]{0.19\linewidth}
  \centering
  \includegraphics[width=\linewidth]{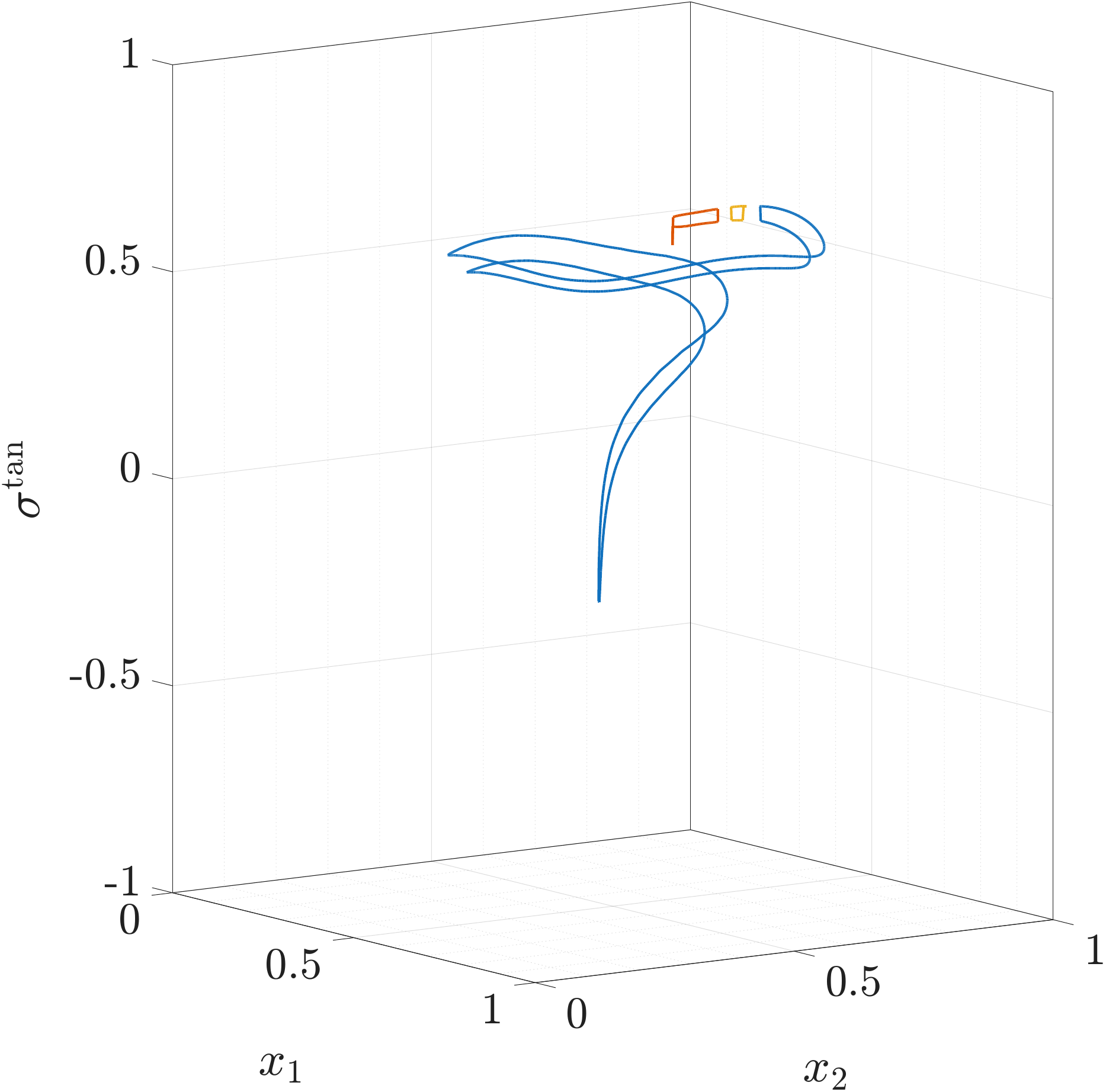}
  \caption{$h=\num{2.e-2}$}
  \label{fig:rotating-vortex-ftle-h=0.02}
\end{subfigure}
\begin{subfigure}[t]{0.19\linewidth}
  \centering
  \includegraphics[width=\linewidth]{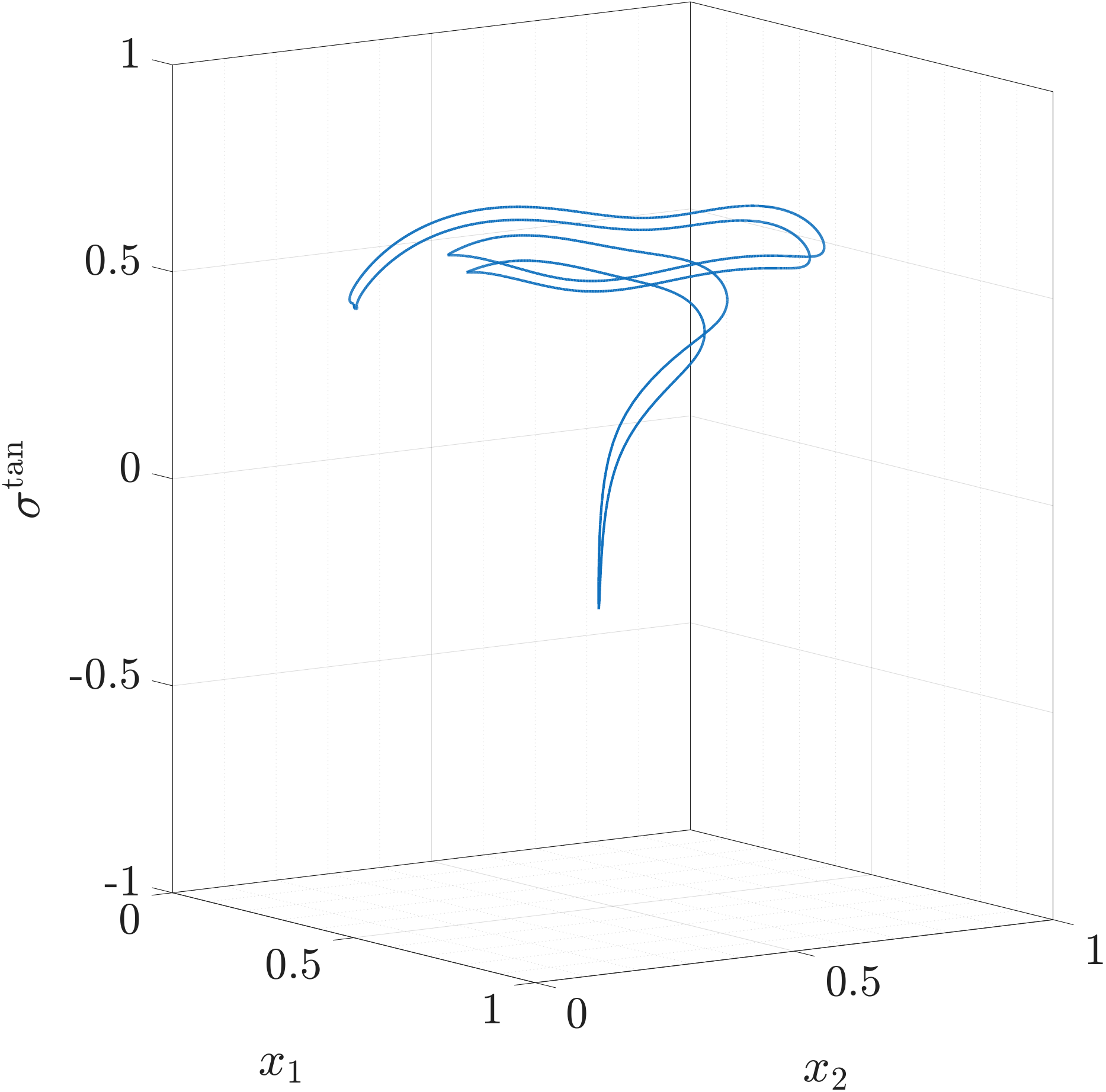}
  \caption{$h=\num{1.e-2}$}
  \label{fig:rotating-vortex-ftle-h=0.01}
\end{subfigure}
\begin{subfigure}[t]{0.19\linewidth}
  \centering
  \includegraphics[width=\linewidth]{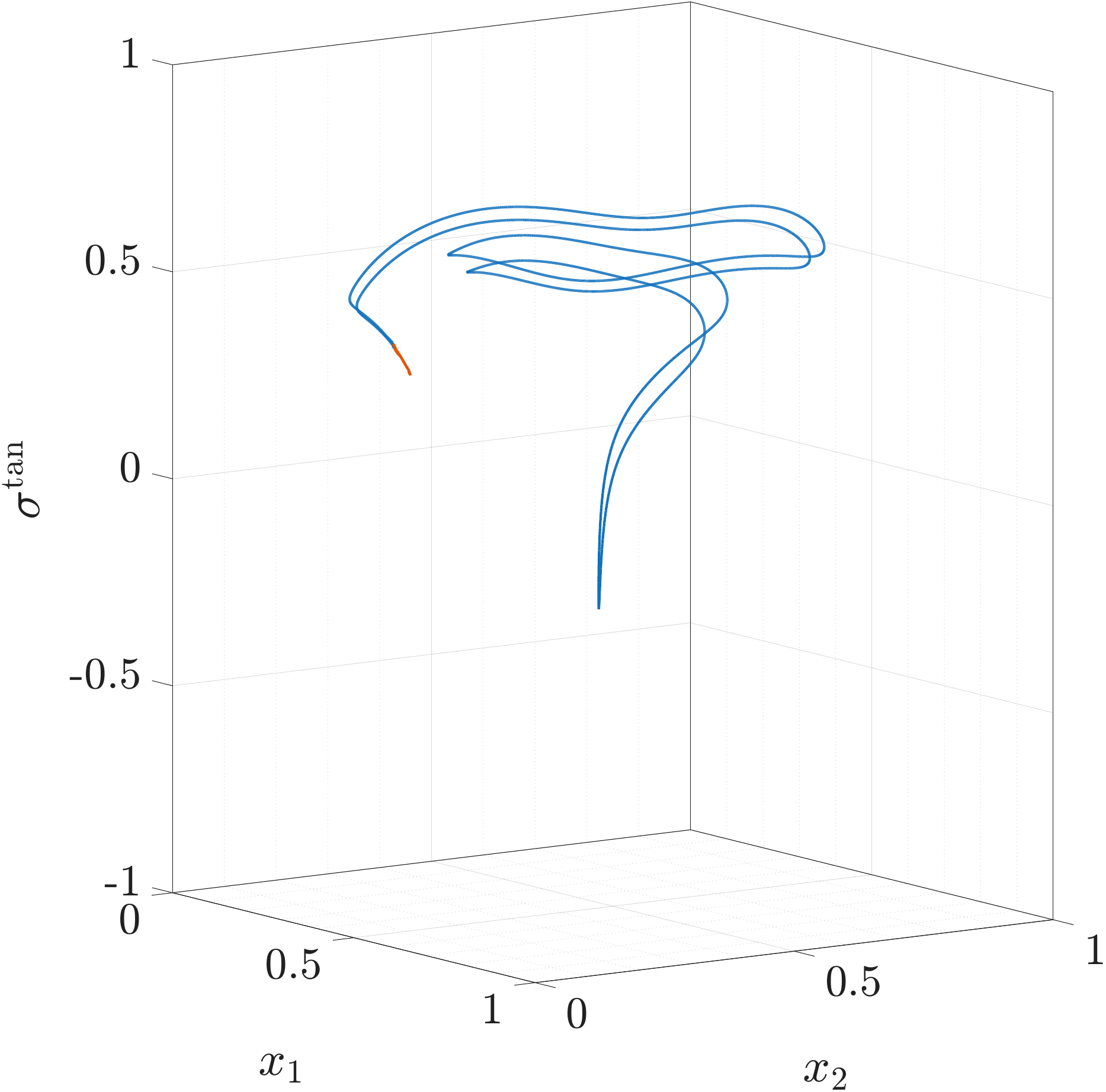}
  \caption{$h=\num{5.e-3}$}
  \label{fig:rotating-vortex-ftle-h=0.005}
\end{subfigure}
\begin{subfigure}[t]{0.19\linewidth}
  \centering
  \includegraphics[width=\linewidth]{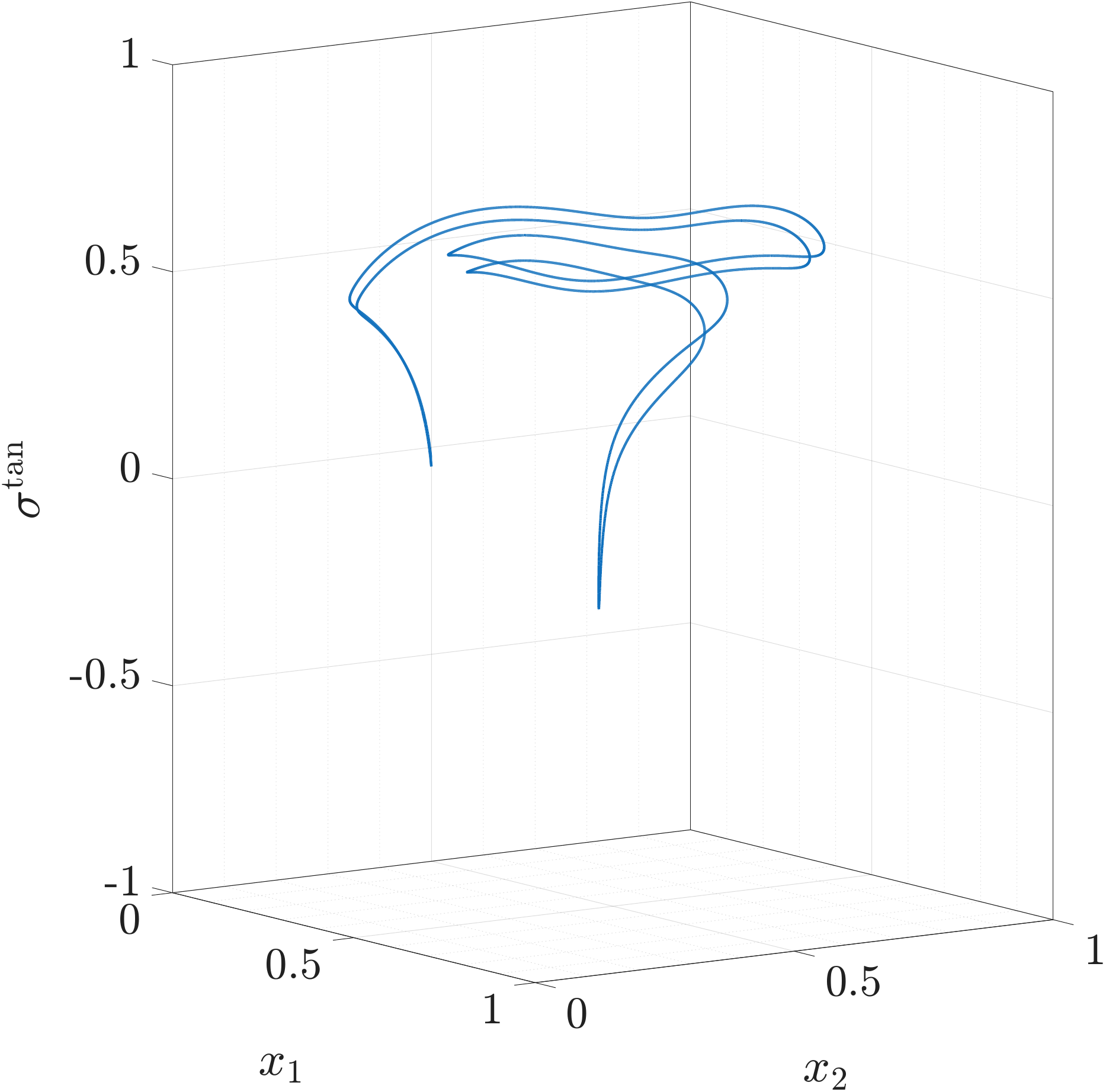}
  \caption{$h=\num{2.5e-3}$}
  \label{fig:rotating-vortex-ftle-h=0.0025}
\end{subfigure}
\begin{subfigure}[t]{0.19\linewidth}
  \centering
  \includegraphics[width=\linewidth]{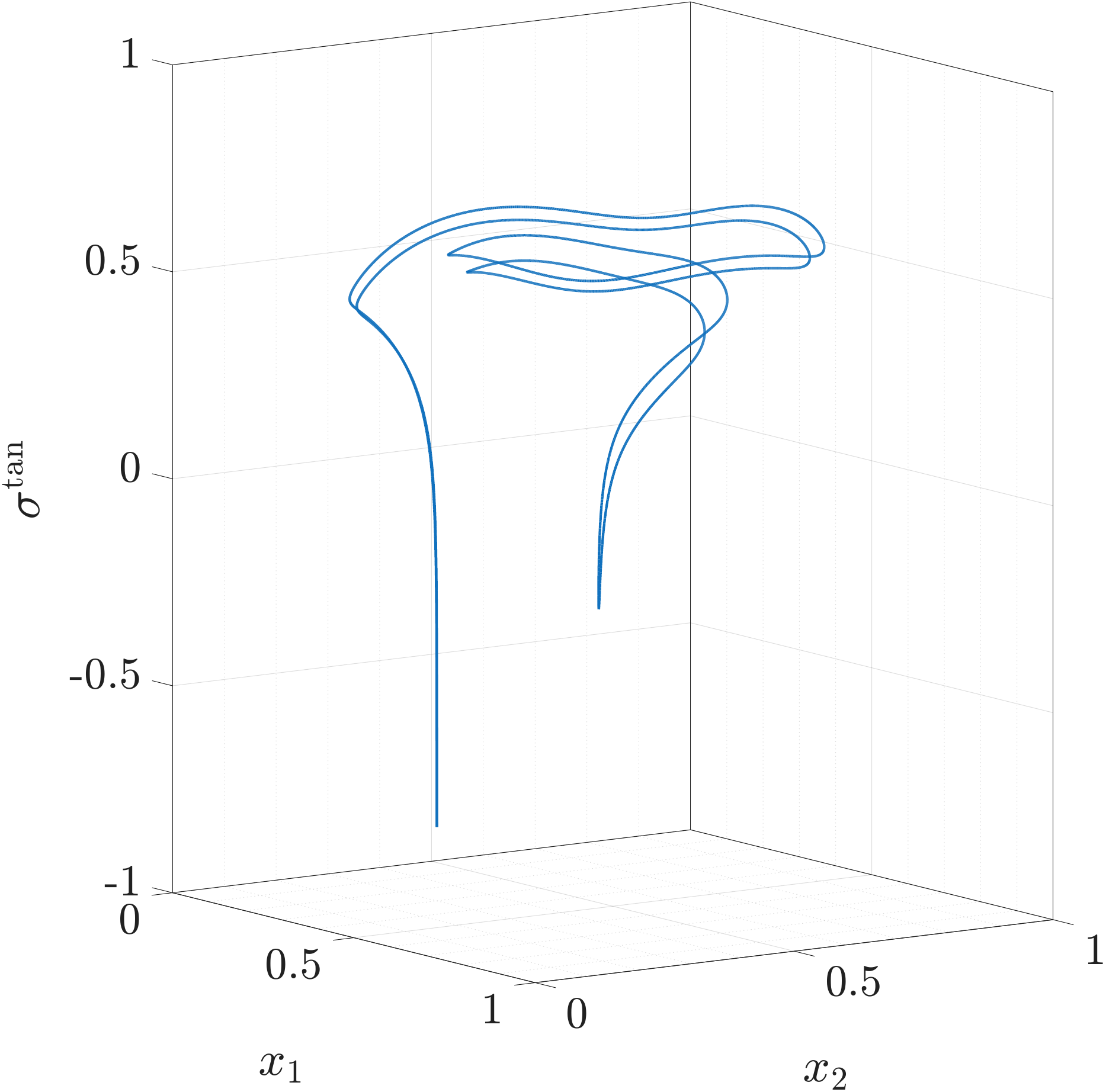}
  \caption{$h=\num{1.25e-3}$}
  \label{fig:rotating-vortex-ftle-h=0.00125}
\end{subfigure}
\caption{
Tangent FTLE $\sftle(s,t)$ for the \mts\ solutions of the
rotating-vortex benchmark at the time of maximal deformation $t=4$ for
decreasing values of the fine-scale parameter $h$. The diagnostic is
visualized directly on the computed interface family. As $h$
decreases, the trough associated with the strongly filamenting tip
becomes increasingly pronounced, approaching the classical Lagrangian
diagnostic shown in \Cref{fig:dyadic-refinement}, while the trough
associated with the weakly filamenting tip remains essentially unchanged.
}
\label{fig:rotating-vortex-ftle}
\end{figure}

The geometric consequences of the topological processing may be
understood through the stretch--curvature relation \eqref{stretch-curve-law}. 
\Cref{fig:rotating-vortex-scaling-MTS} displays the corresponding phase plot 
in a neighborhood of the strongly filamenting tip. 
Since the $\kappa^{-1/3}$ branch is generated by the filament
tip itself, removal of filamentary structure below the prescribed
microscale $h$ by the \mts\ algorithm produces a corresponding
truncation of the branch. At coarse resolutions, only a short segment
of the branch remains.As $h$ is decreased, less of the filament tip is removed by the
topological processing, and the branch extends toward the classical
Lagrangian result. The \mts\ phase plots, particularly at coarse resolutions, are somewhat 
noisy away from the principal $\kappa^{-1/3}$ branch. 
This behavior reflects the sensitivity of curvature, a second-derivative quantity, to the 
small geometric errors introduced in the surgical regions, where the interface is 
reconstructed by stitching together Lagrangian and Eulerian interface segments. 
By contrast, the tangent FTLE remains smooth, as shown 
in \Cref{fig:rotating-vortex-ftle}, confirming its robustness
as a diagnostic of filamentation through topological change.

\begin{figure}[ht]
\centering
\begin{subfigure}[t]{0.19\linewidth}
  \centering
  \includegraphics[width=0.995\linewidth]{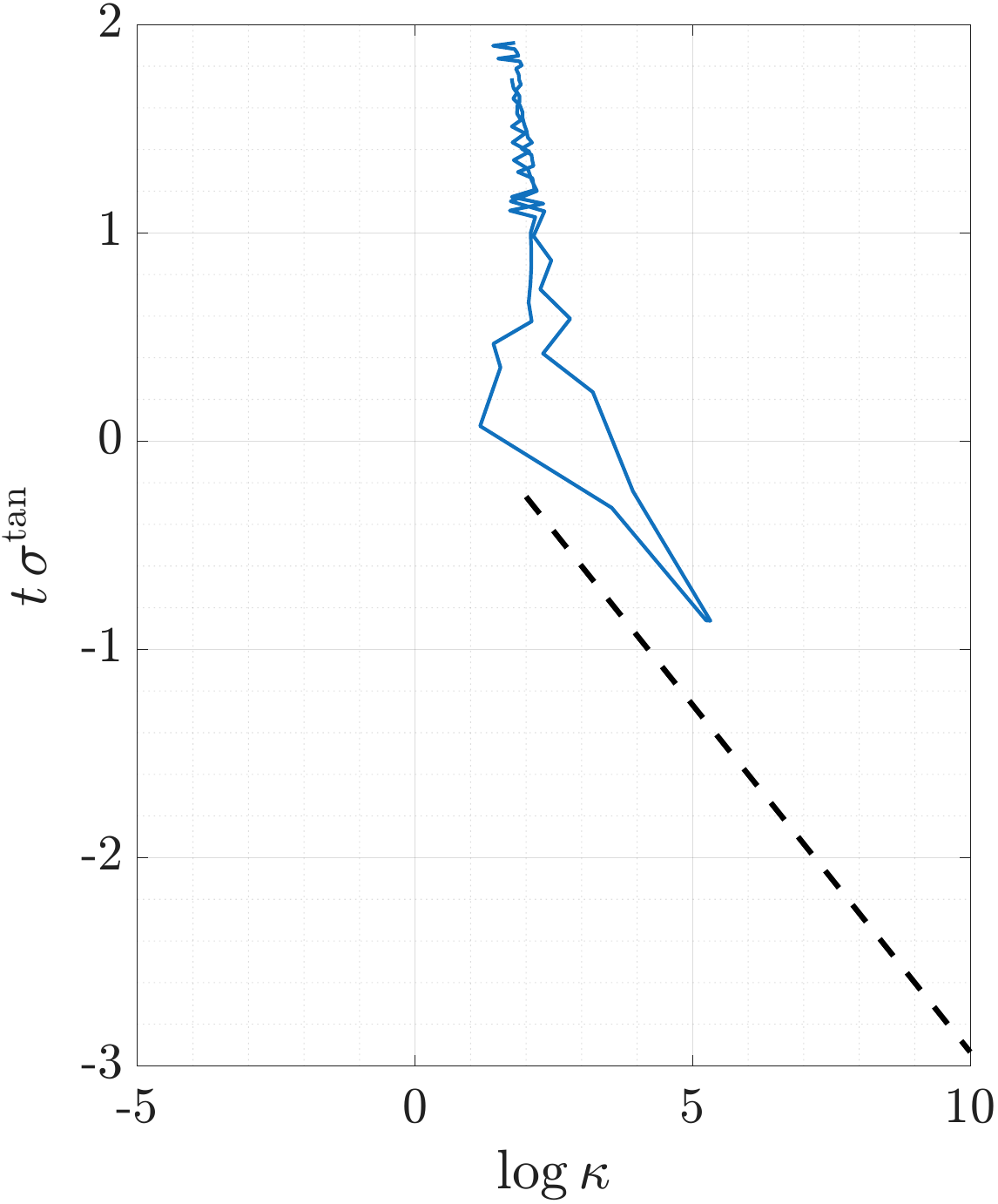}
  \caption{$h=\num{2.e-2}$}
  \label{fig:rotating-vortex-scaling-h=0.02}
\end{subfigure}
\begin{subfigure}[t]{0.19\linewidth}
  \centering
  \includegraphics[width=\linewidth]{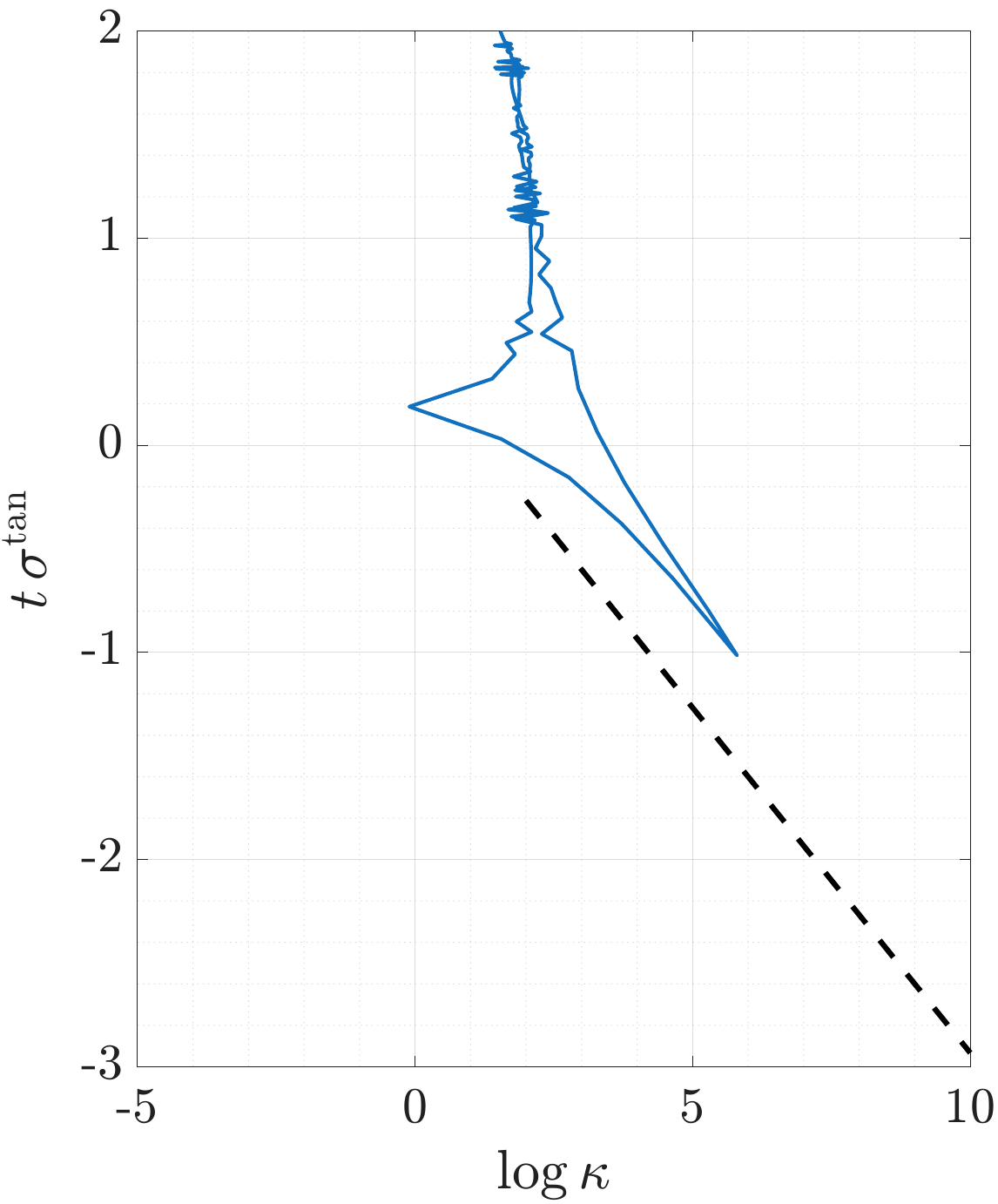}
  \caption{$h=\num{1.e-2}$}
  \label{fig:rotating-vortex-scaling-h=0.01}
\end{subfigure}
\begin{subfigure}[t]{0.19\linewidth}
  \centering
  \includegraphics[width=0.995\linewidth]{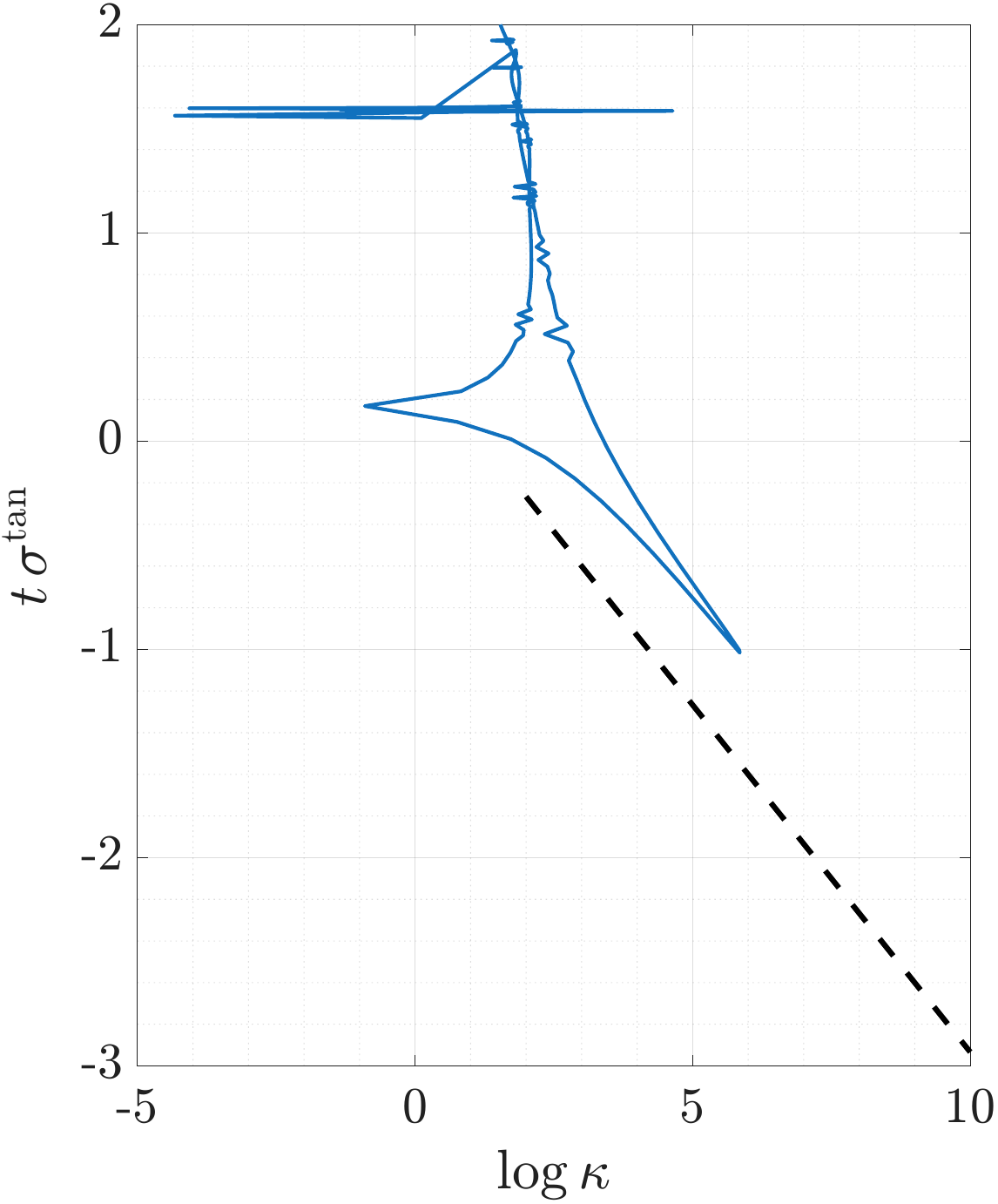}
  \caption{$h=\num{5.e-3}$}
  \label{fig:rotating-vortex-scaling-h=0.005}
\end{subfigure}
\begin{subfigure}[t]{0.19\linewidth}
  \centering
  \includegraphics[width=\linewidth]{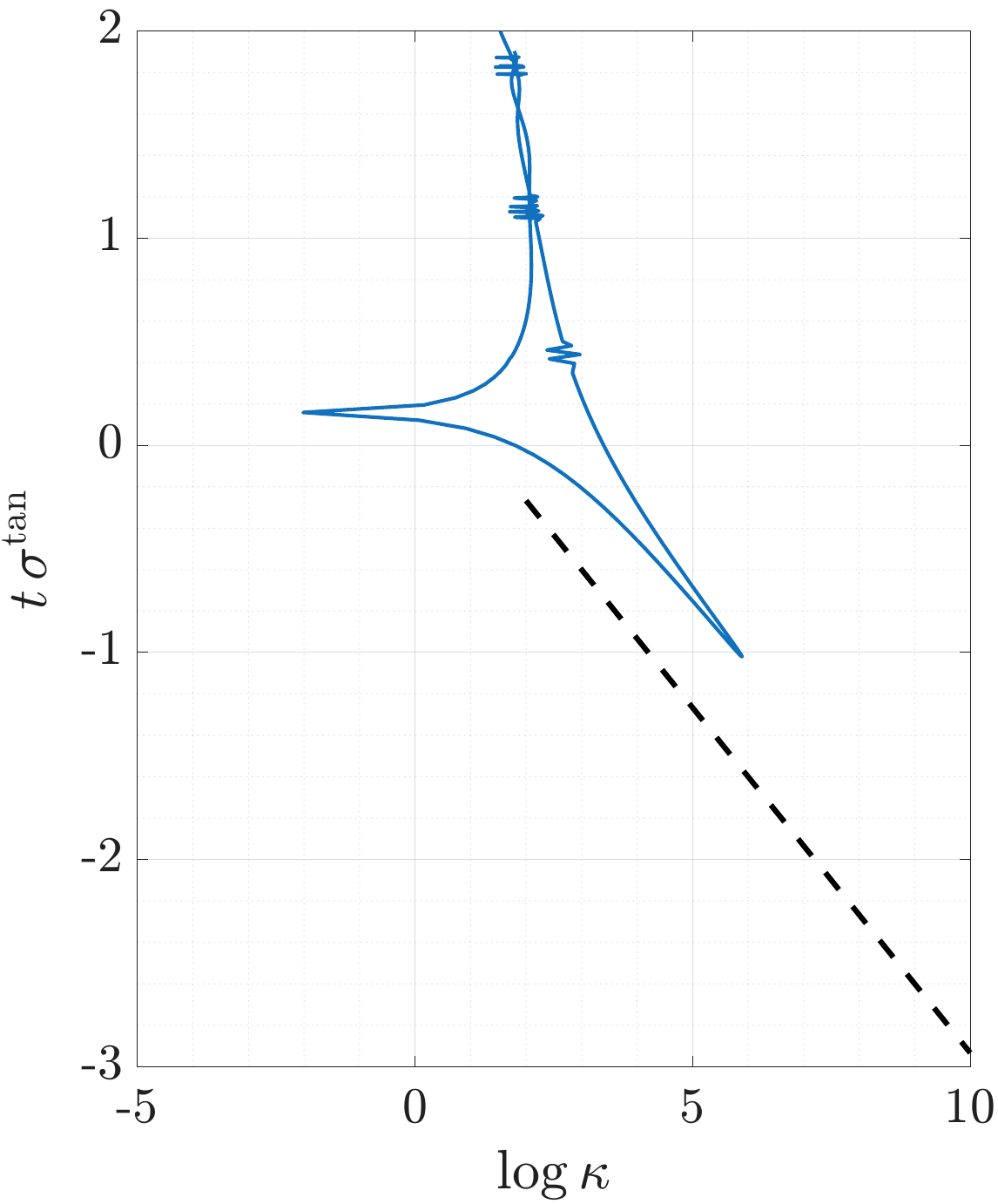}
  \caption{$h=\num{2.5e-3}$}
  \label{fig:rotating-vortex-scaling-h=0.0025}
\end{subfigure}
\begin{subfigure}[t]{0.19\linewidth}
  \centering
  \includegraphics[width=0.995\linewidth]{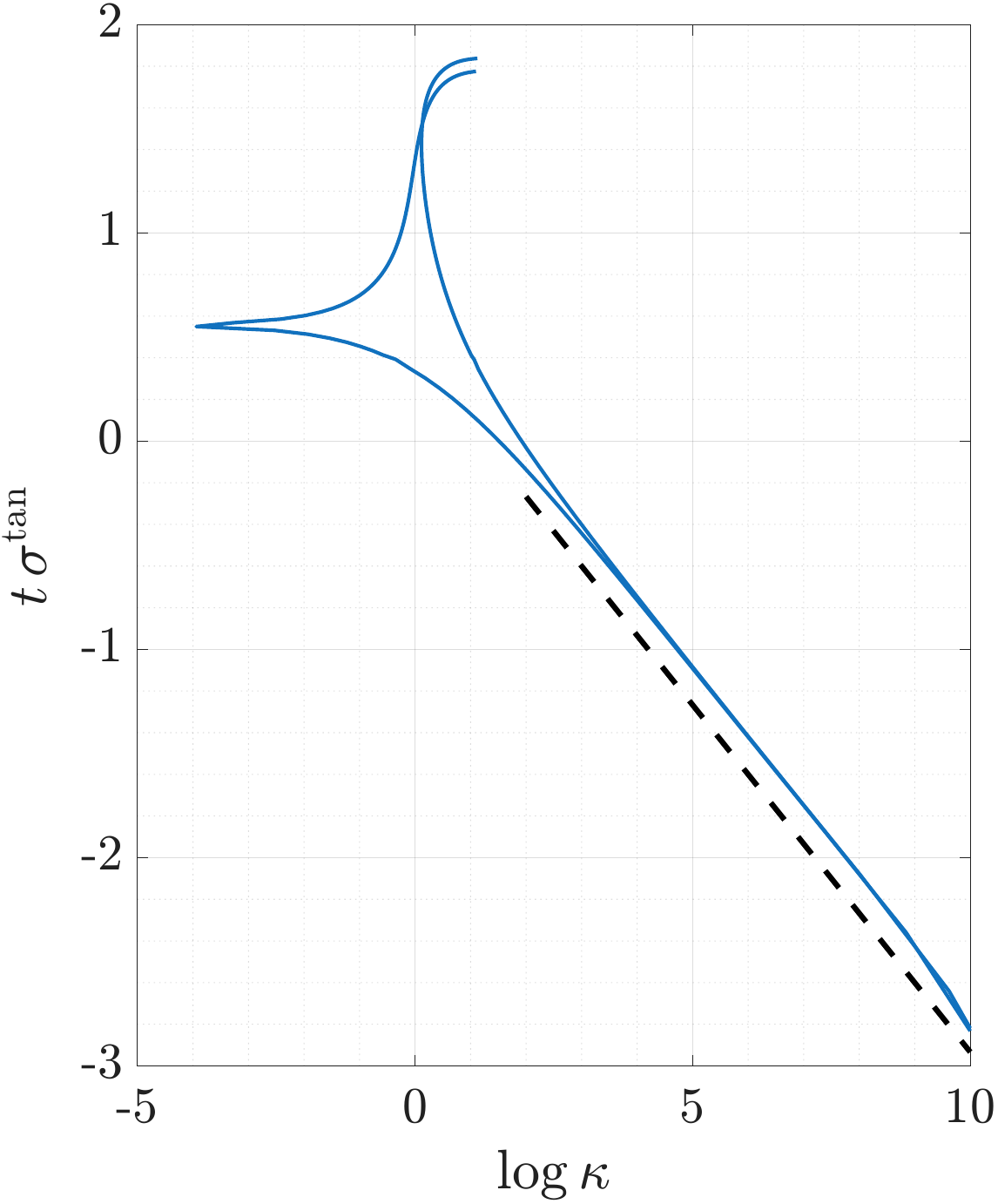}
  \caption{$h=\num{1.25e-3}$}
  \label{fig:rotating-vortex-scaling-h=0.00125}
\end{subfigure}
\caption{
Stretch--curvature phase plots near the strongly filamenting tip for
the $h$-refined \mts\ solutions to the rotating-vortex benchmark. The
$\kappa^{-1/3}$ branch provides a geometric signature of the filament
tip. As the fine-scale $h$ is decreased, a larger portion of the branch is recovered.
}
\label{fig:rotating-vortex-scaling-MTS}
\end{figure}

\subsection{The $\mathcal{S}$-flow benchmark}
\label{subsec:s-flow}

The second example is the $\mathcal{S}$-flow benchmark, originally introduced in
\cite{AhSh2009} and subsequently studied in
\cite{JeSuSh2015,HePhXi2023,HeLiPhXi2024}.
The prescribed incompressible velocity field $u$ is defined by
\begin{equation}
\label{eq:s-flow-velocity}
u(x_1,x_2)
=
\tfrac14
\left( 
(4x_1-2)+(4x_2-2)^3, \,
-(4x_2-2)-(4x_1-2)^3
\right)^\top,
\end{equation}
and the initial interface is a circle of radius $0.25$ centered at
$(0.5,0.5)$. The velocity gradient required for the tangent-FTLE
computation \eqref{eq:xi-evolution} is
\begin{equation}
\label{eq:s-flow-gradient}
Du(x_1,x_2)
=
\begin{pmatrix}
1 & 3(4x_2-2)^2 \\[1mm]
-3(4x_1-2)^2 & -1
\end{pmatrix}.
\end{equation}
The solution is evolved to the final time $\tmax=3$.
During the evolution, the initially circular interface is deformed
into a characteristic S-shaped configuration containing thin
filamentary regions and sharp geometric features.

We apply the \mts\ algorithm using the spatial resolutions
$h=0.02$ and $h=0.01$, with corresponding time steps
$\Delta t = 0.01$ and $\Delta t = 0.005$.
Topological processing is performed at the \mts\ times
\[
T_m = 0.1\,m,
\quad
m=1,\ldots,30.
\]
The \mts\ solutions are displayed in \Cref{fig:s-flow}.
At the coarser resolution, several satellite-interface formation events
are observed in the strongly filamenting regions of the flow.
Notably, the filament-breakup process develops symmetrically near the
two terminal filament tips, reflecting the corresponding symmetry in
the tangent-FTLE field discussed below.
As the resolution is increased, these pinch-off events occur at
progressively smaller scales, and the computed interface converges
qualitatively toward the reference solution shown in
\Cref{fig:s-flow_reference_t=3.0}.

\begin{figure}[ht]
\centering
\begin{subfigure}[t]{0.28\linewidth}
  \centering
  \includegraphics[width=\linewidth]{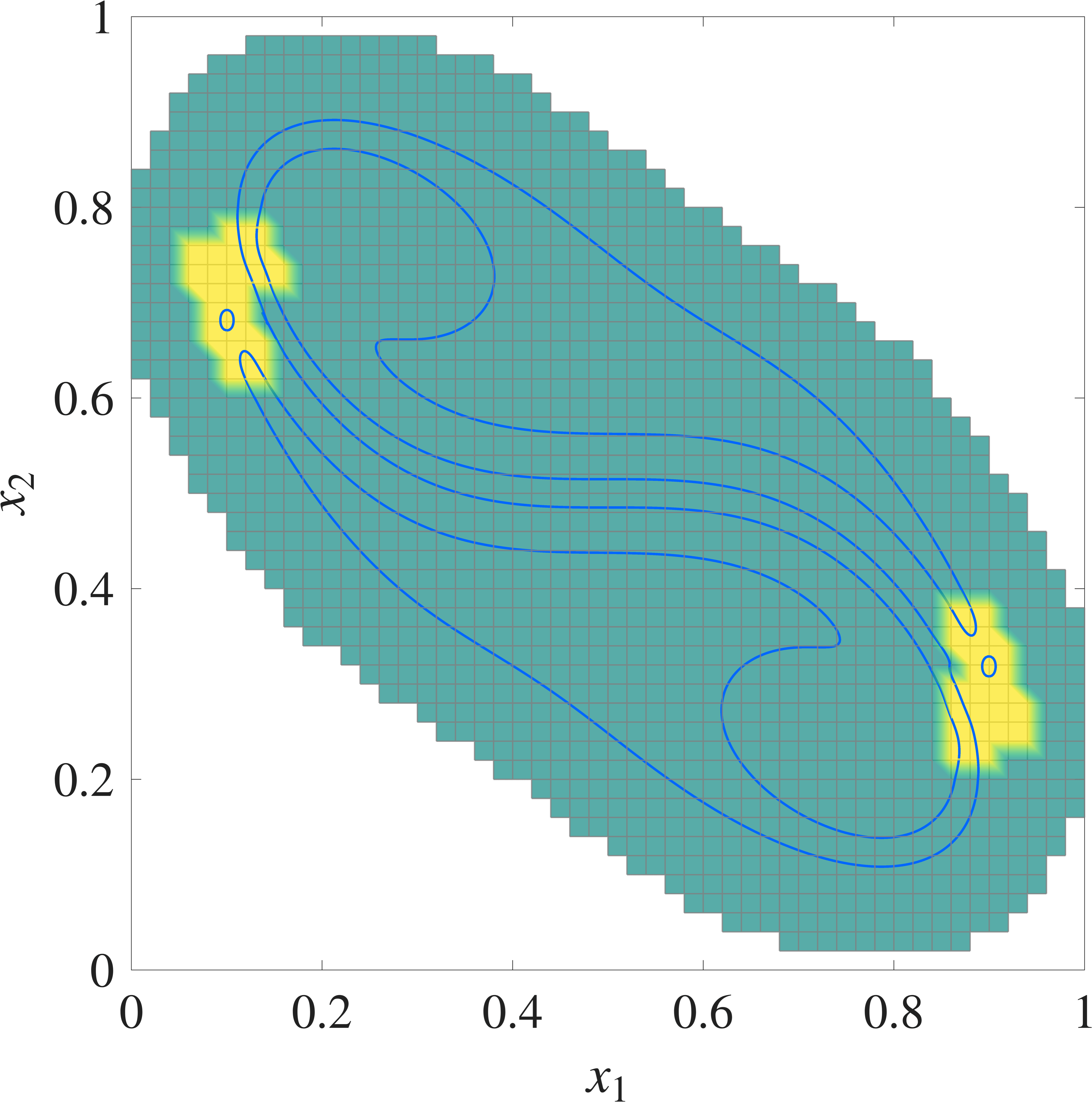}
  \caption{$h=0.02$, $t=2.8$}
  \label{fig:s-flow_h=0.02_t=2.8}
\end{subfigure}
\hspace{1em}
\begin{subfigure}[t]{0.28\linewidth}
  \centering
  \includegraphics[width=\linewidth]{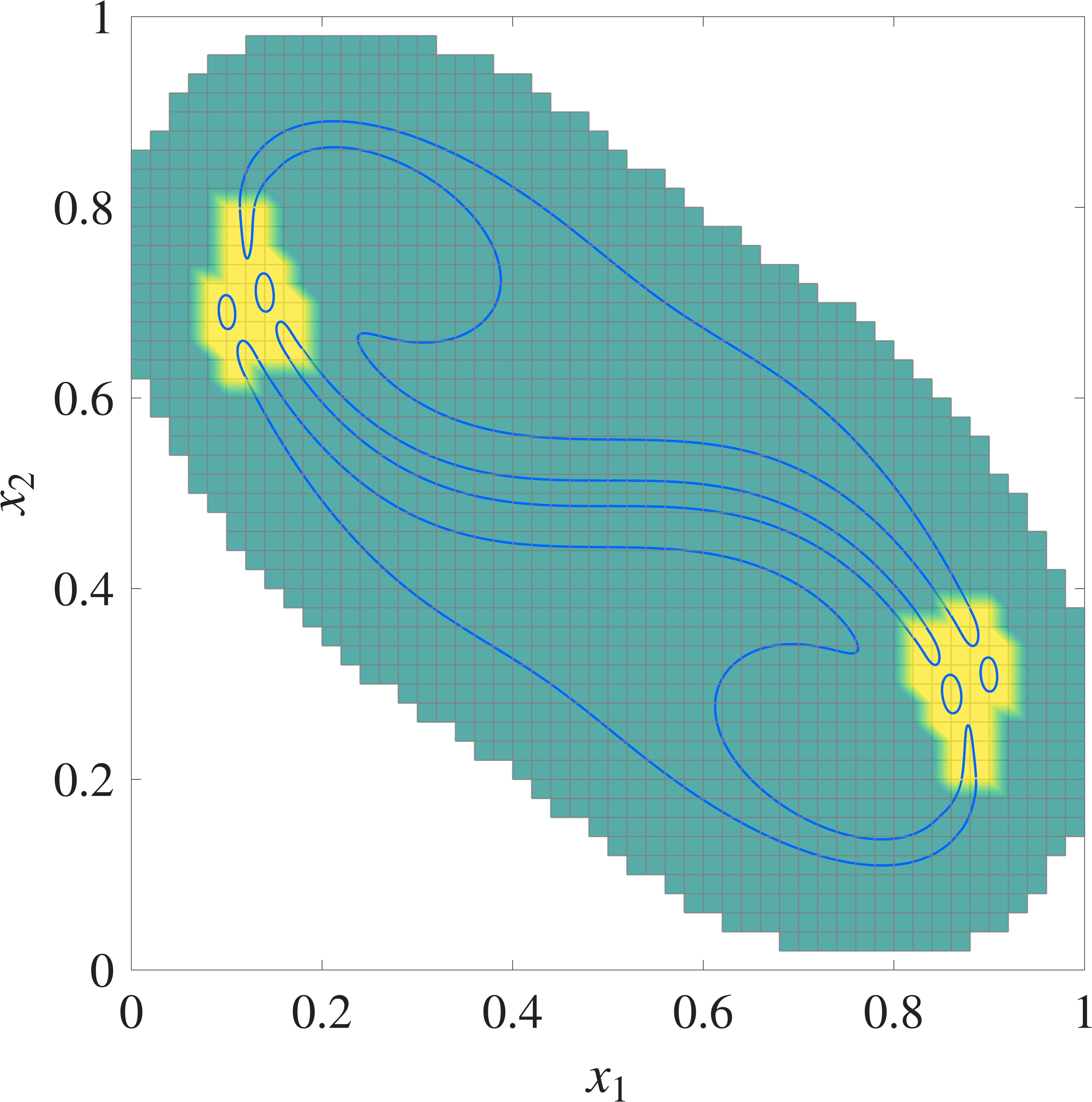}
  \caption{$h=0.02$, $t=2.9$}
  \label{fig:s-flow_h=0.02_t=2.9}
\end{subfigure}
\hspace{1em}
\begin{subfigure}[t]{0.28\linewidth}
  \centering
  \includegraphics[width=\linewidth]{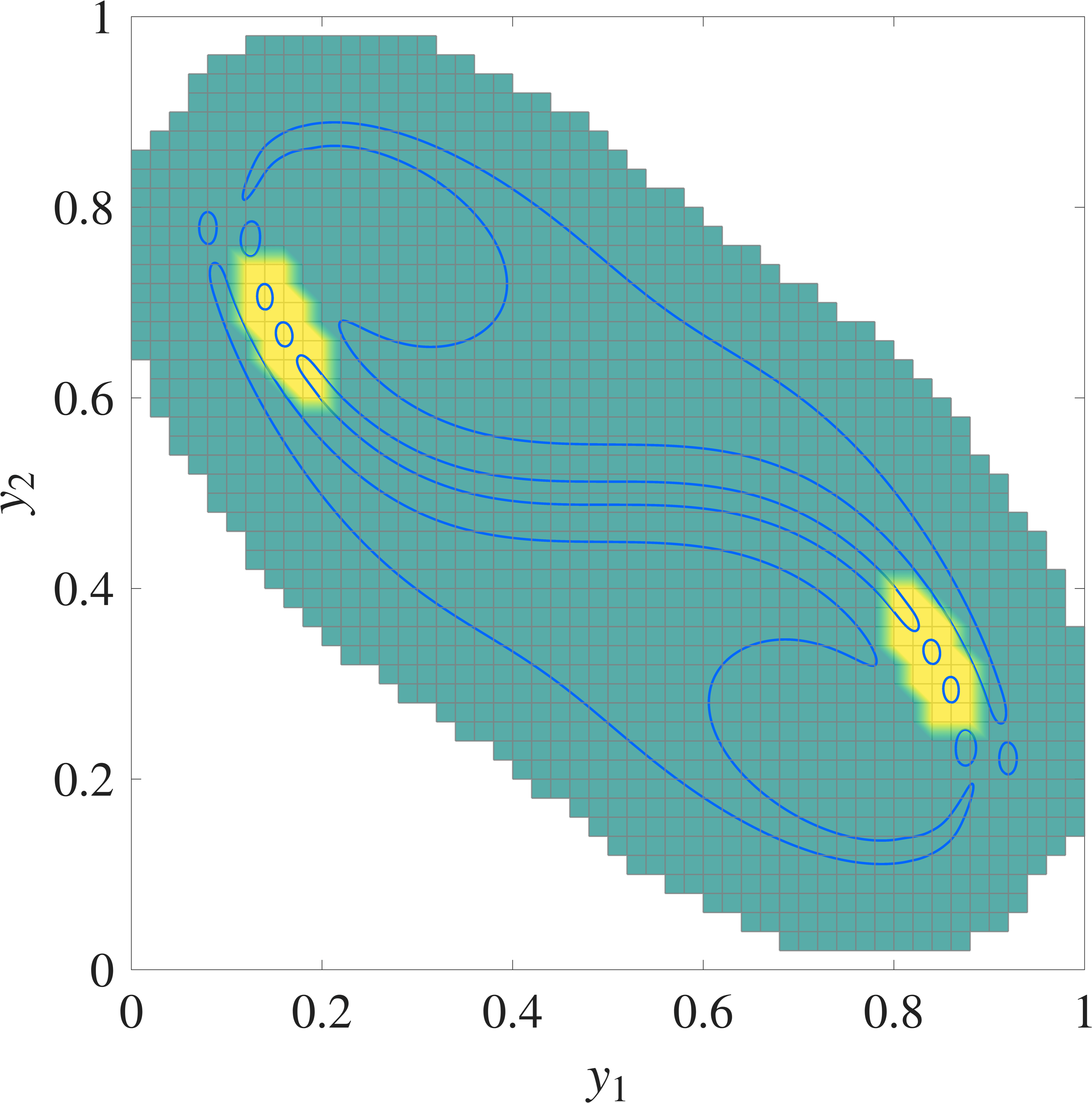}
  \caption{$h=0.02$, $t=3.0$}
  \label{fig:s-flow_h=0.02_t=3.0}
\end{subfigure}

\vspace{1em}

\begin{subfigure}[t]{0.28\linewidth}
  \centering
  \includegraphics[width=\linewidth]{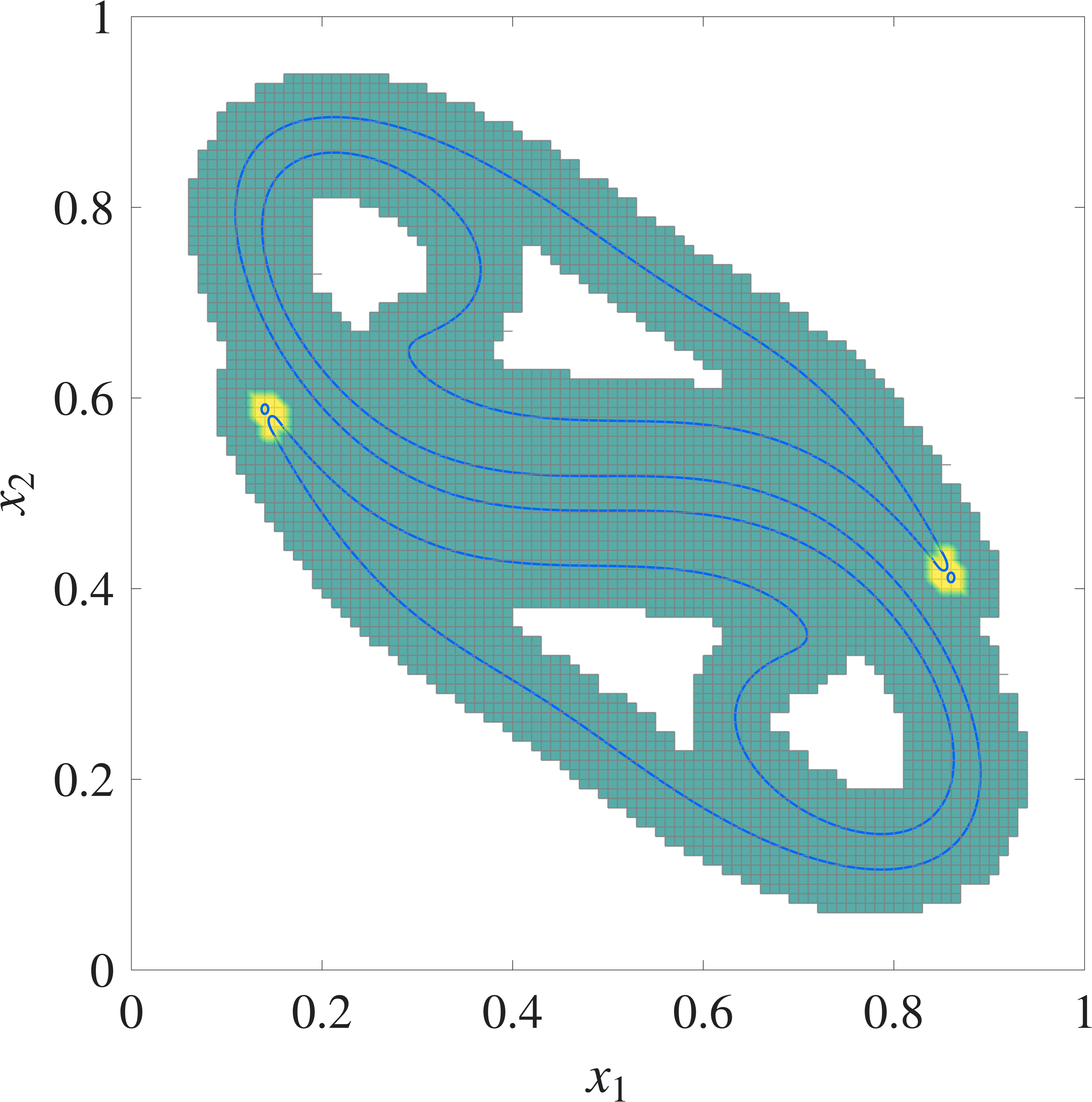}
  \caption{$h=0.01$, $t=2.6$}
  \label{fig:s-flow_h=0.01_t=2.6}
\end{subfigure}
\hspace{1em}
\begin{subfigure}[t]{0.28\linewidth}
  \centering
  \includegraphics[width=\linewidth]{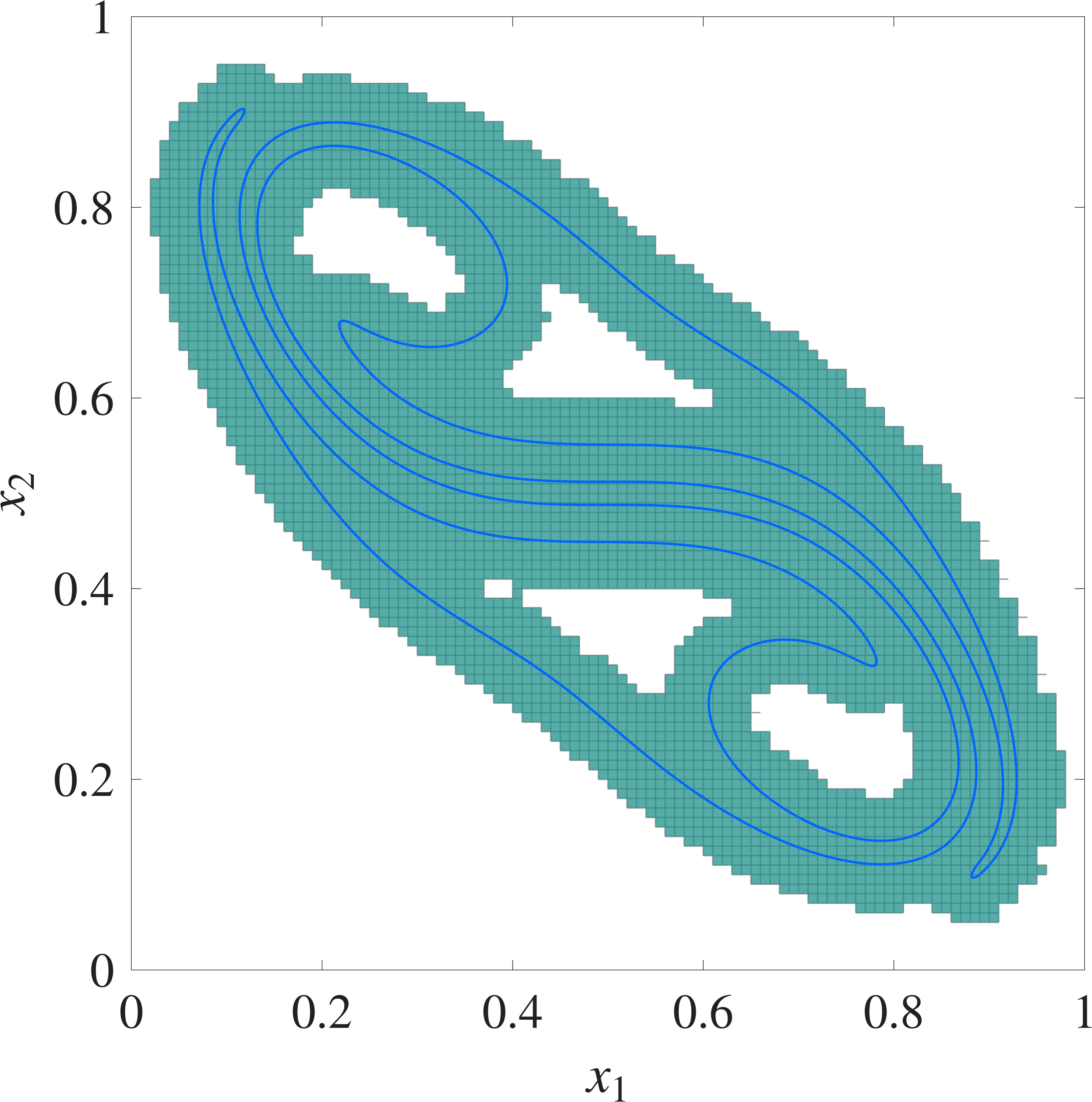}
  \caption{$h=0.01$, $t=3.0$}
  \label{fig:s-flow_h=0.01_t=3.0}
\end{subfigure}
\hspace{1em}
\begin{subfigure}[t]{0.28\linewidth}
  \centering
  \includegraphics[width=\linewidth]{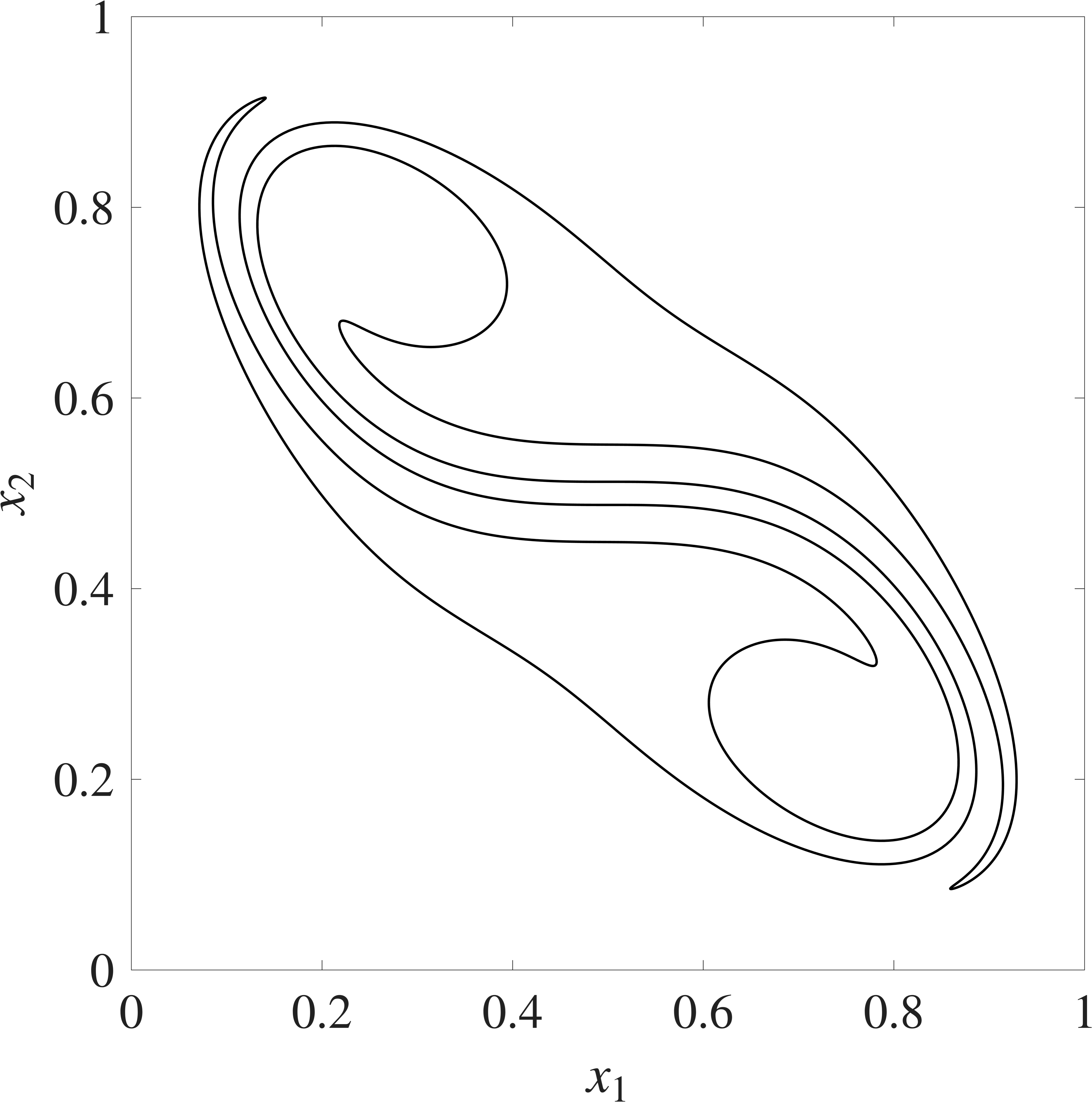}
  \caption{reference $t=3.0$}
  \label{fig:s-flow_reference_t=3.0}
\end{subfigure}
\caption{
$\mathcal{S}$-flow benchmark computed using the \mts\ algorithm.
The top row shows the solution with $h=0.02$ at times
$t=2.8$, $2.9$, and $3.0$.
The bottom row shows the solution with $h=0.01$ at times
$t=2.6$ and $3.0$, together with a reference solution at $t=3.0$.
Satellite-interface formation events observed in the coarser
computation are confined to progressively smaller scales as the
resolution is increased.
}
\label{fig:s-flow}
\end{figure}

The tangent FTLE computed by classical Lagrangian tracking
is displayed in \Cref{fig:s-flow-ftle_reference}.
Consistent with the symmetric filament-breakup dynamics observed in
\Cref{fig:s-flow}, the diagnostic exhibits paired trough structures
localized near the two terminal filament tips.
Comparison with the tangent-FTLE fields for the rotating-vortex and
nonlinear alternating-shear benchmarks shown in
\Cref{fig:dyadic-refinement,fig:alternating-shear-FTLE}
demonstrates that the $\mathcal{S}$-flow dynamics occupy an intermediate
regime of multiscale filamentation complexity.

A consistent qualitative relationship between filamentation
and the tangent FTLE is observed: localized troughs in the reference
tangent-FTLE field characterize the filamenting regions where breakup
occurs in the corresponding \mts\ computations.
The \mts\ algorithm continues the tangent-FTLE diagnostic through
topological transitions by transferring the tangent variation field
between the pre- and post-surgery interfaces using the closest-point projection.
As the resolution is increased, the reconstructed tangent-FTLE fields
shown in \Cref{fig:s-flow-ftle_h=0.02,fig:s-flow-ftle_h=0.01}
converge qualitatively toward the reference diagnostic computed by
classical Lagrangian tracking, shown in \Cref{fig:s-flow-ftle_reference}.

\begin{figure}[ht]
\centering
\begin{subfigure}[t]{0.28\linewidth}
  \centering
  \includegraphics[width=\linewidth]{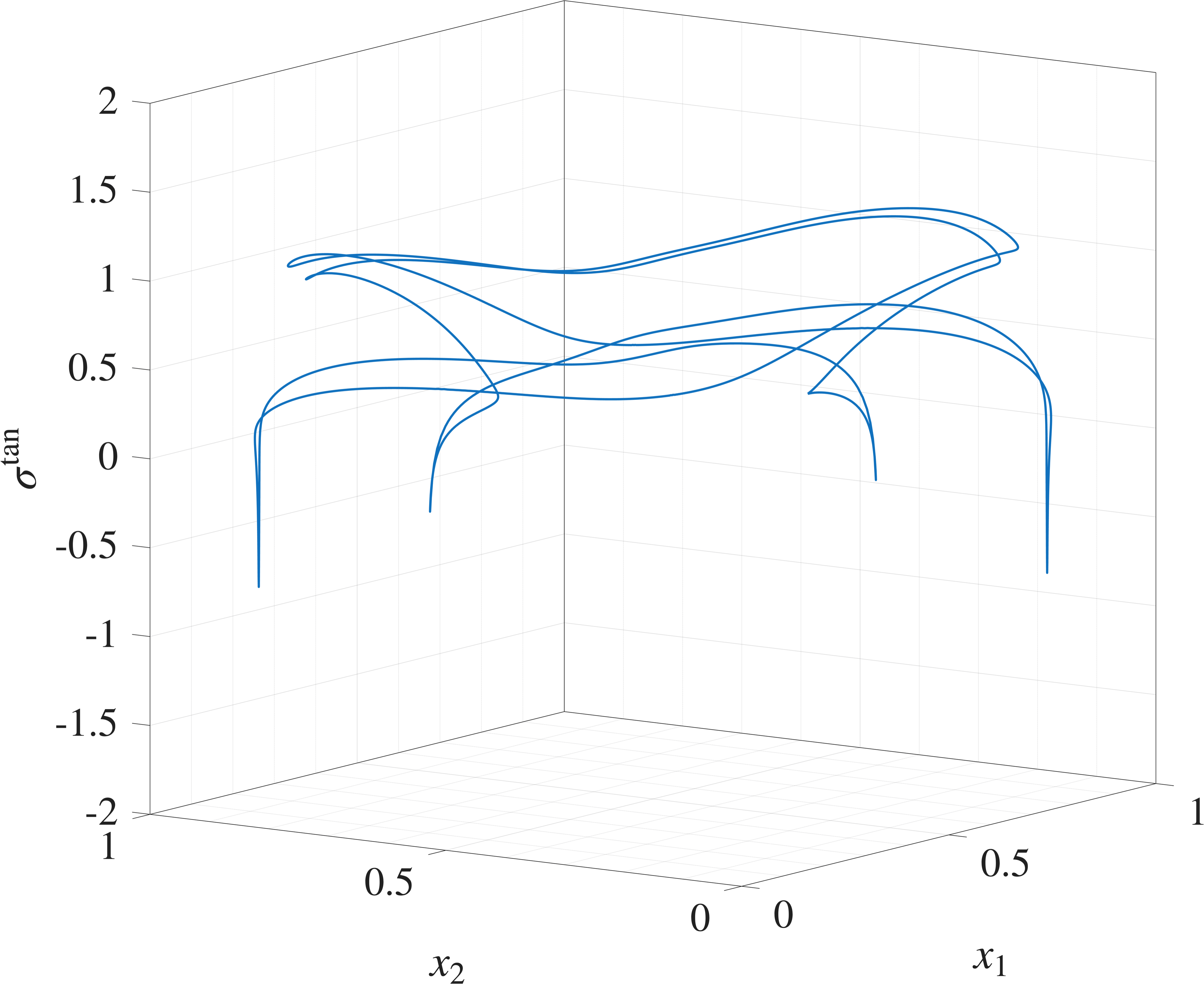}
  \caption{reference}
  \label{fig:s-flow-ftle_reference}
\end{subfigure}
\hspace{1em}
\begin{subfigure}[t]{0.28\linewidth}
  \centering
  \includegraphics[width=\linewidth]{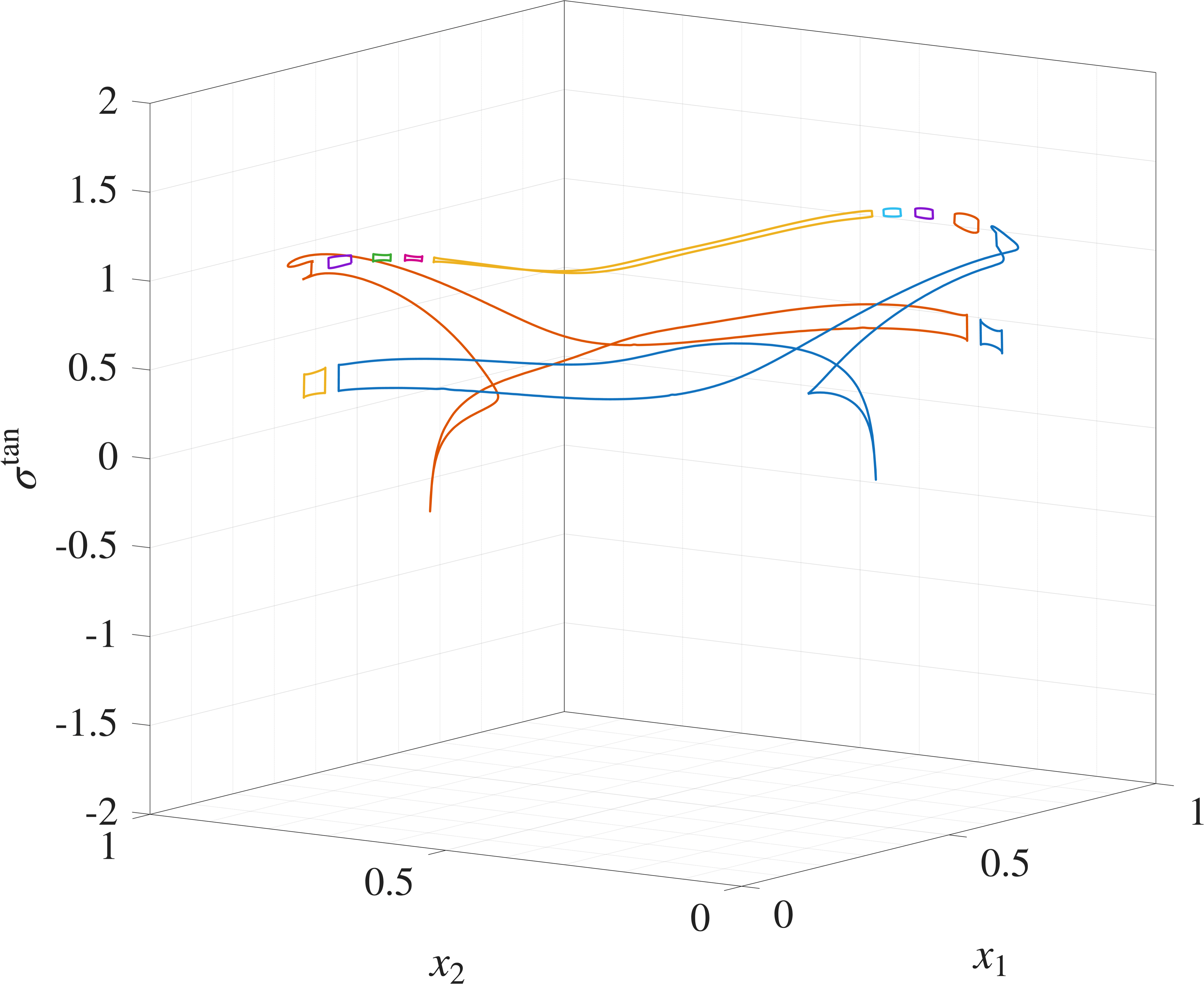}
  \caption{$h=0.02$}
  \label{fig:s-flow-ftle_h=0.02}
\end{subfigure}
\hspace{1em}
\begin{subfigure}[t]{0.28\linewidth}
  \centering
  \includegraphics[width=\linewidth]{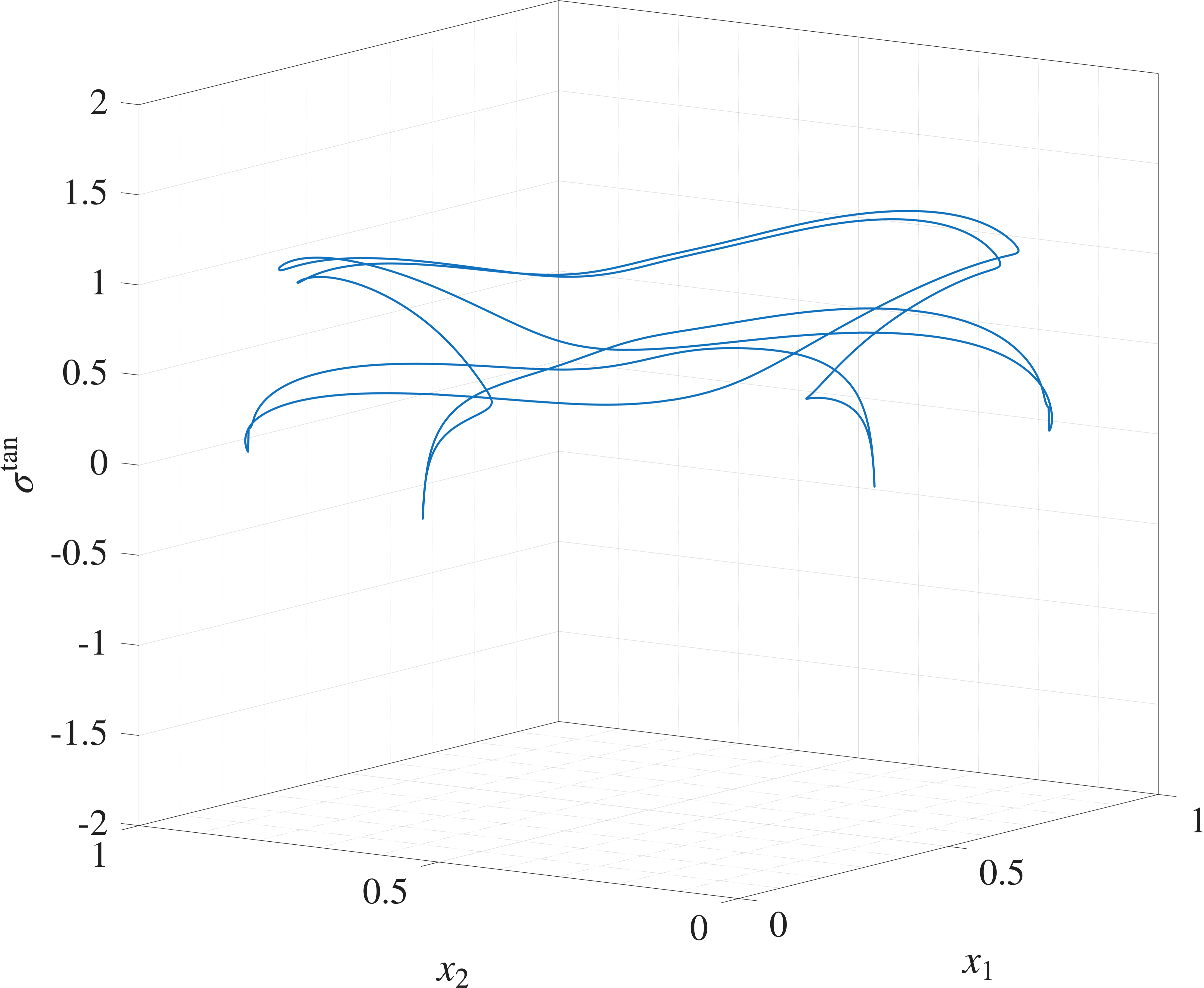}
  \caption{$h=0.01$}
  \label{fig:s-flow-ftle_h=0.01}
\end{subfigure}
\caption{
Tangent FTLE diagnostics for the $\mathcal{S}$-flow benchmark.
The left panel shows the reference tangent-FTLE field computed using
classical Lagrangian tracking, while the center and right panels show
the corresponding tangent-FTLE fields computed using the 
\mts\ algorithm with $h=0.02$ and $h=0.01$, respectively.
The \mts\ algorithm continues the tangent-FTLE diagnostic through
topological transitions by transferring the tangent variation field
between the pre- and post-surgery interfaces using the closest-point projection.
As the resolution is increased, the reconstructed tangent-FTLE field
converges qualitatively toward the reference diagnostic.
}
\label{fig:s-flow-ftle}
\end{figure}

\subsection{Microscale filament breakup by nonlinear alternating-shearing}

We now assess the efficacy of the \mts\ algorithm on the more
challenging nonlinear alternating-shear test introduced in
\Cref{subsec:alternating-shear-LGR}. The velocity field $u$, initial
interface $\gamma_\alpha(s,0)$, final time $\tmax$, time-step
$\Delta t$, and coarse computational scale $h_0$ are identical to those
used in \Cref{subsec:alternating-shear-LGR}. The fine-scale resolution
is given by the dyadic refinement sequence
\begin{equation}
h = 2^{-p} h_0,
\qquad
p=1,\ldots,5,
\end{equation}
and topological processing is performed throughout the filament-formation
phase at the \mts\ times
\begin{equation}
T_m = m,
\qquad
m=1,\ldots,15.
\end{equation}
As demonstrated in \Cref{subsec:alternating-shear-LGR}, the repeated
alternating-shear deformation generates multiscale filamentary structures, with breakup occurring near material
locations identified by localized troughs in the tangent FTLE (\Cref{fig:alternating-shear-FTLE}). The
resulting sequence of repeated one-to-many satellite-formation events provides a
stringent test of the robustness and scalability of the \mts\
topology-processing algorithm.

The \mts\ interface evolution is shown in
\Cref{fig:alternating-shear-topology-sequence} for the 
microscale-resolving case $h/L=\num{1.25e-3}$, where $L=6$. Since no topological
events are detected for $t<10$, only the times
$t=10,\ldots,15$ are displayed. The left panels show the evolving
\mts\ interface representation together with the localized defect
measure and surgical region, while the right panels show the
corresponding adjacency graphs. 
As anticipated by the tangent-FTLE analysis of
\Cref{subsec:alternating-shear-LGR}, filament breakup occurs near the
filament tips identified by the FTLE troughs. The most severe breakup occurs within the $\mathcal{W}_1$
and $\mathcal{W}_3$ observation windows, in agreement with the qualitative
conclusions of \Cref{fig:alternating-shear}. 

Conversely, the $\mathcal{W}_2$ and $\mathcal{W}_4$ observation windows associated with the
remaining FTLE troughs identified in
\Cref{fig:alternating-shear-FTLE} exhibit filament-tip pinch-off events
with satellites whose characteristic size falls below the
prescribed sub-microscale resolution (parameters $\Nmin$, $\Lmin$, and $\Amin$ in \Cref{tab:notation}) 
and are therefore filtered during the Eulerian extraction phase; see the 
$\mathcal{W}_4$, $\mathcal{W}_2$, and $\mathcal{W}_4$ observation windows
in \Cref{alternating-shear_h=0.0075_t=11.0},
\Cref{alternating-shear_h=0.0075_t=13.0}, and
\Cref{alternating-shear_h=0.0075_t=14.0}, respectively. 
Nevertheless, the defect measure $\Defect$ identifies the pinch-off
location on the parent filament, and the interface surgery smoothly
reconstructs the filament tip. This behavior may be viewed as a weak
form of geometric regularization acting below the prescribed microscale
resolution, in which unresolved satellites are removed while
the larger-scale filament geometry is preserved.

The repeated generation, disappearance, and interaction of satellite
interfaces produces a rich sequence of topological events, with a
maximum of $26$ interfaces generated at $t=14$. Throughout this
evolution, the \mts\ algorithm robustly reconstructs the interface
family through numerous one-to-many \emph{splits} and
\emph{vanishings}, together with a few \emph{merges}.

\setcounter{subfigure}{0}
\begin{figure}[p]
\ContinuedFloat
\centering

\begin{subfigure}[t]{0.7\linewidth}
  \centering
  \includegraphics[width=\linewidth]
  {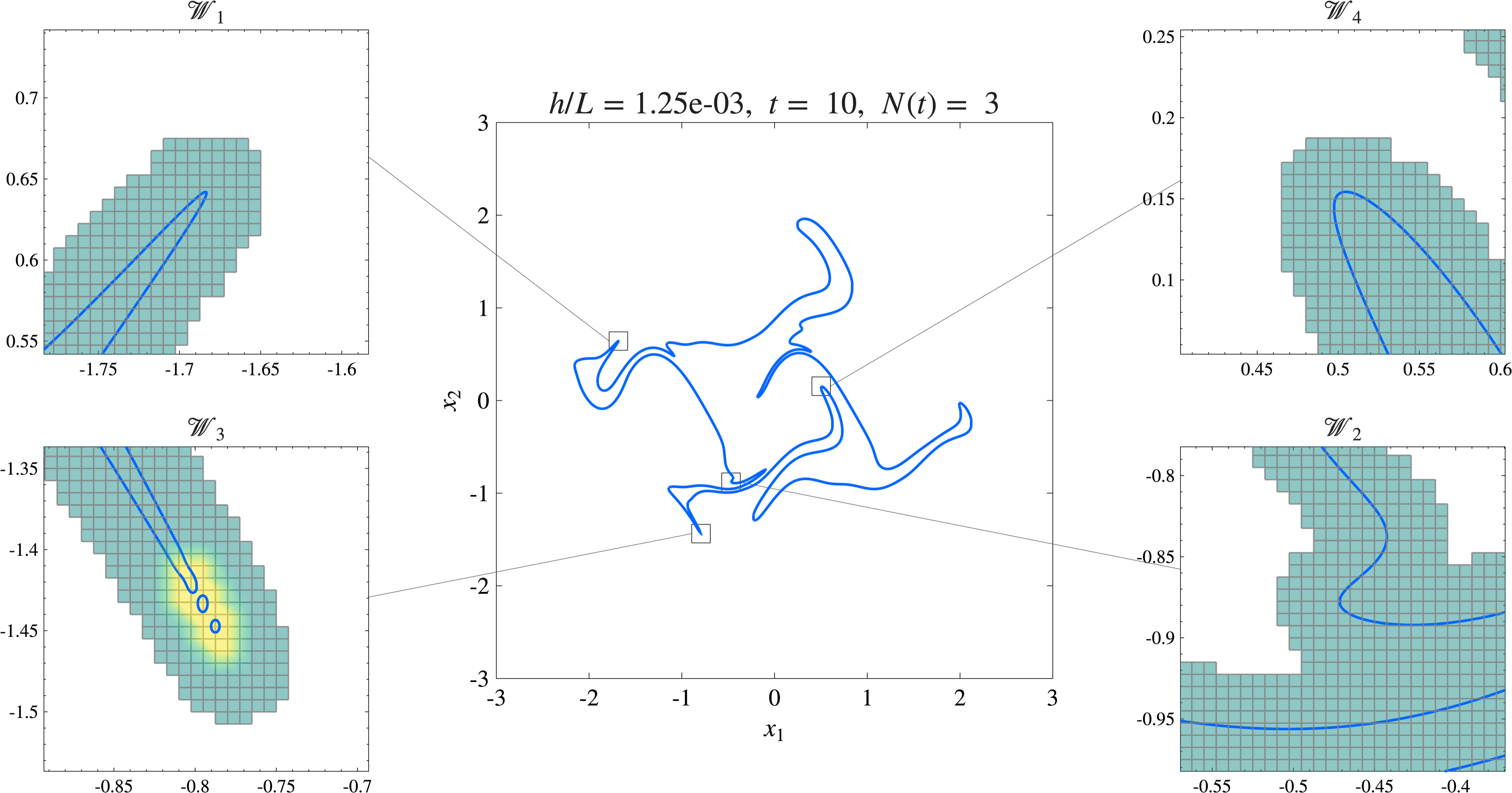}
  \caption{$t=10$}
  \label{alternating-shear_h=0.0075_t=10.0}
\end{subfigure}
\hfill
\begin{subfigure}[t]{0.28\linewidth}
  \centering
  \raisebox{0.75cm}{%
  \includegraphics[width=\linewidth]
  {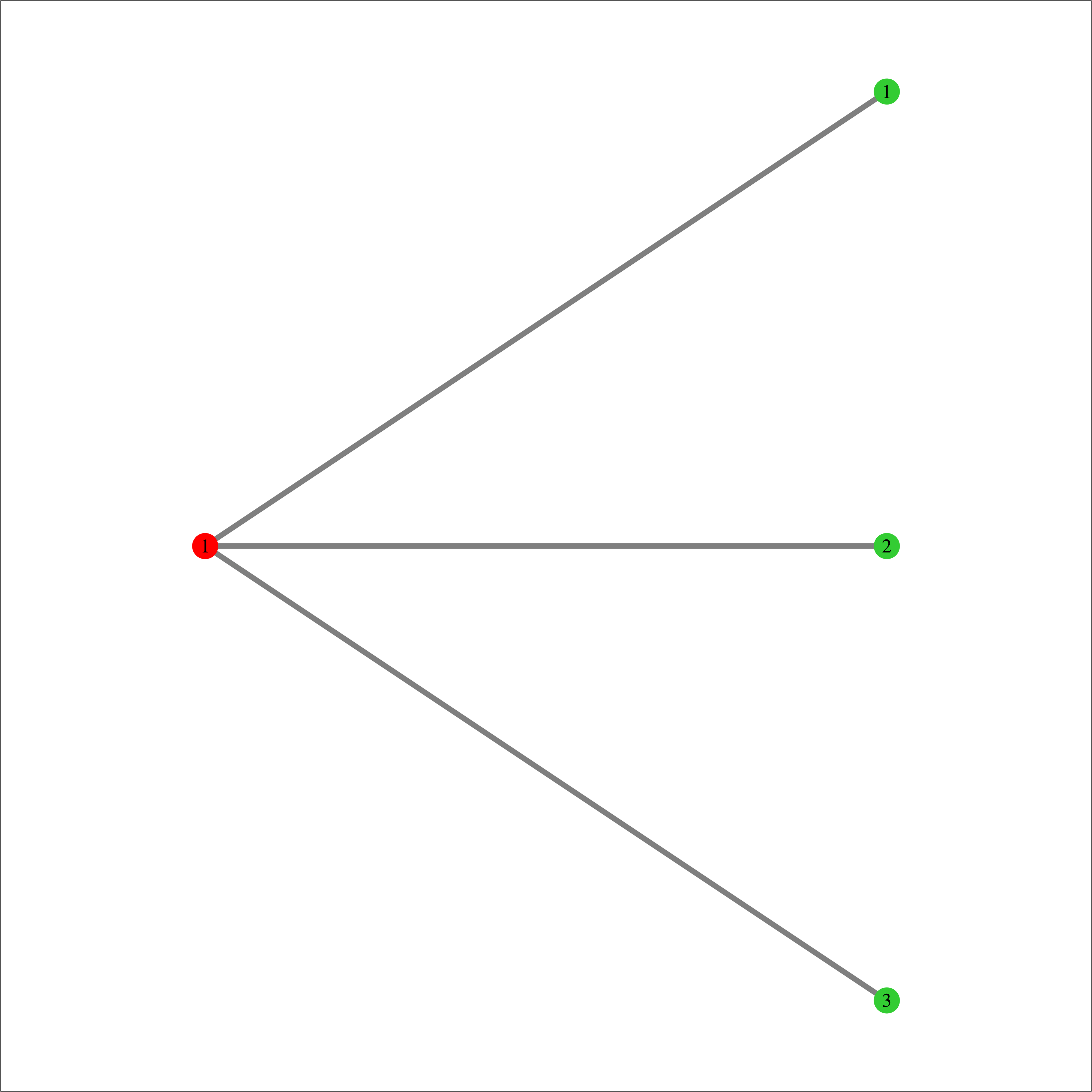}}
\end{subfigure}

\vspace{2em}

\begin{subfigure}[t]{0.7\linewidth}
  \centering
  \includegraphics[width=\linewidth]
  {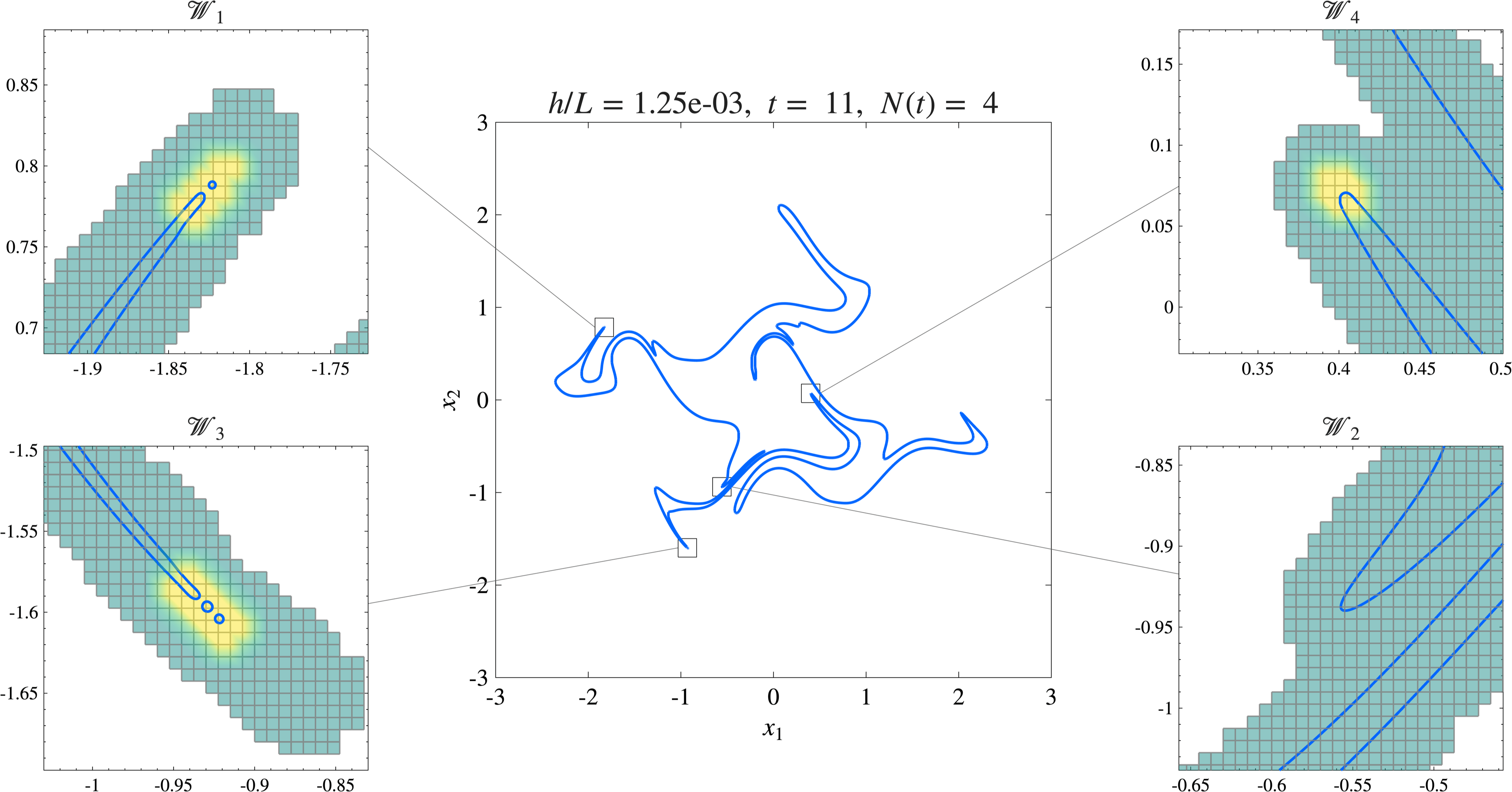}
  \caption{$t=11$}
  \label{alternating-shear_h=0.0075_t=11.0}
\end{subfigure}
\hfill
\begin{subfigure}[t]{0.28\linewidth}
  \centering
  \raisebox{0.75cm}{%
  \includegraphics[width=\linewidth]
  {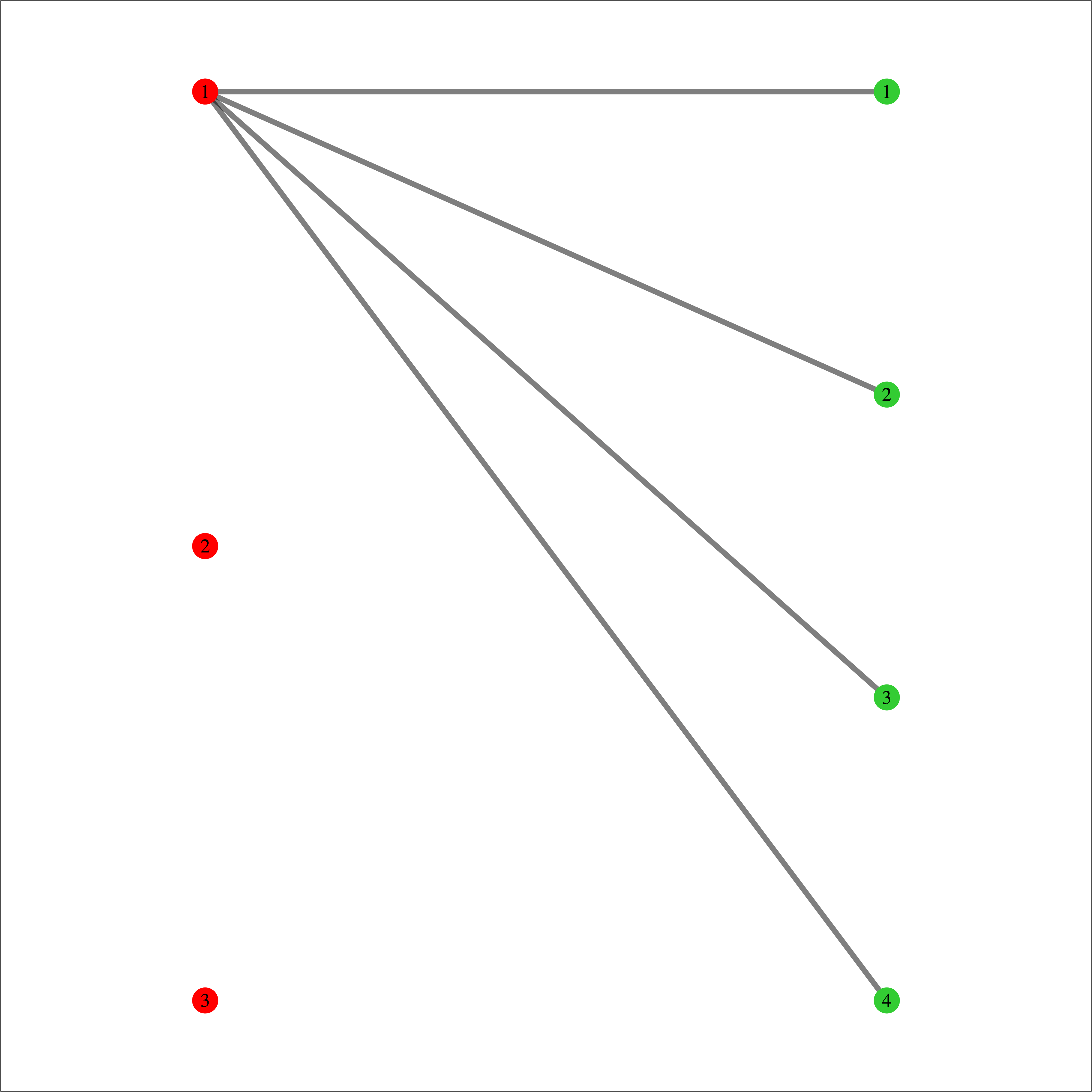}}
\end{subfigure}

\vspace{2em}

\begin{subfigure}[t]{0.7\linewidth}
  \centering
  \includegraphics[width=\linewidth]
  {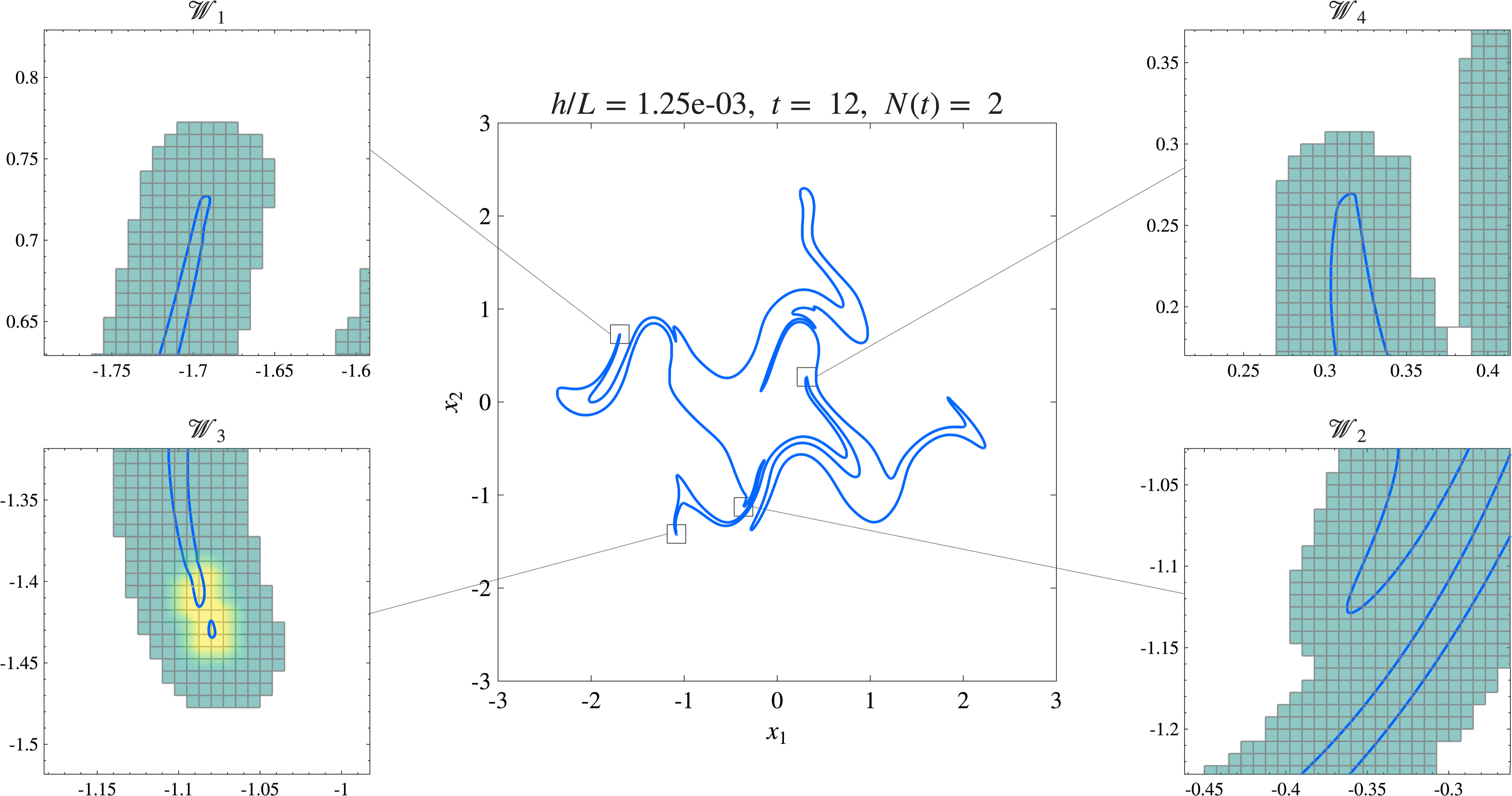}
  \caption{$t=12$}
  \label{alternating-shear_h=0.0075_t=12.0}
\end{subfigure}
\hfill
\begin{subfigure}[t]{0.28\linewidth}
  \centering
  \raisebox{0.75cm}{%
  \includegraphics[width=\linewidth]
  {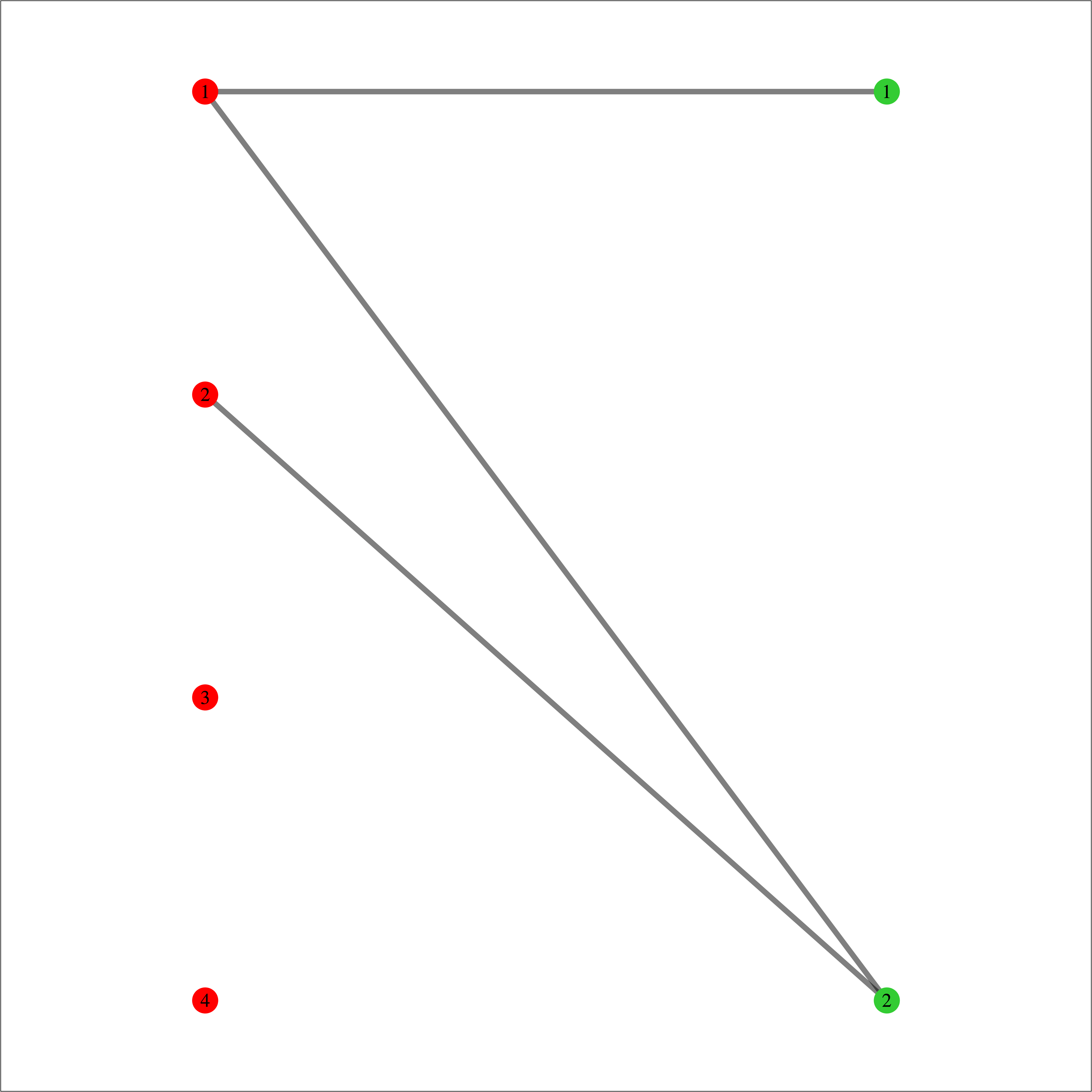}}
\end{subfigure}

\caption{
Topology processing for the alternating-shear test at the
microscale resolving resolution $h/L=\num{1.25e-3}$.
The panels show the sequence of detected topological transitions during
the filament-breakup phase $t=10,\ldots,15$.
In each row, the left panel displays the \mts\ interface
representation together with the localized defect measure used to
identify the surgical region, while the right panel displays the
corresponding adjacency graph. The evolving interface family is robustly reconstructed by the \mts\
algorithm through the resulting sequence of \emph{splits} and \emph{merges}.
A video of the complete evolution is available at \cite{Ramani2026}.
}
\label{fig:alternating-shear-topology-sequence}
\end{figure}

\begin{figure}[p]
\ContinuedFloat
\centering

\begin{subfigure}[t]{0.7\linewidth}
  \centering
  \includegraphics[width=\linewidth]
  {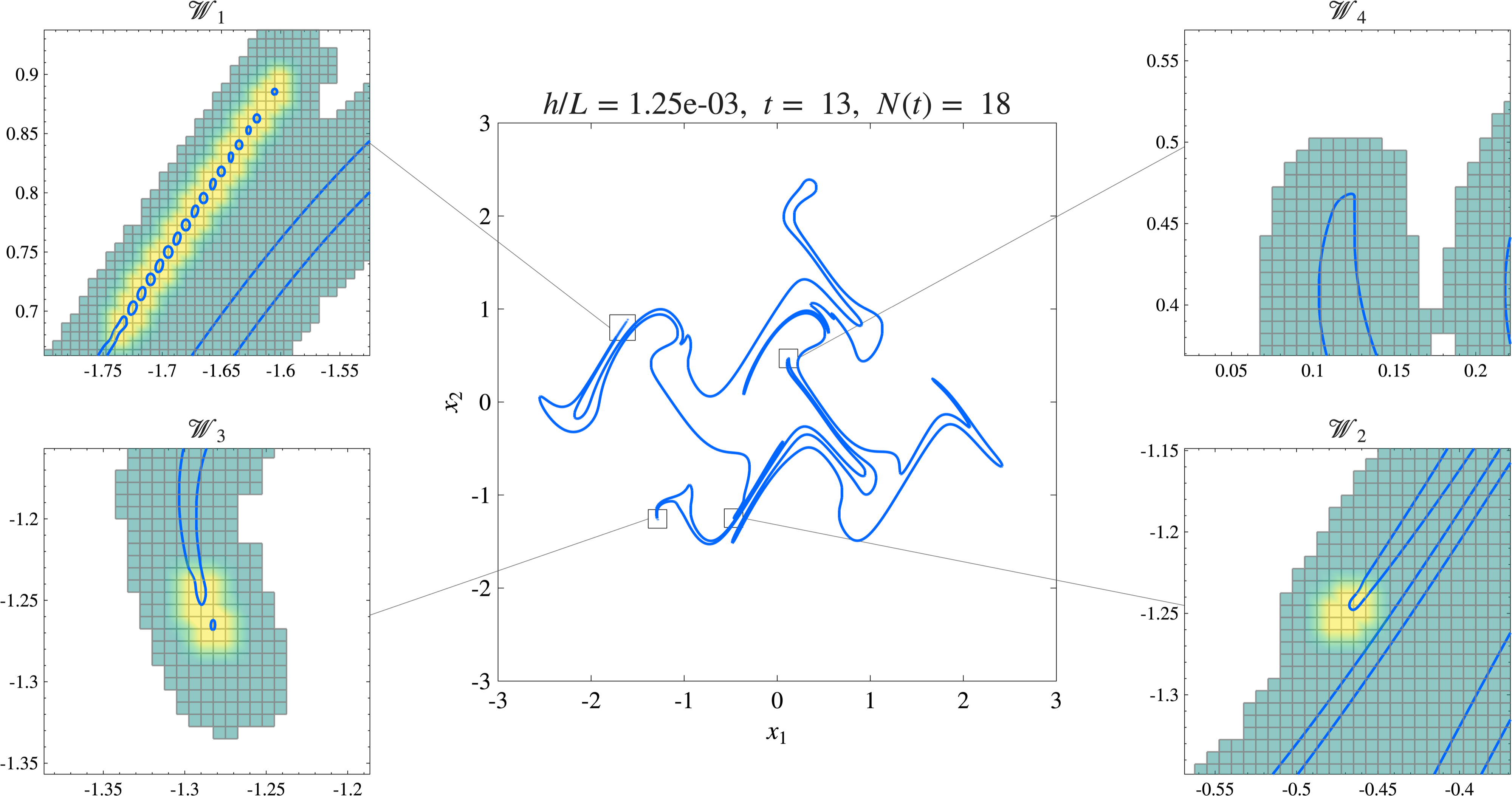}
  \caption{$t=13$}
  \label{alternating-shear_h=0.0075_t=13.0}
\end{subfigure}
\hfill
\begin{subfigure}[t]{0.28\linewidth}
  \centering
  \raisebox{0.75cm}{%
  \includegraphics[width=\linewidth]
  {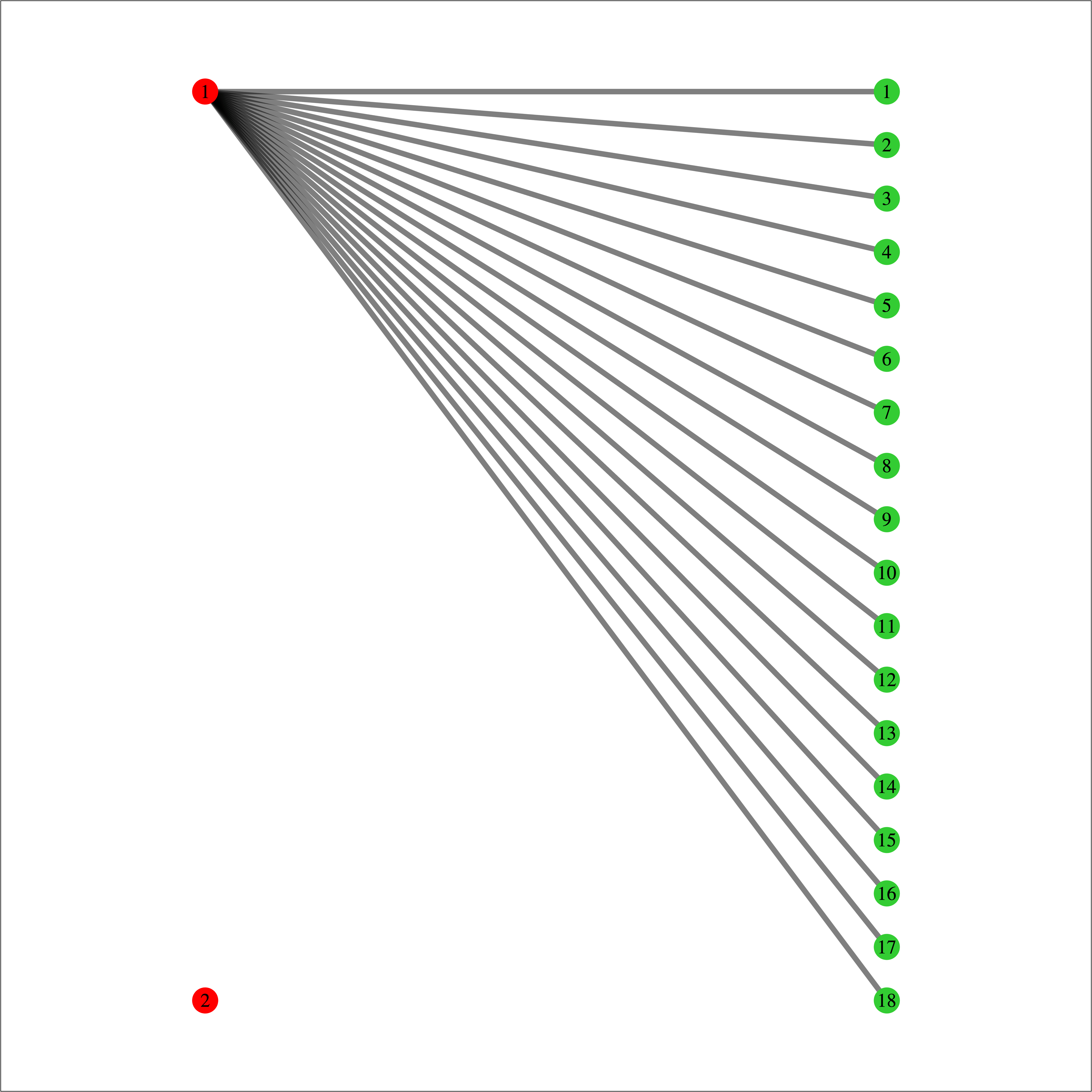}}
\end{subfigure}

\vspace{2em}

\begin{subfigure}[t]{0.7\linewidth}
  \centering
  \includegraphics[width=\linewidth]
  {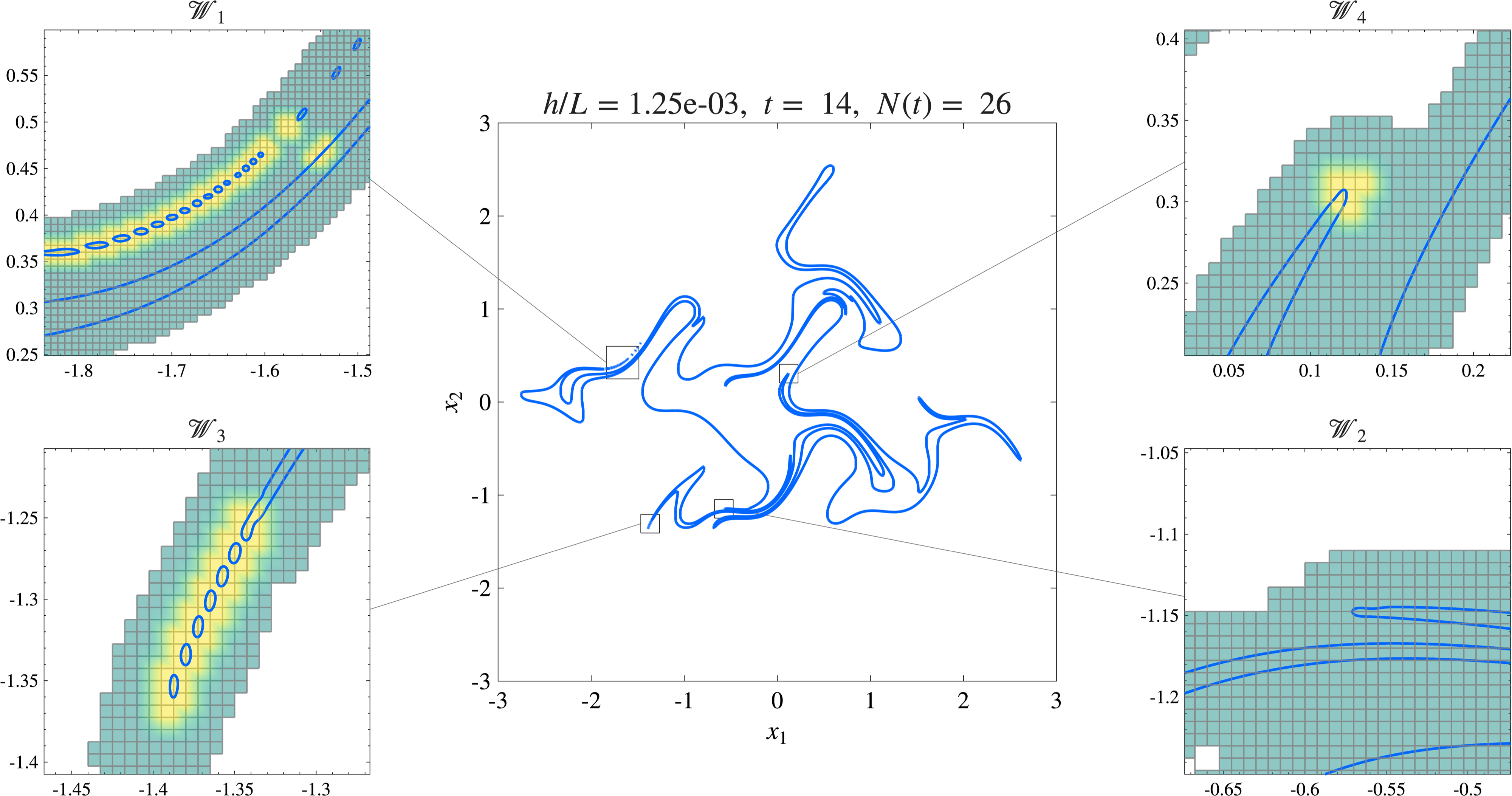}
  \caption{$t=14$}
  \label{alternating-shear_h=0.0075_t=14.0}
\end{subfigure}
\hfill
\begin{subfigure}[t]{0.28\linewidth}
  \centering
  \raisebox{0.75cm}{%
  \includegraphics[width=\linewidth]
  {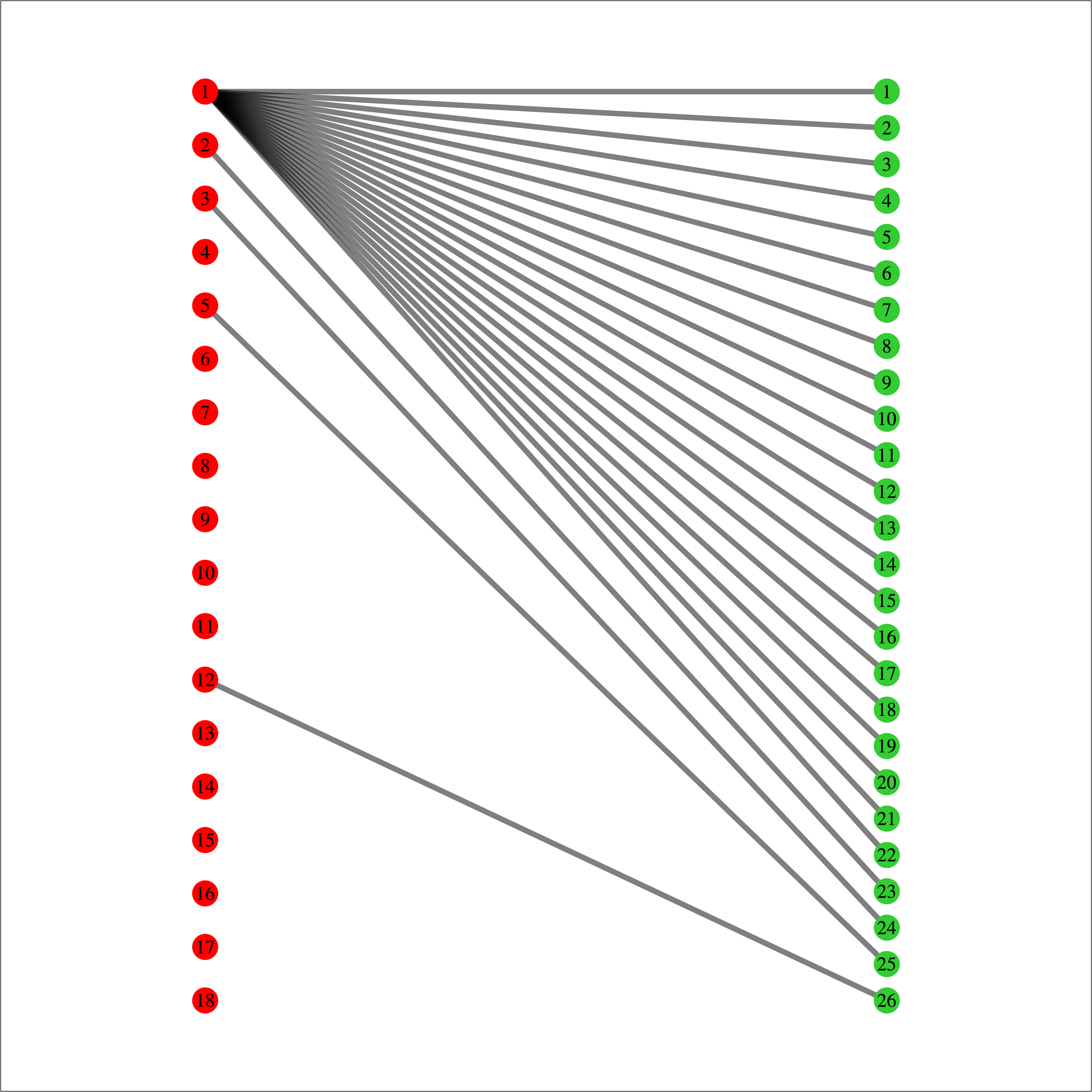}}
\end{subfigure}

\vspace{2em}

\begin{subfigure}[t]{0.7\linewidth}
  \centering
  \includegraphics[width=\linewidth]
  {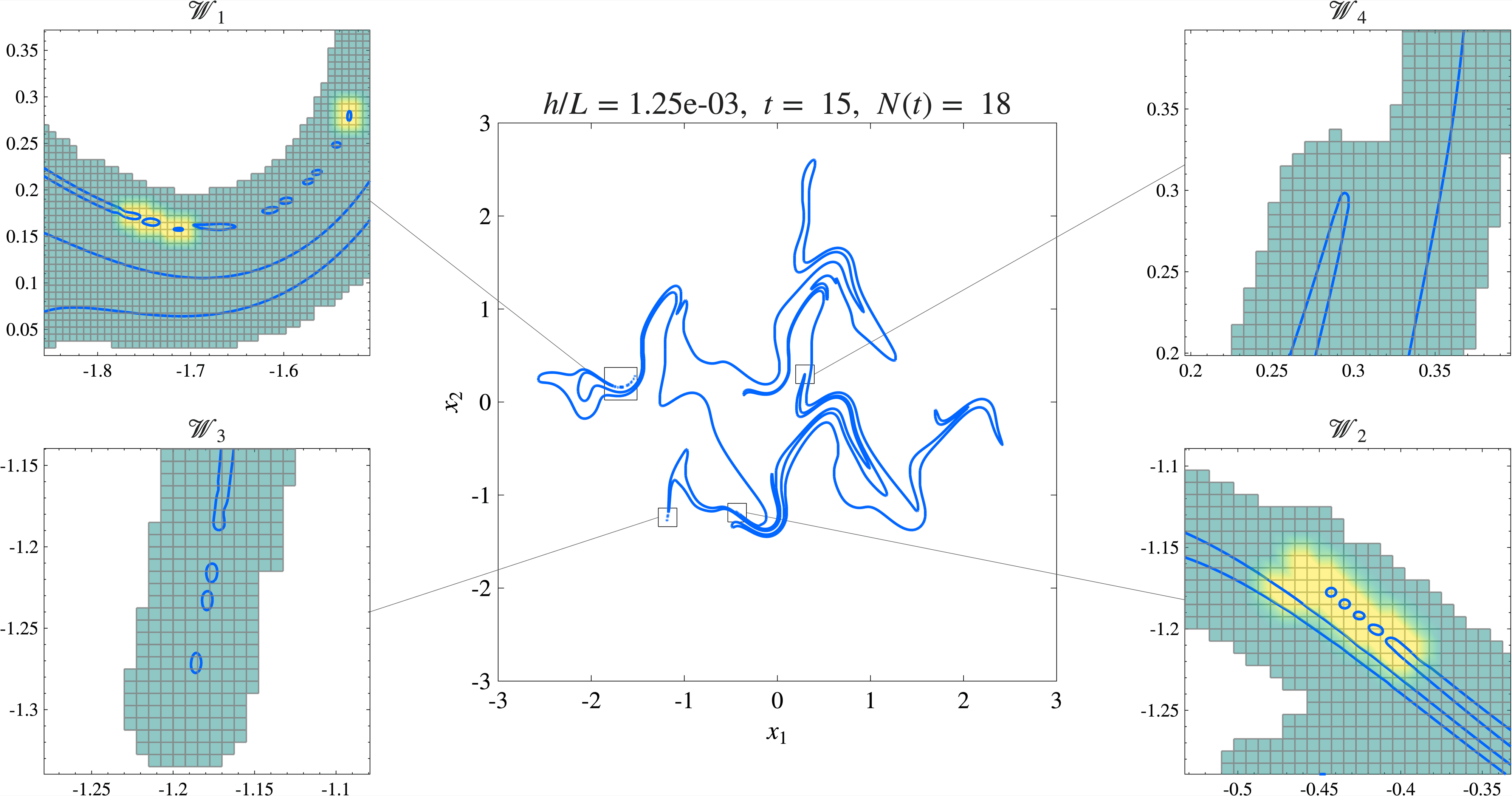}
  \caption{$t=15$}
  \label{alternating-shear_h=0.0075_t=15.0}
\end{subfigure}
\hfill
\begin{subfigure}[t]{0.28\linewidth}
  \centering
  \raisebox{0.75cm}{%
  \includegraphics[width=\linewidth]
  {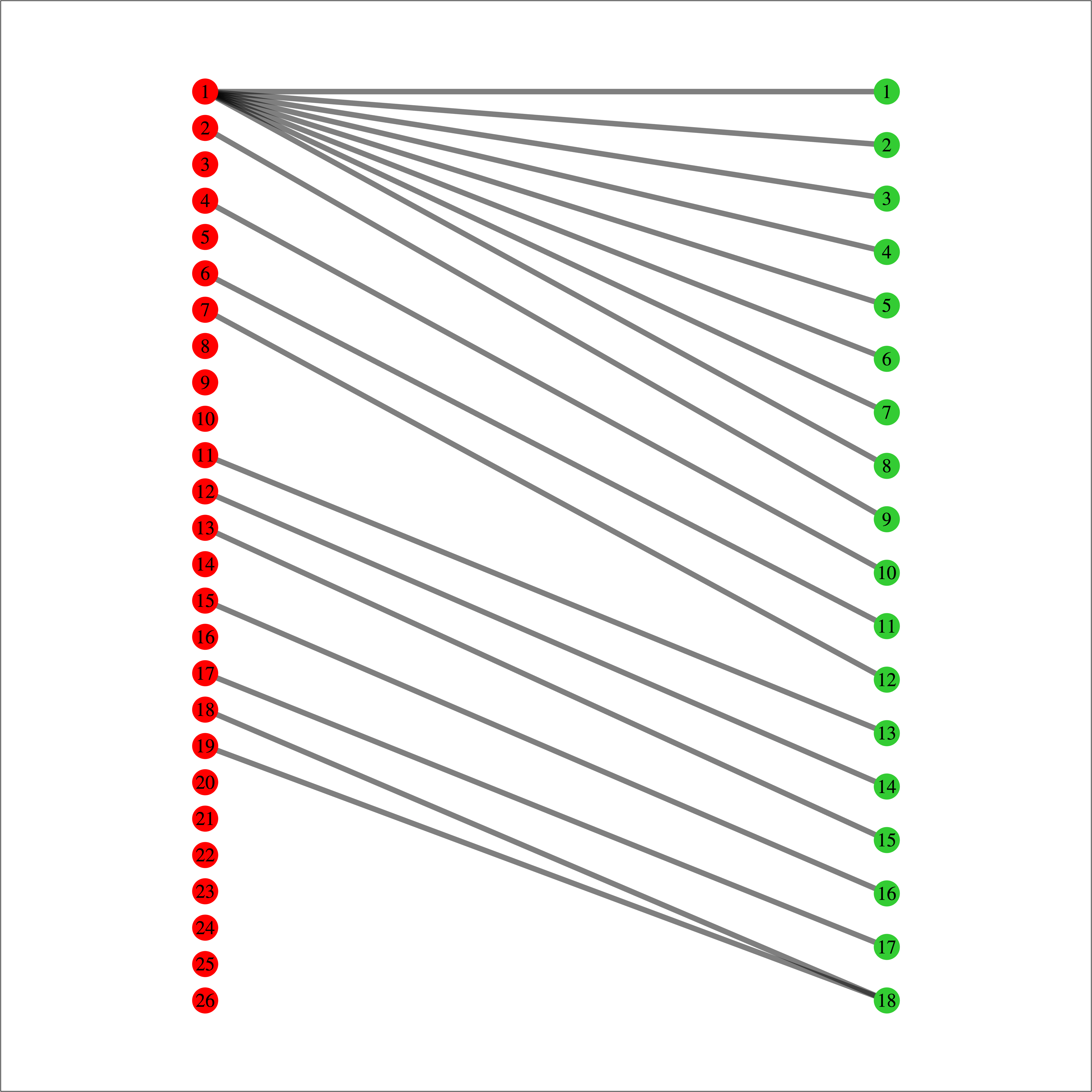}}
\end{subfigure}

\caption[]{
Continued from previous page.
}
\end{figure}

\subsubsection{The need for surgical reconstruction}
\label{subsec:need-for-surgery}

We now use the reversal-to-identity property
\eqref{shear-identity} to compare the
surgical \mts\ algorithm with the simpler
direct-replacement variant \eqref{direct-replacement}. The final
interfaces at $\tmax=30$ computed using the two approaches are shown in
\Cref{fig:alternating-shear-comparison} for the microscale resolving
case $h/L=\num{1.25e-3}$. In the surgical solution
(\Cref{fig:alternating-shear-comparison_surgical}), the satellite
interfaces generated during the filament-breakup phase are mapped back
to the material locations associated with the localized tangent-FTLE
troughs, and the interface is close to indistinguishable from the
reference solution away from these regions. By contrast, the direct-replacement solution
(\Cref{fig:alternating-shear-comparison_non-surgical}) exhibits
larger geometric errors, reflecting the repeated
discarding of accurately tracked Lagrangian geometry through global
Eulerian reconstruction. The final column of \Cref{tab:alternating-shear-runtime} confirms quantitatively
that the direct-replacement variant produces significantly larger
errors at the same resolution.

\begin{figure}[ht]
\centering
\begin{subfigure}[t]{0.67\linewidth}
  \centering
  \includegraphics[width=\linewidth]{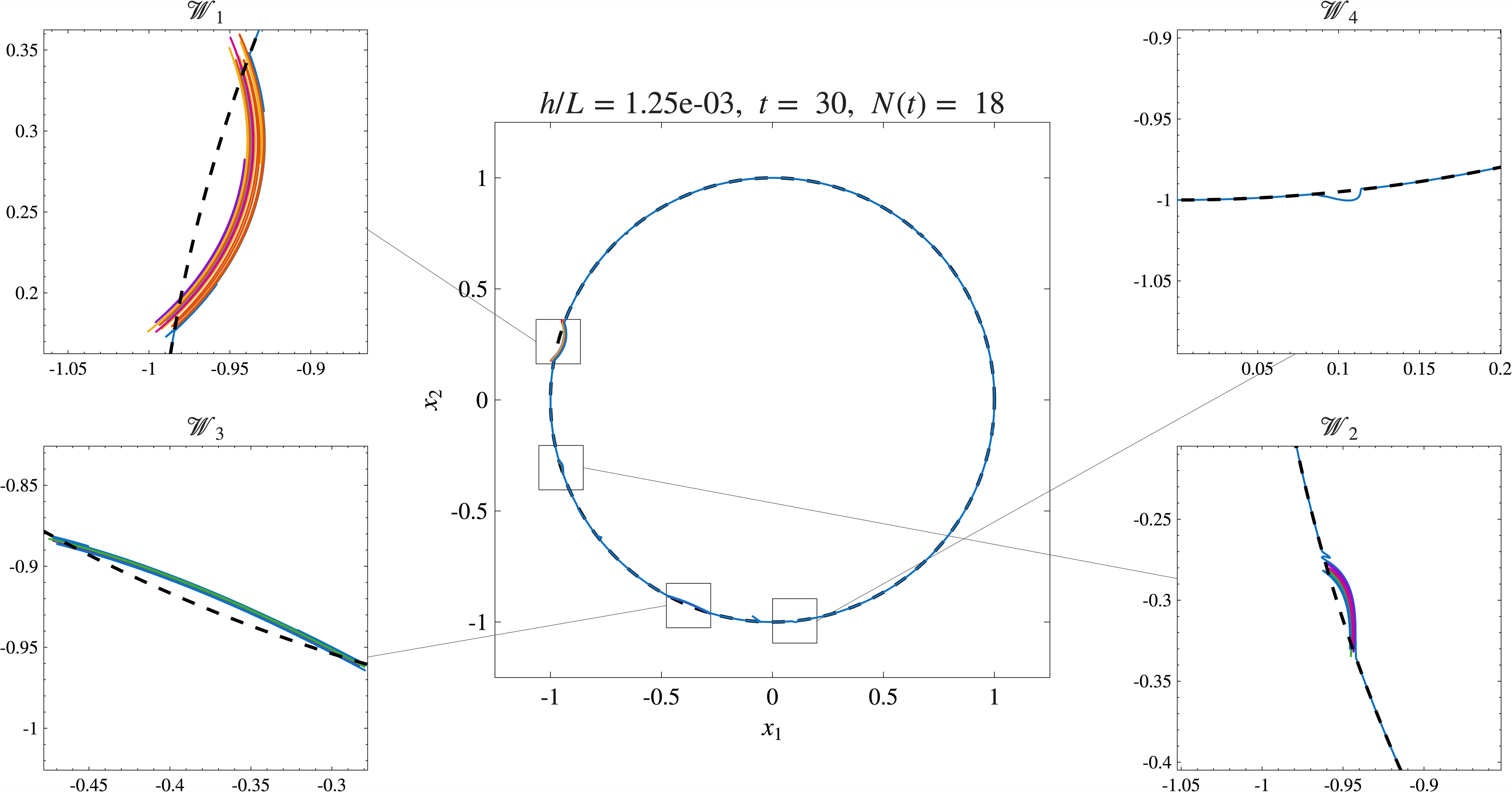}
  \caption{\mts\ solution}
  \label{fig:alternating-shear-comparison_surgical}
\end{subfigure}
\hspace{1em}
\begin{subfigure}[t]{0.285\linewidth}
  \centering
  \raisebox{0.26cm}{%
  \includegraphics[width=\linewidth]{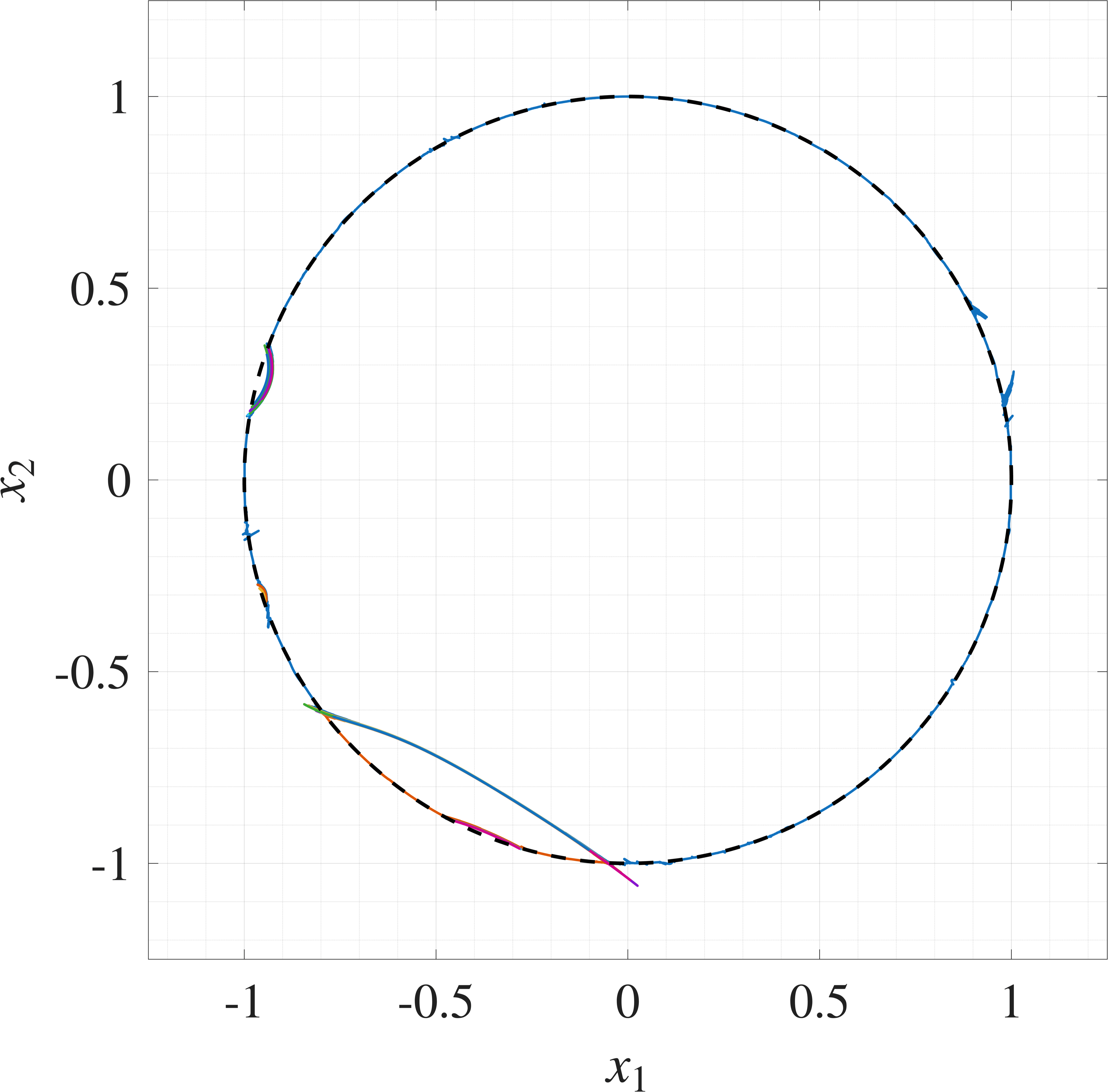}}
  \caption{direct-replacement solution}
  \label{fig:alternating-shear-comparison_non-surgical}
\end{subfigure}
\caption{
Comparison of the surgical \mts\ algorithm and the direct-replacement
variant \eqref{direct-replacement} for the nonlinear alternating-shear
test with microscale resolution $h/L=\num{1.25e-3}$ at the final time $\tmax = 30$.
The left panel displays the surgical \mts\ solution, while the right panel displays the 
direct-replacement solution. Repeated global Eulerian reconstruction 
produces nontrivial geometric errors in the latter.
}
\label{fig:alternating-shear-comparison}
\end{figure}

\subsubsection{Convergence and scalability}

The qualitative convergence of the \mts\ algorithm is illustrated in
\Cref{fig:alternating-shear-convergence}, which compares the final
interfaces at $\tmax=30$ for the dyadic refinement sequence
$h=2^{-p}h_0$. As $h$ is refined, the recovered
interface qualitatively converges toward the exact solution. At the coarser resolutions,
the accumulated effect of unresolved satellite disappearances produces
visible mass loss, and portions of the reconstructed interface are no
longer contained within the initial configuration. 
As the resolution is refined, the \mts\ solution becomes nearly
indistinguishable from the reference solution. Topological transitions
continue to occur, but they become progressively more localized and are
confined to sub-microscale structures.

\begin{figure}[ht]
\centering
\begin{subfigure}[t]{0.24\linewidth}
  \centering
  \includegraphics[width=0.95\linewidth]{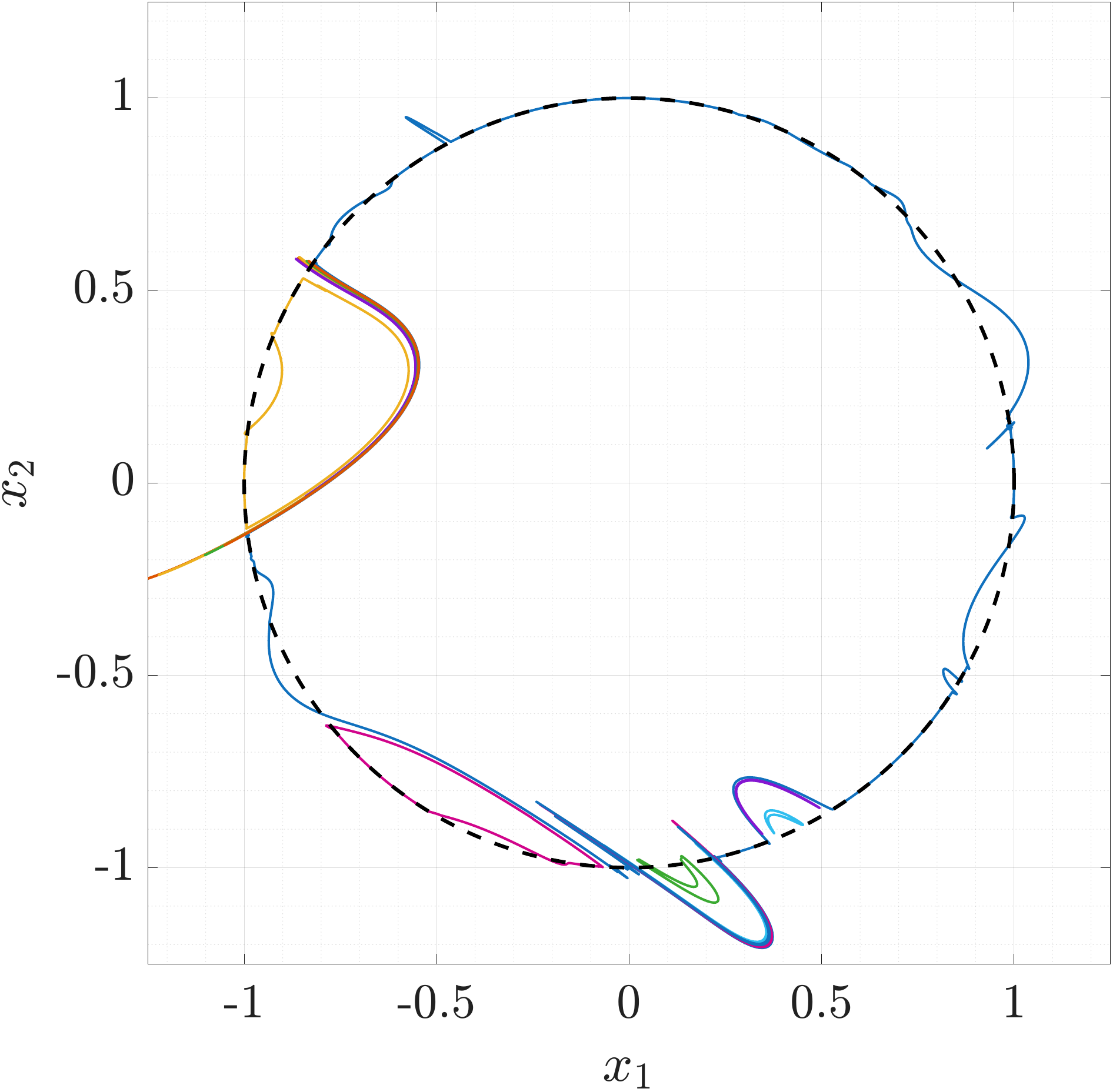}
  \caption{$h/L = \num{5.e-3}$}
  \label{fig:alternating-shear-h=0.03_t=30.0}
\end{subfigure}
\begin{subfigure}[t]{0.24\linewidth}
  \centering
  \includegraphics[width=0.95\linewidth]{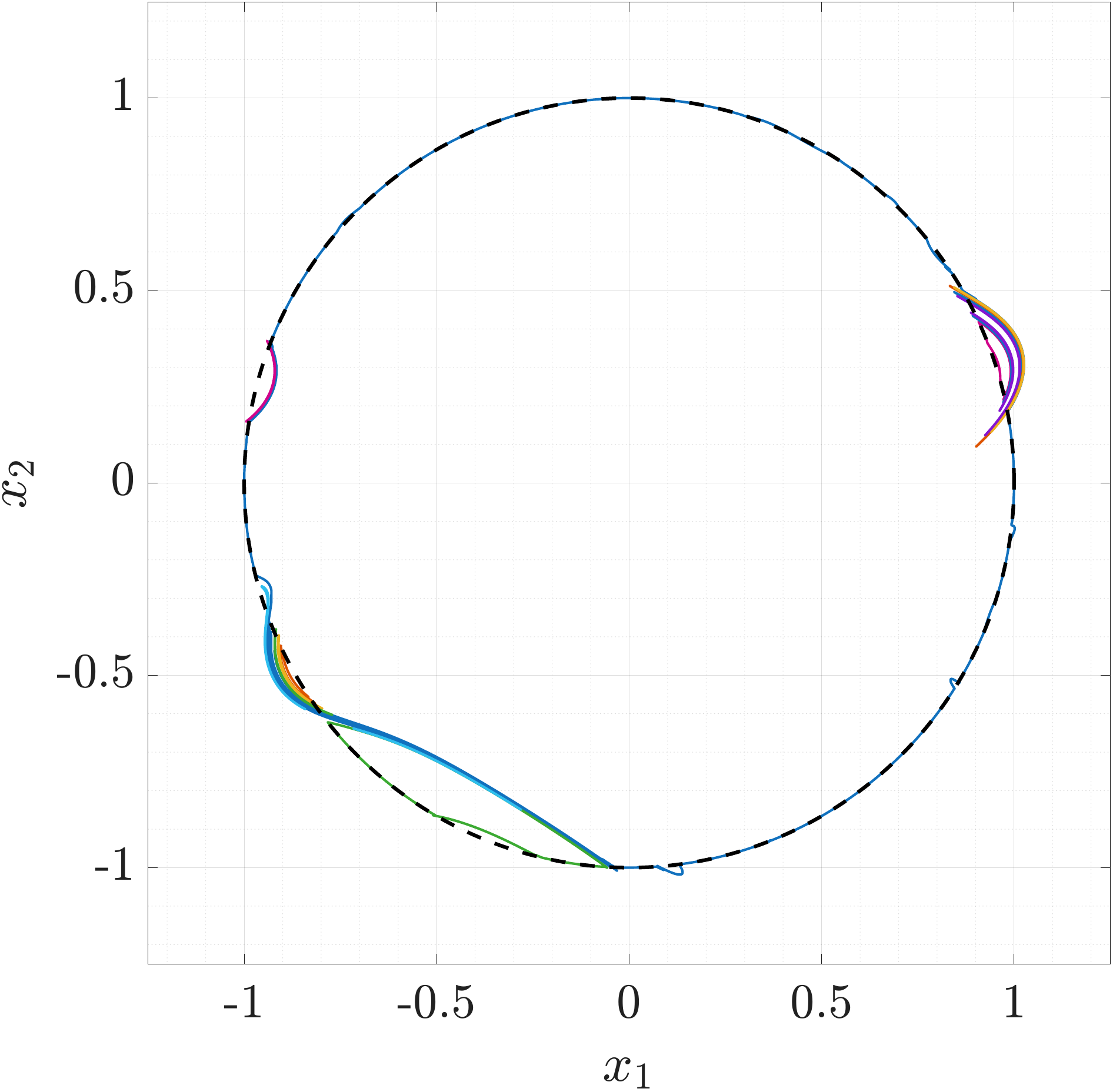}
  \caption{$h/L = \num{2.5e-3}$}
  \label{fig:alternating-shear-h=0.015_t=30.0}
\end{subfigure}
\begin{subfigure}[t]{0.24\linewidth}
  \centering
  \includegraphics[width=0.95\linewidth]{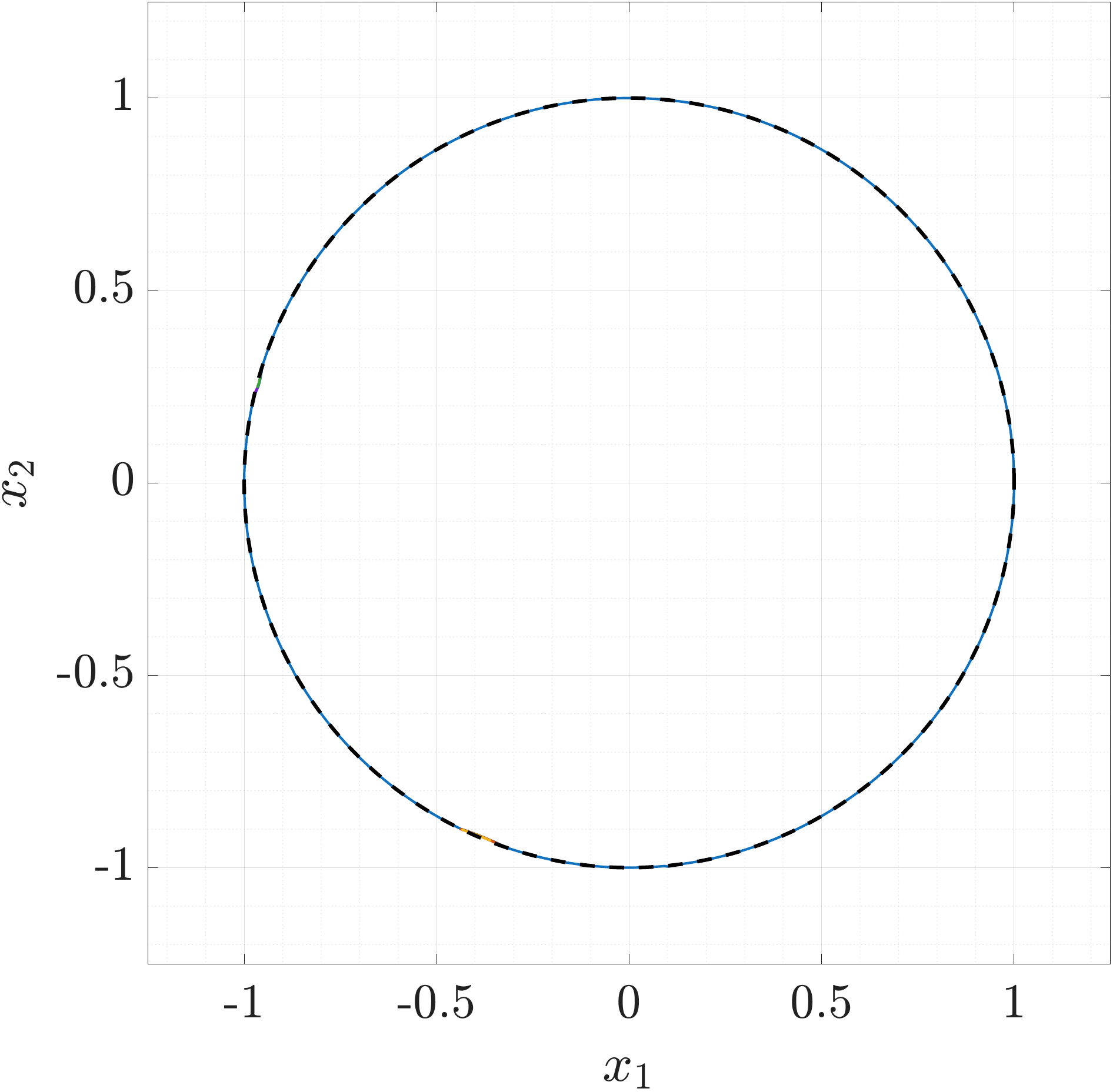}
  \caption{$h/L = \num{6.25e-4}$}
  \label{fig:alternating-shear-h=0.00375_t=30.0}
\end{subfigure}
\begin{subfigure}[t]{0.24\linewidth}
  \centering
  \includegraphics[width=0.95\linewidth]{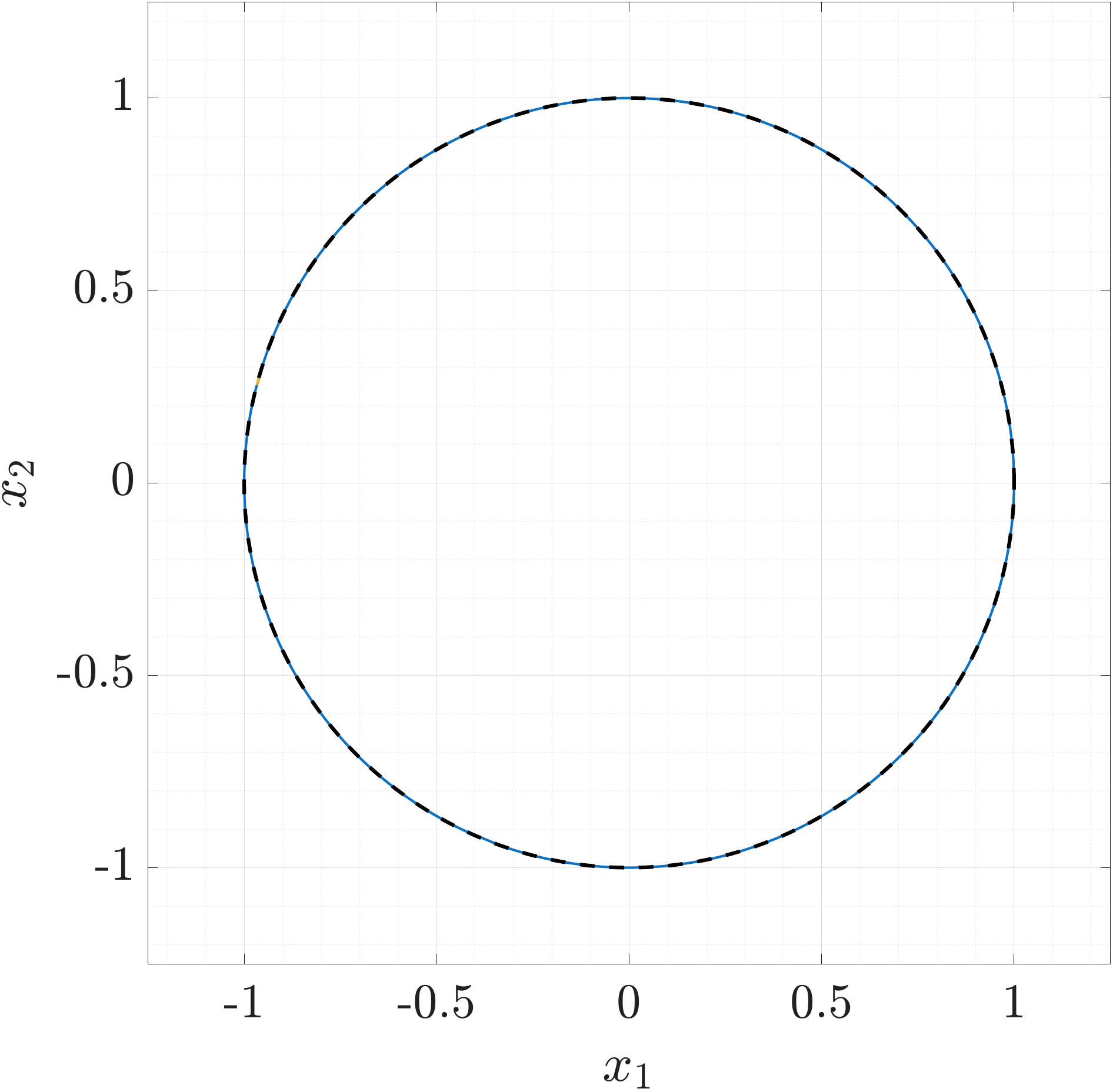}
  \caption{$h/L = \num{3.125e-4}$}
  \label{fig:alternating-shear-h=0.001875_t=30.0}
\end{subfigure}
\caption{
Qualitative convergence of the \mts\ algorithm for the nonlinear
alternating-shear test.
The panels compare the final interfaces at $\tmax=30$ for the dyadic
refinement sequence $h=2^{-p}h_0$. The reference solution is displayed in black. 
}
\label{fig:alternating-shear-convergence}
\end{figure}

Quantitative accuracy and runtime statistics are reported in
\Cref{tab:alternating-shear-runtime}. We first consider the numerical reversal error
$|A_K-A_0|$. Although the error decreases rapidly under refinement, the convergence
is not monotone. In particular, the error increases between the two
coarsest resolutions. This behavior reflects the competing influence of
two distinct error mechanisms. On the one hand, unresolved satellite
disappearances introduce mass loss and therefore decrease the enclosed
area. On the other hand, geometric reconstruction errors may locally
expand portions of the interface and increase the enclosed area.
At the coarsest resolutions these effects partially cancel, making the
area error alone an imperfect measure of geometric accuracy. 
A more informative diagnostic is provided by the symmetric-difference error \cite{JeSuSh2015} defined by
\begin{equation}
E_{\mathrm{sym}}
=
\left|
\Omega_K \,\triangle\, \Omega_0
\right|,
\qquad 
A \, \triangle \, B
=
(A\setminus B)\cup(B\setminus A), 
\end{equation}
where $\Omega_K$ and $\Omega_0$ denote the regions enclosed by the
final and initial interfaces, respectively, and
$A \,\triangle \, B$ denotes the symmetric difference of two sets.
Unlike the area-based reversal error, $E_{\mathrm{sym}}$ measures the total geometric
discrepancy between the two enclosed regions and therefore does not
permit cancellation between mass-loss and over-expansion errors.
Consistent with the qualitative behavior observed in
\Cref{fig:alternating-shear-convergence}, the symmetric-difference error
decreases more uniformly under refinement, with an observed convergence
rate of approximately three.

\begin{table}[ht]
\centering
\scriptsize
\caption{
Accuracy and runtime statistics for solutions to the nonlinear
alternating-shear test computed using the Lagrangian tracking + \mts\
topology-processing algorithm.
The table reports the numerical reversal error $|A_K-A_0|$, the symmetric-difference error
$E_{\mathrm{sym}}$, and wall-clock runtimes for the Lagrangian tracking
component, the \mts\ topology-processing component, and the overall
computation. The final column reports the corresponding
direct-replacement solution computed at 
$h/L=\num{1.25e-3}$, obtained by replacing the Lagrangian interface
family with the extracted Eulerian interface family at each \mts\ time.
}
\label{tab:alternating-shear-runtime}
\begin{tabular}{l@{\hspace{2em}}ccccc|c}
\toprule
$h/L$
& \num{5e-3}
& \num{2.5e-3}
& \num{1.25e-3}
& \num{6.25e-4}
& \num{3.125e-4}
& \num{1.25e-3} 
\\
& & & & & & \emph{direct-replacement} 
\\
\midrule

$|A_K-A_0|$
& \num{6.92e-3}
& \num{1.71e-2}
& \num{4.32e-3}
& \num{4.76e-4}
& \num{7.52e-5}
& \num{6.16e-3}
\\
Order
& --
& -1.30
& 1.98
& 3.18
& 2.66
\\
\midrule

$E_{\mathrm{sym}}$
& \num{1.63e-1}
& \num{4.77e-2}
& \num{5.34e-3}
& \num{5.92e-4}
& \num{7.92e-5}
& \num{9.95e-3}
\\
Order
& --
& 1.77
& 3.16
& 3.17
& 2.90
\\
\midrule

\LGR\ runtime (s)
& \num{4.01e0}
& \num{2.04e1}
& \num{7.34e1}
& \num{2.91e2}
& \num{1.11e3}
& \num{6.85e1}
\\
Order
& --
& 2.35
& 1.85
& 1.99
& 1.93
\\
\midrule

\mts\ runtime (s)
& \num{1.58e0}
& \num{6.60e0}
& \num{2.51e1}
& \num{1.00e2}
& \num{3.73e2}
& \num{5.93e0}
\\
Order
& --
& 2.06
& 1.93
& 2.00
& 1.90
\\
\midrule

Total runtime (s)
& \num{5.59e0}
& \num{2.70e1}
& \num{9.85e1}
& \num{3.92e2}
& \num{1.48e3}
& \num{7.45e1}
\\
Order
& --
& 2.27
& 1.87
& 1.99
& 1.92
\\

\bottomrule
\end{tabular}
\end{table}

The runtime statistics in \Cref{tab:alternating-shear-runtime} show that
both the Lagrangian tracking and \mts\ topology-processing components
scale approximately as $\mathcal O(h^{-2})$. Despite the rich sequence
of \emph{splits}, \emph{vanishings}, and \emph{merges} encountered during the filament-breakup
phase, the topology-processing stage accounts for only about $25\%$ of
the overall computational cost.

\subsubsection{Statistical description of filament breakup}

There are several motivations for studying microscale filamentation 
dynamics from a statistical perspective.

\begin{enumerate}
\item
\emph{Mathematically}, many sharp-interface evolution 
problems become ill-posed through the generation of progressively finer scales 
\cite{Delort1991,DeSz2009,Szekelyhidi2011}. 
This naturally motivates the consideration of statistical solutions, in which the interface 
evolution is described in terms of an ensemble of admissible realizations rather 
than a single deterministic trajectory \cite{DiMa1987,LaMiPa2021}.

\item
From a \emph{physical} perspective, the formation of arbitrarily fine
filamentary structures signals the breakdown of the idealized
sharp-interface description, which is subsequently resolved by the
\mts\ algorithm as filament breakup. 
Similar fragmentation processes arise in ligament
breakup and spray atomization, where the resulting droplet population
is commonly characterized through its size distribution
\cite{Villermaux2007}. An additional physical interpretation comes from 
chaotic-advection studies in which tracer structures below the Batchelor 
scale are removed by diffusion and the resulting tracer microstructure 
is characterized statistically \cite{Pierrehumbert1991,Pierrehumbert1994}. 
Here, we employ an analogous strategy,
replacing the diffusive regularization by \mts\ and evaluating the
persistence of filament-breakup events through ensemble statistics.

\item
The \emph{computational efficiency} of the \mts\ algorithm makes it
feasible to generate large ensembles of microscale-resolving interface
simulations.

\end{enumerate}

To this end, we construct an ensemble of $50$ realizations of the
nonlinear alternating-shear test at the microscale resolution
$h/L=\num{1.25e-3}$. The initial interface is fixed as the unit circle,
while the alternating-shear map is perturbed through
pseudo-random variations of the phase parameters.\footnote{For each ensemble index $k$ and each map index $m$,
the phase shifts \eqref{phase-shifts} are perturbed according to
$\phi_m\mapsto\phi_m+\varepsilon a_{k,m}$ and
$\psi_m\mapsto\psi_m+\varepsilon b_{k,m}$, where
$\varepsilon=0.2$ and
$a_{k,m},b_{k,m}\in[-1,1]$ are pseudo-random phase perturbations.}
We construct two sets of ensembles: a Lagrangian ensemble $\Sigmam$\LGR$_h$ and an 
\mts\ ensemble $\Sigmam$\mts$_h$ generated by the Lagrangian (\LGR) tracking and 
\mts\ algorithms, respectively. 
The complete 50-member \mts\ ensemble required approximately 
24 minutes of wall-clock time, of which approximately 20\% was spent on topology processing.

The first twenty members of each ensemble at the time of maximal deformation, $t=15$, are shown in
\Cref{fig:alternating-shear_ensemble}. 
The large-scale geometry of the interface remains
consistent across the ensemble, and the principal effect of the
ensemble averaging is a coarse-graining of the microscale
filamentary structures. In particular, the ensemble members occupy a
finite envelope whose width may be interpreted as a statistical
``thickening'' of the underlying sharp interface. 
Within the observation windows 
$\mathcal{W}_1$, $\mathcal{W}_2$, and $\mathcal{W}_4$, the 
uncertainty envelope exceeds the characteristic filament thickness, causing the 
corresponding filament-breakup processes to be effectively suppressed 
by the coarse-graining. In the strongest filamenting window 
$\mathcal{W}_3$, however, the filament-breakup process survives the 
uncertainty-induced coarse-graining, and remains \emph{statistically persistent}.

\begin{figure}[ht]
\centering
\begin{subfigure}[t]{0.6\linewidth}
  \centering
  \includegraphics[width=\linewidth]{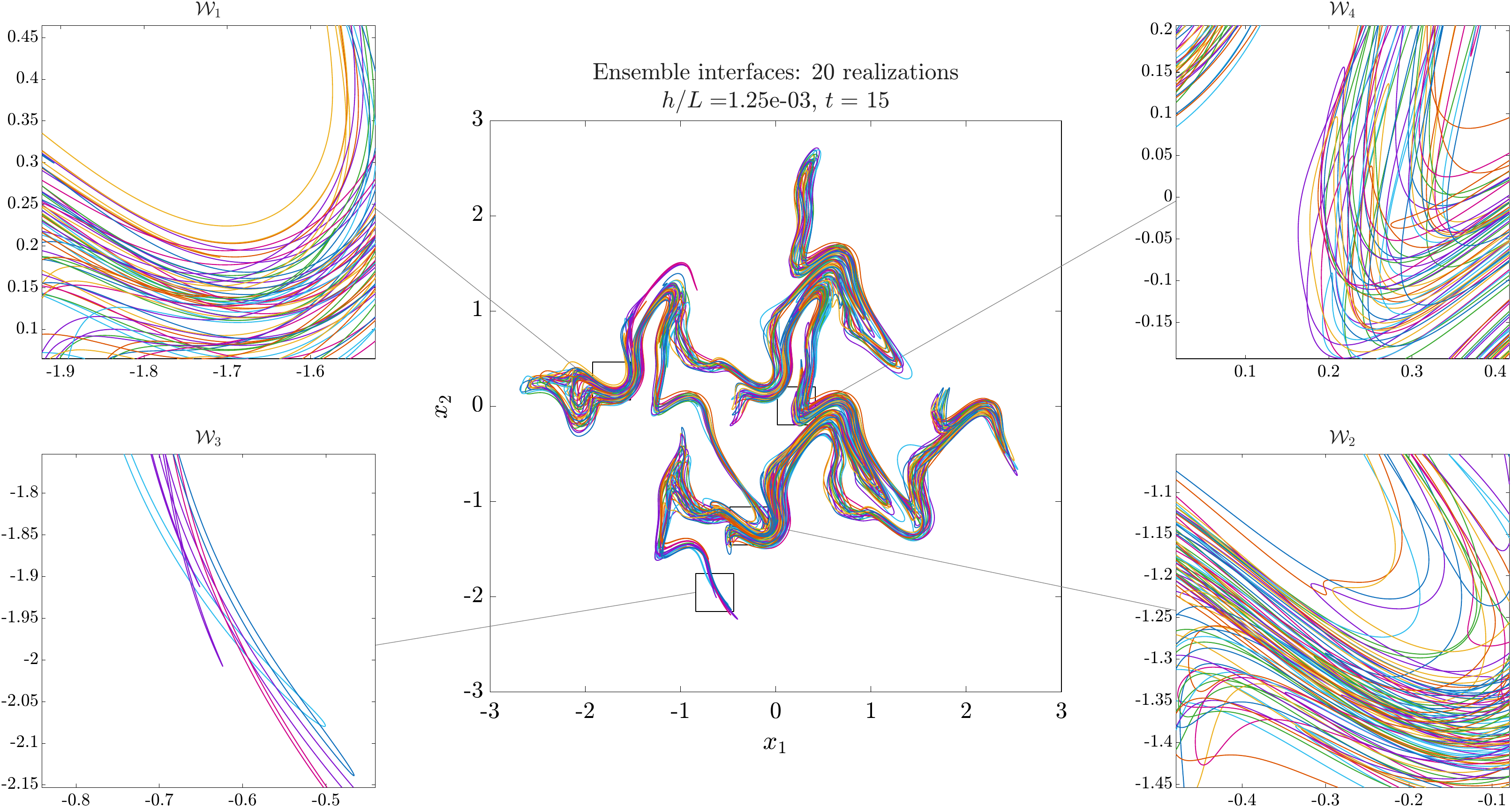}
  \caption{$\Sigmam$\LGR$_h$ ensemble}\vspace{1em}
  \label{fig:alternating-shear_ensemble_LGR}
\end{subfigure}
\begin{subfigure}[t]{0.6\linewidth}
  \centering
  \includegraphics[width=\linewidth]{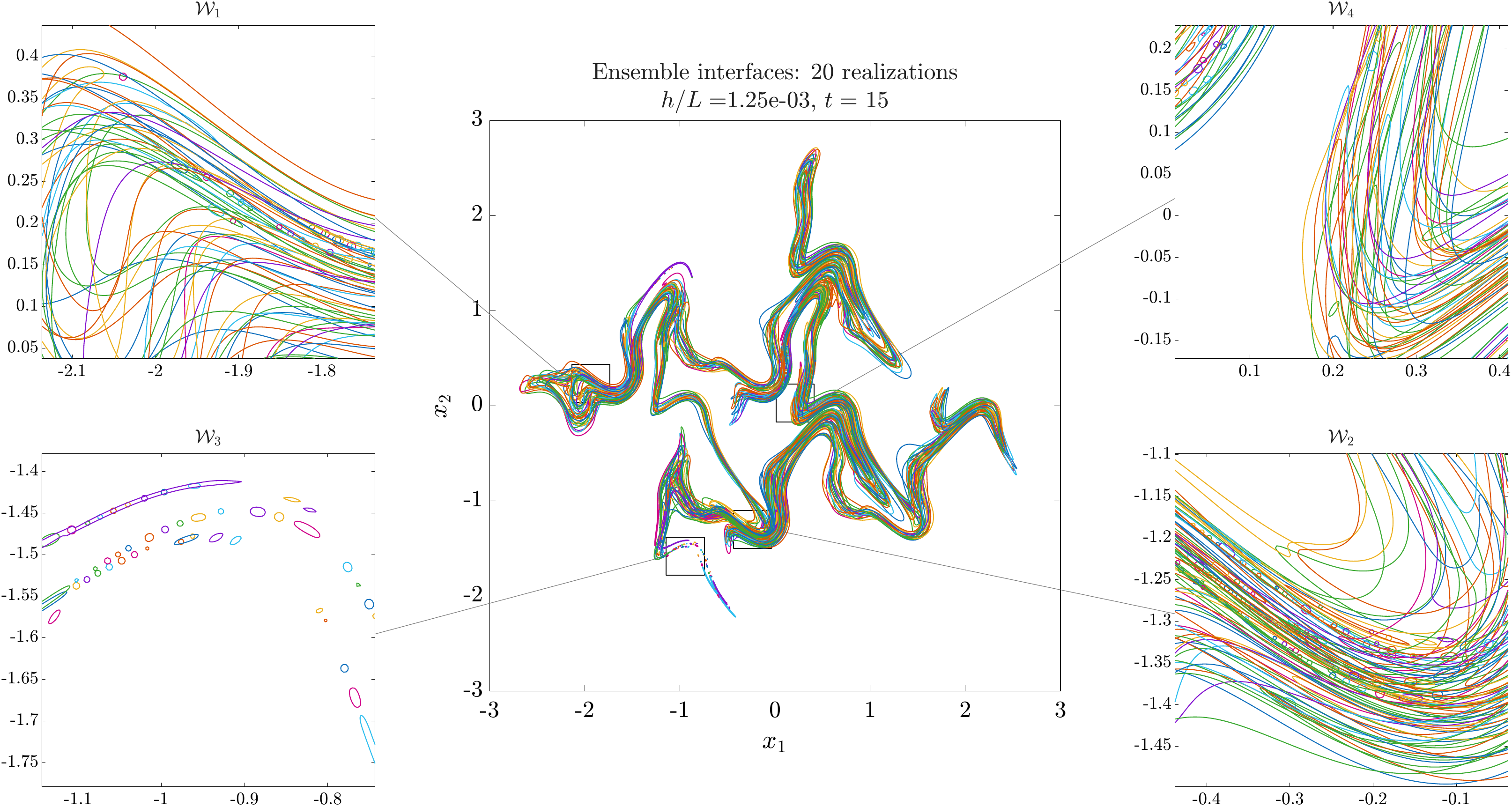}
  \caption{$\Sigmam$\mts$_h$ ensemble}
  \label{fig:alternating-shear_ensemble_MTS}
\end{subfigure}
\caption{
First twenty realizations of the pseudo-random alternating-shear ensembles at the 
time of maximal deformation, $t=15$, computed with microscale resolution $h/L=\num{1.25e-3}$.
A video of the simulation is available at \cite{Ramani2026}. 
The large-scale interface geometry remains consistent across the ensemble, while the
ensemble spread coarse-grains the filamentary structures into a statistically thickened interface.
Within the observation windows $\mathcal{W}_1$, $\mathcal{W}_2$, and
$\mathcal{W}_4$, the uncertainty envelope exceeds the characteristic
filament thickness, causing the corresponding filament-breakup
processes to be suppressed. By contrast, the
breakup process within the strongest filamenting window
$\mathcal{W}_3$ survives the uncertainty-induced coarse-graining and
remains \emph{statistically persistent}.
}
\label{fig:alternating-shear_ensemble}
\end{figure}

To quantify the breakup process, we next consider statistics of the
interfaces generated within the ensemble. The satellite-count
distribution shown in \Cref{fig:droplet_statistics-c} is relatively
broad, indicating that the precise number of breakup fragments remains
sensitive to microscale variations in the filamentation process.
Consequently, individual realizations often exhibit substantially
different topological outcomes. Nevertheless, the rank-ordered area
statistics shown in \Cref{fig:droplet_statistics-a,fig:droplet_statistics-b}
reveal a much more coherent picture. For each realization, the
interface areas are rank ordered and normalized by the total enclosed
area. The resulting area spectrum exhibits a clear separation between
the primary interface and the population of satellite droplets, whose
normalized areas span several orders of magnitude, including scales well
below the prescribed microscale resolution. 
Moreover, the smooth median spectrum and
relatively narrow interquartile range indicate that the breakup process
generates a consistent hierarchy of satellite sizes across the ensemble.
Similar broad size hierarchies arise in ligament-fragmentation and 
spray-atomization processes, where a dominant parent structure produces 
a distributed population of progressively smaller fragments \cite{Villermaux2007}. 
The ability to capture this scale separation highlights a key feature of the 
\mts\ framework: topological resolution is controlled by the prescribed scale $h$, while 
geometric information is retained at significantly smaller scales.

\begin{figure}[ht]
\centering
\begin{subfigure}[t]{0.28\linewidth}
  \centering
  \includegraphics[width=0.95\linewidth]{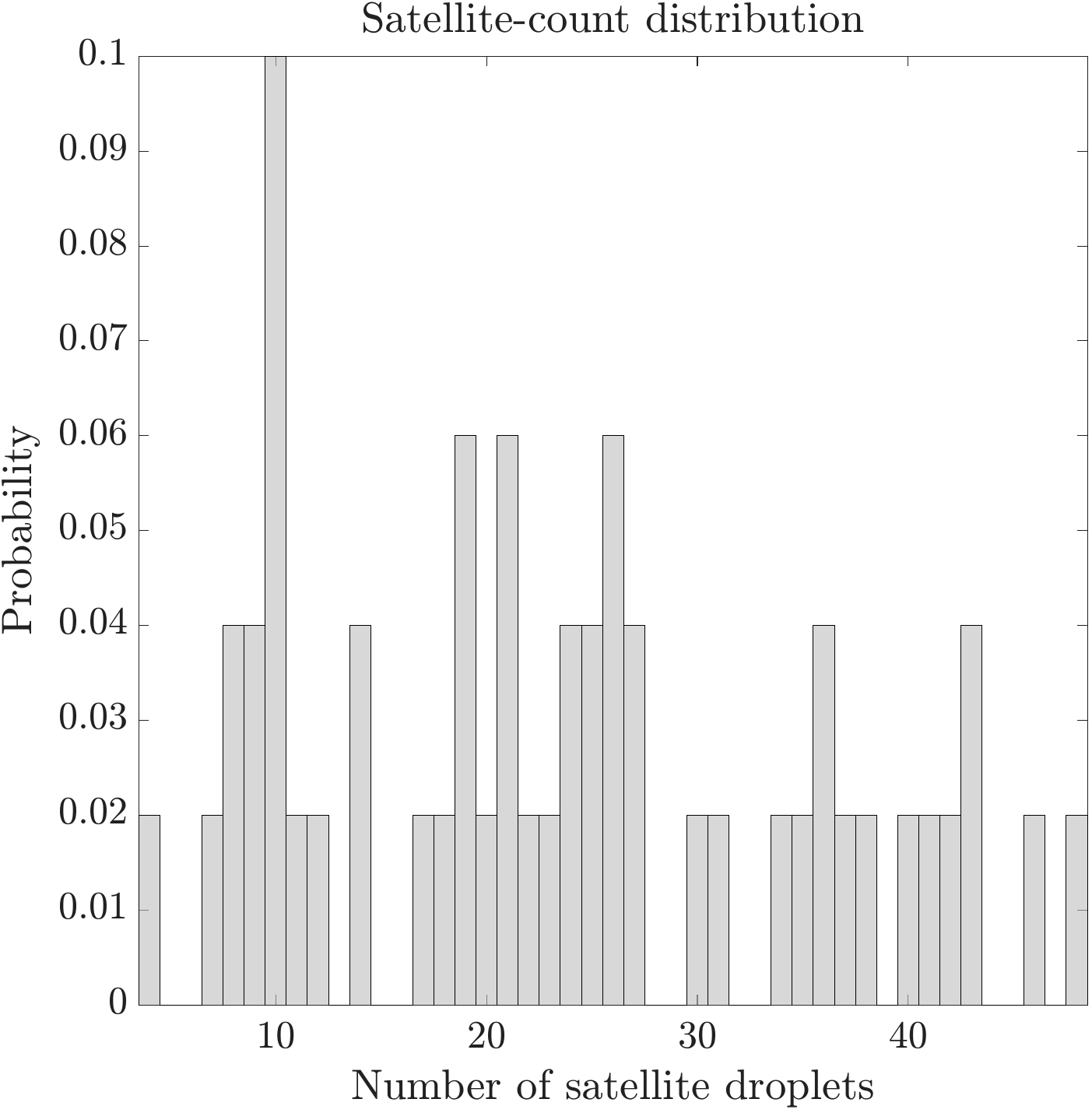}
  \caption{}
  \label{fig:droplet_statistics-c}
\end{subfigure}
\hspace{2em}
\begin{subfigure}[t]{0.28\linewidth}
  \centering
  \includegraphics[width=0.95\linewidth]{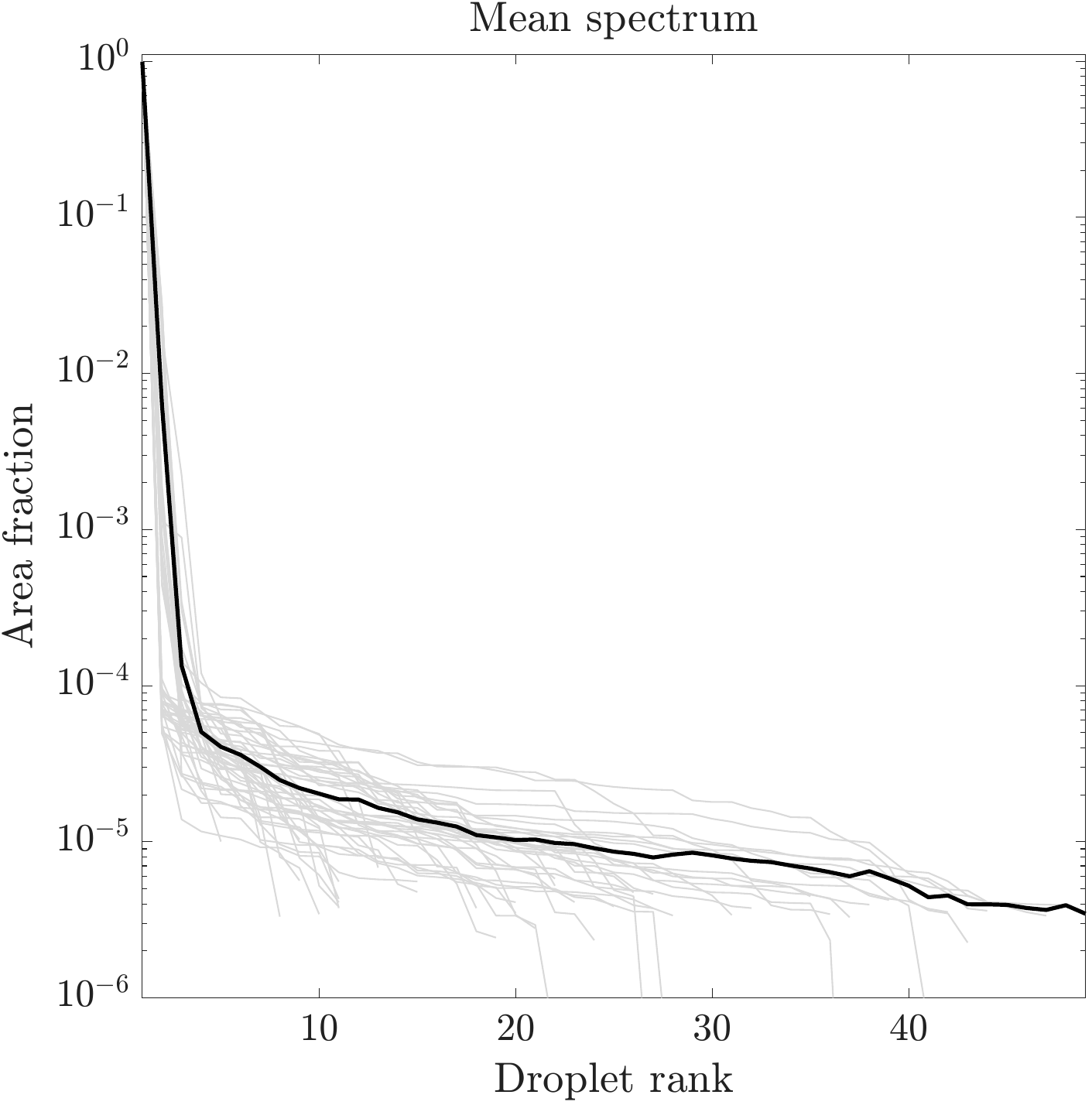}
  \caption{}
  \label{fig:droplet_statistics-b}
\end{subfigure}
\hspace{2em}
\begin{subfigure}[t]{0.28\linewidth}
  \centering
  \includegraphics[width=0.95\linewidth]{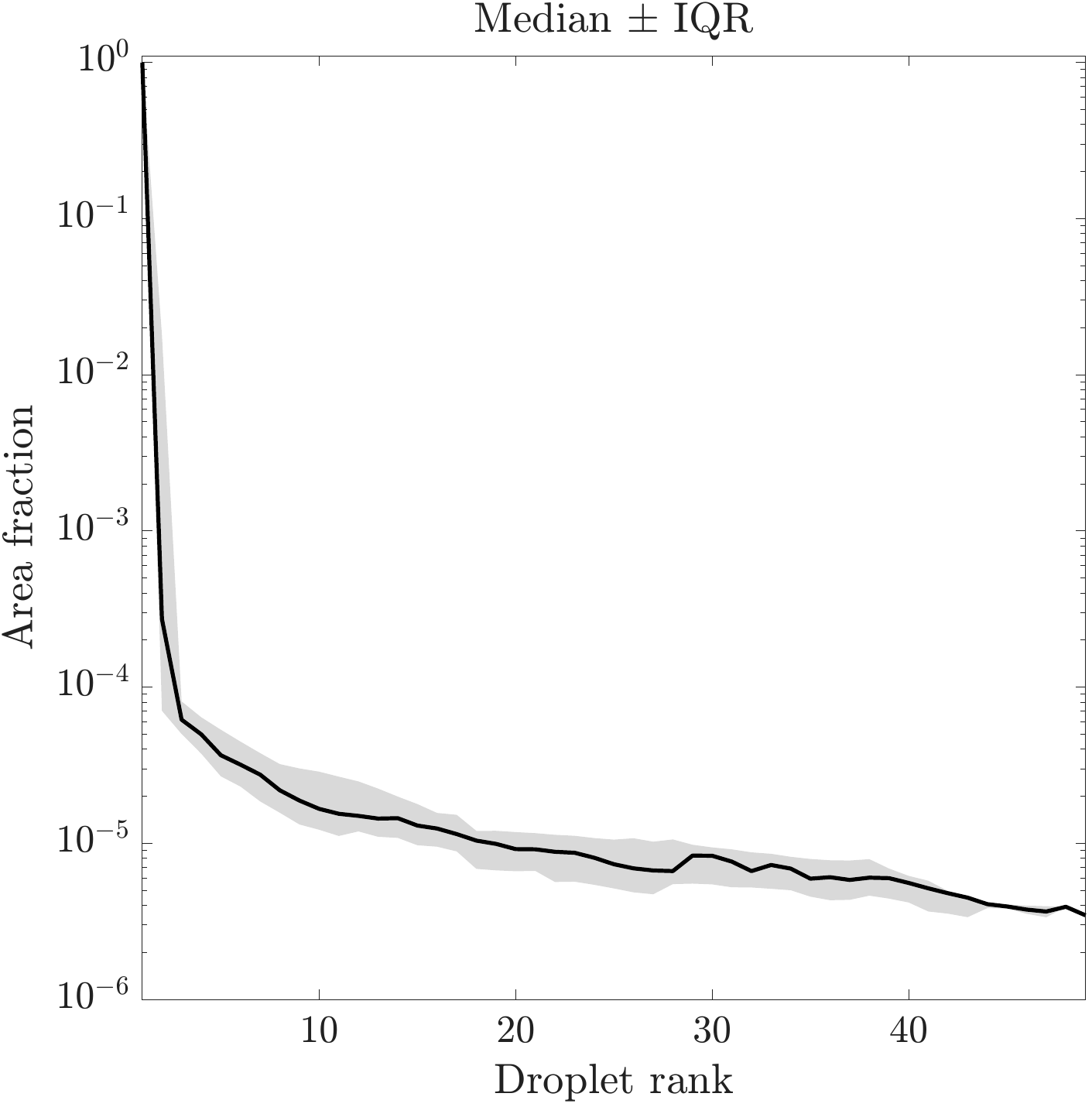}
  \caption{}
  \label{fig:droplet_statistics-a}
\end{subfigure}
\caption{
Droplet statistics for the pseudo-random alternating-shear ensemble at
$t=15$ and microscale resolution $h/L=\num{1.25e-3}$.
\textbf{Left:} Probability distribution of the number of satellite
droplets generated in each realization.
\textbf{Center:} Rank-ordered area spectrum showing the normalized area
fractions of all interface components; gray curves denote individual
realizations and the black curve denotes the ensemble mean.
\textbf{Right:} Median rank-ordered area spectrum together with the
interquartile range.
Although the number of satellite droplets varies substantially across
the ensemble, the rank-ordered area statistics exhibit a consistent
hierarchy of satellite sizes spanning several orders of magnitude,
including scales below the prescribed microscale resolution.
}
\label{fig:droplet_statistics}
\end{figure}

To visualize the coarse-graining induced by the ensemble averaging, we
convert each realization into an Eulerian phase-field $\chi(x,t)$,
defined as a regularized indicator function of the enclosed region:
\begin{subequations}\label{phase-field}
\begin{equation}\label{phase-field-a}
\chi(x,t)
=
\begin{cases}
1, & \phi(x,t)\le -4h,\\[2mm]
0, & \phi(x,t)\ge 4h,\\[2mm]
\tfrac12\left(1-\tfrac{\phi}{4h}
-\tfrac{1}{\pi}\sin\!\left(\pi\tfrac{\phi}{4h}\right)\right),
& |\phi(x,t)|<4h,
\end{cases}
\end{equation}
where $\phi(x,t)$ denotes the signed-distance function. Thus,
$\chi=1$ inside the interface, $\chi=0$ outside, and transitions
smoothly across a narrow band of width $8h$ surrounding the interface.
The ensemble-averaged phase field is then defined by
\begin{equation}\label{phase-field-b}
\bar{\chi}(x,t)
=
\frac{1}{50}
\sum_{k=1}^{50}
\chi_k(x,t),
\end{equation}
\end{subequations}
where $\chi_k$ denotes the phase field associated with the $k$-th
realization. The quantity $\bar{\chi}(x,t)$ may be interpreted as the
fraction of realizations for which the point $x$ lies inside the
enclosed region.

\begin{figure}[H]
\centering
\begin{subfigure}[t]{0.28\linewidth}
  \centering
  \caption{\LGR$_h$}
  \includegraphics[width=0.95\linewidth]{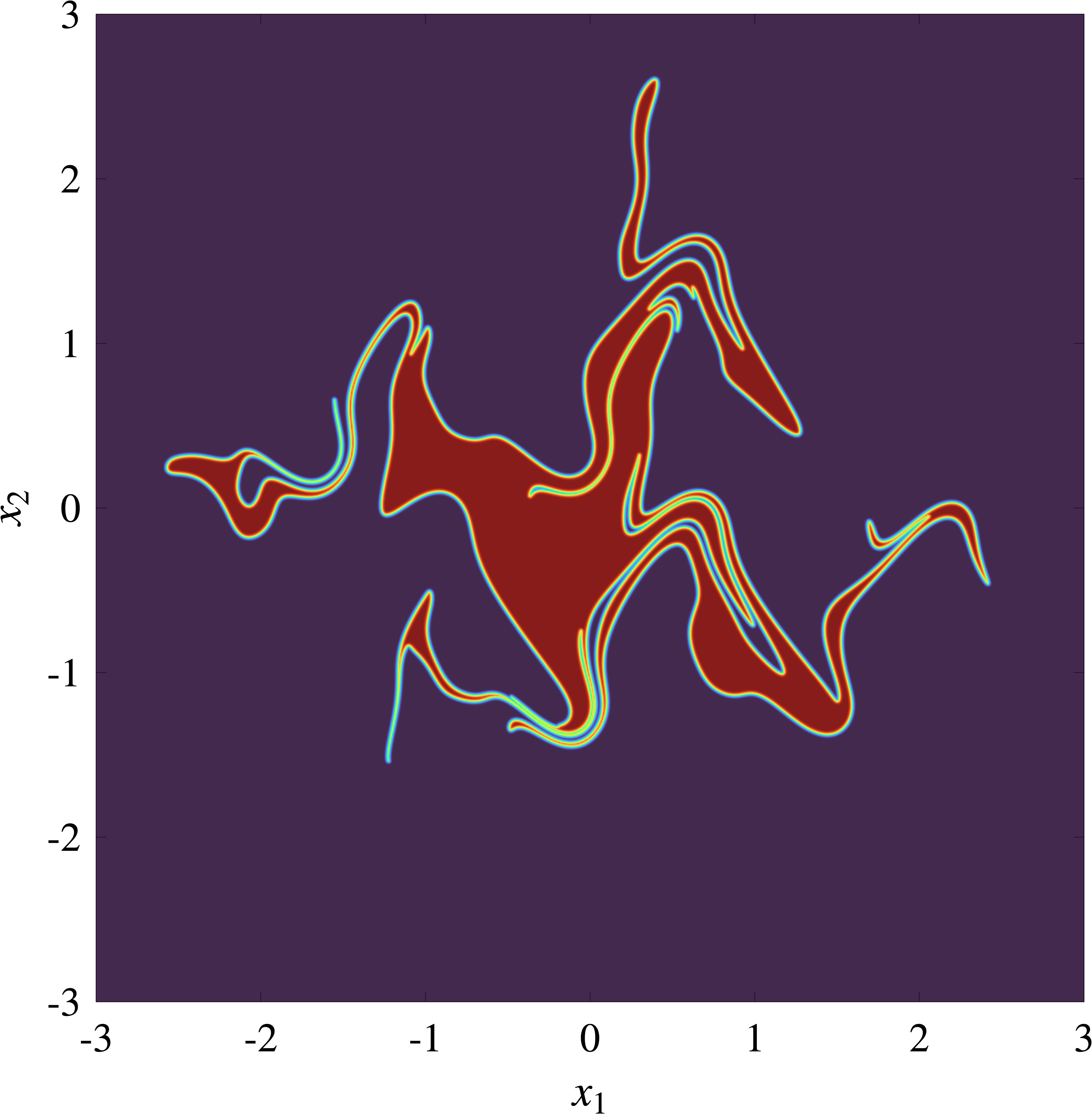}
  \label{fig:alternating-shear_phasefield_LGR}
\end{subfigure}
\hspace{2em}
\begin{subfigure}[t]{0.28\linewidth}
  \centering
  \caption{$\Sigmam$\LGR$_h$}
  \includegraphics[width=0.95\linewidth]{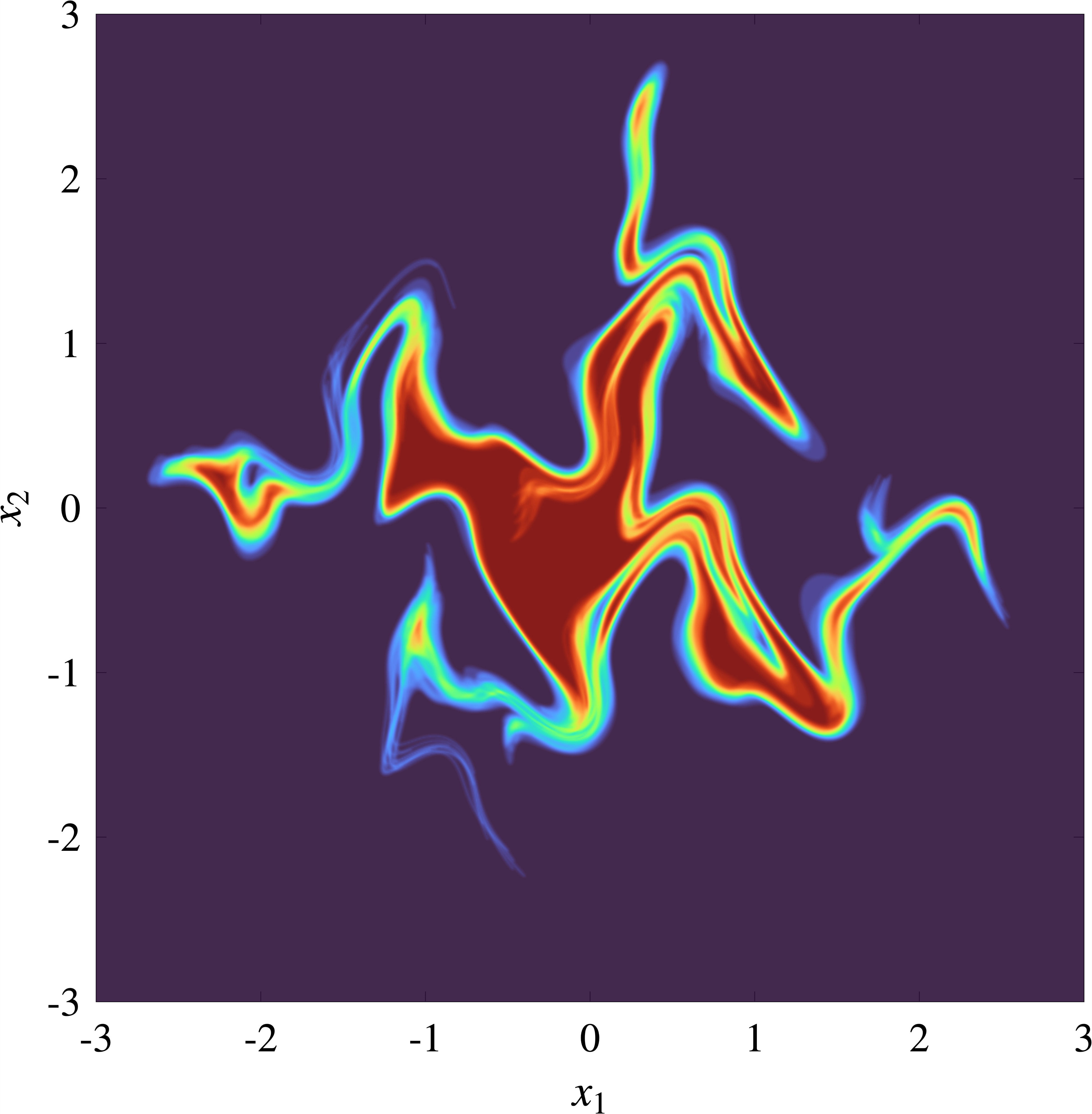}
  \label{fig:alternating-shear_phasefield_LGR-ens}
\end{subfigure}
\hspace{2em}
\begin{subfigure}[t]{0.28\linewidth}
  \centering
  \caption{$\Sigmam$\LGR$_{2h}$}
  \includegraphics[width=0.95\linewidth]{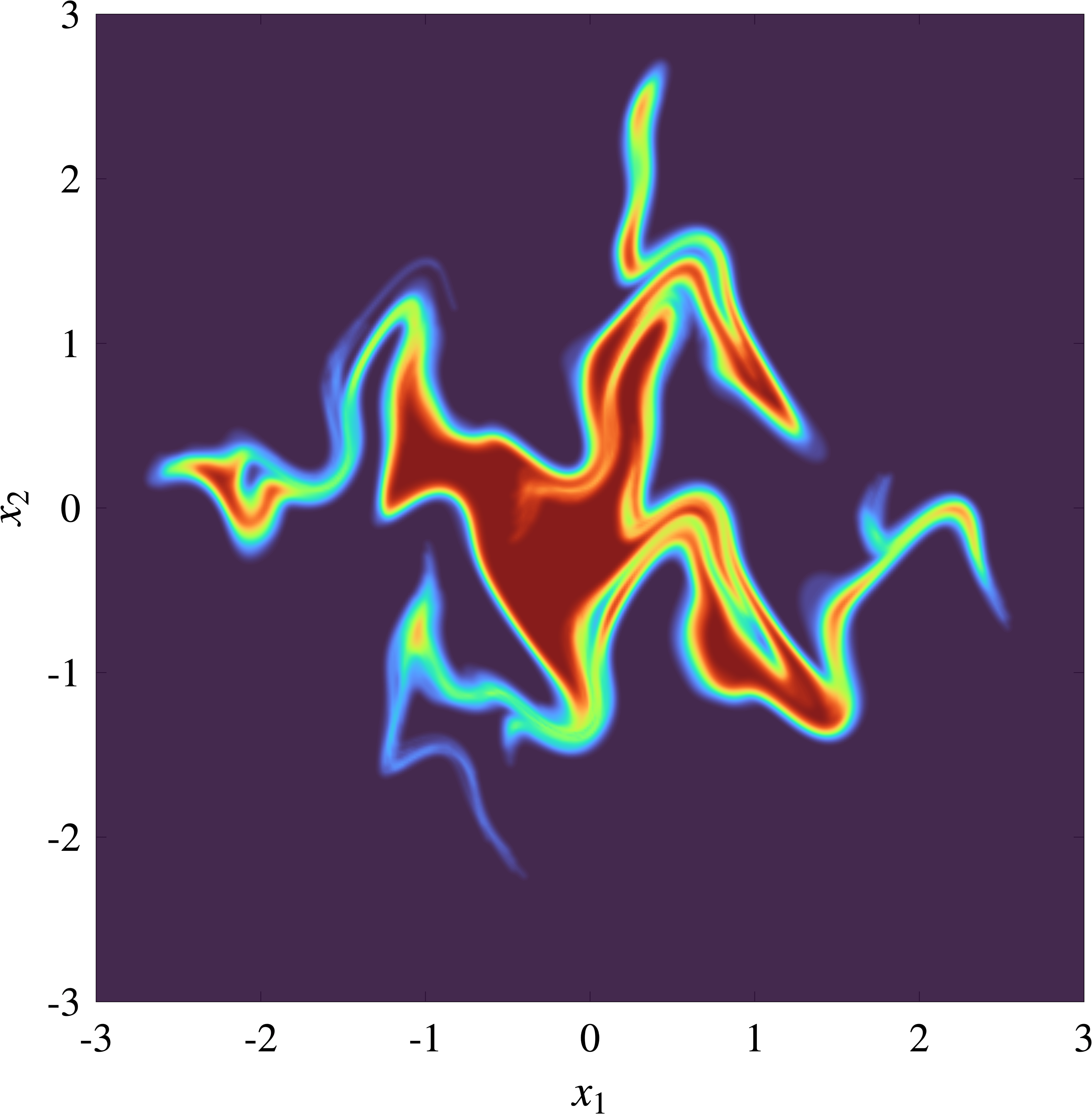}
  \label{fig:alternating-shear_phasefield_LGR-ens2}
\end{subfigure}
\vspace{2em}
\begin{subfigure}[t]{0.28\linewidth}
  \centering
  \caption{\mts$_h$}
  \includegraphics[width=0.95\linewidth]{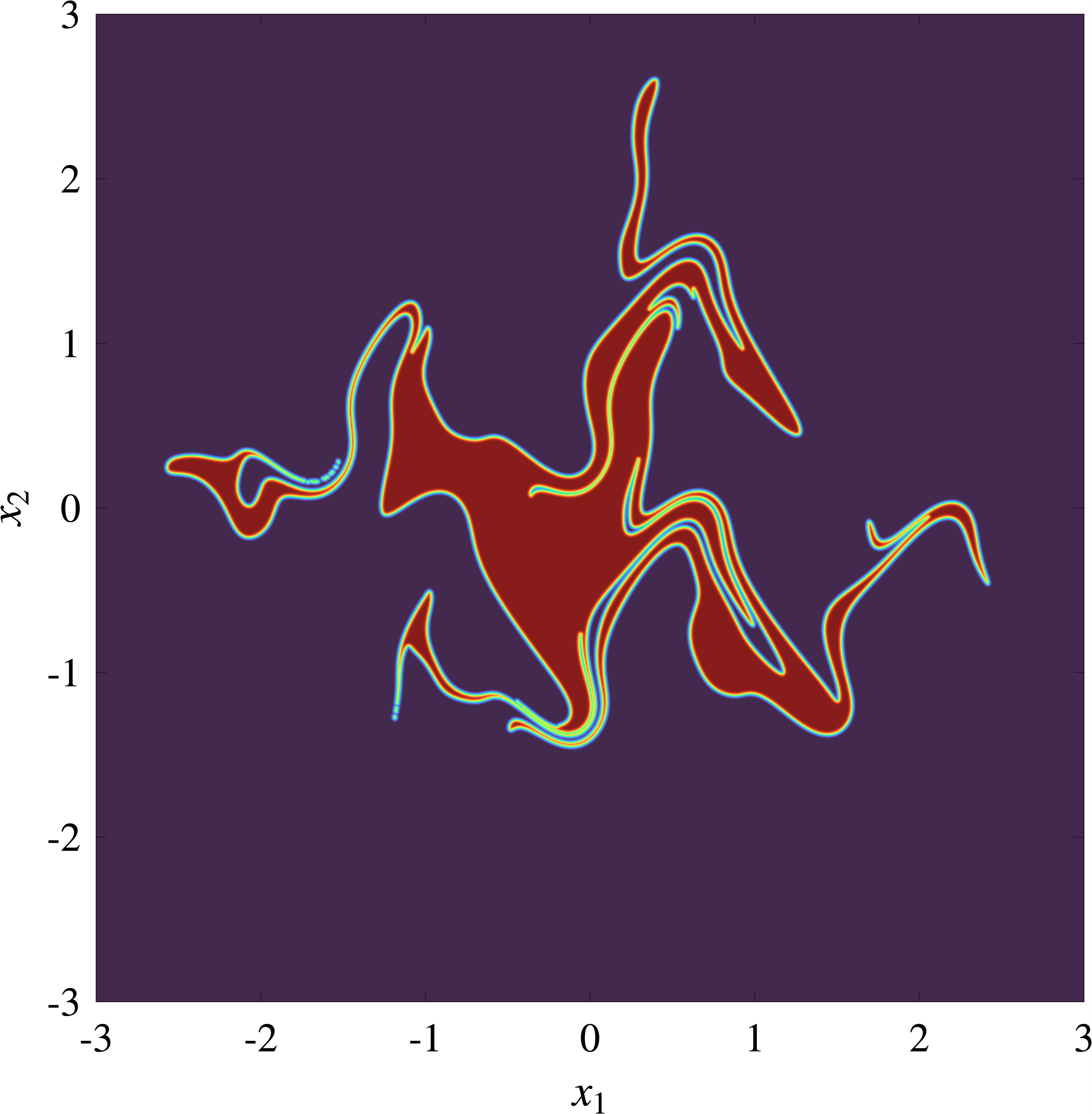}
  \label{fig:alternating-shear_phasefield_MTS}
\end{subfigure}
\hspace{2em}
\begin{subfigure}[t]{0.28\linewidth}
  \centering
  \caption{$\Sigmam$\mts$_h$}
  \includegraphics[width=0.95\linewidth]{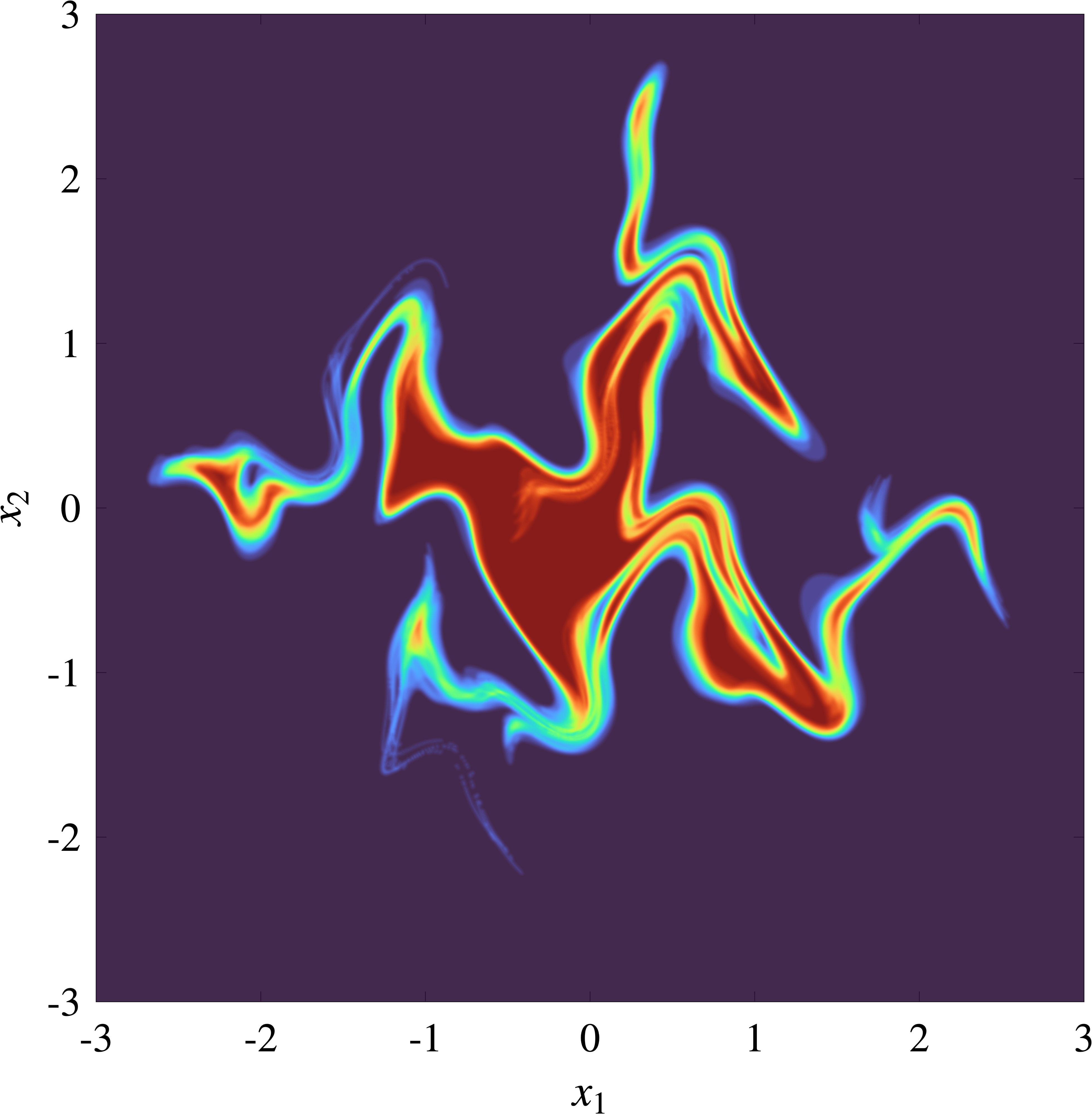}
  \label{fig:alternating-shear_phasefield_MTS-ens}
\end{subfigure}
\hspace{2em}
\begin{subfigure}[t]{0.28\linewidth}
  \centering
  \caption{$\Sigmam$\mts$_{2h}$}
  \includegraphics[width=0.95\linewidth]{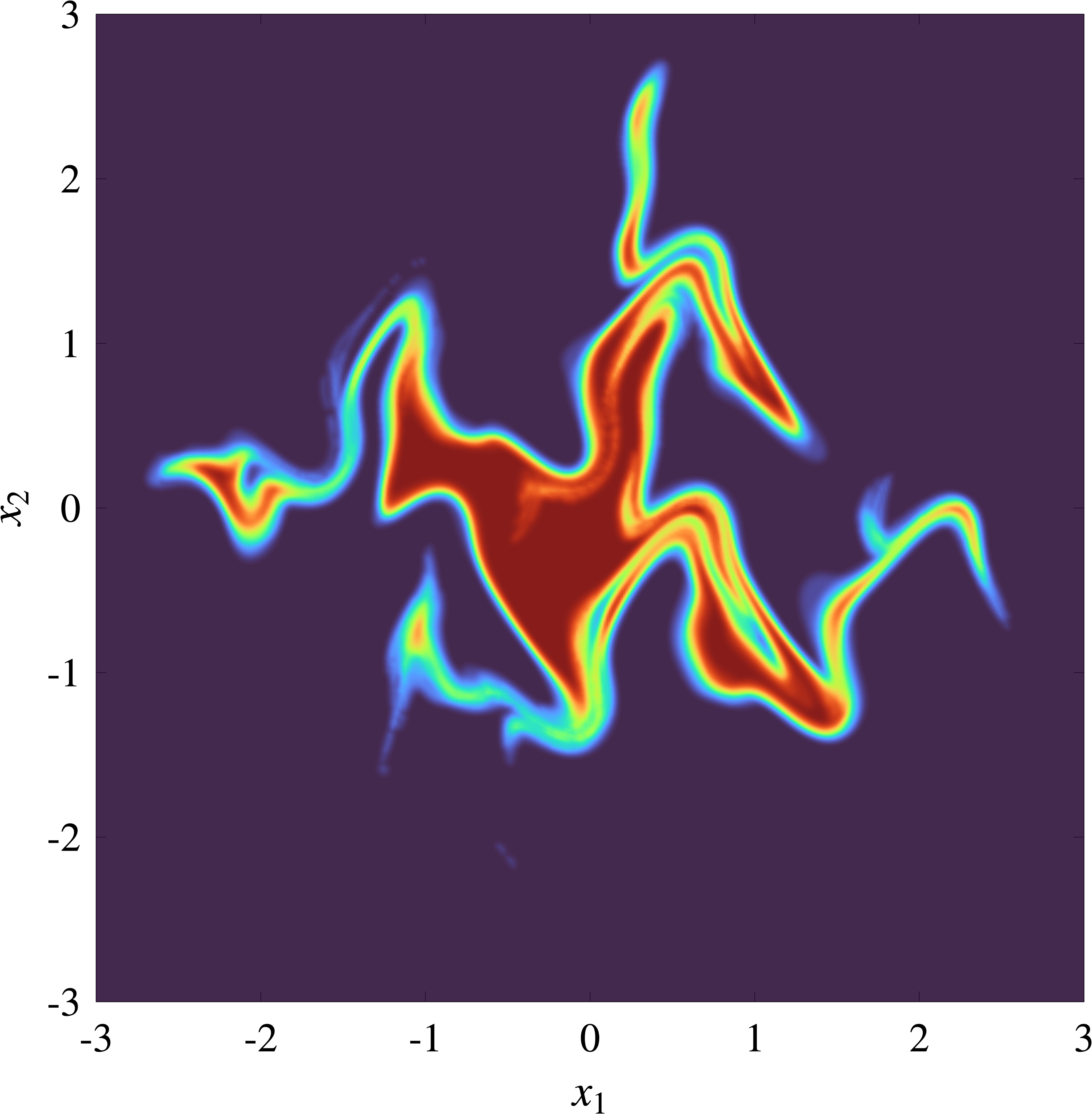}
  \label{fig:alternating-shear_phasefield_MTS-ens2}
\end{subfigure}
\vspace{-2em}
\caption{
Phase-field representation of the pseudo-random alternating-shear
ensemble at the time of maximal deformation, $t=15$.
The left column shows the phase-field $\chi(x,t)$ computed by 
\eqref{phase-field-a} for a single realization. The center and right columns show the
ensemble-averaged fields $\bar{\chi}(x,t)$ computed using \eqref{phase-field-b} at
resolutions $h/L=\num{1.25e-3}$ and $h/L=\num{2.5e-3}$,
respectively. Ensemble averaging converts the sharp-interface
description into a diffuse-interface representation whose thickness
reflects uncertainty in the interface location. 
In the strongest filamenting region, the breakup dynamics 
are statistically persistent. 
}
\label{fig:alternating-shear_phasefield}
\end{figure}

The phase-field comparison in \Cref{fig:alternating-shear_phasefield}
illustrates the diffuse regularization induced by ensemble averaging.
Although the individual \LGR$_h$ and \mts$_h$ realizations differ
visibly near topological events, their ensemble-averaged phase fields
are very similar, indicating that the differences between the two
methods are primarily microscale. In the ensemble average, uncertainty
in the computed interface location generates a diffuse transition region
that absorbs many realization-dependent filament-breakup features. The
strongest filamenting region provides the main exception: at resolution
$h$, the breakup leaves a visible satellite branch in the coarse-grained
phase field. The role of the \mts\ regularization scale $h$ is analogous to that of 
the Batchelor scale in the chaotic-advection studies of 
\cite{Pierrehumbert1991,Pierrehumbert1994}: filamentary structures below this scale are 
removed from the sharp-interface description through topological regularization. 
Ensemble averaging then acts on the resulting family of regularized realizations, suppressing 
weak filament-breakup signatures and retaining only the most robust breakup events.
By contrast, classical Lagrangian tracking is topologically rigid and therefore admits no analogous mechanism. 

At resolution $2h$ the same strongly filamenting branch has essentially
disappeared. Since coarsening simultaneously broadens the diffuse
transition layer and causes topological regularization to occur at a
larger scale, this behavior highlights the strength of the underlying
filamentation process in this region. 
The same competition between
filamentation, topological regularization, and statistical
coarse-graining is also visible near the origin, but with a different
coarse-grained signature: rather than eliminating a coherent filament
branch, the flow fragments into a spray.

\section{Conclusion}
\label{sec:conclusion}

In this work, we introduced a novel Microscale Topological Surgery (\mts) algorithm for 
efficient and accurate topological processing of interfaces tracked by Lagrangian methods. 
The method is based on an Eulerian hybridization that supplies a topological oracle through 
isosurface extraction, together with an adjacency topology inferred from the geometric 
configuration of the pre-processed and extracted interface families. 
A subsequent reconstruction procedure, localized through the use of topological 
defect measures, surgically stitches together the two interface families to maintain 
geometric accuracy through topological transitions. 
Application of the method to a new nonlinear alternating-shear benchmark demonstrates 
its ability to accurately capture complex multiscale filament-breakup processes while 
maintaining $\mathcal{O}(h^2)$ geometric accuracy and optimal $\mathcal{O}(h^{-2})$ computational scaling.
The computational efficiency of the \mts\ framework also enables the construction of large 
ensembles of microscale-resolving simulations; for the alternating-shear benchmark 
considered here, the ensemble-averaged solution reveals a statistically persistent 
filament-breakup process in the strongest filamenting region.

Several directions for future work appear promising. 
One natural direction is a three-dimensional implementation of the \mts\ algorithm. 
Another is a more systematic investigation of the
relationship between \mts\ and
diffuse-interface regularization, particularly in light of the
statistical coarse-graining behavior observed in the present work and
the physical regularization framework proposed in \cite{LoTr1998}.
Related to this, \cite{LoTr1998} envisioned the development of
boundary-integral methods capable of accommodating topological
transitions. To date, such approaches have largely been restricted to
Eulerian formulations or simplified classes of topology changes
\cite{Herrmann2005,GaGrSe2009,BaAnMe2004}. 
Finally, the computational efficiency of the \mts\
algorithm makes it feasible to generate large ensembles of
microscale-resolving interface simulations, suggesting potential
applications in data-driven modeling and statistical learning of
filamentation and breakup dynamics.

\section*{Acknowledgements}
This work was supported by the Mark Kac 
Applied Mathematics Postdoctoral Fellowship at the 
Center for Nonlinear Studies at Los Alamos National Laboratory. 
Los Alamos National Laboratory Report LA-UR-26-25296.


\appendix

\section{Algorithms for interface tracking, capturing, \& surgery}
\label{appendix:aux-algs}

This appendix provides the algorithms used by the
hybrid Lagrangian-tracking and Eulerian-snapshot framework developed in this work. 
In particular, we describe: 
\begin{itemize}
\item the adaptive refinement procedure for the
Lagrangian interface discretization (\Cref{subsec:adaptive_refinement}); 
\item the construction of the localized
signed distance function representation (\Cref{subsec:sdf}); 
\item the graph-based marching-squares
component identification algorithm used for zero-level extraction; the
graph traversal procedure used to convert connected components into
ordered polygonal curves; and the subsequent refinement and filtering
operations applied to the extracted interface family 
(all \Cref{subsec:zero_level_extraction}).
\item the greedy event-local surgical reconstruction algorithm and the
post-processing procedure used to smooth the reconstructed
interfaces (\Cref{subsec:localized_surgery}).
\end{itemize}
The notation and default parameter choices appearing in these 
algorithms are given in \Cref{tab:notation}.


\begin{breakablealgorithm}
\caption{Adaptive refinement of an interface}
\label{alg:air}
\begin{algorithmic}[1]
\Require Interface $\gamma_{\alpha,i}(t)$; auxiliary tangent state
$\{\xi_{\alpha,i}(t),\,|\xi_{\alpha,i}(0)|\}$; thresholds $\dref$ and $\dcrs$
\Ensure Adaptively redistributed interface and auxiliary tangent state

\State Compute segment lengths 
$d_{\alpha,i}=|\gamma_{\alpha,i+1}(t)-\gamma_{\alpha,i}(t)|$, $i=1,\ldots,N_\alpha$.

\If{$\max_i d_{\alpha,i}>\dref$}

    \For{$i=1,\ldots,N_\alpha$}
        \If{$d_{\alpha,i}>\dref$}
            \State Set
            $
            s_{\alpha,i+1/2}
            =
            (s_{\alpha,i}+s_{\alpha,i+1})/2
            $
            \State Define
            $
            j_1=i-1,\ 
            j_2=i,\ 
            j_3=i+1,\ 
            j_4=i+2
            $
            with periodic indexing
            \State Insert the cubic interpolated node
            \[
            \gamma_{\alpha,i+1/2} (t)
            =
            \sum_{q=1}^{4}
            \gamma_{\alpha,j_q} (t)
            \prod_{\substack{r=1\\ r\neq q}}^{4}
            \tfrac{
            s_{\alpha,i+1/2}-s_{\alpha,j_r}
            }{
            s_{\alpha,j_q}-s_{\alpha,j_r}
            } .
            \]
            \State Interpolate the auxiliary tangent state to the new node
            \[
            \xi_{\alpha,i+1/2}(t)
            =
            \tfrac12
            \left(
            \xi_{\alpha,i}(t)
            +
            \xi_{\alpha,i+1}(t)
            \right)
            \quad
            \text{and}
            \quad 
            |\xi_{\alpha,i+1/2}(0)|
            =
            \tfrac12
            \left(
            |\xi_{\alpha,i}(0)|
            +
            |\xi_{\alpha,i+1}(0)|
            \right).
            \]
        \EndIf
    \EndFor

\Else

    \For{$i=1,\ldots,N_\alpha$}
        \If{$d_{\alpha,i}<\dcrs$}
            \State Delete $\gamma_{\alpha,i+1} (t)$ and its auxiliary tangent state
            $\xi_{\alpha,i+1}(t)$ and $|\xi_{\alpha,i+1}(0)|$
            \State Connect $\gamma_{\alpha,i} (t)$ directly to $\gamma_{\alpha,i+2} (t)$
        \EndIf
    \EndFor

\EndIf

\end{algorithmic}
\end{breakablealgorithm}


\begin{breakablealgorithm}
\caption{Narrow-band SDF construction}
\label{alg:sdf}

\begin{algorithmic}[1]
\Require Polygonal interface family
$\{\Gammalgr_\alpha\}_{\alpha=1}^{\Nlgr}$; refined mesh spacing $h$
\Ensure Signed-distance values $\phi(x_i)$ on a sparse local refined grid

\State Initialize storage masks by $m(x_i)=0$
\State Initialize temporary squared distances by $D(x_i)=+\infty$

\For{$\alpha=1,\ldots,\Nlgr$}
    \For{$k=1,\ldots,N_\alpha$}
        \State Define the segment
        $
        S_{\alpha,k}=[\gamma_{\alpha,k},\gamma_{\alpha,k+1}]
        $
        \State Compute the coarse-cell index box intersecting the
        $3h$-neighborhood of $S_{\alpha,k}$

        \For{each refined cell in this index box}
            \If{$\mathrm{dist}(\mathrm{cell},S_{\alpha,k})>3h$}
                \State \textbf{continue}
            \EndIf

            \For{each corner node $x_i$ of the refined cell}
                \State Mark $x_i$ as stored: $m(x_i)=1$
                \State Update the local squared distance:
                $
                D(x_i)
                \gets
                \min
                \left\{
                D(x_i),
                \mathrm{dist}(x_i,S_{\alpha,k})^2
                \right\}
                $
            \EndFor
        \EndFor
    \EndFor
\EndFor

\State Compress the marked coarse cells into a sparse local-block representation
\State Set $d(x_i)=\sqrt{D(x_i)}$ on all stored nodes
\State Bin non-horizontal interface segments by the refined horizontal grid rows they intersect

\For{each stored node $x_i$}
    \State Count the number $n_i$ of intersections between the horizontal ray
    $
    R_i=\{x_i+s(1,0):s>0\}
    $
    and the row-binned candidate segments
    \State Assign the signed-distance value:
    $
    \phi(x_i)=(-1)^{n_i}d(x_i)
    $
\EndFor

\end{algorithmic}
\end{breakablealgorithm}


\begin{breakablealgorithm}
\caption{Identification of zero-level set components}
\label{alg:identify}

\begin{algorithmic}[1]
\Require Signed-distance values $\phi(x_i)$ on the refined Cartesian grid;
active-node mask
\Ensure Embedded marching-squares graph
$G_\phi=(V_\phi,E_\phi)$;
vertex-coordinate map
$X:V_\phi\to\mathbb{R}^2$;
component labels $c(v)$ for $v\in V_\phi$

\State Initialize an empty graph $G_\phi=(V_\phi,E_\phi)$
\State Initialize horizontal and vertical edge-vertex arrays for refined-grid edges

\For{each refined grid cell whose four corner nodes are active}

    \State Read the nodal values
    $
    \phi_{00},\phi_{10},\phi_{01},\phi_{11}
    $
    at the four cell corners

    \State Determine which of the bottom, right, top, and left cell edges
    contain sign changes

    \State Compute the zero-crossing points on all sign-changing edges by
    linear interpolation

    \State Let $m$ denote the number of sign-changing edges

    \If{$m=0$}
        \State \textbf{continue}

    \ElsIf{$m=2$}
        \State Obtain the two corresponding graph vertices from the
        edge-vertex arrays, creating them if necessary and storing their
        coordinates in $X$

        \State Connect the two vertices by an edge in $E_\phi$

    \ElsIf{$m=4$}
        \State Obtain the four corresponding graph vertices from the
        edge-vertex arrays, creating them if necessary and storing their
        coordinates in $X$

        \State Evaluate the asymptotic decider: 
        $
        a
        =
        \phi_{00}\phi_{11}
        -
        \phi_{10}\phi_{01}
        $

        \State Connect the four vertices using the pairing determined by
        the sign of $a$
    \EndIf

\EndFor

\State Compute the connected components of $G_\phi$ using breadth-first search

\State Define
$
\Neul
=
\#\,\mathrm{CC}(G_\phi),
$
where $\mathrm{CC}(G_\phi)$ denotes the collection of connected
components of $G_\phi$

\State Assign a component label
$
c(v)\in\{1,\ldots,\Neul\}
$
to each vertex $v\in V_\phi$

\end{algorithmic}
\end{breakablealgorithm}


\begin{breakablealgorithm}
\caption{Tracing a graph component into an ordered curve}
\label{alg:trace}

\begin{algorithmic}[1]
\Require Embedded graph $G_\phi=(V_\phi,E_\phi)$ with vertex-coordinate map
$X:V_\phi\to\mathbb{R}^2$; component labels $c(v)$; component index $\beta$
\Ensure Ordered polygonal curve
$
\Gammaeul_\beta
=
\{\zeul_{\beta,1},\ldots,\zeul_{\beta,N_\beta}\}
$

\State Choose a starting vertex $v_0\in V_\phi$ with $c(v_0)=\beta$
\State Set $v_{\mathrm{prev}}=0$ and $v_{\mathrm{curr}}=v_0$
\State Initialize $\Gammaeul_\beta$ as an empty ordered list

\Repeat
    \State Append the physical coordinate of $v_{\mathrm{curr}}$ to
    $\Gammaeul_\beta$:
    \[
    \zeul_{\beta,i}
    \coloneqq
    X(v_{\mathrm{curr}})
    \]
    \State Choose $v_{\mathrm{next}}$ as the neighbor of $v_{\mathrm{curr}}$
    that is different from $v_{\mathrm{prev}}$
    \State Set $v_{\mathrm{prev}}=v_{\mathrm{curr}}$ and
    $v_{\mathrm{curr}}=v_{\mathrm{next}}$
\Until{$v_{\mathrm{curr}}=v_0$}

\State Append the first point again to enforce periodic closure

\end{algorithmic}
\end{breakablealgorithm}


\begin{breakablealgorithm}
\caption{Refinement and filtering of extracted curves}
\label{alg:ref-and-filter}

\begin{algorithmic}[1]
\Require Extracted polygonal curves
$\{\Gammaeul_\beta\}_{\beta=1}^{\Neul}$;
filtering thresholds $\Nmin$, $\Lmin$, and $\Amin$
\Ensure Refined and filtered extracted interface family

\For{$\beta=1,\ldots,\Neul$}
    \State Apply the adaptive refinement procedure in \Cref{alg:air}
    to $\Gammaeul_\beta$
\EndFor

\For{$\beta=1,\ldots,\Neul$}

    \State Define
    $
    N_\beta
    \coloneqq
    \#\Gammaeul_\beta
    $

    \State Compute the polygonal arclength
    $
    L_\beta
    \coloneqq
    \sum_{i=1}^{N_\beta}
    |\zeul_{\beta,i+1}-\zeul_{\beta,i}|
    $

    \State Compute the enclosed polygonal area
    $
    A_\beta
    \coloneqq
    \frac12
    \left|
    \sum_{i=1}^{N_\beta}
    \det
    \bigl(
    \zeul_{\beta,i},
    \zeul_{\beta,i+1}
    \bigr)
    \right|
    $

\EndFor

\For{$\beta=1,\ldots,\Neul$}
    \If{$N_\beta<\Nmin$
        \textbf{or}
        $L_\beta<\Lmin$
        \textbf{or}
        $A_\beta<\Amin$}
        \State Discard $\Gammaeul_\beta$
    \EndIf
\EndFor

\State Sort the remaining curves in decreasing order of $N_\beta$

\State Relabel the remaining curves as
$\{\Gammaeul_\beta\}_{\beta=1}^{\Neul}$

\end{algorithmic}
\end{breakablealgorithm}



\begin{breakablealgorithm}
\caption{Greedy event-local surgical reconstruction}
\label{alg:greedy-event-surgery}

\begin{algorithmic}[1]
\Require Event-local surgical pieces
$
\Plgr
=
\{\plgr_i\}_{i=1}^{\Npiece}
$
and
$
\Peul
=
\{\peul_j\}_{j=1}^{\Mpiece}
$;
greedy continuation rule \eqref{eq:greedy-continuation}

\Ensure Closed reconstructed curves
$
\{\Gamma^{\mathrm{reb}}_1,\ldots,\Gamma^{\mathrm{reb}}_{N_{\mathrm{reb}}}\}
$

\State Mark all surgical pieces as unused
\State Initialize the reconstructed curve collection as empty
\State Set $\texttt{require\_single\_cycle}$ to be true for many-to-one merge events

\ForAll{unused Eulerian pieces $\peul_{j_0}$}

    \State Set the starting piece to $\peul_{j_0}$ and orient it forward
    \State Initialize an empty temporary curve $\Gamma^{\mathrm{tmp}}$
    \State Initialize an empty ordered piece sequence $\mathcal S$
    \State Set the current piece to the oriented starting piece

    \Repeat

        \State Append the current oriented piece to $\Gamma^{\mathrm{tmp}}$
        \State Append the current oriented piece to $\mathcal S$
        \State Mark the current piece as used
        \State Select the next oriented piece using \eqref{eq:greedy-continuation}

        \If{the selected piece is the starting piece}

            \If{$\texttt{require\_single\_cycle}$ is true and unused pieces remain}
                \State Reject this premature closure
                \State Recompute the continuation while forbidding closure to the starting piece
            \Else
                \State Declare $\Gamma^{\mathrm{tmp}}$ to be a completed cycle
            \EndIf

        \Else
            \State Set the selected oriented piece to be the current piece
        \EndIf

    \Until{$\Gamma^{\mathrm{tmp}}$ is a completed cycle or no admissible continuation exists}

    \If{$\Gamma^{\mathrm{tmp}}$ is a completed cycle and satisfies the geometric closure criterion}
        \State Store $\Gamma^{\mathrm{tmp}}$ as a reconstructed curve
    \EndIf

\EndFor

\State Apply the smoothing and untangling procedure in
\Cref{alg:smooth-interface} to the reconstructed curves

\end{algorithmic}
\end{breakablealgorithm}


\begin{breakablealgorithm}
\caption{Smoothing of reconstructed interfaces}
\label{alg:smooth-interface}

\begin{algorithmic}[1]
\Require Event-local reconstructed curves
$
\{\Gamma^{\mathrm{reb}}_c\}_{c=1}^{N_{\mathrm{reb}}}
$
with piece-source labels identifying extracted and retained segments
\Ensure Smoothed reconstructed curves
$
\{\Gamma^{\mathrm{reb}}_c\}_{c=1}^{N_{\mathrm{reb}}}
$

\ForAll{reconstructed curves $\Gamma^{\mathrm{reb}}_c$}

    \If{$\Gamma^{\mathrm{reb}}_c$ is a closed extracted loop}

        \State Store the initial enclosed area
        $
        A_0=A(\Gamma^{\mathrm{reb}}_c)
        $
        \State Replace $\Gamma^{\mathrm{reb}}_c$ by the minimizer of
        \eqref{eq:closed-loop-smoothing}
        \State Apply area correction to approximately recover $A_0$

    \Else

        \ForAll{maximal extracted subchains of $\Gamma^{\mathrm{reb}}_c$}
            \State Replace the extracted subchain by the minimizer of
            \eqref{smooth-inter-piece}
        \EndFor

    \EndIf

\EndFor

\end{algorithmic}
\end{breakablealgorithm}


\bibliographystyle{abbrvnatmod}
\bibliography{references}

\end{document}